\documentclass[11pt]{article}
\usepackage{color}
\usepackage[colorlinks=true, allcolors=blue,backref=page]{hyperref}
\usepackage{amsmath, amssymb, amsthm, thmtools}
\usepackage{mathrsfs}
\usepackage{mathtools}
\usepackage[noabbrev,capitalize,nameinlink]{cleveref}
\usepackage{fullpage}
\usepackage[noadjust]{cite}
\usepackage{graphics}
\usepackage{pifont}
\usepackage{tikz}
\usetikzlibrary{arrows.meta,patterns}
\usepackage{pgfplots}
\usepgfplotslibrary{groupplots}
\pgfplotsset{compat=1.18}
\usetikzlibrary{calc}

\usepackage{bbm}
\usepackage{float} 
\usepackage[T1]{fontenc}

\usetikzlibrary{arrows.meta}

\usepackage{environ}
\usepackage{framed}
\usepackage{url}
\usepackage[linesnumbered,ruled,vlined]{algorithm2e}
\usepackage[noend]{algpseudocode}
\usepackage[labelfont=bf]{caption}
\usepackage{cite}
\usepackage{framed}
\usepackage[framemethod=tikz]{mdframed}
\usepackage{appendix}
\usepackage{graphicx}
\graphicspath{{internal/}}
\usepackage{comment}
\usepackage{booktabs}
\usepackage{tabularx}
\usepackage[textsize=tiny,textwidth=2cm]{todonotes}
\usepackage{tcolorbox}
\usepackage{enumerate}
\allowdisplaybreaks[1]
\usepackage{enumerate}
\usepackage{stmaryrd}

\usepackage{wasysym}
\usepackage[margin=1in]{geometry}
\usepackage{braket}
\usepackage{todonotes}

\usepackage[shortlabels]{enumitem}
\crefformat{enumi}{#2#1#3}
\crefrangeformat{enumi}{#3#1#4 to~#5#2#6}
\crefmultiformat{enumi}{#2#1#3}
{ and~#2#1#3}{, #2#1#3}{ and~#2#1#3}

\DeclareSymbolFont{symbolsC}{U}{pxsyc}{m}{n}
\SetSymbolFont{symbolsC}{bold}{U}{pxsyc}{bx}{n}
\DeclareFontSubstitution{U}{pxsyc}{m}{n}
\DeclareMathSymbol{\medcircle}{\mathbin}{symbolsC}{7}

\crefname{equation}{}{} %
\AtBeginEnvironment{appendices}{\crefalias{section}{appendix}} %

\usepackage[color,final]{showkeys} %

\colorlet{refkey}{orange!20}
\colorlet{labelkey}{blue!30}

\numberwithin{equation}{section}
\newtheorem{theorem}{Theorem}[section]
\newtheorem{proposition}[theorem]{Proposition}
\newtheorem{lemma}[theorem]{Lemma}

\newtheorem{corollary}[theorem]{Corollary}

\newtheorem*{question*}{Question}
\newtheorem{fact}[theorem]{Fact}

\theoremstyle{definition}
\newtheorem{definition}[theorem]{Definition}

\newtheorem{question}[theorem]{Question}
\newtheorem*{definition*}{Definition}
\newtheorem*{theorem*}{Theorem}
\newtheorem{example}[theorem]{Example}

\newtheorem{remark}[theorem]{Remark}

\newcommand{\mb}{\mathbb}
\newcommand{\mbf}{\mathbf}

\newcommand{\mc}{\mathcal}

\newcommand{\ol}{\overline}
\newcommand{\on}{\operatorname}
\newcommand{\wh}{\widehat}

\newcommand{\F}{\mathbb{F}}
\newcommand{\C}{\mathbb{C}}
\newcommand{\R}{\mathbb{R}}
\newcommand{\N}{\mathbb{N}}
\newcommand{\E}{\mathbb{E}}
\renewcommand{\P}{\mathbb{P}}
\newcommand{\FO}{\mathsf{F}_{\on{OPI}}}
\newcommand{\FL}{\mathsf{F}_{\on{LLR}}}

\newcommand{\RK}{\mathsf{R}_{\on{KK}}}
\newcommand{\SCL}{\mathsf{SCL}}

\newcommand{\TV}{d_{\mathrm{TV}}}
\newcommand{\supp}{\operatorname{supp}}

\newcommand{\poly}{\operatorname{poly}}

\newcommand{\one}{\mathbf{1}}
\newcommand{\eps}{\varepsilon}
\newcommand{\abs}[1]{\left|#1\right|}
\newcommand{\norm}[1]{\left\lVert#1\right\rVert}
\let\originalleft\left
\let\originalright\right
\renewcommand{\left}{\mathopen{}\mathclose\bgroup\originalleft}
\renewcommand{\right}{\aftergroup\egroup\originalright}

\allowdisplaybreaks

\title{Improved Local Leakage Resilience of Shamir Secret Sharing and Worst-Case Optimal Polynomial Intersection}
 \author{Yihang Sun\thanks{Stanford University.  \texttt{kimisun@stanford.edu}} \and Mary Wootters\thanks{Stanford University. \texttt{marykw@stanford.edu}}}
\date{}
\begin{document}
\maketitle
\begin{abstract}
We study two problems: Local Leakage Resilience (LLR) for Shamir secret sharing, and worst-case Optimal Polynomial Intersection (OPI).  Both problems concern polynomials $Q(X)$ of degree less than $k$, over a prime-order finite field $\mathbb{F}_p$.  In LLR for Shamir secret sharing, one asks how much one can learn about $Q(0)$ given a few bits leaked from each of $Q(\alpha_1), \ldots, Q(\alpha_n)$,  
for distinct non-zero evaluation points $\alpha_i \in \mathbb{F}_p^\times$.  In OPI, one is given (worst-case) input lists $S_1, \ldots, S_n \subset \mathbb{F}_p$, and wants to find a polynomial $Q(X)$ of degree less than $k$ so that $Q(\alpha_i) \in S_i$ for as many $i$ as possible. 
Leveraging recent connections between these two problems due to (Sun, Wootters 2026), we improve the state-of-the-art for both problems.  

For LLR, we show that there is some constant $\delta > 0$ so that, as long as $R := k/n \geq 1/2 - \delta$, Shamir secret-sharing is one-bit locally leakage resilient (meaning that one can learn only a negligible amount about $Q(0)$).  This is the first result to break the so-called ``one-half barrier'' for LLR for arbitrary evaluation points, and improves over the previous best known result, requiring $R \geq 0.668$ (Kasser, 2025).  

For OPI, we give a quantum algorithm that finds a polynomial $Q(X)$ that agrees with at least a $\mathsf{SCL}_\rho(R)-\varepsilon$ fraction of the lists, for every fixed $\varepsilon>0$, where $\mathsf{SCL}_\rho$ is the \emph{semicircle law} of (Jordan et al., 2025).  This improves previous algorithmic (and existential) results of (Jo, 2026) and (Horinaga, Yamakawa, 2026). 
We also give further improved existential results for all OPI instances; and when the evaluation points $\alpha_i$ are chosen at random, for every rate $R > 0$ we show that, with high probability, for worst-case $S_i$ there exists \emph{perfect} solution.
Finally, we adapt the hardness result of (Yamakawa, Zhandry, 2024) to apply to OPI (rather than a folded version); over large fields, this gives an unconditional separation between the quantum and classical hardness of OPI relative to a membership oracle.

Our techniques develop a common Fourier-analytic framework that reduces both OPI and LLR to \emph{list-decoding} and \emph{list-recovery} of the relevant dual code. This allows us to leverage recent breakthroughs in list-decoding and list-recovery of Reed--Solomon codes (Brakensiek et al., 2026; Jeronimo, 2026; Dao, Kominers, Thaler, 2026) to achieve our improved combinatorial and quantum-algorithmic results.
\end{abstract}
\thispagestyle{empty}

\newpage
\tableofcontents
\thispagestyle{empty}

\newpage
\clearpage
\pagenumbering{arabic} 
\pagestyle{plain}     

\section{Introduction}
\label{sec:intro}

In this paper, we consider two problems: \emph{Local Leakage Resilience} (LLR) of Shamir secret sharing; and \emph{Optimal Polynomial Intersection} (OPI).  While seemingly unrelated at first, these two problems were connected by \cite{us}.  Leveraging this connection, we obtain improvements on both problems.  Our main results, described in more detail below, include improved rate thresholds for LLR for Shamir, past the so-called ``rate-$1/2$ barrier''; and improved results (both existential and algorithmic) for OPI. %

\subsection{Local Leakage Resilience}
Local Leakage Resilience is a property of a secret sharing scheme.  We focus on \emph{Shamir secret sharing.}  In Shamir secret sharing, a secret $s \in \F_p$ is shared among $n$ parties as follows: We choose a uniformly random polynomial $P$ of degree at most $k$ so that $P(0) = s$.  Then we share $P(\alpha_i)$ with party $i$, where $\alpha_1, \ldots, \alpha_n$ are non-zero, distinct evaluation points.  This scheme has the property that any $k+1$ of the parties together can recover the secret $s$, while any $k$ of them together learn nothing about it. We parameterize a sharing polynomial as $P(X)=s+XQ(X)$ with $\deg Q<k$. 

Informally, we say that a secret sharing scheme is ($b$-bit) \emph{Local Leakage Resilient} (LLR) if, when each of the parties may leak up to $b$ bits of their shares, the secret is still secure.  Formally, we have the following definition.

\begin{definition}[Local Leakage Resilience for Shamir's Secret-Sharing Scheme]
\label{def:llr}
Fix nonzero evaluation points $\alpha_1, \ldots, \alpha_n \in \F_p^{\times}$ and let $C \subset \F_p^n$ be the linear code\footnote{A \emph{linear code} is just an $\F_p$-linear subspace of $\F_p^n$.}
\begin{equation}
 \label{eq:secret-zero-code}
 C:= \{(\alpha_i Q(\alpha_i))_{i=1}^n \,:\, \deg(Q) < k \}.
\end{equation}
 For $v \in \F_p^n$, let $L_v$ denote the distribution of the \emph{transcript} $(\ell_1(c_1), \ldots, \ell_n(c_n))$, when $c$ is chosen uniformly at random from the coset $v + C$.
 We say that (the secret-sharing scheme defined by) $C$ is \emph{$b$-bit Local Leakage Resilient (LLR)} if, for any \emph{leakage maps} $\ell_i:\F_p \to \{0,1\}^b$, and for any pair of secrets $s,s' \in \F_p$, 
 \[ \TV(L_{s\one}, L_{s'\one}) \leq e^{-\Omega(n)}.\]
\end{definition}
While LLR makes sense for any secret-sharing scheme, we specialize to Shamir's scheme.  LLR for Shamir's scheme has received special attention lately, first because Shamir's scheme is a fundamental secret sharing scheme, and second because it is known~\cite{gw,tyb18} that over \emph{extension fields}, Shamir's scheme is \emph{not} locally leakage resilient.  Surprisingly, \cite{bdir_conference} (who also first explicitly introduced the notion of LLR) showed that over \emph{prime} fields, in fact Shamir's scheme is $1$-bit LLR provided that $R := k/n$ satisfies $R \geq 0.92$.  This state of affairs leads to the following question:
\begin{question}\label{q:llr}
What is the smallest rate $R^*$ so that, for all $R > R^*$, Shamir's scheme over a prime field $\F_p$ with $k/n = R$ is one-bit LLR?
\end{question}
 After a long line of work~\cite{bdir, mpsw21,mnpw22,kk23,n24,k24}, the best answer to \cref{q:llr} prior to this paper was due to \cite{k24}, and states that one can take $R^* \leq 0.668$.  Moreover, the work \cite{n24} identified a natural barrier at $R^* = 1/2$ (see \cref{sec:related_LLR}), and asked if it was possible to obtain $R^* < 1/2$. 

\paragraph{Our results for local leakage resilience.} We improve the state-of-the art on \cref{q:llr} for arbitrary evaluation points, and in particular we break the ``1/2-barrier'' for one-bit LLR.  More precisely, we prove the following theorem.

\begin{restatable}[Local Leakage Resilience of Shamir Schemes]{theorem}{mainllr}
\label{thm:main-llr}
Every Shamir secret sharing scheme with rate $R$ over a prime field is
\begin{enumerate}
    \item $b$-bit locally leakage resilient for every fixed $b$ and every fixed $R>1/2$;
    \item one-bit locally leakage resilient whenever
    $R>1/2-\delta$, for some absolute constant $\delta >0$.
\end{enumerate}
\end{restatable}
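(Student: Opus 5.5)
The plan is the standard Fourier-analytic route (as in \cite{bdir}), reduced to a list-decoding/list-recovery question about the dual code, and then fed with the recent Reed--Solomon list-decoding/recovery results.

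\textbf{Step 1: Fourier reduction.} Write each leakage function as a vector of indicators $\one[\ell_i(x)=y]$, $y\in\{0,1\}^b$. Expanding in additive characters of $\F_p$ and using Poisson summation over the coset $s\one+C$ gives
\[
\TV(L_{s\one},L_{s'\one}) \le \tfrac12\sum_{y\in(\{0,1\}^b)^n}\ \sum_{\substack{w\in C^\perp\\ \langle w,\one\rangle\neq 0}} \prod_{i=1}^n \bigl|\widehat{f_{i,y_i}}(w_i)\bigr| .
\]
Only dual codewords with $\sum_i w_i\neq 0$ contribute, since all other terms cancel between $s$ and $s'$. Here $C^\perp$ is a generalized Reed--Solomon code of dimension $n-k$, i.e.\ rate $1-R$. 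Summing over $y_i$ first via Cauchy--Schwarz, and using $\sum_y|\widehat{f_{i,y}}(0)|=1$, the task is to bound
\[
\sum_{w\in C^\perp\setminus\{0\}}\prod_i \mu(w_i),\qquad \mu(a):=\max_{\ell}\sum_y|\widehat{\one_{\ell^{-1}(y)}}(a)|\le \mu^*<1 \text{ for } a\neq0 .
\]
The quantity $\mu^*<1$ holds for every fixed $b$ and every prime $p$, and $\mu^*\le 2/\pi$ for $b=1$. Because the Fourier weights at nonzero frequencies are not uniformly small, I would not use the naive weight-enumerator bound $\sum_w (\mu^*)^{\mathrm{wt}(w)}$. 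That bound only reaches about $R\ge 0.668$ or $0.92$.

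\textbf{Step 2: reduction to list recovery.} Instead, decompose by level sets. For a threshold $t$, let $T_i=\{a:\ |\widehat{f_i}(a)|\ge t\}$, which has size $O(1/t^2)$ by Parseval. A dual codeword with large product must satisfy $w_i\in T_i$ for many coordinates. So the sum is controlled by
\[
\sum_{j}\ (\text{number of } w\in C^\perp \text{ agreeing with the lists } T_i \text{ on } \ge j \text{ coordinates})\cdot t^{\,j}\,(\mu^*)^{\,n-j}\cdots
\]
This is precisely a list-recovery bound for the rate-$(1-R)$ RS code with input lists of size $L=O(t^{-2})$. The one-sided version (only $w_i\in T_i$, everything else weighted by $\mu^*$) is a list-decoding count.

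\textbf{Step 3: plugging in RS bounds.} Part 1 then follows for $R>1/2$: the dual has rate $1-R<1/2$. The results of Brakensiek et al., Jeronimo, and Dao--Kominers--Thaler give, for RS codes of rate $r$ and any fixed list size $L$, that the number of codewords agreeing with lists on a $(r+\eps)$-fraction (capacity-type) is $p^{o(n)}$, or more precisely $\exp(o(n))$ times the trivial count. This is strong enough to make each term exponentially small whenever $\mu^*$-decay beats $p^{-(\text{slack})}$. The total is a convergent sum of $e^{-\Omega(n)}$ terms once $1-R<1/2$, for every fixed $b$.

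For Part 2, with $b=1$, I would exploit the explicit bound $\mu^*\le 2/\pi$ together with a refined estimate showing that, for one-bit leakage, the Fourier mass is concentrated. The rest of the argument is the same, applied at rate $1-R$ slightly above $1/2$. The improved list-decoding radius beyond the Johnson bound provides exactly a constant slack $\delta$: the exponent in Step 2 remains negative for $R\ge 1/2-\delta$ by continuity in $R$ of all bounds.

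\textbf{Main obstacle.} The hardest step is Step 2/3 bookkeeping. The list-recovery results are stated for random or generic evaluation points and for capacity-type agreement, whereas here the points are arbitrary. So I would use only the worst-case bounds: Johnson-type for list recovery, plus the arbitrary-point improvements. I would then verify that the trade-off between the threshold $t$, the list size $O(t^{-2})$, and $\mu^*$ yields a strictly negative exponent past $R=1/2$. I expect this optimization, especially extracting an absolute $\delta>0$ for $b=1$, to require the most care.
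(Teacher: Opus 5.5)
\textbf{Part 2 of your plan cannot work; this is the central gap.} Your Step 1 bound is the phase-forgotten proxy $\sum_{0\neq w\in C^\perp}\prod_i\mu_i(w_i)$. (The sum over transcripts factorizes exactly; no Cauchy--Schwarz is needed.) This quantity is genuinely large below rate $1/2$, so no sharper estimate of it, and no ``continuity in $R$'', can produce $\delta>0$.

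Concretely, take quadratic-residue leakage. Gauss sums give $|\wh{\one_S}(a)|\approx 1/(2\sqrt p)$ for every $a\neq 0$, so $\mu_i(a)\approx p^{-1/2}$. The $[n,n-k]$ MDS dual has roughly $p^{n-k}$ codewords of full weight, and these alone contribute about $p^{n/2-k}$, which is enormous once $k<n/2$. This is exactly the barrier of \cite{n24}. List decoding cannot supply the slack either: thresholding only controls shells of weight $t$ with $2k-t=\Omega(n)$, and reaching $t=n$ forces $R>1/2$.

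The missing idea is to keep the phases, as the paper does:
\begin{enumerate}
\item Group dual codewords by support $A$. The $L^2(\nu)$ proxy $\FL$ does this automatically, since the Bernoulli characters $\psi_A$ are orthonormal. Each inner sum then equals $\E_{x\in v+C}\prod_{i\in A}g_i(x_i)$.
\item Cut at $m=\lfloor(2-\delta)k\rfloor$. Supports with $|A|\le m$ are handled phase-forgotten by \cref{lem:wftl}, which now applies since $2k-m\approx\delta k$.
\item For $|A|>m$, the MDS generalized von Neumann inequality (\cref{lem:block-u3}) bounds the expectation by $\prod_{j\in J}\|g_j\|_{U^3}\le 2^{o(n)}q^{(3k-n)/2}$, using the $U^3$-gap $q<1$ of \cref{lem:u3-gap}. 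For small $\delta$ this beats the roughly $2^{H_2(5\delta/2)n}$ large supports.
\item For unbalanced leakages $\|g_i\|_\infty$ blows up. There you additionally need the $L^1(\nu)$ proxy with weights $c_i=2\sqrt{\rho_i(1-\rho_i)}$, together with the MDS Brascamp--Lieb bound (\cref{lem:bl-mds}) when many coordinates are strongly biased.
\end{enumerate}

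\textbf{Part 1 along your Fourier route is viable but needs repair.} It differs from the paper's proof, which is a black box: the test criterion of \cite{mpsw21} (\cref{lem:mpsw-fixed-support}) plus the observation that capacity list recovery of $C^\perp$ (\cref{thm:rs-lr}) certifies every test. Your Step 2 bookkeeping fails as written, for three reasons:
\begin{enumerate}
\item The lists must contain $0$.
\item The small factor $\theta$ (your $t$) belongs to the $n-j$ non-agreeing coordinates, not to the $j$ agreeing ones.
\item More seriously, the number of dual codewords agreeing with the lists on a prescribed set of $j<n-k$ coordinates grows like $p^{n-k-j}$. So ``count $\times$ weight'' cannot be made small.
\end{enumerate}
The paper's \cref{lem:wftl} treats these codewords differently. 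It fixes $\sigma n$ light coordinates, gaining $\theta^{\sigma n}$. It then applies Cauchy--Schwarz over two disjoint sets of size $t-k$ on which the shortened dual, being MDS, projects bijectively, so only the Parseval mass $(1+\Gamma)^{t-k}$ is paid. Apply that lemma with $t=n$ to $w_i=\mu_i$, taking $\gamma=2^b\sin(\pi 2^{-b})/\pi+o(1)<1$ and $\Gamma=2^b$. This gives $\TV(L_{s\one},L_{s'\one})\le e^{o(n)}\gamma^k$ for every fixed $b$ and $R>1/2$.

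Finally, your worry about evaluation points is misplaced, and your fallback would fail. \cite{gj26} gives capacity list recovery with $n^{O(1)}$ lists for \emph{every} GRS code with $p\ge n$. Johnson-type list recovery is vacuous here, because the thresholded lists have size $\Theta(\theta^{-2})$, which is large.
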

This improves on the best previous result of \cite{k24} that required $R > 0.668$, and is the first result to break the $1/2$ barrier, for $1$-bit worst-case leakage functions.  It is also the first result to even approach $R = 1/2$ for any $b \geq 1$ for $b$-bit worst-case leakage functions.

\begin{remark}
    We emphasize that \cref{thm:main-llr} applies for \emph{arbitrary} leakage functions (in particular, worst-case and not necessarily ``balanced,'' meaning that $\ell_i^{-1}(1)$ may have any size between $0$ and $p$).  However, it still improves on the best known result in the balanced case (when $|\ell_i^{-1}(1)| \sim p/2$ for all $i$).  This result was due to \cite{kk23}, and established $R^* \leq 0.58$ in the balanced case.
    In fact, as part of our proof, we show the balanced case is in fact the \emph{hardest}, in the sense that the sufficient rate $R^*$ threshold in the balanced case hold for arbitrary biases.
\end{remark}

\begin{remark}[Independent and concurrent work on LLR]\label{rem:concurrent}
In independent and concurrent work, Nguyen~\cite{n26} has obtained similar results to \cref{thm:main-llr}.  
In another independent work, \cite{HMSY26} have shown that the answer to \cref{q:llr} is $R^* = 0$ for \emph{most} choices evaluation points $\alpha_1, \ldots, \alpha_n$.   That is, when the $\alpha_i$ are chosen randomly, with high probability, \cite{HMSY26} shows that Shamir's scheme with any rate constant rate $R>0$ is one-bit LLR.  In this work, we focus on \emph{arbitrary} evaluation points for LLR, although as we will see later, in this version of the paper (which has been updated to reflect the works \cite{n26,HMSY26}), we are able to adapt the techniques of \cite{HMSY26} to obtain a similar result for Optimal Polynomial Intersection with random evaluation points, discussed below.  We discuss both \cite{n26,HMSY26} more in \cref{sec:related_LLR}.
\end{remark}

\subsection{Optimal Polynomial Intersection}

The second problem we study is \emph{Optimal Polynomial Intersection} (OPI).  OPI is a discrete optimization problem about low-degree polynomials over finite fields.  Given \emph{input lists} $S_1, \ldots, S_n \subset \F_p$ and distinct evaluation points $\alpha_1, \ldots, \alpha_n \in \F_p$, the goal is to find a polynomial $Q(X) \in \F_p[X]$ of degree less than $k$ so that $Q(\alpha_i) \in S_i$ for as many $i$ as possible.  Formally, we have the following definition.

\begin{definition}[Optimal Polynomial Intersection]
\label{def:opi}
Fix distinct evaluation points $\alpha_1,\ldots,\alpha_n\in\F_p$ and nonempty input lists $S_1,\ldots,S_n\subset\F_p$. The OPI problem asks for a polynomial $Q \in \F_p[X]$ with $\deg(Q) < k$ that maximizes the \emph{satisfaction ratio}
\[
s(Q):=\frac1n\abs{\left\{i:Q(\alpha_i)\in S_i\right\}}.
\]
We will often refer to $R := k/n$ as the \emph{rate} of the OPI instance.\footnote{We use this terminology as $R$ is the rate of the corresponding Reed--Solomon code.}
\end{definition}
OPI can be viewed as the problem of finding a vector in or near the set $C \cap \mathbf{S}$, where $\mathbf{S} = S_1 \times \cdots \times S_n$\footnote{Sometimes we abuse notation and refer by $\mbf{S}$ to both $(S_1, \dots, S_n)$ as well as $S_1\times\dots\times S_n$.}
and $C$ is the \emph{Reed--Solomon code}
\[
C := \{ (Q(\alpha_1), \ldots, Q(\alpha_n)) : Q \in \F_p[X], \deg(Q) < k \}.
\]
If $\rho_i = |S_i|/p$ is the density of $S_i$, and if the sets $S_i$ were random, we would expect $|C \cap \mathbf{S}| \approx |C| \prod_{i=1}^n \rho_i$.  The OPI problem above asks us to find an element in $C \cap \mathbf{S}$ if there is one; but at an existential level we may also wonder such an element exists for worst-case $S_i$, and if so how many.  In particular, is the number of such elements close to what we would expect for random lists?  This leads to the following definition, inspired by \cite{dlmnrr}, which studied similar discrepancy notions for random linear codes.
\begin{definition}[Near-optimal OPI Discrepancy]\label{def:disc}
Let $C\subset\F_p^n$, and fix nonempty $S_1, \ldots, S_n \subset \F_p$. Let $\rho_i:=|S_i|/p$ and let $\mbf{S}:=S_1\times\cdots\times S_n$. We say that $C$ has \emph{near-optimal OPI discrepancy} for input lists $S_1,\ldots,S_n$ if
\begin{equation}
\label{eq:def-perfect-opi}
\abs{\frac{|C\cap \mbf{S}|}{|C|\prod_{i=1}^n\rho_i}-1}
\le e^{-\Omega(n)}.
\end{equation}
\end{definition}

OPI is related to many problems that have been studied in different areas over the years.  For example, when the lists $S_i$ are relatively small, the question is related to \emph{list-recovery} of Reed-Solomon codes~\cite{gs,GR06,LP20,lms25,LS25,bcdz,bcpzz,gj26}; and in other parameter regimes, similar problems have been proposed as cryptographic hardness assumptions~\cite{np99,bn00,NP06}.  Most recently, OPI has gained prominence as a candidate for quantum supremacy.  The work \cite{dqi} gave a quantum algorithm, \emph{Decoded Quantum Interferometry} (DQI), which can find solutions to OPI with satisfaction ratio at least $\SCL_\rho(R/2)-o(1)$, where $R = k/n$ and where $\SCL$ is the \emph{semicircle law}
\begin{equation}
\label{eq:semicircle-law}
\SCL_\rho(\alpha):=
\begin{cases}
\left(\sqrt{\rho(1-\alpha)}+\sqrt{(1-\rho)\alpha}\right)^2 &\text{if }0\le \alpha\le1-\rho\\
1 &\text{if } \alpha\ge 1-\rho
\end{cases}
\end{equation}
In contrast, the best known classical algorithm can only find solutions with satisfaction ratio about $\rho + R(1 - \rho)$, which is smaller.  This leads to the following question.
\begin{question}\label{q:OPI}
For a fixed rate $R = k/n$ and fixed $\rho \in (0,1)$, what is the largest  $s^* \in (0,1]$ so that every OPI instance of rate $R$ with $|S_i|\ge \rho p$ for all $i$ admits a solution $Q \in \F_p[X]$ with satisfaction ratio $s(Q) \geq s^*$?  What is the largest fraction $s^*_{\mathrm{alg}}  \in (0,1]$ so that a solution $Q \in \F_p[X]$ with $s(Q) \geq s^*_{\mathrm{alg}}$ can be found efficiently with a quantum algorithm?
\end{question}
This question (both the existential and algorithmic versions) was first addressed by \cite{dqi}, who as mentioned above showed that $s^*, s^*_{\mathrm{alg}} \geq \SCL_\rho(R/2)$.  
Since then, several works~\cite{chailloux1,chailloux2,us,hy,sjo} have improved on this result, both existential and algorithmic. These improvements are depicted in \cref{fig:opi-picture}, and discussed more in \cref{sec:related_OPI}.  In particular, the best previous algorithmic results were limited at $s^*_{\mathrm{alg}} \leq \SCL_\rho(\alpha_J(R))$, where $\alpha_J(R) = 1 - \sqrt{1 - R}$ is the \emph{Johnson radius}; and for most of the range $R \in (0,1)$, this was not for worst-case input lists.\footnote{The works \cite{chailloux1,chailloux2} assume that the input lists $S_i$ are translated by i.i.d. random additive shifts $v_i \in \F_p$.}  In the balanced case, perfect solutions could be found efficiently only for $R \geq 0.75$.
 As we will see next, our results yield an algorithmic improvement all the way to $s^*_{\mathrm{alg}} = \SCL_\rho(R)$ (which we call the \emph{double semicircle law}), including efficient computation of a perfect solution for $R > 1/2$ in the balanced case.

\paragraph{Our results for OPI.}

Our first result for OPI improves the state-of-the-art for \cref{q:OPI}, giving an algorithmic result that outperforms even the best-known existential results.

\begin{restatable}[Quantum Algorithms for OPI]
{theorem}{mainalgo}
\label{thm:main-algo}
For any $R, \rho, \eps\in (0, 1)$, 
there exists a quantum algorithm with $\poly(n, \log p)$ running time that, given any OPI instance over $\F_p$ with rate $R$ and coherent membership oracle access to input lists $S_i$ where $|S_i|\ge \rho p$ for every $i$, has the following guarantees with high probability:
\begin{enumerate}
    \item Outputs a solution polynomial $Q$ with satisfaction ratio $s(Q)\ge \SCL_\rho(R)-\varepsilon$, 
    \item If $R+\rho>1$, outputs a solution polynomial $Q$ with $s(Q)=1$, i.e. $Q$ satisfies every constraint.
\end{enumerate}
\end{restatable}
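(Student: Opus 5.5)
The plan is to run Decoded Quantum Interferometry (DQI) of \cite{dqi} on the Reed--Solomon code $C$, but to replace its unique decoder for the dual code $C^\perp$ with a \emph{list-decoder}, or more precisely a list-recovery procedure. The dual $C^\perp$ is itself a generalized Reed--Solomon code of rate $1-R$.

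Recall the mechanism. DQI prepares a state $\sum_{Q}P(f(Q))\ket{Q}$, where $f(Q)=\sum_i f_i(Q(\alpha_i))$ and $f_i=\one_{S_i}-\rho_i$. Here $P$ is a degree-$\ell$ polynomial, chosen as in \cite{dqi} so that the expected satisfaction is the semicircle value $\SCL_\rho(\ell/n)$. Preparing this state requires, in the Fourier domain, uncomputing a syndrome. Concretely, given $B^\top y$ for $y\in\F_p^n$ of Hamming weight at most $\ell$, we must recover $y$, where $B$ generates $C$ and $B^\top y$ is the syndrome of $y$ with respect to $C^\perp$. Unique decoding of $C^\perp$ only works up to weight $\ell<(n-k)/2$ roughly, which yields $\SCL_\rho(R/2)$ after accounting for duality conventions. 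Our goal is to push $\ell$ up to about $n-k$, i.e.\ $\ell/n\to R$ in the paper's parametrization, which gives $\SCL_\rho(R)$.

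The key steps, in order, are as follows.

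\textbf{Step 1 (Fourier-analytic reduction).} Write the amplitude of the DQI state as a sum over pairs $y,y'$ of weight at most $\ell$ with $B^\top y=B^\top y'$. Then show that the final satisfaction ratio equals the semicircle prediction $\SCL_\rho(\ell/n)$ up to an error term. This error term is controlled by the number of ``collisions'', meaning distinct low-weight $y\ne y'$ with equal syndrome, weighted by products of the Fourier coefficients $\hat f_i$. This is the common framework the abstract alludes to.

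\textbf{Step 2 (exploit the structure of the Fourier weights).} Because $\sum_{a\neq0}|\hat f_i(a)|^2=\rho_i(1-\rho_i)$ is bounded, the weighted collision count is small provided that, for each fixed $y$, the Hamming ball of radius roughly $\ell$ around it contains few codewords of $C^\perp$. In the weighted form, this is a list-recovery statement with small lists. We then invoke the recent results of \cite{bcdz,gj26} and Dao--Kominers--Thaler: Reed--Solomon codes, and thus $C^\perp$, are list-decodable and list-recoverable up to radius $1-(1-R)-\eps=R-\eps$ with constant list size. These results hold for \emph{all} evaluation points, or at least we would use the versions that do.

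\textbf{Step 3 (make the uncomputation coherent).} Replace the unique decoder in DQI by an efficient list-decoder for $C^\perp$. From the output list, select the unique element of minimum weight when it is unique; otherwise mark failure. By Step 2 the failures carry a $\le\eps$ fraction of the amplitude norm. A standard gentle-measurement / postselection argument then shows the resulting state is $O(\sqrt\eps)$-close to the ideal DQI state. Measuring it gives $s(Q)\ge\SCL_\rho(R)-\eps$. Item 2 follows from taking $\ell$ so that $\ell/n\ge1-\rho$, which is possible exactly when $R>1-\rho$, i.e.\ $R+\rho>1$, since then $\SCL=1$. To get \emph{exactly} $s(Q)=1$ rather than $1-\eps$, we would pick $P$ to vanish appropriately and use the exact discrepancy/perfect-sampling version of the reduction. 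Alternatively, we would repeat the procedure $\poly(n)$ times and check each output classically.

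The main obstacle is Step 3 combined with the algorithmic side of Step 2. The combinatorial list-decoding bounds near capacity (e.g.\ \cite{bcdz}) are largely existential, whereas DQI needs an \emph{efficient} decoder that recovers the specific low-weight $y$ coherently. My first attempt would be to use the algorithmic list-recovery results for folded or explicit subcodes. If those apply only to random evaluation points, I would instead design $\ell$ to sit just below the efficient Guruswami--Sudan-type radius and iterate. That fallback, however, would only give Johnson-type bounds, so genuinely achieving $R$ hinges on having an efficient near-capacity list-decoder for arbitrary Reed--Solomon codes. Controlling the amplitude interference when the list has size greater than one is the delicate part.
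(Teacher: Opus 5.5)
Your overall plan matches the paper's: run DQI, swap the unique decoder for the GRS dual $C^\perp$ for a near-capacity list decoder, and use list recovery of $C^\perp$ to suppress the error terms while $\ell<k$. However, Step~3 has a genuine gap. Keeping the unique minimum-weight list element and flagging failure otherwise yields a state close to $\ket{\varphi_\ell}$ only under a typical-uniqueness condition. Under the weighting $|\widehat F_\ell(y)|^2$, essentially no error of weight $\le\ell$ may share its syndrome with another such error. The colliding mass must be $o(Z_0/L)$, because the ideal syndrome amplitude is the coherent sum over the whole coset. List decodability only bounds the number of errors per coset by $L$, not by $1$. Typical uniqueness for this skewed error distribution (a coordinate can put mass close to $\gamma(\rho)^2$ on a single frequency) does not follow from Step~2, and you do not prove it. What the \cite{us} framework actually controls is the signed cross term $Z_\ell-Z_0=\sum_{y\ne y',\,y-y'\in C^\perp}\widehat F_\ell(y)\overline{\widehat F_\ell(y')}$, not this absolute collision mass.

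The paper avoids the issue with \cref{thm:sjo-main} from \cite{sjo}. The decoder enumerates the entire list, $y$ is replaced by its index in a list padded to length $L$, and the index register is projected onto $\ket{+_L}$. This adds all errors in a coset coherently and prepares exactly $\ket{\varphi_\ell}$ with success probability $Z_\ell/(LZ_0)\ge 1/(2L)$. So only $L=\poly(n)$ and $Z_\ell\approx Z_0$ are needed. The efficient decoder you were unsure about is \cref{thm:rs-ld} \cite{dkt}. It list-decodes every GRS code, hence $C^\perp$, to radius $(R-\eps)n$ for arbitrary evaluation points.

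Step~2 is also incomplete. The error terms are $O(\sqrt n)\,b_\rho^{t}\,|\E_{x\in C}q_t|$ for $k<t\le 2\ell+2$, and $b_\rho>1$ when $\rho<1/2$, so Parseval plus ``few codewords per ball'' is not enough. You need the pointwise arc bound $b_\rho|\widehat g_i(\xi)|\le\gamma_\rho<1$ for $\xi\ne0$, fed into \cref{lem:wftl} as in \cref{lem:N-proxy-bound}. That lemma's hypothesis $2k-(2\ell+2)=\Omega(n)$ is what forces $\ell/n<R$.

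For Item~2, ``repeat and check'' needs an inverse-polynomial lower bound on the probability that one measurement is \emph{exactly} perfect. Expected satisfaction $1-o(1)$ does not give this, since every outcome could miss one constraint. The paper takes $\ell=\lfloor(1-\rho)n\rfloor$ and computes $F_\ell(x)=\rho^{-n/2}\sum_{j=\ell-\sigma}^{\ell}\sqrt{\pi_j}$ on every perfect codeword. It combines this with the count $|C\cap\mathbf S|=(1\pm e^{-\Omega(n)})|C|\rho^n$ from \cref{lem:terminal-dense-count}, giving success probability of order $(\sigma+1)/\sqrt n$. That count is a real ingredient. When $R<1/2$ the positive proxy cannot be controlled on all weight shells, so the paper needs the truncated function $H_\ell$ with $(1-\rho)n<\ell<k$ and \cref{lem:key-truncation}. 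Your remark about choosing $P$ to ``vanish appropriately'' points this way but supplies neither the truncation nor the count. Finally, unequal list sizes $|S_i|\ge\rho p$ need a reduction to a common density, such as \cref{prop:oracle-thinning}, because the semicircle analysis assumes a common $\rho$.
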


\begin{remark}[Access model for the input lists]\label{rem:access}
    \cref{thm:main-algo} guarantees a quantum algorithm with membership oracle access to the lists $S_i$.  If the size of the field $\F_p$ is $p = \poly(n)$, then no membership oracles are needed, as the algorithm can simply take the lists $S_i$ as input and be its own oracle.  However, our results also work when $p$ is much larger than $\poly(n)$, in which case coherent membership oracle access is needed for an efficient algorithm. The precise computation model is similar to that of \cite{yz,hy} and is given in \cref{def:input-model}.
\end{remark}
We note that Item 2 of \cref{thm:main-algo} improves Item 1 in the case that $R + \rho > 1$: instead of a near-perfect solution (satisfying $1-\eps$ fraction of constraints for any $\eps>0$), we get a perfect solution that satisfies \emph{every} constraint. The result of \cref{thm:main-algo} is plotted in \cref{fig:opi-picture} for $\rho =1/2$ and $\rho = 1/4$.

\begin{figure}
    \centering
\input{figure_OPI_2}
\caption{Results for worst-case OPI (with arbitrary evaluation points $\alpha_i$) with (a) $\rho=1/2$ (balanced input lists) and (b) $\rho = 1/4$.  Existential results are shown in dashed lines, and (quantum) algorithmic results in solid lines.  We also include the result of \cite{chailloux2}, which is algorithmic but not worst-case (rather, it is average-case over random shifts in the sets $S_i$).  Our existential improvements (\cref{thm:main-existential}) does not improve over our algorithmic results for $\rho = 1/2$, and so does not appear in panel (a). As written, the algorithmic results of \cite{sjo} do not apply for $\rho = 1/4$, but a simple extension (which we also use in this work) applies, so we include \cite{sjo} in panel (b) as well as (a).  We note that \cref{thm:random-opi} implies that one can improve the red dashed line to the constant function ``$1$'' with high probability when the evaluation points are chosen at random, but the sets $S_i$ are still worst-case.}\label{fig:opi-picture}
\end{figure}

The algorithmic results in \cref{thm:main-algo} follow from improved existential results.  In more detail, the work \cite{sjo} showed how to make the existential results of \cite{us} algorithmic; we improve the existence results of \cite{us}, and use the techniques \cite{sjo} to arrive at \cref{thm:main-algo} above.  These improved existential results follow from improved results about optimal OPI discrepancy, which we state as our next main result below.

\begin{restatable}[Near-optimal OPI Discrepancy for Reed--Solomon Codes]{theorem}{mainopi}
\label{thm:main-opi}
Fix any $\rho \in (0, 1)$. Every Reed--Solomon code of rate $R$ over a prime field has near-optimal OPI discrepancy for every instance with $|S_i|\ge \rho p$ for every $i$, provided that at least one of the following holds:
\begin{enumerate}
    \item $R>1/2$;
    \item  $R>1/2-\delta$ and $\rho\ge1/2-\delta$, for some absolute constant $\delta >0$;
    \item $R+\rho>1$.
\end{enumerate}
\end{restatable}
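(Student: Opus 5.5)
The plan is a Fourier expansion of $|C\cap\mbf{S}|$ over the dual code. Write $f_i=\one_{S_i}$ and $\hat f_i(t)=\frac1p\sum_x f_i(x)\omega^{-tx}$, so that $\hat f_i(0)=\rho_i$. Poisson summation gives
\[
\frac{|C\cap\mbf{S}|}{|C|}=\sum_{y\in C^\perp}\prod_{i=1}^n \hat f_i(y_i),
\]
and hence
\[
\frac{|C\cap\mbf{S}|}{|C|\prod_i\rho_i}-1=\sum_{y\in C^\perp\setminus\{0\}}\prod_{i:y_i\neq0}\frac{\hat f_i(y_i)}{\rho_i}.
\]
For a Reed--Solomon code, $C^\perp$ is a generalized Reed--Solomon code of dimension $n-k$ and minimum distance $k+1$. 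The goal is to show that the right-hand side is $e^{-\Omega(n)}$.

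Set $\beta_i=\max_{t\neq0}|\hat f_i(t)|/\rho_i$. For sets of density $\rho_i$ over a prime field, the extremal bound (intervals/arcs are worst) is $|\hat f_i(t)|\le \sin(\pi\rho_i)/(p\sin(\pi/p))\approx \sin(\pi\rho_i)/\pi$. Grouping codewords by weight $w\ge k+1$, the error is at most $\sum_{w>k}A_w\,\beta^w$, where $A_w$ is the number of dual codewords of weight $w$. The crude bound $A_w\le p^{w-k}$ fails because $p$ is large. Instead I would control the sum through list-decoding/list-recovery of $C^\perp$: a codeword $y$ contributes a large product only if each $y_i$ lands in the set $T_i$ of "heavy" frequencies where $|\hat f_i(t)|$ is large. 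Layering thresholds, say frequencies with $|\hat f_i(t)|/\rho_i\ge \gamma$, for a density set one has $\sum_t|\hat f_i(t)|^2=\rho_i$ by Parseval, so $|T_i|\le \rho_i^{-1}\gamma^{-2}$, a constant-size list. The count of nonzero $y\in C^\perp$ agreeing with lists of size $L$ in many coordinates is then bounded by the recent list-recovery results for (generalized) RS codes cited in the abstract (Brakensiek et al.; Jeronimo; Dao--Kominers--Thaler). These say that RS codes of rate $1-R$ are list-recoverable with lists of size $L$ up to radius near $1-(1-R)-\eps$ ("generalized Singleton"), with output list size $\poly$ or constant. Writing the sum as an integral over level sets, via a dyadic decomposition of $|\hat f_i(y_i)|/\rho_i$ into $O(\log)$ levels, each term is bounded by (number of codewords with a prescribed level profile) times (product of level values). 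Nonzero coordinates falling outside all heavy lists are each penalized by a factor at most $\gamma$.

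For the three cases: (3) $R+\rho>1$ will likely be handled with the $L^2$/Parseval bound. Cauchy--Schwarz over coordinate subsets, combined with the MDS property (any $n-k$ coordinates determine $y$), gives total contribution $\lesssim\sum_w\binom{n}{w}(\text{stuff})$. With $|\hat f_i(t)|^2$ summing to $\rho_i-\rho_i^2$, this yields a bound like $((1-\rho)/\rho)^{\cdot}$ that decays exponentially exactly when $k/n>1-\rho$. (1) $R>1/2$: here the dual rate is $<1/2$, and the dual code is list-recoverable beyond radius $1/2$. Combining with $\beta<1$ (uniformly, since $\rho$ is bounded away from 0 and 1, or treating $\rho$ near 1 via case 3), every weight-$w$ codeword contributes at most $\beta^{w}$ once list counts are subexponential, and $w>k>n/2$. (2) Slightly below $1/2$ with $\rho\ge1/2-\delta$: use the balanced-case sharper estimate $\beta\approx 1/(\pi\rho)\cdot\sin(\pi\rho)\approx 2/\pi$, together with the quantitative list-recovery capacity. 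A continuity/perturbation argument in $R,\rho$ from the strict inequality at $R=\rho=1/2$ then extends to a $\delta$-neighborhood.

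The main obstacle is the counting step. I must bound the weighted number of dual codewords whose coordinates land in heavy-frequency sets of varying thresholds, i.e. a weighted list-recovery count rather than plain list decoding. It must also be shown that this count times the product weights is exponentially small, which needs the list-recovery results in their sharp (near-capacity, constant list size) form. I would first try a pigeonhole over level profiles, reducing to finitely many unweighted list-recovery instances, and apply the cited bounds to each.
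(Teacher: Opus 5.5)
\textbf{Cases (2) and (3).} Every estimate in your plan passes through the triangle inequality, so it bounds the phase-forgotten sum $\FO^+=\sum_{0\neq y\in C^\perp}\prod_{i:y_i\neq0}|\hat f_i(y_i)|/\rho_i$. This sum is genuinely not small once $R\le 1/2$, and that breaks both cases.

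For quadratic-residue lists, $|\hat f_i(\xi)|/\rho_i\approx p^{-1/2}$ for every $\xi\neq0$. The full-weight shell of $C^\perp$ has about $p^{n-k}$ codewords, so that shell alone contributes roughly $p^{n/2-k}$. This is not exponentially small at $R=1/2$ and is huge below it. So in case (2) there is no strict inequality at $R=\rho=1/2$ to perturb from, and no sharper list-recovery count can rescue the argument.

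The same barrier breaks your case (3) when $R<1/2$. Two disjoint information sets of the dual shortened to a support of size $w$ fit only if $w\le 2k$. The shells $2k<w\le n$ contain about $\binom nw p^{w-k}$ codewords, against typical coefficients of size $p^{-1/2}$ by Parseval, and they are left uncontrolled. Nothing in your sketch actually produces the threshold $R>1-\rho$.

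The paper needs two different ideas here.
\begin{itemize}
\item For (2), set $g_i=(\one_{S_i}-\rho_i)/\sqrt{\rho_i(1-\rho_i)}$. The paper keeps the phases on supports $|A|>m=\lfloor(2-\delta)k\rfloor$, writing $\sum_{\supp(y)=A}e_p(-\langle v,y\rangle)\prod_{i\in A}\hat g_i(y_i)=\E_{x\in v+C}\prod_{i\in A}g_i(x_i)$. An MDS generalized von Neumann inequality bounds this by $\prod_{j\in J}\|g_j\|_{U^3}$ over any $J\subset A$ of size $\lfloor(3k-n)/2\rfloor$, up to $\|\cdot\|_\infty$ factors close to $1$ near balance. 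Near balance there is a uniform gap $\|g_j\|_{U^3}\le q<1$, which beats the $2^{(H_2(5\delta/2)+o(1))n}$ such supports. Supports of size at most $m$ are handled by thresholding, since $2k-m\asymp n$.
\item For (3), the paper truncates $\rho^{-n/2}\one_{\mbf S}=\sum_{j=0}^n\rho^{(n-j)/2}(1-\rho)^{j/2}q_j$, where $q_j=\sum_{|A|=j}\prod_{i\in A}g_i$, to $j\le\ell$ with $(1-\rho)n<\ell<k$. Such an $\ell$ exists exactly when $R+\rho>1$. The truncation $H_\ell$ equals $\rho^{-n/2}\P(\operatorname{Bin}(n,1-\rho)\le\ell)=\rho^{-n/2}(1-e^{-\Omega(n)})$ on $\mbf S$. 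Its second moment $\E_{x\in C}H_\ell^2$ involves only dual shells of weight at most $2\ell<2k$, so it equals $1\pm e^{-\Omega(n)}$ by thresholding. Finally, $\E_{x\in C}[(1-s(x))H_\ell(x)^2]=e^{-\Omega(n)}$ shows that almost none of this mass lies off $\mbf S$. Together these pin down $|C\cap\mbf S|$.
\end{itemize}

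\textbf{Case (1).} This is closest to the paper, but the counting step you flag as the main obstacle is not closed by list recovery. The list-recovery theorem, with constant list size, controls only dual codewords lying in the heavy lists on at least $(1-\sigma)n$ coordinates with $\sigma<R$. Codewords with more light coordinates number about $p^{n-k}$, so penalizing light coordinates and then counting fails. A dyadic level profile does not help either: it creates lists of unbounded size at low levels, outside the theorem's scope.

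The missing ingredient is an $L^2$ bound for the light part.
\begin{itemize}
\item Take heavy lists $H_i=\{0\}\cup\{\xi:|\hat f_i(\xi)|/\rho_i>\theta\}$. These have size at most $1+\Gamma\theta^{-2}$, where $\Gamma=(1-\rho)/\rho\ge\sum_{\xi\neq0}|\hat f_i(\xi)|^2/\rho_i^2$.
\item Fix a set $M$ of $\sigma n$ coordinates on which $y_i\notin H_i$. Since $C^\perp$ is an $[n,n-k]$ MDS code, choose disjoint information sets $J_1,J_2\subset[n]\setminus M$.
\item Cauchy--Schwarz plus Parseval then bound the sum over all such $y$ by $\theta^{\sigma n}(1+\Gamma)^{n-k}$.
\item Taking $\theta$ small after fixing $\sigma$ beats the at most $2^n$ choices of $M$. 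The heavy codewords contribute only $n^{O(1)}\beta^{k+1}$.
\end{itemize}
Fitting $J_1,J_2$ outside $M$ requires $(1-\sigma)n\ge 2(n-k)$, i.e.\ $\sigma\le 2R-1$. This is compatible with $\sigma>0$ only when $R>1/2$. That, not the dual list-recovery radius exceeding $1/2$, is where $R>1/2$ is actually used.
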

In particular, \cref{thm:main-opi} implies that, whenever $1,2$ or $3$ hold, there exists a perfect solution to OPI (provided that $R, \rho > 0$ are constants and that $n$ is sufficiently large).
This means that there are parameter regimes where \cref{thm:main-opi} guarantees the existence of a perfect solution, but where \cref{thm:main-algo} is only guaranteed to find one with partial satisfaction (see \cref{fig:phase}). 

Given the above, it is natural to ask if one can get better existential guarantees than the ones implied by the algorithmic double semicircle law of \cref{thm:main-algo}, even in parameter regimes when \cref{thm:main-opi} does not apply.  Indeed we can, and this is our next result.

\begin{restatable}[Existential improvement beyond the double semicircle law]
{theorem}{mainexistential}
\label{thm:main-existential}
Fix $\rho\in(0,1/2)$ and $\varepsilon>0$. For sufficiently large $n$, every Reed--Solomon code
$C\subset\F_p^n$ of fixed rate $R\in(\rho, 1/2]$ over a prime
field with $p\ge n$ admits, for every instance with $|S_i|\ge\rho p$
for all $i$, a solution $Q\in\F_p[X]$ with $ \deg Q < k$ with satisfaction ratio $s(Q)\ge V_\rho(R)-\varepsilon$, where
\begin{equation}
    V_\rho(R):=1-(1-2\rho)(1-2R).
\end{equation}
\end{restatable}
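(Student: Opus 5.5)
We prove \cref{thm:main-existential} by reducing it to the perfect-solution regime of \cref{thm:main-opi}, Item~3 ($R+\rho>1$). The idea is to puncture the instance and enlarge the lists on the punctured coordinates, then count how many of the enlarged constraints a random solution violates.

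\textbf{Step 1: fill a $\tau$ fraction of lists completely.} Choose a set $T\subseteq[n]$ of size $\tau n$ and replace $S_i$ by $\F_p$ for $i\in T$. Any solution of the modified instance satisfies all constraints outside $T$, which gives satisfaction ratio at least $1-\tau$. To benefit from the coordinates in $T$ as well, take a \emph{uniformly random} codeword of $C\cap\mbf{S}'$, where $\mbf{S}'$ is the modified product set. If $C$ has near-optimal OPI discrepancy for $\mbf{S}'$ and for the sets $\mbf{S}'$ further restricted at a single coordinate $i\in T$ to $S_i$, then $\Pr[c_i\in S_i]\approx\rho_i\ge\rho$ for each $i\in T$. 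Hence the expected satisfaction is about $1-\tau+\tau\rho=1-\tau(1-\rho)$, and some codeword achieves this.

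\textbf{Step 2: obtain discrepancy after the fill.} Filling alone does not change the rate, so instead we puncture. Delete the coordinates in $T$ to obtain a Reed--Solomon code of length $n'=(1-\tau)n$ and dimension $k$, so its rate is $R'=R/(1-\tau)$. Choose $\tau$ just large enough that $R'+\rho>1$, i.e.\ $\tau\approx 1-R/(1-\rho)$, which lies in $(0,1)$ since $R<1-\rho$ for $R\le 1/2$ and $\rho<1/2$. Then \cref{thm:main-opi}, Item~3, applies to the punctured code with lists $(S_i)_{i\notin T}$, so solutions exist and are near-uniformly distributed in the sense of \cref{def:disc}. Since $k$ exceeds the punctured length times $1-\rho$, each fiber over a value on a deleted coordinate can be handled by applying \cref{thm:main-opi}, Item~3, to the punctured code with one extra coordinate carrying the list $S_i$; its rate $k/(n'+1)$ is still above $1-\rho$ for large $n$. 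This gives $\Pr[c_i\in S_i]\ge\rho-e^{-\Omega(n)}$ for each $i\in T$.

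\textbf{Step 3: compute and optimize.} With this choice of $\tau$, the satisfaction is
\[1-\tau(1-\rho)=1-(1-\rho)+R=\rho+R,\]
which matches the classical bound, not $V_\rho(R)$. Note that $V_\rho(R)=\rho+R+(1-2\rho-2R+4\rho R)$ exceeds $\rho+R$ in this range. So the naive fill is too weak, and this is the step I expect to be the main obstacle.

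The refinement I would try is to enlarge the lists on $T$ partially rather than fully. For $i\in T$, replace $S_i$ by a superset of density $1/2$ (or larger) and invoke Item~2, or Item~1 together with a pairwise-balanced enlargement, to allow a larger rate loss. Concretely, on coordinates in $T$ use lists of density $\sigma\ge1/2$ containing $S_i$, and count a coordinate as satisfied with probability $\rho/\sigma$ conditional on landing in the enlarged list. Then optimize over $\tau$ and $\sigma$, using a version of \cref{thm:main-opi} for mixed densities: discrepancy holds when, via the Fourier framework, the dual-code list-recovery bound controls the mixed products. A two-level scheme is natural, with density $\rho$ off $T$ and density $1-\rho$ (the complement trick) on $T$, and a condition of the form $R+\rho>1$ holding on average across coordinates. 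Setting this up should give $1-(1-2\rho)(1-2R)$ at the extremal choice: at $R=1/2$ the value is $1$, and at $R=\rho$ it is $1-(1-2\rho)^2$, which is consistent with the bound being linear in each of $\rho$ and $R$. The key technical point to verify is the mixed-density discrepancy statement.
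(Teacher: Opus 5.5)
There is a genuine gap. The argument as written never gets past $\rho+R$. The refinement meant to reach $V_\rho(R)$ rests on an unproven mixed-density discrepancy statement, and that statement would not suffice even if granted. Every variant you describe samples near-uniformly from the solutions of an enlarged instance $\prod_i S_i'$, with $S_i\subseteq S_i'$ and $|S_i'|=\sigma_i p$. Near-optimal discrepancy then gives $\P[c_i\in S_i]\approx\rho/\sigma_i$, so the expected satisfaction is $\frac1n\sum_i\rho/\sigma_i$. Under the condition you suggest, $R+\frac1n\sum_i\sigma_i>1$, this objective is convex in $(\sigma_i)$. It is therefore maximized with $\sigma_i\in\{\rho,1\}$, which is exactly your Step~2 and returns $\rho+R$.

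Items~1 and~2 of \cref{thm:main-opi} do not apply to the mixed instance on $C$ itself. Item~1 needs rate above $1/2$. Item~2 needs every list, including the density-$\rho$ lists off $T$, to have density at least $1/2-\delta$. The best use of them is to puncture to fewer than $2k$ coordinates, so that Item~1 applies with rate above $1/2$. This gives $2R+(1-2R)\rho=1-(1-2R)(1-\rho)$, which beats your $\rho+R$ but still falls short of $V_\rho(R)$ by $\rho(1-2R)$.

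Three minor points. First, $V_\rho(R)-(\rho+R)=\rho+R-4\rho R$, not $(1-2\rho)(1-2R)$. Second, the Prange-type classical bound is $\rho+R(1-\rho)$, not $\rho+R$. Third, checking the endpoints $R=\rho$ and $R=1/2$ establishes nothing about the interior.

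The missing idea is that the unforced coordinates need a \emph{biased} distribution that satisfies more than a $\rho$ fraction of them, not a uniform one. The paper forces a set $T$ with $|T|=t$ (the opposite role from your $T$). It puts the DQI amplitude $F_\ell$ on the punctured instance on $T^c$, with $N=n-t$ and $\ell=\lfloor\rho N\rfloor$. It then samples $x\in C$ with probability proportional to $A_T(x)F_\ell(x)^2$, where $A_T=\rho^{-t}\prod_{i\in T}\one_{S_i}$.

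Every sample satisfies all of $T$. Over $\F_p^n$, the factor $A_T$ decouples from $F_\ell$, so the $T^c$ part attains $\SCL_\rho(\rho)=4\rho(1-\rho)>\rho$. The key accounting is that $A_T$ enters linearly, costing degree $t$ once, while $F_\ell^2$ costs degree $2\ell$. By \cref{lem:layer-recursion,lem:key-fourier}, passing from $\F_p^n$ to $C$ only involves dual codewords of weight at most $t+2\ell+O(1)$. \Cref{lem:N-proxy-bound} makes this error $e^{-\Omega(n)}$ whenever $t+2\ell\le 2k-\Omega(n)$. Taking $t$ maximal under this budget gives $t/n\approx 2(R-\rho)/(1-2\rho)$, and satisfaction $t/n+(1-t/n)\cdot 4\rho(1-\rho)\approx 1-(1-2\rho)(1-2R)$.

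Your Item~1 variant is exactly the $\ell=0$ endpoint of this tradeoff, where $\SCL_\rho(0)=\rho$. That is why it falls short.
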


We remark that this bound improves over the double semicircle law guarantee throughout the region $R\in(\rho, 1/2]$, as
\[
    V_\rho(R)-\SCL_\rho(R)
    =
    \left(\sqrt{\rho(1-R)}
          -\sqrt{(1-\rho)R}\right)^2>0.
\]
The existential improvement of \cref{thm:main-existential} is not visible in \cref{fig:opi-picture}(a) because $V_\rho(R) - \SCL_\rho(R)\to 0$ as $\rho \to 1/2$; the improvement can be seen for example when $\rho=1/4$, shown in \cref{fig:opi-picture}(b).

\Cref{thm:main-opi,thm:main-existential} establish improved existential results for \emph{all} OPI instances.  We are able to obtain even more improved existential results when the evaluation points $\alpha_i$ are random (but the input lists $S_i$ are still worst-case, chosen after the evaluation points are fixed).  Formally, we have the following theorem.
\begin{restatable}[Near-optimal OPI discrepancy for random Reed-Solomon codes]
{theorem}{randomopi}
\label{thm:random-opi}
Fix $R,\rho\in(0,1)$. For
sufficiently large $n$ and prime $p$ with $\log p= n^{\Omega(1)}$, with probability at least $1-e^{-\Omega(n)}$ over a uniformly
random tuple of distinct evaluation points $\alpha_1, \ldots, \alpha_n \in \mathbb{F}_p$, the Reed-Solomon code $C$ over $\F_p$ with rate $R$ and evaluation points $\alpha$ has near-optimal OPI discrepancy for every instance with $|S_i|\ge \rho p$ for every $i$.
\end{restatable}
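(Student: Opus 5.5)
We prove \cref{thm:random-opi} by Fourier analysis. The idea is to write the count $|C\cap\mbf S|$ as a sum over the dual code $C^\perp$ (a generalized Reed--Solomon code of dimension $n-k$), and then show that for random evaluation points the nonzero dual codewords contribute negligibly. Write $f_i=\one_{S_i}$ and $\hat f_i(a)=\frac1p\sum_{x}f_i(x)\omega^{-ax}$ with $\omega=e^{2\pi i/p}$. Poisson summation gives
\[
\frac{|C\cap\mbf S|}{|C|\prod_i\rho_i}=\sum_{c\in C^\perp}\prod_{i=1}^n\frac{\hat f_i(c_i)}{\rho_i}.
\]
The term $c=0$ equals $1$. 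It therefore suffices to bound
\[
E:=\sum_{0\ne c\in C^\perp}\prod_i|\hat f_i(c_i)|/\rho_i\le e^{-\Omega(n)}
\]
simultaneously for all lists with $\rho_i\ge\rho$. Here $\hat f_i(0)=\rho_i$, and $\sum_a|\hat f_i(a)|^2=\rho_i$ by Parseval.

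\textbf{Key steps.} We follow the adaptation of \cite{HMSY26} mentioned in \cref{rem:concurrent}.

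First, we bucket the coefficients. For each $i$, split the frequencies $a\neq0$ into levels according to the size of $|\hat f_i(a)|/\rho_i$. By Parseval, at most about $\rho_i^{-1}\cdot 4^{j}$ frequencies have ratio in $(2^{-j},2^{-j+1}]$. There is also a uniform nontrivial bound: since $|S_i|\le(1-\Omega(1))p$ is not needed, we use that the ratio is at most $1$, and in fact strictly below $1$ for $a\ne0$. We then organize $E$ by the support $T=\supp(c)$ and the level profile of $c$ on $T$. A codeword of $C^\perp$ has weight at least $k+1$, so $|T|\ge k+1=Rn+1$.

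Second, we count structured dual codewords. For a fixed support $T$ and a fixed choice of level for each coordinate in $T$, the contribution is at most
\[
\#\{c\in C^\perp:\supp c=T,\ c_i\in A_i\}\cdot\prod_{i\in T}2^{-j_i+1},
\]
where $A_i$ is the set of level-$j_i$ frequencies, so $|A_i|\lesssim 4^{j_i}/\rho$. This is a list-recovery count for $C^\perp$ restricted to $T$. For worst-case evaluation points only Johnson-type bounds are available, which yields the thresholds of \cref{thm:main-opi}. For random $\alpha$, however, we can use the random-RS analysis of \cite{HMSY26}: for a fixed tuple of lists $(A_i)$, the probability over $\alpha$ that the punctured generalized RS code on $T$ contains a codeword in $\prod A_i$ behaves as for a random linear code of dimension $|T|-k$. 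The expected count is then about $p^{-k}\prod_{i\in T}|A_i|$, up to factors like $\binom{n}{|T|}$ and $p^{o(k)}$.

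Third, we take a union bound over lists. The lists $A_i$ depend on the $S_i$, and these are chosen adversarially after $\alpha$ is fixed. So we must union bound over all possible level-set lists of size $m_i\lesssim 4^{j_i}/\rho$, which costs a factor of roughly $p^{\sum m_i}$. This is the main obstacle. The expected count $p^{-k}\prod|A_i|$ beats the union bound only when the lists are small compared with $k$. For large lists we instead bound the count deterministically by $\prod_{i\in T\setminus J}|A_i|\cdot p^{\,|T|-k-|T\setminus J|}$, which makes the level weights $2^{-j_i}$ dominate. I would try to handle the tension by splitting coordinates into ``heavy'' levels ($j_i\le j_0$ for a constant $j_0$, with $O(1)$-size lists over which the union bound is affordable, giving $p^{O(n)}$ choices against a $p^{-k}$ gain) and ``light'' levels. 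For light levels we sum $|\hat f_i|$ trivially via Cauchy--Schwarz, and each such coordinate gains a factor $2^{-j_0}$. The hypothesis $\log p=n^{\Omega(1)}$ makes $p^{-k}$ overwhelm all $e^{O(n)}$ and $\binom{n}{|T|}$ losses. A Chernoff/Markov bound then upgrades the expectation bound to probability $1-e^{-\Omega(n)}$.

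Finally, we sum over supports and level profiles; there are $e^{O(n)}$ of them after capping levels at $j_{\max}=O(\log p)$. This gives $E\le e^{-\Omega(n)}$ for every choice of lists simultaneously, which is exactly the near-optimal OPI discrepancy claimed.
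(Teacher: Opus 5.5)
\textbf{The gap is in the target inequality itself.} You reduce the theorem to showing $E\le e^{-\Omega(n)}$, where $E$ is exactly the phase-forgotten proxy $\FO^+$. For $R<1/2$ this is false for \emph{every} choice of evaluation points, so no randomness in $\alpha$ can rescue it.

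Take every $S_i$ to be the nonzero quadratic residues. These lists are admissible whenever $\rho<1/2$, e.g.\ $\rho=R=1/4$, a case the theorem covers. Gauss sums give $|\hat f_i(a)|/\rho_i=(1+O(p^{-1/2}))\,p^{-1/2}$ for every $a\ne0$. The dual $C^\perp$ is an $[n,n-k]$ MDS code, and the dual words vanishing at a fixed coordinate form a hyperplane. Hence at least $p^{n-k}(1-n/p)$ dual words have full weight, and so $E\ge(1-o(1))\,p^{n/2-k}$, which is enormous when $k<n/2$. The weight enumerator of an MDS code does not depend on the $\alpha_i$, so this holds for all evaluation points.

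This is precisely the rate-$1/2$ barrier for phase-forgotten proxies (the example of \cite{n24}; cf.\ \cref{rem:like-bdir}). Your own heuristic already exhibits it. At level $2^{-j}\approx p^{-1/2}$ the lists have size $\approx p$, so $p^{-k}\prod_{i\in T}|A_i|2^{-j_i}\approx p^{|T|/2-k}$.

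Two further steps also fail:
\begin{itemize}
\item The ``light levels via Cauchy--Schwarz'' step is the two-information-set argument of \cref{lem:wftl}. It needs $|T\setminus M|\ge 2(|T|-k)$, which breaks for supports of size near $n$ once $R<1/2$.
\item The first-moment-plus-union-bound step over adversarial lists cannot work either. Fixing even one frequency per coordinate on a support of size $|T|\ge k+1$ costs a factor $p^{|T|}$ against a gain of only $p^{-k}$.
\end{itemize}

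\textbf{The missing idea} is to keep the phases on the pseudorandom part, and to avoid any union bound over the lists. The paper, following \cite{HMSY26}, proceeds as follows.
\begin{enumerate}
\item Write $\one_{S_i}=G_i+H_i$, where $G_i$ lies in a structured class $\mathcal G$ closed under affine precomposition and $\|H_i\|_{U^s}\le\eta$ for a fixed $s>\max\{4,1/R\}$.
\item Expand $\rho^n\FO(0)$ as $\sum_J\Delta_J(C)$ over the set $J$ of uniform factors.
\item Bound the terms with $|J|\ge\tau n$ by $2\eta^{\tau n}$ using the higher-order MDS von Neumann inequality (\cref{fact:block-ul}). This is where cancellation is exploited, and where Fourier-flat lists such as the quadratic residues are handled.
\item For $|J|<\tau n$, condition on $x_J$. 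This reduces to a shortened RS code $C_J$ carrying only structured factors.
\item On an event $\mathsf{Good}$ that depends only on the evaluation points (so there is no union bound over lists), \cref{fact:hmsy-model} supplies an auxiliary code $\widetilde C$ with $\dim\widetilde C^\perp=O(1)$ and minimum distance $k-O(1)$. Only there is the phase-forgotten two-information-set bound applied (\cref{prop:hmsy-relative}). It works at every rate, because two information sets of a dual of dimension $O(1)$ always fit.
\end{enumerate}
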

In particular, \cref{thm:random-opi} implies that with high probability over the choice of the evaluation points $\alpha_i$, for all choices of input lists $S_1, \ldots, S_n \subseteq \mathbb{F}_p$ of density at least $\rho$, 
there are roughly $p^k\prod_i\rho_i$ perfect
solutions to the corresponding OPI instance.  We emphasize that this is still a result about worst-case OPI, in the sense that the input lists $S_i$ may be adversarially chosen depending on the evaluation points $\alpha_i$.

The proof of \Cref{thm:random-opi} builds on the techniques in a very recent result of Hwang, Maji, Shkredov and Ye~\cite{HMSY26}, who, as noted in \cref{rem:concurrent}, resolve \cref{q:llr} about LLR for \emph{random} evaluation points.

\vspace{.5cm}
We include a phase diagram in \cref{fig:phase} (over the phase space $(R, \rho)$) comparing \cref{thm:main-algo}, \cref{thm:main-opi}, \cref{thm:main-existential} and \cref{thm:random-opi} over the whole parameter space. 

\begin{figure}[H]
    \centering
    \scalebox{0.9}{

\providecolor{OPIGreen}{HTML}{009E73}
\providecolor{OPIPurple}{HTML}{7B2CBF}
\providecolor{OPIYellow}{HTML}{F0E442}

\begin{tikzpicture}[
    x=6.1cm,
    y=6.1cm,
    >=Latex,
    every node/.style={font=\small},
    legend text/.style={anchor=west,align=left,text width=5.05cm,font=\small},
    axis label/.style={font=\normalsize},
]

\path[fill=black!4] (0,0) rectangle (1,1);

\path[fill=OPIYellow!45]
    (0,0) -- (0.5,0.5) -- (0.5,0) -- cycle;
\path[pattern=dots,pattern color=black!30!yellow]
    (0,0) -- (0.5,0.5) -- (0.5,0) -- cycle;
\path[color=black!30!yellow](0,0)--(0.5,0.5);

\path[fill=OPIGreen!18] (0,1) -- (1,1) -- (1,0) -- cycle;
\filldraw[pattern=fivepointed stars, pattern color=OPIGreen!70](0,1) -- (1,1) -- (1,0) -- cycle;

\path[fill=OPIPurple!14] (0.5,0) -- (1,0) -- (0.5,0.5) -- cycle;

\path[fill=OPIPurple!10]
    (0.44,0.44) -- (0.50,0.44) -- (0.50,0.50) -- (0.44,0.56) -- cycle;
\path[
    pattern=north east lines,
    pattern color=OPIPurple,
    draw=OPIPurple!70,
    line width=.35pt
]
    (0.44,0.44) -- (0.50,0.44) -- (0.50,0.50) -- (0.44,0.56) -- cycle;

\draw[densely dotted,black!24,line width=.45pt] (0,0.5) -- (0.5,0.5);
\draw[OPIPurple!65,dashed,line width=.65pt] (0.5,0) -- (0.5,0.5);
\draw[OPIGreen,line width=1.2pt] (0,1) -- (1,0);
\draw[black!22,line width=.45pt] (0,1) -- (1,1) -- (1,0);

\draw[black!80,line width=.65pt,-{Latex[length=2.2mm]}] (0,0) -- (1.045,0);
\draw[black!80,line width=.65pt,-{Latex[length=2.2mm]}] (0,0) -- (0,1.045);

\draw[black!70,line width=.45pt] (0.5,0) -- +(0,-0.012)
    node[below=3pt] {$\frac12$};
\draw[black!70,line width=.45pt] (1,0) -- +(0,-0.012)
    node[below=3pt] {$1$};
\draw[black!70,line width=.45pt] (0,0.5) -- +(-0.012,0)
    node[left=3pt] {$\frac12$};
\draw[black!70,line width=.45pt] (0,1) -- +(-0.012,0)
    node[left=3pt] {$1$};
\node[below left=2pt] at (0,0) {$0$};

\node[axis label,anchor=north] at (0.53,-0.112)
    {Rate $R=k/n$};
\node[axis label,rotate=90,anchor=south] at (-0.112,0.53)
    {Input list density $\rho$};


\def\OPILegendSwatch#1#2{%
    \tikz[x=1cm,y=1cm,baseline=-0.5ex]{%
        \path[fill=#1] (0,0) rectangle (0.25,0.25);
        \path[#2] (0,0) rectangle (0.25,0.25);
        \draw[black!35,line width=.35pt]
            (0,0) rectangle (0.25,0.25);
    }%
}
\def\OPILegendSwatch#1#2{%
    \tikz[x=1cm,y=1cm,baseline=0.10cm]{%
        \path[fill=#1] (0,0) rectangle (0.38,0.38);
        \path[#2] (0,0) rectangle (0.38,0.38);
        \draw[black!35,line width=.35pt]
            (0,0) rectangle (0.38,0.38);
    }%
}
\node[anchor=north west,inner sep=0pt,font=\small]
    at (1.11,0.95) {%
    \renewcommand{\arraystretch}{1.55}%
    \setlength{\tabcolsep}{4pt}%
    \def\OPILegendRule{%
        \noalign{\vskip 5pt}%
        \noalign{%
            {\color{black!20}\hrule height .35pt}%
        }%
        \noalign{\vskip 5pt}%
    }%
    \begin{tabular}{@{}c c c c@{}}
        & \shortstack{\textbf{Algorithmic}\\\textbf{satisfaction}}
        & \shortstack{\textbf{Existential}\\\textbf{Satisfaction}} 
        & 
\shortstack{\textbf{Existential}\\\textbf{Satisfaction} \\ for \emph{most} \\evaluation pts}
        \\
        \OPILegendRule

        \OPILegendSwatch{black!4}{}
        & $\SCL_\rho(R) < 1$
        & \shortstack{No improvement \\ over algorithmic}  &$1$\\
        \OPILegendRule

        \OPILegendSwatch{OPIPurple!14}{}%
        \hspace{2pt}%
        \OPILegendSwatch{OPIPurple!10}
            {pattern=north east lines,pattern color=OPIPurple}
        & $\SCL_\rho(R) < 1$
        & $1$ (near-opt.\ OPI disc.) &$1$ \\
        \OPILegendRule

        \OPILegendSwatch{OPIYellow!45}
            {pattern=dots,pattern color=black!30!yellow}
        & $\SCL_\rho(R) < 1$
        & $V_\rho(R) > \SCL_\rho(R)$  &$1$\\
        \OPILegendRule

        \OPILegendSwatch{OPIGreen!18}
            {pattern=fivepointed stars,pattern color=OPIGreen!70}
        & $1$
        & $1$ (near-opt.\ OPI disc.) &$1$
    \end{tabular}%
};

\end{tikzpicture}}
    \caption{Comparing results of
    Theorems \ref{thm:main-algo}, \ref{thm:main-opi}, \ref{thm:main-existential}, and \ref{thm:random-opi} for worst-case OPI.  In the gray, purple, and yellow regions, \cref{thm:main-algo}(1) guarantees an algorithmic solution with satisfaction ratio at least
    $\on{SCL}_{\rho}(R)-\varepsilon < 1$.
    In the purple regions, perfect solutions exist (and follow by \cref{thm:main-opi}(1) for solid purple or \cref{thm:main-opi}(2) for hatched purple; the $\delta > 0$ that controls the hatched purple region is not to scale).
    In the dotted yellow region, there exists a solution with satisfaction $V_\rho(R) > \SCL_\rho(R)$ by \cref{thm:main-existential}.
     In the starred green region, \cref{thm:main-opi}(3)
    gives near-optimal discrepancy and hence a perfect solution, while 
    \cref{thm:main-algo}(2) gives a quantum algorithm to find this solution.  Finally, in the entire region, \cref{thm:random-opi} implies that for \emph{most} evaluation points $\alpha_1, \ldots, \alpha_n$, for \emph{all} input lists $S_1, \ldots, S_n$ of density $\rho$, there exists a perfect solution to the corresponding OPI problem.}
    \label{fig:phase}
\end{figure}

\vspace{.5cm}

Finally, we turn our attention to the {classical} hardness of OPI.  It is known from~\cite{yz,folded-opi} that a similar problem to OPI---essentially the same problem, but defined for \emph{folded}\footnote{A folded RS code is a code $C \subset (\F_p^m)^N$, where the $i$'th symbol is $\vec{Q}(\alpha_i) := (Q(\alpha_i), Q(\gamma \alpha_i), \ldots, Q(\gamma^{m-1}\alpha_i)) \in \F_p^m$, where $\gamma \in \F_p^\times$ is a primitive element and where $\alpha_1, \ldots, \alpha_N \in \F_p^\times$ are distinct evaluation points so that $\gamma^a \alpha_i \neq \gamma^b \alpha_j$ for any $a,b \in \{0,\ldots, m-1\}$ and any $i \neq j \in [N]$.  Thus, the ``YZ problem'' studied by Yamakawa and Zhandry in \cite{yz} is, given oracle access to lists $S_1, \ldots, S_N \subset \F_p^m$, to find a polynomial $Q$ of degree less than $k$ so that $\vec{Q}(\alpha_i) \in S_i$ for all $i$.  The work \cite{folded-opi} proves a similar hardness result when a quantum algorithm may not find perfect solutions.  That is, they show that there is a quantum algorithm that can solve the folded OPI problem with some satisfaction ratio; but that any classical algorithm making only a polynomial number of queries to the membership oracles cannot exceed a smaller satisfaction ratio.} Reed--Solomon codes---is quantumly easy but classically hard (unconditionally) over very large folded alphabets, relative to a membership oracle for the lists $S_i$.\footnote{We note that a membership oracle is necessary, as in this parameter regime, the input lists $S_i$ are superpolynomial in $n$, and thus cannot be read as input efficiently.  A similar model is used in \cite{yz,hy,folded-opi}.}  Our next result adapts the hardness result of \cite{yz} to apply to OPI directly, rather than the folded version. 

\begin{restatable}[Quantum Advantage for OPI, Relative to a Membership Oracle---Informal, see \cref{thm:balanced-opi-advantage}]
{theorem}{mainadvantage}
\label{thm:main-opi-advantage}
There exists a random ensemble of balanced OPI instances of length $n$ with rate $1/\mathrm{polylog}(n)$ over prime fields $\F_p$ with quasipolynomial $p = \exp({\mathrm{polylog}(n)})$ 
such that the following hold, given access to a membership oracle for the input lists $S_i$:
\begin{itemize}
\item There is a polynomial-time quantum algorithm that finds a perfect solution with all but negligible probability.
\item Every randomized classical
algorithm making polynomially many queries to the membership oracle has negligible probability of outputting a solution with satisfaction ratio $\frac12+\Omega\left(\frac{1}{\log\log n}\right)$.
\end{itemize}
\end{restatable}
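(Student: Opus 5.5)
The plan is to adapt the Yamakawa--Zhandry (YZ) construction from folded Reed--Solomon codes to plain Reed--Solomon codes.

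\textbf{The ensemble.} The field size $p$ is quasipolynomial and $k = n/\mathrm{polylog}(n)$. Each input list is defined through an independent random function $H_i:\F_p\to\{0,1\}$, with $S_i := H_i^{-1}(1)$. To make the instances exactly balanced, I would draw each $S_i$ uniformly among subsets of size $\lfloor p/2\rfloor$ (or $\lceil p/2\rceil$); the membership oracle simply evaluates $H_i$. The role played in YZ by folding, namely giving each constraint many symbols, is taken over here by making the rate polylogarithmically small. Unfolded Reed--Solomon codes with good list-recovery at rate $1/\mathrm{polylog}$ then supply the needed structure.

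\textbf{Quantum easiness.} I would run the DQI/Regev-style reduction used in the paper's Fourier framework, in the form made algorithmic by \cite{sjo}. Prepare a superposition over low-weight errors $e$ weighted by the Fourier coefficients $\widehat{1_{S_i}}$, apply the oracle phases, and then decode the dual code. The dual of a Reed--Solomon code is a generalized Reed--Solomon code of rate $1-R$ close to $1$. Since the lists are random and balanced, the relevant Fourier mass is concentrated so that, at quasipolynomial $p$, the error weight the dual decoder must handle is below half the dual distance $R n$. Perfect solutions exist because $R+\rho>1$ fails, but near-optimal discrepancy for random lists still holds with high probability by a second-moment argument. As in \cref{thm:main-algo}(2), I would argue that the amplitude on non-solutions is negligible, so measuring yields a perfect solution. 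The subtle point is the exact satisfaction: I would mirror YZ, where a perfect solution comes from the ``unambiguous decoding'' of the dual code once the random-function Fourier spectrum is handled through the small-range distribution trick.

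\textbf{Classical hardness.} I would follow the YZ compression/online-extraction argument. A classical algorithm making $q = \poly(n)$ queries learns $H_i$ on at most $q$ points in total. To output $Q$ with $Q(\alpha_i)\in S_i$ for a $\tfrac12+\eta$ fraction of $i$, with $\eta = \Omega(1/\log\log n)$, the algorithm must either have queried $Q(\alpha_i)$ for many $i$, or guess fresh random bits. Fresh bits succeed on each unqueried coordinate independently with probability $\approx 1/2$, so Chernoff bounds an excess of $\eta n/2$ by $e^{-\Omega(\eta^2 n)}$. It then remains to show that queried coordinates rarely cover an $\eta/2$ fraction of the positions of any single degree-$<k$ polynomial consistent with useful answers. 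This is a list-recovery statement: given query sets $T_i\subset\F_p$ with $\sum_i |T_i|\le q$, the number of polynomials agreeing with $\mathbf{T}$ on $\ge \eta n/2$ coordinates is small. For this I would invoke the Reed--Solomon list-recovery bounds at rate $1/\mathrm{polylog}$ cited in the paper \cite{bcdz,gj26}, which need the agreement $\eta n$ to exceed roughly $\sqrt{R}\,n$ up to list-size factors. The choice $R = 1/\mathrm{polylog}(n)$ and the $\log\log n$ loss is calibrated to make this hold with list size $\exp(\mathrm{polylog})$. A union bound over this list against the $2^{-\Omega(\eta^2 n)}$ guessing probability then finishes the argument.

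\textbf{Main obstacle.} The adaptive interplay is the hard step: the algorithm may choose queries as a function of past answers. I would handle it as YZ do, by a hybrid in which each query answer is resampled, so that the output polynomial is effectively determined by a transcript of length $\poly(n)\log p$. I would then union bound over transcripts, which costs a factor of $\exp(\mathrm{polylog})$ and is absorbed by the $e^{-\Omega(\eta^2 n)}$ guessing bound.
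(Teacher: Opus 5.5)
Your proposal has a genuine gap in the quantum-easiness step. It fails for a structural reason. In any DQI/Regev-type algorithm, the error $y$ built from the list functions is nonzero in each coordinate with probability $1-\rho$. By Parseval, $|\widehat{\mathbf 1_{S_i}}(0)|^2=\rho^2$ while the nonzero frequencies carry total mass $\rho(1-\rho)$, however that mass is spread over $\xi\neq0$. So targeting a perfect solution at $\rho=1/2$ forces error weight $\ell\approx n/2$; equivalently, $\SCL_{1/2}(\ell/n)<1$ unless $\ell\ge n/2$. No ``Fourier concentration'' pushes this below $Rn/2$.

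The dual code, however, has rate $1-R=1-1/\mathrm{polylog}(n)$. A Hamming ball of radius $n/2$ contains on average $p^{-k}\binom{n}{n/2}(p-1)^{n/2}\approx p^{n/2-k}$ dual codewords. So the coherent decode/uncompute step cannot be carried out. Restricting to decodable radii $\ell\lesssim Rn$ only gives $\SCL_{1/2}(R)=\tfrac12+\sqrt{R(1-R)}=\tfrac12+o(1)$, which the paper itself notes.

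Your analogy with YZ is inverted. There $C_0$ has rate $9/10$, so the quantum step only has to decode the rate-$1/10$ dual with about half the coordinates corrupted. Lowering the rate of $C$ makes the dual harder to decode, not easier. For independent uniformly random balanced lists you have given no working quantum algorithm. Near-optimal discrepancy and \cref{thm:main-algo}(2) do not help, since $R+\rho\le1$ here. Your classical step also misapplies list recovery. With adaptively chosen query sets of total size $q=\poly(n)$, the Johnson/Guruswami--Sudan-type guarantee needs agreement about $\sqrt{kq}$, which exceeds $n$. The $\sqrt R\,n$ bound you cite is for singleton lists.

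The missing idea is that the separation must be \emph{planted} into the instance rather than drawn from random lists. The paper uses $\log p\asymp m^2\log m\ge D$ to encode a whole YZ source message $a\in\F_2^D$, together with a block index, into one field symbol $u$. It then builds exactly balanced lists via $F_i(2u+b)=\varphi_i(u)\oplus b$:
\begin{itemize}
\item one list $F_O$ that simulates the source oracles $O_j(E_j(a))$;
\item one random ``activity'' list $F_B$;
\item $\binom{m}{t}H$ main lists that, on active $u$, test the parity of failures over a $t$-subset $J$, and use random filler bits otherwise.
\end{itemize}
A consistency proposition then shows that any $Q$ with $s(Q)>\tfrac12+\eps$ must be a constant $2u$ with $B(u)=1$ and $r(a(u))\notin[\eta,1-\eta]$. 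Otherwise the parity tests pass only with probability about $\tfrac12(1+(1-2\eta)^t)\le\tfrac12+\eps/16$. This proposition is a union bound over all $p^k$ polynomials, using random evaluation points and $e^{-\Omega(\delta^2N)}$ Chernoff tails. That union bound is what forces $k\approx\delta^2N/\log p$, and hence the $1/\mathrm{polylog}$ rate.

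The quantum algorithm never decodes an RS dual. It runs the YZ algorithm on the source instance, answering queries through $F_O$, finds an active encoding via $F_B$, and outputs that constant polynomial. Classical hardness comes from simulating each target query with at most $t$ source queries and invoking the two-sided YZ bound. The $\Omega(1/\log\log n)$ gap comes from $\eps=1/\log_2 m$ with $m\asymp(\log n)^2$, not from calibrating a list-recovery radius.
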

Here, the quantum algorithm has coherent access to the membership oracle, and the classical algorithm is allowed unrestricted computation
between queries. The probabilities are over the sampled instance from the ensemble and the algorithm, and the failure probabilities here are $e^{-\Omega((\log n)^2)}$. By tuning the parameters, one can obtain $e^{-\Omega(n^c)}$ probabilities for a constant $c>0$ at the cost of lower rate and classical satisfaction ratio control. The full theorem is stated in \cref{thm:balanced-opi-advantage} and this alternative parametrization is discussed in \cref{rem:alternative-param}.

\subsection{The relationship between LLR and OPI}

The relationship between LLR and OPI was first observed in \cite{us}.  At a high level, the connection is that similar \emph{Fourier proxies}---that is, quantities that bound the Fourier transform of the discrepancy $\left| \frac{|C \cap \mathbf{S}|}{|C|\prod_i \rho_i } - 1 \right|$ (for OPI) or the distance $\TV(L_{s\one},L_{s'\one})$ between two transcript distributions (for LLR)---can be used to solve both problems.  The work \cite{us} used this connection to import ideas from the LLR literature to make progress on OPI; but was not able to make progress on LLR itself.  In contrast, we make progress on both OPI and LLR, as outlined above.  In addition to this progress, we expand on the relationship between OPI and LLR.  

\paragraph{Our results on the relationship between LLR and OPI.}

We formalize the relationship observed by \cite{us}.  As in \cref{def:llr}, let $C \subset \F_p^n$ be the linear code $C = \{ (\alpha_i Q(\alpha_i))_{i=1}^n \,:\, \deg(Q) < k\}$, so that the LLR question concerns transcripts $(\ell_1(c_1), \ldots, \ell_n(c_n))$ when $c$ is chosen uniformly at random from a coset $s\one + C$.  It turns out that near-optimal OPI discrepancy is ``harder'' than LLR, in the sense that establishing near-optimal OPI discrepancy for the cosets $s\one + C$ will establish LLR for $C$.  We have the following theorem.

\begin{theorem}[Near-optimal OPI discrepancy is at least as hard as LLR---Informal, see \cref{thm:opi-harder}]
\label{thm:informal-opi-harder}
Let $\ell_1, \ldots, \ell_n:\F_p \to \{0,1\}$ be one-bit leakage functions.  For $z \in \{0,1\}^n$, define $S_i(z_i) = \ell_i^{-1}(z_i)$.  Let 
$C = \{ \alpha_i \cdot Q(\alpha_i) \,:\, \deg(Q) < k \}$ be the generalized Reed--Solomon code in \cref{eq:secret-zero-code}.  Suppose that, for all $s \in \F_p$, and for all $z \in \{0,1\}^n$,  $s\one + C$ has near-optimal OPI discrepancy for input lists $S_1(z_1), \ldots, S_n(z_n)$.
Then $C$ is 1-bit local leakage resilient against these leakages.
\end{theorem}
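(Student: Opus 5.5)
The approach is to write the transcript distribution $L_{s\one}$ explicitly in terms of the intersection sizes $|(s\one+C)\cap \mathbf{S}(z)|$, where $\mathbf{S}(z):=S_1(z_1)\times\cdots\times S_n(z_n)$, and then use near-optimal OPI discrepancy to show that each such probability is, up to a small multiplicative error, independent of $s$. First, for each $z\in\{0,1\}^n$, the probability of seeing transcript $z$ when $c$ is uniform on $s\one+C$ is
\[
L_{s\one}(z)=\frac{|(s\one+C)\cap \mathbf{S}(z)|}{|C|},
\]
since $\ell_i(c_i)=z_i$ exactly when $c_i\in S_i(z_i)$. Write $\rho_i(z_i):=|S_i(z_i)|/p$ and $\pi(z):=\prod_i\rho_i(z_i)$. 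This $\pi$ is a product distribution on $\{0,1\}^n$ that does not depend on $s$; it is the transcript law for a uniformly random $c\in\F_p^n$. Note that $S_i(z_i)$ may be empty. In that case both $L_{s\one}(z)$ and $\pi(z)$ vanish, so these $z$ can be discarded, and the hypothesis is only needed for the nonempty lists.

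Next, apply the hypothesis to the coset $s\one+C$. Near-optimal OPI discrepancy (\cref{def:disc}, applied to the affine code $s\one+C$, which has the same size as $C$) says that for every $z$ with all $S_i(z_i)$ nonempty,
\[
\bigl|L_{s\one}(z)-\pi(z)\bigr|\le \eta\,\pi(z),
\]
with $\eta=e^{-\Omega(n)}$. Summing over $z$ and using $\sum_z\pi(z)=1$ gives $\TV(L_{s\one},\pi)\le \eta/2$. By the triangle inequality, $\TV(L_{s\one},L_{s'\one})\le \eta=e^{-\Omega(n)}$ for all $s,s'$, which is exactly one-bit LLR in the sense of \cref{def:llr}.

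The main obstacle is making the quantifiers in ``$e^{-\Omega(n)}$'' uniform. We need a single constant in the exponent that works simultaneously for all $2^n$ choices of $z$ and all $s$. Otherwise summing $2^n$ errors would be dangerous, but in the multiplicative form above the errors are weighted by $\pi(z)$ and sum harmlessly. So the formal version (\cref{thm:opi-harder}) should state the hypothesis with a uniform bound $\eta$. In practice this uniformity is what the Fourier-analytic discrepancy bounds deliver, since they depend only on $R$ and a lower bound on the list densities. A secondary subtlety arises when some $\rho_i(z_i)$ is tiny, for example with unbalanced leakage. Here the hypothesis as stated must still be applicable, or else one truncates by noting that transcripts with small $\pi$-mass contribute negligibly. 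I would handle this by stating the formal theorem with the uniform multiplicative hypothesis and then checking that the later discrepancy results supply it.
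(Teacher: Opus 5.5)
Your proposal is correct and matches the paper's proof. The paper also identifies $\FO(v;C,\mathbf{S}_z)$ with $|\P(L_v=z)/\nu(z)-1|$ for the product Bernoulli law $\nu$ (your $\pi$), sums these $\nu$-weighted multiplicative errors over $z\in\supp(\nu)$, and applies the triangle inequality to obtain $\TV(L_v,L_{v'})\le\max_{u,z}\FO(u;C,\mathbf{S}_z)$; your handling of empty lists by restricting to $\supp(\nu)$, and your point that the bound must be uniform in $z$ and $s$, are exactly how the formal version is set up.
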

One might wonder if near-optimal OPI discrepancy is ``strictly'' harder than LLR; that is, does a converse to \cref{thm:informal-opi-harder} hold?  In fact, such a converse does not hold, and can fail to hold by up to an exponential factor, even for
fixed-rate GRS codes; see \cref{ex:opi-llr-strict}.

In the more formal version of the theorem (\cref{thm:opi-harder}), we show that more generally, if $v+C$ has near optimal OPI discrepancy for \emph{every} $v \in \F_p^n$, then $\TV(L_v, L_{v'})$ is small for \emph{every} $v,v' \in \F_p^n$.  Thus, LLR can be viewed as a restriction of near-optimal OPI discrepancy to just the cosets of $C$ given by constant shifts $s\one$.  In particular, any future improvements to near-optimal OPI discrepancy will imply improvements in LLR for Shamir sharing.\footnote{One subtlety is that in LLR, we typically assume that the $\rho_i$ can be arbitrary; while for OPI we assume that $\rho_i \geq \rho$ for some $\rho$, and the results are in terms of that $\rho$.  Thus, improvements of this sort for OPI will only lead to improvements of the same sort for LLR.  However, it is the case that, for example, improvements to the balanced case of OPI will imply improvements to the balanced case of LLR.}

\subsection{Technical Overview}
\label{subsec:overview}

In this section, we outline our approach.  At a high level, the outline is as follows:
\begin{itemize}
    \item \textbf{Step 1: Define Fourier Proxies.}  As with most previous work on both LLR and OPI, we bound the quantity of interest---the discrepancy for OPI, and the TV distance between two transcript distributions for LLR---by taking a Fourier transform.  We then bound the Fourier transform by an easier-to-think-about expression that we call a \emph{Fourier proxy.}\footnote{The term ``Fourier proxy'' is borrowed from the literature on LLR.}  The observation of \cite{us} was that similar Fourier proxies can be used for both LLR and OPI.  For OPI, we use the same proxy that was used in \cite{us}.  For LLR, we use a similar Fourier proxy as in \cite{kk23}, except that we re-center and re-normalize to deal with arbitrary biases~$\rho_i$.
    After we define the Fourier proxies that we will use, we also prove (a generalization of) \cref{thm:informal-opi-harder}, which informally states that OPI is ``harder'' than LLR.
    \item \textbf{Step 2: Above rate 1/2, both LLR and OPI reduce to dual list-recovery; proving \cref{thm:main-llr}(1) and \cref{thm:main-opi}(1).} In coding theory, a code $C \subset \F_p^n$ is \emph{list-recoverable} if, given any input lists $S_1, \ldots, S_n \subset \F_p$ of size at most $\ell$, there are not too many elements $c \in C$ so that $c_i \in S_i$ for all but a $\delta$-fraction of $i$'s (see \cref{def:list-recovery}).  
    For LLR, the work \cite{mpsw21} implicitly showed that $b$-bit LLR for a code $C$ follows from list-recovery of $C^\perp$; combined with recent breakthroughs on the list-recovery of Reed--Solomon codes~\cite{bcpzz,gj26}, this immediately proves \cref{thm:main-llr}(1), the statement for LLR with rate approaching $1/2$.  Translating this result to OPI to prove \cref{thm:main-opi}(1) requires a more involved argument, discussed more below.
    \item \textbf{Step 3: Crossing the rate $1/2$ barrier; proving \cref{thm:main-llr}(2) and \cref{thm:main-opi}(2).}
    Our techniques from Step 2 cannot work below rate $1/2$, due to a result of~\cite{n24} about the limitations of the phase-forgotten Fourier proxy that we use.  To get past rate $1/2$---that is, to prove the statements \cref{thm:main-llr}(2) and \cref{thm:main-opi}(2)---we analyze the Fourier proxies (without forgetting the phases) and use Gowers norms to take advantage of phase cancellations.
    \item \textbf{Step 4: Non-perfect OPI solutions; proving \cref{thm:main-algo}(1) and \cref{thm:main-existential}.}  Step 3---in particular, \cref{thm:main-opi}(2)---establishes the existence of a \emph{perfect} solution for OPI  at rate slightly less than $1/2$.  However, \cref{thm:main-algo,thm:main-existential} give partial-satisfaction guarantees at lower rates. To show the existence of such solutions, we use the framework of \cite{us}, who propose a quantum state $|\varphi_\ell\rangle$ that, when measured, returns a good OPI solution with high probability.  We improve the analysis of this state, leveraging our earlier connection to list-recovery, as well as the results of \cite{bcpzz,gj26} on list-recovery of Reed--Solomon codes to capacity.  To make this result algorithmic and prove \cref{thm:main-algo}(1), we use a result of Jo in \cite{sjo}, which shows how to efficiently prepare the state $\ket{\varphi_\ell}$ provided that the dual code can be list-decoded efficiently up to a certain radius.  Recent algorithmic list-decoding results of \cite{bcpzz,dkt}, along with the existential results above, allow us to apply the algorithm of \cite{sjo} to prove \cref{thm:main-algo}(1). For \cref{thm:main-existential}, we leverage our framework to analyze a conditioned DQI distribution; our sampler does not efficiently prepare this conditioned distribution, but existentially it improves the double semicircle law to $V_\rho(R)$.
    \item \textbf{Step 5: OPI when $\rho + R  > 1$; proving \cref{thm:main-opi}(3) and \cref{thm:main-algo}(2).}  In this step we study the case when $R + \rho > 1$.  In this case, \cref{thm:main-algo}(1) already shows that an \emph{near}-perfect solution exists---that is, one satisfying a $1 - \eps$ fraction of the constraints, for any $\eps > 0$---but we need to show that a \emph{exactly} perfect solution exists (to prove \cref{thm:main-opi}(3)) and show how to find it efficiently (to prove \cref{thm:main-algo}(2)).  For this, we use a truncation argument, described more below, to be able to  apply our list-recovery-based framework.  Once we have the existential result, we can again use the ideas of \cite{sjo} to prove the algorithmic result \cref{thm:main-algo}(2).

\item \textbf{Step 6: OPI at every positive rate for random evaluations; proving \cref{thm:random-opi}.}
To prove near-optimal OPI discrepancy for a random Reed-Solomon code, we consider the high probability good event that \cite{HMSY26} used to imply one-bit LLR. We show that, in fact, the good event implies near-optimal OPI discrepancy using our Fourier proxies.
    
    \item \textbf{Step 7: Classical hardness of OPI; proving \cref{thm:main-opi-advantage}.}  This step is technically separate from the rest of the paper, and can be read as a stand-alone section.  
    We give two scalar encodings of the
Yamakawa--Zhandry source problem~\cite{yz} to transform the folded Reed--Solomon hard instance it constructs into an (unfolded) Reed--Solomon instance. The first permits
arbitrary target rate but has sparse input lists. The second uses
balanced lists and a consistency argument, at the cost of a
vanishing target rate.
\end{itemize}

Before we expand on the technical details for the rest of this overview, we make a quick remark about the organization of the paper. 
\begin{remark}[Organization of the paper]\label{rem:org}
    We organize the technical overview into these seven steps, as these are the seven conceptual steps needed to prove our results.  However, we organize the paper itself slightly differently.  While Steps 2 and 3 are conceptually similar for LLR and OPI, the details differ, so we instead organize the paper to separate LLR and OPI, as follows. After some preliminaries in \cref{sec:prelim}, we complete Step 1 in \cref{sec:proxies}.  Then we complete Steps 2 and 3 for LLR in \cref{sec:llr}.  In \cref{sec:opi}, we complete Steps 2 and 3 again for OPI, as well as the non-algorithmic part of Step 5, and Step 6.  We translate our combinatorial results into algorithmic results (Step 4 and the algorithmic part of Step 5) in \cref{sec:algo}.  Finally, we handle Step 7 in \cref{sec:advantage}.
\end{remark}

Now, we expand on the technical details of each of the seven steps.  We begin by setting up some notation.  We use the same notation for both LLR and for OPI, by using $S_i$ for both the input lists in OPI and for the sets $S_i = \ell_i^{-1}(1)$ in LLR. 
In both cases, we have sets $S_i \subset \F_p$, and we define
\begin{equation}
\label{eq:def-gi}
g_i(x):=\frac{\one_{S_i}(x)-\rho_i}{\sqrt{\rho_i(1-\rho_i)}}.
\end{equation}

Throughout this discussion (and often in the paper going forward), we drop the polynomial notation $Q \in \F_p[X]$, referring instead to vectors $x \in C$, where $C$ is a Reed--Solomon code (possibly with no constant term, as in \cref{def:llr}).  Thus, for $x \in C$, $x_i$ represents the evaluation $Q(\alpha_i)$ of some appropriate polynomial $Q$.

\subsubsection*{Step 1: Define Fourier Proxies}
As mentioned above, we will bound the quantities we care about by \emph{Fourier proxies}, which we define here.  For OPI, fix $\mathbf{S} = S_1 \times \cdots \times S_n$, and let $C$ be a Reed--Solomon code.  Then the quantity that we care about (as in \cref{def:disc}) is
\begin{equation}
\label{eq:FO}
\FO(0)
:=\left|\frac{|C\cap \mbf{S}|}{|C|\prod_i\rho_i}-1\right|
=\left|\sum_{0\ne y\in C^\perp}
\prod_{i\in\supp(y)}\sqrt{\frac{1-\rho_i}{\rho_i}}\,\wh g_i(y_i)\right|,
\end{equation}
where the second equality follows from a Fourier transform over $\F_p$ (see \cref{subsec:fourier} for the Fourier conventions we use in this paper and \cref{lem:translated-opi-proxy} for a proof of \cref{eq:FO}).
Applying the triangle inequality to forget the phases gives the positive relaxation
\begin{equation}
\label{eq:FO+}
\FO^+
:=\sum_{0\ne y\in C^\perp}
\prod_{i\in\supp(y)}\sqrt{\frac{1-\rho_i}{\rho_i}}\,|\wh g_i(y_i)|
\ge\FO(0).
\end{equation}
Thus, to prove near-optimal OPI discrepancy for $C$ and $\mathbf{S}$, it suffices to show that either $\FO(0)$ or $\FO^+$ are exponentially small.  The Fourier proxy $\FO^+$ was used in \cite{us} to analyze OPI, and can be viewed a centered form of the Fourier proxy used in \cite{bdir}.

We contrast this with the following Fourier proxy for LLR. Recall that $L_{v}$ is the distribution on $\{0,1\}^n$ of the leakage transcript applied to uniform random vector in $v+C$. For $v=s\one$, this is the distribution of the leakage transcript  of the Shamir scheme, conditioned on secret $s$. 

For LLR, the quantity that we care about is $\TV(L_v, L_{v'})$.  We will bound this by the proxy
\begin{equation}
\label{eq:FL}
\FL(v,v')
:=
\frac14\sum_{\emptyset\ne A\subset[n]}
\left|
\sum_{\substack{y\in C^\perp\\\supp(y)=A}}
\left(e_p(-\langle v,y\rangle)-e_p(-\langle v',y\rangle)\right)
\prod_{i\in A}\wh g_i(y_i)
\right|^2.
\end{equation}

In \cref{lem:restated-llr-proxy}, we prove that
\begin{equation}\label{eq:llr-proxy}
\TV(L_v,L_{v'})^2\le\FL(v,v')\le\FL^+,
\end{equation} where $\FL^+$ is the positive relaxation 
\begin{equation}
\label{eq:FL+}
\FL^+
:=\sum_{\emptyset\ne A\subset[n]}
\left(\sum_{\substack{y\in C^\perp\\\on{supp} (y)=A}}\prod_{i\in A}|\wh g_i(y_i)|\right)^2.
\end{equation}
$\FL^+$ is similar to the proxy used by \cite{kk23}.  That argument obtained a rate threshold of $0.579$ for \emph{balanced} one-bit leakages.  For arbitrary biases, they obtain a threshold of $0.688$ via a reduction to the balanced case.  Our version can be viewed as a centered case for arbitrary biases, which lets us analyze the general case directly, which we do in \cref{subsec:llr-proxy}.  In fact, our results directly imply that the $0.579$ result of \cite{kk23} applies in the non-centered case too (see \cref{rem:bias-aware-kk}).

After we set up our Fourier proxies, we prove \cref{thm:informal-opi-harder}, which informally says that OPI is ``harder'' than LLR.  In addition to the statement in \cref{thm:informal-opi-harder} above, we also prove a statement at the level of Fourier proxies,  in the sense that the LLR proxies are bounded above by the OPI proxies.  
 
In more detail, write
$\mbf{S}_z:=\ell_1^{-1}(z_1)\times\cdots\times\ell_n^{-1}(z_n)$
and $\mbf{S}:=\mbf{S}_{\one}$. Then we show that
\[
\sqrt{\FL(s\one,s'\one;C,\mbf{S})}
\le
\max_{\substack{u\in\{s\one,s'\one\}\\z\in\{0,1\}^n}}
\FO(u;C,\mbf{S}_z),
\qquad
\sqrt{\FL^+(C;\mbf{S})}
\le
\max_{z\in\{0,1\}^n}\FO^+(C;\mbf{S}_z).
\]
The full statement---which includes the statement of \cref{thm:informal-opi-harder}, a generalization of it to arbitrary shifts, and the statement about about proxies---is given as \cref{thm:opi-harder}.

\subsubsection*{Step 2: Above rate 1/2, LLR and OPI reduce to dual list-recovery.}

In this step, we prove \cref{thm:main-llr}(1) and \cref{thm:main-opi}(1), which concern LLR and OPI discrepancy for any fixed rate $R > 1/2$.

We begin with LLR.  Our starting point is \cite{mpsw21}, which reduces $b$-bit LLR to conditions on the dual code.  They show that the \emph{Massey scheme} based on a \emph{random linear code}\footnote{This secret sharing scheme can be viewed as follows.  Choose a uniformly random matrix $G \in \F_p^{n \times k}$, and fix it; this defines the scheme.  Now, for a secret $s \in \F_p$, choose a random vector $r \in \F_p^k$ so taht $r_0 = s$.  Then the $n$ shares of $s$ are given by the $n$ elements of $G \cdot r$.} satisfies these conditions with high probability, provided the rate of the scheme is $R > 1/2$.  However, \cite{mpsw21} was not able to extend these results to Shamir sharing.

We observe in \cref{subsec:llr-above} that the condition identified in \cite{mpsw21} is implied by list-recovery of $C^\perp$.  For Shamir's scheme, $C^\perp$ is a \emph{generalized Reed--Solomon} (GRS) code.
Recent breakthroughs \cite{bcpzz,gj26} imply list-decoding and list-recovery up to capacity of GRS codes, which immediately proves \cref{thm:main-llr}(1). 

Thus, \cref{thm:main-llr}(1) follows essentially by assembling prior work as black boxes.  However, in order to apply these ideas to our other results, we need to un-black-box the role of list-recovery.  Our key observation is that the important consequence of list-recovery, which applies not just to LLR for $R > 1/2$ but also to our other results, is the following bound on weighted sums over the dual code.

\begin{restatable}[Weighted Fourier Thresholding]{lemma}{wftl}
\label{lem:wftl}
Fix constants $\gamma \in (0, 1)$ and $\Gamma>0$.  Let $p \geq n$ and let $C \subset \F_p^n$ be a generalized Reed--Solomon code of dimension $k$. Suppose that the weights $w_i:\F_p^\times \to \R_{\ge 0}$ for every $i\in [n]$ satisfies that
\[ \Vert w_i\Vert_\infty\le \gamma+o(1)\quad\text{and}\quad\Vert w_i\Vert_2^2=\sum_{\xi\ne 0}w_i(\xi)^2\le \Gamma+o(1)\] with both errors uniformly in $i$. Then, for every $t\in [n]$ with $2k-t=\Omega(n)$,
\begin{equation}
\label{eq:wftl}
\sum_{0\ne y\in C^\perp:|y|\le t} \prod_{i:y_i\ne 0} w_i(y_i)\le e^{o(n)} \gamma ^k=e^{-\Omega(n)}.
\end{equation}
\end{restatable}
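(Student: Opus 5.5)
We group the sum in \cref{eq:wftl} by the pattern of ``heavy'' and ``light'' coordinates of each dual codeword, and use list-recovery of $C^\perp$ (a GRS code of rate $1-R$) to bound how many codewords share a given light pattern. Fix a threshold $\eta=\eta(n)\to 0$ slowly; $\eta\le\gamma$ will be enforced. For each $i$ let $T_i:=\{\xi\neq 0: w_i(\xi)\ge \eta\}$. Since $\|w_i\|_2^2\le \Gamma+o(1)$, we get $|T_i|\le (\Gamma+o(1))/\eta^2=:\ell-1$, which is a slowly growing list size. For $0\neq y\in C^\perp$ with $|y|=w\le t$, call $i\in\supp(y)$ \emph{heavy} if $y_i\in T_i$ and \emph{light} otherwise, and let $M(y)$ be the set of light coordinates, with $m=|M(y)|$. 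Since $C^\perp$ is MDS, every nonzero $y$ has $w\ge k+1$. Using $\|w_i\|_\infty\le\gamma+o(1)$ on heavy coordinates and $w_i<\eta$ on light ones gives
\[
\prod_{i\in\supp(y)} w_i(y_i)\le (\gamma+o(1))^{w-m}\eta^{m}\le e^{o(n)}\gamma^{k}(\eta/\gamma)^{m}.
\]

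\textbf{Case 1: few light coordinates, $m\le k-\eps n$ for a small constant $\eps>0$.} Enlarge each list to $T_i\cup\{0\}$, so $y_i\in T_i\cup\{0\}$ for every $i\notin M(y)$. This is agreement with lists of size at most $\ell$ on $n-m\ge (n-k)+\eps n$ coordinates, i.e.\ agreement fraction at least $(1-R)+\eps$. The list-recovery-to-capacity results for GRS codes \cite{bcpzz,gj26} (the input behind \cref{thm:main-llr}(1)) then bound the number of such $y$ by some $L=L(\ell,\eps)$. Because $\ell$ grows slowly, we can arrange $L=e^{o(n)}$ by letting $\eps\to 0$ slowly as well. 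For each fixed light set $M$, at most $L$ codewords have $M(y)\subseteq M$. Summing over $M$ gives
\[
\sum_{m\le k-\eps n}\binom{n}{m}\,L\,e^{o(n)}\gamma^k(\eta/\gamma)^m\le e^{o(n)}\gamma^k(1+\eta/\gamma)^n=e^{o(n)}\gamma^k,
\]
where the last step needs $\eta\to 0$.

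\textbf{Case 2: many light coordinates, $m>k-\eps n$.} Here list-recovery gives no control, and I expect this case to be the main obstacle. It is also where the hypothesis $2k-t=\Omega(n)$ enters. Since $w\le t$, the number of heavy coordinates is $h=w-m<t-k+\eps n\le k-\Omega(n)$. For the count I would use the MDS structure directly. Fix the support $A$ with $|A|=w<2k$; the codewords supported in $A$ form a $(w-k)$-dimensional MDS code on $A$. Choose two disjoint information sets $B_1,B_2\subset A$, each of size $w-k$, and place them inside the light part wherever possible. Then AM--GM gives
\[
\prod_A w_i\le \Big(\prod_{A\setminus(B_1\cup B_2)}w_i\Big)\cdot\tfrac12\Big(\prod_{B_1}w_i^2+\prod_{B_2}w_i^2\Big).
\]
Summing over codewords via the bijection with the values on $B_j$ then costs at most $\Gamma^{w-k}$. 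The leftover factor $\prod_{A\setminus(B_1\cup B_2)} w_i$ runs over $2k-w=\Omega(n)$ coordinates. Of these, all but at most $h$ are light, so the factor is at most $\gamma^{h}\eta^{\,2k-w-h}$.

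The first thing I would check is that $\eta^{\Omega(n)}$ (with $\eta\to 0$) beats the losses $\binom{n}{w}\Gamma^{w-k}\gamma^{-(k-h)}$. This requires showing that $2k-w-h$ is a constant fraction of $n$ throughout Case 2, using $h\le k-\Omega(n)$ and $w\le t$. If it fails, the fallback is a finer dyadic decomposition of the weights into levels $\{2^{-j-1}<w_i\le 2^{-j}\}$ of size at most $\Gamma 4^{j+1}$, applying list-recovery separately at each level.

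Combining the two cases, and summing over the at most $n$ possible values of $w$, yields $e^{o(n)}\gamma^k$. Since $\gamma<1$ and $k=Rn$, this is $e^{-\Omega(n)}$.
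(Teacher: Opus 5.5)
\textbf{Gap in Case 2.} Your Case 2 does not go through as written, and it fails exactly at the check you flagged. Your bound on the leftover factor uses only the \emph{number} of heavy coordinates. It therefore guarantees only $2k-w-h$ light coordinates outside $B_1\cup B_2$. With $h<t-k+\eps n$, this is at least $3k-w-t-\eps n$, which is $\Omega(n)$ only when $t\le 3k/2-\Omega(n)$.

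Concretely, take $w=t=2k-cn$ and $m$ just above $k-\eps n$. Then $h\approx k-(c-\eps)n$ and $2k-w-h\approx(2c-\eps)n-k$, which is negative once $k>2cn$, so you gain no factor of $\eta$ at all. This is precisely the regime the lemma is needed for: $t=n$ with $R$ slightly above $1/2$, or $t=\lfloor(2-\delta)k\rfloor$. As written, your argument would cover $t=n$ only when $R>2/3$. In addition, the instruction to ``place $B_1,B_2$ inside the light part'' points the wrong way. Light coordinates absorbed into the information sets are summed in $L^2$ and lose their $\eta$ factor.

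\textbf{The missing idea.} The two information sets must depend on \emph{which} coordinates are light. Fix not only a support (or a $t$-superset $T\supseteq\supp(y)$) but also a set $M$ of exactly $\sigma n$ light coordinates, with $\sigma n\le 2k-t$. There are at most $3^n$ such pairs $(T,M)$. Then $|T\setminus M|\ge 2(t-k)$, so both information sets of the $(t-k)$-dimensional shortened code fit inside $T\setminus M$. Pull out $\eta^{\sigma n}$ pointwise, drop the lightness constraint, and Cauchy--Schwarz over the two information sets gives $(1+\Gamma)^{t-k}$. A \emph{constant} threshold with $3(1+\Gamma)\eta^{\sigma}\le\gamma$ then bounds this case by $\gamma^n$. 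Within your own split you could take $\sigma n\le\min(k-\eps n,\,2k-t)$.

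This is the paper's proof. It splits at $\sigma n=\lfloor k-t/2\rfloor$ light coordinates, so that one $\sigma$ satisfies both $\sigma<R$ (needed for list recovery) and $\sigma n\le 2k-t$ (needed to fit the information sets).

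\textbf{Simplifying Case 1.} With a constant threshold, your Case 1 also simplifies. List recovery already bounds the \emph{total} number of codewords with few light coordinates by $n^{O(1)}$, and each contributes at most $(\gamma+o(1))^{k+1}$. So you do not need the enumeration over $M$, which is what forced $\eta\to 0$. Nor do you need a diagonal argument over the unspecified $(\eps,\ell)$-dependence in \cref{thm:rs-lr}.
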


We prove \cref{lem:wftl} in \cref{subsec:wftl}.
To see what this lemma has to do with LLR and OPI, observe that with $w_i =| \widehat{g_i}|$, the expression in \cref{lem:wftl} is similar to the expressions in the Fourier proxies $\FO^+$ in \cref{eq:FO+} and $\FL^+$ in \cref{eq:FL+}.  So the idea is to use it to bound those quantities, establishing LLR and near-optimal OPI discrepancy.  As a proof of concept, we show how to use \cref{lem:wftl} to prove a more quantitative version of \cref{thm:main-llr}(1) than follows as a black box from \cite{mpsw21,bcpzz,gj26}---see \cref{cor:one-bit-above}.  Next, we show how to use \cref{lem:wftl} to prove \cref{thm:main-opi}(1).  Going forward, we will also use \cref{lem:wftl} in Steps 3-5.

The proof idea for \cref{lem:wftl} is the following.
Fix a threshold $\theta > 0$.  Then, define $H_i=\{\xi\in\F_p^\times:w_i(\xi)\ge \theta\}$. List-recovery of $C^\perp$ (which for a GRS code follows from \cite{bcpzz,gj26}) implies that there are not too many codewords $y \in C^\perp$ with $y_i \in H_i$ for many $i$.  As there are not too many such $y$, they cannot contribute too much to the left-hand side of \cref{eq:wftl}.  On the other hand, the remaining codewords $y$ also have small contribution to the left-hand side of \cref{eq:wftl}, since for many coordinates $i$, we have $y_i \not\in H_i$, which means that $w_i(y_i) \leq \theta$.  Thus, the left-hand side of \cref{eq:wftl} cannot be too large.

Finally, we note an apparent barrier for this technique at $R=1/2$: for \cref{lem:wftl} to control every ``shell''  of $C^\perp$ (that is, the set $\{y \in C^\perp \,: \, |y| = t\}$ for each $t$), it needs to hold for weight $t=n$, which requires  $R>1/2$. This is further discussed in \cref{rem:wftl-param}.

\subsubsection*{Step 3: Crossing the rate-1/2 barrier.}

In this step, we prove \cref{thm:main-llr}(2) and \cref{thm:main-opi}(2), which concern LLR and OPI discrepancy for rates slightly smaller than $1/2$.

To understand our approach here, we explain a second apparent barrier at rate 1/2, which was identified by~\cite{n24} in the context of LLR.  That work constructs an example of leakage functions (or equivalently sets $S_i$) for which the phase-forgotten proxy $\FO^+$ is large for $R < 1/2$, but for which the value $\FO(0)$ that we actually care about is small.\footnote{We note that $\FO^+$ is our proxy for OPI, but \cite{n24} studies it in the context of LLR.  This is because this proxy is similar to one used by previous works in LLR~\cite{bdir,mpsw21,mnpw22}.  The leakage functions in \cite{n24} are based on quadratic residues.}
Thus a below-rate-1/2 result (for OPI at least) must retain the phases and take advantage of cancellation in the original expression $\FO(0)$ (and likely also in $\FL(v,v')$).  We do so with higher order Gowers norms---$U^3$ in particular---introduced in \cref{subsec:fourier}.

For the balanced case where $\rho\sim 1/2$, fix a sufficiently small $\delta>0$ and consider
$R\in[1/2-\delta,1/2]$. We split the terms in the sums over $y \in C^\perp$ in $\FO(0)$\footnote{Technically, we apply this to $\FO(v)$, which replaces $C$ with $C + v$ in $\FO(0)$.} and $\FL(v,v')$ by their weight, depending on whether $|y|$ is larger or smaller than 
\begin{equation}
\label{eq:puncture-size}
m:=\lfloor(2-\delta)k\rfloor
\end{equation}
For $y \in C^\perp$ of weight $|y| \leq m$,  we may apply \cref{lem:wftl} directly to $C^\perp$ (as $
    m\le 2k-\delta k$) to see their contributions to the relevant expressions are exponentially small. 
    
    On the other hand, suppose that $|y| > m$.  We focus on the LLR case for simplicity. Consider a support set $A$ for $y$ with $|A| > m$. Then the expression
    \begin{equation}\label{eq:wtb} \sum_{\substack{y \in C^\perp \\ \supp(y) = A}} e_p(-\langle v,y \rangle) \prod_{i \in A} \widehat{g_i}(y_j) \end{equation}
    that appears in $\FL(v,v')$ should be small; intuitively, this is because, if $A$ is large, then we should have lots of cancellations.  To formalize this, we use a version of the \emph{generalized von Neumann inequality} (\cref{lem:block-u3}), which bounds \cref{eq:wtb} in terms of the $U_3$ norms of the functions $g_i$.  Then we bound these norms $\|g_i\|_{U^3}$ in \cref{lem:u3-gap}, and this shows that \cref{eq:wtb} is small.  Plugging this into $\FL(v,v')$, we show that this too is small, as desired for our LLR result.  A similar argument works for $\FO$ in the case of OPI.

The approach above proves \cref{thm:main-llr}(2) and \cref{thm:main-opi}(2) in the balanced cases. We complete the proof to include arbitrary bias LLR and OPI with arbitrary density $\rho \geq 1/2 - \delta$ in \cref{sec:app-llr,sec:app-opi}.
The OPI extension uses subsampling from \cref{subsec:subsampling} The arbitrary-bias LLR extension instead uses the centered $L^1$ expansion and the Brascamp-Leib inequality to exploit strongly biased coordinates.

\subsubsection*{Step 4: Partial OPI satisfaction}
So far our results on OPI have concerned near-optimal OPI discrepancy, which can guarantee when a perfect solution exists, but doesn't say anything directly about imperfect solutions (that is, with satisfaction ratio less than $1$).
In this step, we prove \cref{thm:main-algo}(1) and \cref{thm:main-existential}.  The first says that  there is an efficient quantum algorithm that finds solution to OPI with satisfaction ratio about $\SCL_\rho(R)$ (the ``double semicircle law'').  The second says that there \emph{exist} solutions even beyond the double semicircle law when $\rho<R\le 1/2$. Both of these improved guarantees are less than $1$ for $R < 1- \rho$.

The idea is to consider the \emph{DQI state} $\ket{\varphi_\ell}$ proposed in \cite{dqi} to target a satisfaction ratio of $\SCL_\rho(\ell/n)$, where $\ell$ is a parameter we choose so that $\ell/n\le 1-\rho$. To define this state $\ket{\varphi_\ell}$, first define the \emph{$r$-wise discrepancy}\footnote{In contrast to \cite{us}, we define discrepancy in the ambient space of the code, $\F_p^n$, rather than the message space $\F_p^k$, to allow non-codewords.} of $x\in\F_p^n$ be
\begin{equation}
    q_r(x):= \sum_{A\in \binom{[n]}{r}} \prod_{i\in A}g_i(x_i) 
    \end{equation}
    where as before,
    $g_i(a) := \frac{\one _{S_i}(a)-\rho_i} {\sqrt{\rho_i(1-\rho_i)}}$.
    Fix some width $\sigma$ with $1\ll \sigma \ll \ell$, the DQI state is 
\begin{equation} 
\label{eq:dqi-u}
\ket{\varphi_\ell}\propto\sum_{x\in C}\sum_{r=0}^\ell u_rq_r(x) \ket{x}\quad\text{where}\quad u_r:=\binom{n}{r}^{-1/2}\one \{\ell - \sigma\le r\le \ell\}.
\end{equation}
    The algorithm is then to (1) prepare the state $\ket{\varphi_\ell}$, and (2) measure it to obtain codeword $x \in C$.  
With this approach, the analysis boils down to finding the largest $\ell$ such that:
\begin{enumerate}
    \item We can efficiently prepare the state $\ket{\varphi_\ell}$, and
    \item Measuring $\ket{\varphi_\ell}$ yields, on expectation, a good solution to the OPI instance.
\end{enumerate}

In \cite{dqi}, it was shown that for $\ell< \min\{k/2, (1-\rho)n\}$, Item (1) can be achieved by unique decoding of $C^\perp$, and that Item (2) holds using the fact that, with this choice of $\ell$, $2\ell +1\le k$ is below the distance of $C^\perp$ (which is $k+1$).  This choice of $\ell$ results in a quantum algorithm achieving a satisfaction ratio of at least $\SCL_\rho(\ell/n) = \SCL_\rho(R/2)$, which is the semicircle law.

In \cite{us}, it was shown that taking larger $\ell=k/2+\delta n$ is permissible for Item (2), provided one can bound an appropriate error term contributed from dual codewords with Hamming weight $t\in [k+1, \min\{n,2\ell+1\}]$. 
However, \cite{us} was not able to establish Item (1), and so their result was existential rather than algorithmic.  Follow-up work \cite{sjo} showed how to achieve Item (1) and efficiently prepare $\ket{\varphi_\ell}$, whenever $C^\perp$ can be list-decoded up to Hamming radius $\ell$. Using the Guruswami-Sudan algorithm, $\ket{\varphi_\ell}$ can be prepared up to any fixed slack below the \emph{Johnson bound} of $C^\perp$, which is $\ell = (1 - \sqrt{1-R})n$.
Thus, the work of \cite{sjo} established a quantum algorithm that recovers a state with expected satisfaction ratio $\SCL_\rho(\alpha)$ for any $\alpha < \min\{1-\sqrt{1-R}, R/2+\delta\}$ where the two cutoffs correspond to permissible $\ell$ for items (1) and (2) in \cite{sjo} and \cite{us}, respectively.\footnote{We note that the work \cite{hy} studied the same two items (1) and (2) for a related product state, but not for $\ket{\varphi_\ell}$ mentioned above, obtaining similar improvements (see \cref{sec:related_OPI}).  We follow the approach of \cite{us,sjo} in this work.}

To obtain our results in this step, we combine our list-recovery machinery developed above (in particular \cref{lem:wftl}) with the techniques from \cite{us,sjo}, and with the recent breakthroughs in algorithmic list-decoding and list-recovery of Reed--Solomon codes~\cite{bcpzz,gj26,dkt}.  For Item (1), preparing the state, we can use the result of \cite{sjo} as a black box, using the fact that \cite{bcpzz,dkt} implies that we may actually list-decode $C^\perp$ to radius approximately $R$ rather than $1 - \sqrt{1-R}$.  For Item (2), we use \cref{lem:wftl} (which itself uses the list-recovery result of \cite{gj26}) to control the error contributed by dual codewords with Hamming weight larger than $k$. Together, these ideas establish \cref{thm:main-algo}(1).

To prove \cref{thm:main-existential} and show the existence of solutions beyond the double semicircle law when $R \in (\rho, 1/2]$, we use an approach conceptually similar to Prange's algorithm.  In more detail, we choose a set $T \subset [n]$ of $t$ coordinates, and consider puncturing the relevant Reed--Solomon code to the coordinates indexed by $T$.  
We show that, whenever $t + 2\ell \leq 2k - \Omega(n)$, then conditioned on $Q(X)$ satisfying every remaining constraint indexed by $T$, its satisfaction ratio among the constraints in $[n] \setminus T$ is still given by the double semicircle law, with argument $\ell/(n-t)$.  

Together, the final satisfaction ratio is at least 
\[
\frac{t}{n}+\frac{n-t}{n} \on{SCL}_\rho\left(\frac{\ell}{n-t}\right).
\]
Then, optimizing $\ell, t$ as functions of $R$ gives $V_\rho(R)$, as claimed in \cref{thm:main-existential}. 

We emphasize this improvement is non-algorithmic, as discussed in \cref{rem:exist-no-algo}.

\subsubsection*{Step 5: OPI when $\rho+R>1$}
In this step, we prove \cref{thm:main-opi}(3) and \cref{thm:main-algo}(2), which both deal with the case when $\rho + R > 1$.  In this case, we establish near-optimal OPI discrepancy (which implies the \emph{existence} of a perfect solution), and show how to find that solution efficiently.

We begin with the existential result, \cref{thm:main-opi}(3).
As discussed in Step 3, the difficulty of OPI in the regime where $R<1/2$ is that \cref{lem:wftl} cannot apply to control Fourier errors from all shells of $C^\perp$ up to $t=n$.  However, when $R < 1/2$ but $\rho >1-R$, we can take advantage of the fact that the instance should intuitively be ``easy'' in that the $S_i$'s are very dense. Fix $\ell\sim \alpha n$ for $\alpha \in (1-\rho, R)$. For near-optimal OPI discrepancy, it suffices to show that for $\mathbf{S}:=S_1\times\dots \times S_n$
\begin{equation}
\label{eq:disc-check}
\mb{E}_{x\in C}[\rho^{-n}\one _{\mathbf{S}}(x)]=\mb{E}_{x\in C}[H(x)^2] = 1\pm e^{-\Omega(n)}\quad\text{where}\quad H(x):=\sum_{j=0}^n \rho^{(n-j)/2}(1-\rho)^{j/2}q_j(x).
\end{equation}
Indeed, the definition of near-optimal OPI discrepancy implies that we would like to show that $\E_{x \in C}[\rho^{-n} \mathbf{1}_{\mathbf{S}}(x)] = 1 \pm e^{-\Omega(n)}$; the expression in terms of $H$ follows by arithmetic manipulation (see \cref{eq:def-H}).
Inspired by \cite{hy}, we truncate the summands of $H(x)$ to those indexed by $j\le \ell$ to obtain $H_\ell$. 
We show that $H_\ell$ is a good proxy for $H$ on $\mbf{S}$: in particular, $H_\ell=(1-e^{-\Omega(n)})H$ pointwise on $\mbf{S}$, while its squared mass outside $\mbf{S}$ on $C$ is exponentially small (see  \cref{lem:key-truncation}).  Now, if we can show that
\begin{equation}
\label{eq:disc-proxy-check}
\mb{E}_{x\in C}[H_\ell^2]=1\pm e^{-\Omega(n)},
\end{equation}
this will imply \cref{eq:disc-check} and therefore near-optimal OPI discrepancy. 
To establish \cref{eq:disc-proxy-check}, we can again use our list-recovery-like lemma \cref{lem:wftl}. The key is that the truncation of $H_\ell$ means that we won't have to handle ``shells'' all the way up to $t = n$, which was the issue identified earlier; instead, it turns out that we only need to deal with shells $t \leq \min\{n, 2\ell + 1\} \leq 2k -\Omega(n)$ (see \cref{lem:key-truncation}).  This establishes \cref{eq:disc-proxy-check} and hence \cref{thm:main-opi}(3).

For a quantum algorithm to sample a perfect solution in this regime, we lower bound the probability that any particular perfect solution $\ket{x}$ is sampled when we measure $\ket{\varphi_\ell}$ for $\ell \sim (1-\rho)n$ by a direct computation.  Then we observe that these perfect states are abundant by the preceding paragraph.  Now, repeating the same algorithm as in Step 4 polynomially many times until we find a perfect solution will do the job, proving \cref{thm:main-algo}(2).

\subsubsection*{Step 6: OPI at every positive rate for random evaluations}

In this step, we prove \cref{thm:random-opi}, which establishes
near-optimal OPI discrepancy for a Reed-Solomon code with random evaluation points with high probability. 

Our proof builds on the very-recent result of \cite{HMSY26} for one-bit LLR.  The work \cite{HMSY26} was independent and concurrent from the first version of our work, which did not include \cref{thm:random-opi}; in this version, we have adapted the techniques of \cite{HMSY26} to prove  \cref{thm:random-opi}.

Let $C \subset \F_p^n$ be a Reed--Solomon code with random evaluation points $\alpha_1, \ldots, \alpha_n$, and let $k=Rn$ be the dimension of $C$.
\cite{HMSY26} approximates each local indicator as
$\one_{S_i}=G_i+H_i$, where  
$G_i$ is appropriately structured and $H_i$ has
small Gowers $U^s$-norm for some large constant $s$. Upon expanding $\prod_i\one_{S_i}(x_i)$ in the LLR proxy, consider a set $J\subset [n]$ indexing products of $H_i$. If $|J|$ is large, the small Gowers $U^s$-norm gives exponential decay. To handle the terms with small $|J|$, 
\cite{HMSY26} uses the structure of $G_i$ for $i\not\in J$ to construct a good event $\mathsf{Good}$ (over the randomness of the choice of evaluation points) that implies that the LLR proxy corresponding to those small $J$ are small, and they show that $\mathsf{Good}$ occurs with high probability for a random Reed--Solomon code. 

We show that on the event $\mathsf{Good}$, in fact the OPI proxy is also small, thereby proving \cref{thm:random-opi}. 
We can use much of \cite{HMSY26} as a black box, and our work is to extend their techniques to work for OPI instead of LLR.
This is done in \cref{subsec:rrs} and relies heavily on our Fourier analytic framework and proxies. The comparison is further discussed in \cref{rem:like-bdir}.
\\ 

\subsubsection*{Step 7: Quantum Advantage of OPI}
In this step, we prove \cref{thm:main-opi-advantage}, about the classical hardness of OPI relative to a membership oracle.  We begin with the breakthrough result of
Yamakawa and Zhandry~\cite{yz}.  They  construct a high-rate folded Reed--Solomon code
$C_0\subset\Sigma^m$, where $\Sigma = \F_q^s$ for a prime power $q$ and some integer $s$, together with independent random oracles
$O_j:\Sigma\to\{0,1\}$, which determine the input lists $S_j$.  They show that there is a quantum algorithm that finds a perfect solution
$z\in C_0$ satisfying $O_j(z_j)=1$ for every $j$, whereas no
classical algorithm making polynomially many
queries can find a codeword with, say, $99\%$ satisfaction. See \cref{subsec:yz-recap} for more details.

We cast this construction as an OPI instance.  We are motivated by the fact that the recent list-decoding algorithms of \cite{bcpzz, dkt} can handle exponentially large fields (while still running in time polynomial in $n$, the length of the relevant Reed--Solomon code).  This means that a single scalar in a very large field can encode all $m$ coordinates of the Yamakawa-Zhandry instance. More specifically, we construct two types of input lists: $S_n$ gives an oracle that is equivalent to all $(O_1, \dots, O_m)$ simultaneously, so we can simulate the source oracles in our reductions.  Then we duplicate for $n-1$ times another oracle that encodes ``perfectly solving the Yamakawa-Zhandry instance''. This gives us polynomial time reductions so that quantumly solving the folded instance perfectly translates to solving the new unfolded instance perfectly; while classically it is hard to solve the folded instance and therefore hard to satisfy any of the $n-1$ constraints in the unfolded instance. This construction is spelled out in \cref{subsec:sparse-ensemble}.

This construction treats evaluations completely independently, as all the encoding happens at the scalar level. It therefore permits arbitrary target RS rate, but the key duplicated input list is
necessarily sparse: a random source message passes all $m$ independent
source constraints with probability $2^{-m}$.  In
\cref{subsec:balanced-ensemble}, we remove this sparsity by
distributing it into many small, exactly balanced parity tests.  The new
difficulty is ``consistency'', which relies on constraining the dimension of $C$ to require any polynomial $Q$ to present a consistent solution to each parity check. This gives the balanced input list but vanishing rate construction in \cref{subsec:balanced-ensemble} that proves \cref{thm:main-opi-advantage}.
\\

\subsection{Related Work}
\label{subsec:literature}
Here, we further survey related work.  We begin with related work on local leakage resilience for Shamir's scheme.

\subsubsection{Local Leakage Resilience for Shamir's Scheme}\label{sec:related_LLR}
Local leakage resilience was introduced in \cite{bdir}, who were inspired by results in the regenerating codes literature~\cite{gw,tyb18} which implied that Shamir's scheme was \emph{not} LLR over extension fields.  The work \cite{bdir_conference} showed that Shamir's scheme (with arbitrary evaluation points) was $1$-bit LLR provided that $R = k/n$ satisfied $R > 0.907$.  Later, this was improved to $0.8675,0.85,0.78,0.69,0.668$ by \cite{mpsw21,bdir,mnpw22,kk23,k24}, respectively.  In the case of \emph{balanced} leakage functions, \cite{kk23} showed that $R > 0.58$ sufficed.  However, all of these works faced a barrier at $R = 1/2$.  This barrier was intrinsic in the technique used by \cite{bdir_conference,bdir,mpsw21,mnpw22}; in particular, the Fourier proxy that they bounded was not in fact small for $R < 1/2$ for certain one-bit leakage functions, even though the term that this proxy bounds is indeed small~\cite{mpsw21,n24}.
Other work \cite{kk23} that used a different proxy did not face such a clear barrier at $R = 1/2$, but their techniques and proofs do seem to require $R > 1/2$.  Inspired by this, \cite{n24} asked if it was possible to beat the so-called ``$1/2$ barrier.'' That work showed that for certain leakage functions (in particular, \emph{quadratic residue} leakage), one could get LLR with arbitrarily small constant rate $R$.  However, the question of whether one could break or even approach the $R = 1/2$ barrier for \emph{worst-case} leakage functions remained open until our work.

We note that the work \cite{MNPSWYY22} did prove a result approaching the $1/2$ barrier, but this applied only to bounded classes of leakage functions and to Shamir schemes with {random} evaluation points.\footnote{In more detail, the result shows that, for any family of leakage functions $\mathcal{L}$ that is sufficiently small, then a Shamir scheme of rate $1/2 + \delta$ with random evaluation points, over a sufficiently large field, will be LLR to any leakage functions in $\mathcal{L}$ with high probability.}  Other works considered restricted classes of leakage functions, for example physical bits (rather than leaked bits that require non-trivial computation of the stored physical bits)~\cite{physical_leakage}.  In this work, we focus on the question of \emph{worst-case} families of leakage functions.

Other works have managed to break the $1/2$ barrier (to obtain results for any positive rate) for secret sharing schemes other than Shamir, specifically for Massey secret sharing schemes corresponding to \emph{random linear codes}\footnote{That is, we draw a uniformly random matrix $G \in \F_p^{n \times k}$.  To share a secret $s$, let $x = (s, r_1, \ldots, r_{k-1})$, where $r_i$ are chosen at random from $\F_p$, and then use the $n$ values in $Gx$ as the $n$ shares.} (Shamir's scheme is the Massey scheme corresponding to Reed-Solomon codes).  The work \cite{mpsw21} established that rates \emph{approaching} $1/2$ would work for these secret sharing schemes, with high probability.  The work \cite{dlmnrr} established it for any positive rate, for fields of size $|\F| = O(n^{1/5})$; since MDS codes do not exist in this parameter regime, the resulting secret schemes are \emph{ramp} secret sharing schemes, with a gap between the number of parties required to recover the secret and the number which can be guaranteed to learn nothing about it.  To the best of our knowledge, there are no threshold (as opposed to ramp) secret sharing schemes known to be one-bit LLR below rate $1/2$ before this work  We note that \cite{dlmnrr} also studies a notion of discrepancy for random linear codes, which inspired the near-optimal OPI discrepancy that we study in this work.

\paragraph{Independent and Concurrent work.}  In independent work, \cite{n26} also obtained a result analogous to \cref{thm:main-llr}. In particular, they also prove a version of \cref{thm:main-llr}(2), showing that there is a constant $\delta > 0$ so that Shamir's scheme with rate $1/2 -\delta$ is one-bit LLR.  Their version of \cref{thm:main-llr}(1) is more quantitative, and shows that Shamir's scheme with rate $1/2 + \delta$ is $b$-bit leakage-resilient, for $b = O_\delta(\log n)$.  The techniques are similar to those in \cref{sec:llr}, combining list-recovery of Reed--Solomon codes, a generalized von Neumann inequality, and Gowers norms. Instead of thresholding the Fourier weights in \cref{lem:wftl}, the Efron-Stein inequality is used.

Concurrent to the first arXiv version of this work, which did not include \cref{thm:random-opi}, \cite{HMSY26} showed that \emph{random} Reed-Solomon codes are one-bit LLR for every positive rate.  For the current version of this work, we have adapted their techniques to obtain our result for near-optimal OPI discrepancy of \emph{random} Reed-Solomon codes at every positive rate, \cref{thm:random-opi}.

\subsubsection{Optimal Polynomial Intersection}\label{sec:related_OPI}
As mentioned earlier, questions related to OPI have been studied in other parameter regimes, most notably in the coding theory community as \emph{list-recovery} of Reed--Solomon codes~\cite{gs,GR06,LP20,lms25,LS25,bcdz,bcpzz,gj26}; similar problems have also been proposed (and refuted) as cryptographic hardness assumptions~\cite{np99,bn00,NP06}.  It is also related to the problem studied in the breakthrough work of Yamakawa and Zhandry~\cite{yz}, who show that a similar problem for \emph{folded} Reed--Solomon codes over exponentially large fields is classically hard but quantumly easy (relative to a membership oracle for the input lists).

In our main parameter regime (with polynomial field sizes), OPI was first studied by \cite{dqi}, who gave a quantum algorithm that finds a solution with satisfaction ratio up to $\SCL_\rho(R/2)$. It has since been the target of related quantum algorithms, including~\cite{chailloux1,chailloux2,khattar2025verifiable,rosmanis2026nearly,sjo,hy}, which are inspired by \emph{Regev's reduction}~\cite{regevReduction}.

The works \cite{chailloux1,chailloux2} gave algorithms to find solutions in a not-quite-worst-case model up to $\SCL_\rho(\alpha_J(R))$, where $\alpha_J(R) = 1 - \sqrt{1 -R}$ is the Johnson radius.  More precisely, in this model the input list begin as worst-case $\widetilde{S}_i \subset \F_p$, except that they are then translated by a random field element $v_i \in \F_p$ to obtain $S_i = \widetilde{S}_i + v_i$.  This result guarantees an algorithmic solution (with high probability over the input lists) provided that $R > 0.75$.

The work \cite{us} improved the existential bound in the worst-case setting, guaranteeing the existence of solutions with satisfaction ratio $\SCL_\rho(R/2 + \delta)$ for some $\delta > 0$; in particular, this implied the existence of an asymptotically perfect solution whenever $R \geq 0.749$, slightly smaller than the $0.75$ guaranteed by \cite{chailloux2} (although not algorithmic).  We note that \cite{us} also observed the key connection between OPI and LLR that we exploit, namely that similar Fourier proxies can be used to get a handle on both problems (although their improvements were not enough to improve on the state-of-the-art for LLR, only for OPI).  We make their observations more precise in \cref{thm:informal-opi-harder}.  

The work \cite{sjo} showed how to make the results in \cite{us} algorithmic, provided that $R/2 + \delta < \alpha_J(R)$.  In particular, this showed how to efficiently find a solution with $s(Q) \geq 1-o(1)$ at rates $R\ge 3/4$ for balanced worst-case input lists, matching the quantitative result of \cite{chailloux2}.  We use the results of \cite{sjo} to turn our improved existential results into improved algorithmic results.

Finally, the work \cite{hy} (concurrent with \cite{sjo}) recovered similar algorithmic bounds as \cite{sjo} via a different method, and improved the existential results.  Quantitatively, \cite{hy} guarantees the existence of a perfect solution for $R \geq 0.715$.

In very recent work, \cite{folded-opi} gave an algorithm for a folded variant of OPI that also approached the same ``double semicircle law.'' However, their paper is for a folded version of OPI, while we study the original ``scalar'' OPI.

On the hardness side, the works \cite{yz,folded-opi} show that a \emph{folded} version of OPI over very large fields that can be solved by a quantum algorithm with oracle access to the input lists is hard for classical algorithms, in the sense that any classical algorithm requires super-polynomially many queries to the membership oracles.  Our separation result \cref{thm:main-opi-advantage} relies on the result of \cite{yz}, and transforms it into a scalar OPI instance.  In other parameter regimes, OPI has been proposed as a cryptography hardness assumption~\cite{np99}, but was broken by \cite{bn00} using classical lattice attacks; however, these lattice attacks do not seem to work in the OPI parameter regime (see \cite{dqi}).  OPI can be seen as a special case of \emph{max-LINSAT}, where the goal is to satisfy as many linear constraints as possible.  It is known that general max-LINSAT problems are NP-hard even to approximate even beyond the straightforward ``random assignment'' threshold~\cite{KSE26}.

\subsection{Open Problems}
We conclude with a few open problems.
\begin{itemize}
    \item What is the ``correct'' rate threshold for LLR for Shamir schemes?  That is, what is the smallest $R^*$ so that any Shamir scheme with rate at least $R^*$ is one-bit LLR?  A conjecture of \cite{bdir} posits that $R^*$ can be an arbitrarily small constant; we have shown that $R^*$ can be less than $1/2$.  The very recent result of \cite{HMSY26} resolved this conjecture affirmatively for \emph{almost} every tuple of evaluation points, over sufficiently large fields.  Does a similar result hold for \emph{all} evaluation points?
    \item What is the ``correct'' satisfaction ratio for an OPI instance of rate $R$?  That is, given $R \in (0,1)$ and $\rho \in (0,1)$, what is the largest $s^*$ so that, for any OPI instance of rate $R$ and input lists with density $\rho$, there exists a solution with satisfaction ratio at least $s^*$?  In some parameter regimes, we have shown that $s^* = 1$, but in the other parameter regimes, we have only given bounds on $s^*$.  In particular, we have shown that $s^* \geq V_\rho(R)$, but our results \cref{thm:main-opi}(2) suggest that
    one can do better, at least by some arbitrarily small $\delta > 0$.   We have shown in \cref{thm:random-opi} that the answer is $s^* = 1$ for all $\rho, R \in (0,1]$, for \emph{most} choices of evaluation points; does the same hold for \emph{all} choices of evaluation points?
    \item What is the largest fraction $s^*_{\mathrm{alg}}\le s^*$ such that a solution to OPI with satisfaction at least $s^*_{\mathrm{alg}}$ can be found efficiently by a quantum algorithm? In particular, there are parameter regimes where we have shown the existence of solutions with satisfaction ratio beyond what is currently attainable quantum algorithmically; can this gap be closed?
    \item What is the classical hardness of OPI?  We show a separation between classical and quantum relative to membership oracles when the field is very large.  Can one show a similar separation (perhaps under some standard cryptographic assumptions)  for polynomial field sizes?
\end{itemize}

\subsection{Statement on AI Usage}
Prior to the recent breakthroughs on list decoding and list recovery of RS code \cite{bcpzz,gj26}, the connections between OPI, LLR, and dual list recovery in \cref{sec:llr} were made by the authors, and the Steps 1 to 4 in \cref{subsec:overview} were executed for random Reed-Solomon codes. Subsequently, ChatGPT 5.6-Sol was heavily consulted for various techniques to improve the numerical thresholds for RS codes, e.g. using Brascamp-Lieb inequalities and (recent improvements to) the linear programming bound of McEliece, Rodemich, Rumsey, and Welch \cite{mrrw}.

Following the breakthroughs on list decoding and list recovery of RS codes, these numerical techniques were redundant, as is the distinction between random RS codes and arbitrary RS codes. The authors then used ChatGPT 6-Astra to understand existing literature and it produced the key truncation idea in Step 5 of \cref{subsec:overview}, as well as the extension to the arbitrary bias case of \cref{thm:main-llr}(2) in \cref{sec:app-llr}. Similarly in Step 6, ChatGPT 6-Astra is  used to understand the higher order Fourier analysis in \cite{HMSY26}. It isolated the black box \cref{fact:hmsy-model} from \cite{HMSY26}, which we used to prove \cref{thm:random-opi}, building on our framework in the earlier steps.

Inspired by the new list decoding algorithms that handle exponentially large fields, the authors attempted to apply the techniques of \cite{yz} to show relative separation of OPI, and constructed the sparse ensemble in Step 7 of \cref{subsec:overview}. Prompted with the idea of  a balanced instance via distributing the oracle checks, ChatGPT 6-Astra obtained a version of the balanced ensemble and \cref{thm:main-opi-advantage}
upon which the authors further simplified the construction and proof.

Outside of the aforementioned contributions, AI was used only for copyediting and proof checking, as well as for generating the \texttt{tikz} code that produces \cref{fig:opi-picture,fig:phase,fig:R-KK-benchmark}. All proofs appearing in the manuscript were written by the authors. All responsibility for errors lies with the authors.

\subsection*{Acknowledgements} 
YS thanks Rohan Goyal for helpful discussions. YS is funded by the NSF Graduate Research Fellowship and the Stanford Graduate Fellowship. MW is partially funded by NSF grants CCF-2231157 and CNS-2321489, and is grateful for funding from Google to support this work.

\section{Preliminaries}
\label{sec:prelim}

\subsection{Notation and Conventions}
\label{subsec:notation}
We write $[n]:=\{1,\ldots,n\}$, with $[0]=\emptyset$, and
$\binom{S}{k}$ for the family of $k$-element subsets of $S$.  
We write $\langle u,v\rangle$ for the standard dot product
and $\E_{x\in X}$ for expectation over a uniformly random $x\in X$.

We use the standard asymptotic notation $f\lesssim g$ for $f=O(g)$, $f\ll g$ for $f=o(g)$, $f\gtrsim g$ for $g\lesssim f$, $f\asymp g$ for $f=\Theta(g)$, and $f\sim g$ when $f/g\to 1$. Subscripts indicate the
parameters on which the implicit constants may depend.

All unqualified logarithms are natural.  For $x\in[0,1]$, define the natural-base and base-two binary entropy functions by continuity at the
endpoints:
\[
        H(x):=-x\log x-(1-x)\log(1-x),
        \qquad
        H_2(x):=\frac{H(x)}{\log2}
        =-x\log_2x-(1-x)\log_2(1-x).
\]
Both functions are left undefined outside $[0,1]$.  Whenever an optimization
contains $H$ or $H_2$, its domain is implicitly restricted so that every
entropy argument lies in $[0,1]$.
We use $\sup\emptyset=-\infty$ and $\inf\emptyset=+\infty$, and use the
continuous perspective convention $aH(b/a)=0$ when $a=b=0$.

\subsection{Coding Theory}
\label{subsec:codes}

An \emph{$[n,k]$ linear code} is a $k$-dimensional subspace of $\F_p^n$; its \emph{rate} is $k/n$, and its \emph{dual} is $C^\perp:=\{y:\langle x,y\rangle=0\text{ for every }x\in C\}$. We write $|y|:=|\supp(y)|$ for Hamming weight. A code is \emph{Maximum Distance Separable} (MDS) if its distance is $n-k+1$, or equivalently if projecting to any $k$ coordinates forms a bijection $C\to\F_p^k$. 

We will be interested in Reed--Solomon (RS) codes as well as generalized Reed--Solomon (GRS) codes. A \emph{generalized Reed--Solomon} code is the set
\[
C=\{(a_iQ(\alpha_i))_{i=1}^n: Q \in \F_p[X], \deg Q<k\},
\]
where the $\alpha_i$ are distinct and every $a_i\ne0$. 
When the $a_i$ are all equal to $1$, we say that $C$ is a RS code with evaluation points $\alpha_1, \ldots, \alpha_n$.

RS codes and generalized RS codes are MDS, and the dual of a GRS code is an $[n,n-k]$ GRS code of distance $k+1$. For $I\subset[n]$, the \emph{puncturing} $C$ to $I$ is the set $C|_I = \{ c|_I \,:\, c \in C\}$, where $c|_I$ is the restriction of $c$ to the coordinates in $I$.  The dual of $C|_I$ consists of the restrictions to $I$ of the words of $C^\perp$ that vanish outside $I$. Thus $(C|_I)^\perp$ is the shortening of $C^\perp$ to $I$. 
Indeed, a word on $I$ is orthogonal to every punctured codeword exactly when its extension by zero is orthogonal to $C$. When $|I|\ge k$ and $C$ is MDS, $C|_I$ has dimension $k$, and the shortened dual has dimension $|I|-k$.

We now introduce the recent breakthrough results on list decodability and recoverability of Reed-Solomon codes at capacity \cite{bcpzz,gj26,dkt}. We need algorithmic list decoding, but combinatorial list recovery properties only.

\begin{definition}[List decoding and recovery]
\label{def:list-recovery}
A code $C\subset\F_p^n$ is \emph{$(\delta,\ell,L)$-list recoverable} if, for
any coordinate lists $V_1,\ldots,V_n\subset\F_p$ with
$|V_i|\le\ell$, at most $L$ words $y\in C$ satisfy $y_i\in V_i$ on at
least $(1-\delta)n$ coordinates.
\emph{List decoding} is the singleton-list case: a code is $(\delta,L)$-list
decodable if every Hamming ball of radius $\delta n$ contains at most
$L$ codewords.
\end{definition}

\begin{theorem}[Reed-Solomon list decoding {\cite[Theorem~9.16]{dkt}}]
\label{thm:rs-ld}
For $\eps\in(0,1)$ and all sufficiently large $n$ and prime $p\ge n$, every generalized Reed-Solomon code $C\subset\F_p^n$ with dimension $1\le k\le(1-\eps)n$ admits a deterministic algorithm that, given any $z\in\F_p^n$, in $n^{O_\eps(1)}\widetilde O(\log p)$ running time (uniformly in $k$ and the evaluation places) outputs every $c\in C$ such that
\[
|c-z|\le\left(1-\frac{k}{n}-\eps\right)n.
\]
\end{theorem}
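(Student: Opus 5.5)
Since this statement is imported from \cite[Theorem~9.16]{dkt}, the plan is to reduce it to the form proved there rather than re-derive the capacity-achieving decoder. The first step handles the generalized Reed--Solomon case by reduction to plain Reed--Solomon codes. Let $C=\{(a_iQ(\alpha_i))_i:\deg Q<k\}$ with all $a_i\neq 0$, and let $D:=\on{diag}(a_1,\ldots,a_n)$. Then $C=D\cdot C_0$, where $C_0$ is the RS code with the same evaluation points. Multiplying coordinatewise by nonzero scalars preserves supports, so for every $z$ we have $|c-z|=|D^{-1}c-D^{-1}z|$. Hence, to list-decode $C$ around $z$, we compute $z':=D^{-1}z$ with $n$ field inversions and multiplications, costing $n\cdot\widetilde O(\log p)$. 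We then list-decode $C_0$ around $z'$ to radius $(1-k/n-\eps)n$ and map each output back by $D$. The dimension, rate, and radius are unchanged, so the guarantee transfers verbatim.

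The second step applies the RS decoder of \cite{dkt} to $C_0$. The points to verify are the following.
\begin{enumerate}
    \item Their hypotheses match ours: prime $p\ge n$, arbitrary distinct evaluation points, and $1\le k\le(1-\eps)n$. The uniformity in $k$ and in the evaluation points is part of their statement.
    \item The running time is $n^{O_\eps(1)}$ arithmetic operations over $\F_p$, each costing $\widetilde O(\log p)$, which gives the stated bound.
    \item The output list has size polynomial in $n$.
\end{enumerate}
For the third point, the combinatorial input is that RS codes over prime fields with arbitrary evaluation points are list-decodable up to radius $1-R-\eps$ with list size $O_\eps(1)$ (or at least $\poly(n)$). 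This follows from the recent breakthroughs \cite{bcpzz,gj26}. It guarantees that enumerating all codewords in the ball is information-theoretically cheap, and the algorithm of \cite{dkt} makes this enumeration constructive. At a high level, their decoder solves a linear interpolation or linear-algebraic system whose solution space contains every close codeword, then prunes that space using the list-size bound. In the final pruning each candidate is checked directly against $z$, so the output is exactly the set of codewords within the radius.

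The main obstacle, if one wanted to reprove rather than cite this result, is the algorithmic step: turning the combinatorial capacity list-decodability of \cite{bcpzz} into a polynomial-time procedure whose dependence on $p$ is only $\widetilde O(\log p)$, in particular avoiding any $\poly(p)$ root-finding or enumeration. Here I would rely entirely on \cite[Theorem~9.16]{dkt}. I would check only two things: that their deterministic factoring and root-finding subroutines over $\F_p$ run in $\poly(\log p)$ time, and that ``sufficiently large $n$'' in their statement is allowed to depend on $\eps$ alone. The latter is what makes the reduction above uniform in $k$ and in the evaluation points.
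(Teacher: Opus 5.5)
Your proposal is correct and matches the paper, which gives no proof of its own and simply imports the statement from \cite[Theorem~9.16]{dkt}; the diagonal rescaling $z\mapsto D^{-1}z$ you add is a valid and harmless way to make sure the citation also covers the generalized Reed--Solomon duals $C^\perp$ to which the paper later applies it. One small caution: deterministic root-finding over $\F_p$ in $\poly(\log p)$ time is not known in general, so your proposed check on ``deterministic factoring and root-finding subroutines'' is misdirected --- the deterministic $n^{O_\eps(1)}\widetilde O(\log p)$ bound is exactly what you are taking from \cite{dkt}, and a decoder meeting it presumably does not use generic factoring at all.
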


\begin{theorem}[Reed-Solomon list recovery {\cite[Theorem~1.1]{gj26}}]
\label{thm:rs-lr}
For $\eps\in(0,1)$, $\ell\in\N$, and all sufficiently large $n$ and prime $p\ge n$, every generalized Reed-Solomon code $C\subset\F_p^n$ with dimension $1\le k\le(1-\eps)n$ is
\[
\left(1-\frac{k}{n}-\eps,\ell,n^{O_{\eps,\ell}(1)}\right)
\text{-list recoverable}.
\]
The exponent is uniform in $p,k$ and the evaluation
points.
\end{theorem}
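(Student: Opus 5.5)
This statement is imported from \cite{gj26}, so the plan is a reduction to their Theorem~1.1 plus a check that our formulation matches theirs. The main step is to \emph{reduce generalized Reed--Solomon codes to ordinary Reed--Solomon codes}.

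Let $C=\{(a_iQ(\alpha_i))_i:\deg Q<k\}$ with every $a_i\neq 0$, and let $V_1,\ldots,V_n\subset\F_p$ be lists with $|V_i|\le\ell$. Define rescaled lists $V_i':=a_i^{-1}V_i$. These still satisfy $|V_i'|\le\ell$, and $a_iQ(\alpha_i)\in V_i$ holds if and only if $Q(\alpha_i)\in V_i'$. Hence the codewords of $C$ that agree with $(V_i)$ on at least $(1-\delta)n$ coordinates are in bijection with the codewords of the RS code with evaluation points $\alpha_1,\ldots,\alpha_n$ that agree with $(V_i')$ on the same number of coordinates. So it suffices to prove the statement for RS codes with arbitrary distinct evaluation points.

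Next I match the parameters. Take $\delta=1-k/n-\eps$. The agreement condition then says $y_i\in V_i$ on at least $k+\eps n$ coordinates. This is exactly the regime of \cite[Theorem~1.1]{gj26}: list recovery of RS codes to capacity with list size $n^{O_{\eps,\ell}(1)}$, over any field of size at least $n$, with arbitrary evaluation points. The requirement that $p$ be prime and $p\ge n$ only makes the hypothesis stronger, and the constraint $1\le k\le(1-\eps)n$ guarantees that $\delta>0$ is meaningful.

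The only step that needs genuine care, and the one I expect to be the main obstacle, is \emph{uniformity}. The exponent in $n^{O_{\eps,\ell}(1)}$ must not depend on $p$, on $k$, or on the evaluation points. I would verify this by tracing the proof in \cite{gj26}. There the list bound comes from a combinatorial argument (a dimension or counting bound on the space of candidate interpolating structures), in which only $\eps$ and $\ell$ enter the exponent. The field size enters only through the requirement $p\ge n$, which ensures that there are enough distinct evaluation points.

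If some intermediate quantity in their proof did depend on $k/n$, I would handle it as follows. Discretize the rate interval $[1/n,1-\eps]$ into $O(1/\eps)$ buckets. In each bucket, apply the result with slack $\eps/2$, using monotonicity: list recovery with more agreement is implied by list recovery with less. Then take the maximum of the finitely many exponents, which depends only on $\eps$ and $\ell$.
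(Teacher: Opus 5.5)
Your proposal is correct and matches the paper, which imports \cite[Theorem~1.1]{gj26} without further argument; your rescaling $V_i'=a_i^{-1}V_i$ is the routine GRS-to-RS reduction that the paper leaves implicit. One side remark is wrong, though it does not affect your reduction: primality is not a mere strengthening of the hypothesis, because capacity list recovery genuinely fails over extension fields (the full-length RS code over $\F_{2^m}$ of dimension $3\cdot 2^{m-2}+1$ contains all $2^{\Theta(m^2)}=n^{\Theta(\log n)}$ $\F_2$-valued functions of algebraic degree at most $2$, so the lists $V_i=\{0,1\}$ already force a superpolynomial list), and so \cite{gj26} must be using the prime-field assumption that the statement supplies.
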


\subsection{Fourier Analytic Tools}
\label{subsec:fourier}

We adopt the following conventions on Fourier analysis over $\F_p^n$. Let $ e_p(t):=\exp(2\pi i t/p)$. Then, for functions $f:\F_p^n\to\C$, 
\[
        \wh{f}(y):=\E_{x\in\F_p^n}
        \left[f(x)e_p(\langle x,y\rangle)\right],\qquad 
        f(x)=\sum_{y\in\F_p^n}\wh{f}(y)e_p(-\langle x,y\rangle).
\]

\begin{lemma}[Arc Bound; see \cite{us}]
\label{lem:indicator-L-infty-bound}
Let $S\subset\F_p$ have size $|S|=\rho p$.  Then, for every
$\xi\ne0$, as $p\to\infty$ uniformly in $\rho$, we have that
\[
\left|\widehat{\one_S}(\xi)\right|
\le
\frac{\sin(\pi\rho)}{p\sin(\pi/p)} = \frac{\sin(\pi\rho)}{\pi}+O(p^{-2}).
\]
\end{lemma}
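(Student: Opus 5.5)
The plan is to compute $\widehat{\one_S}(\xi)$ explicitly and reduce to a rearrangement-type extremal problem on the circle. By the Fourier convention, $\widehat{\one_S}(\xi)=\frac1p\sum_{x\in S}e_p(x\xi)$. Since $\xi\ne0$ and $p$ is prime, the map $x\mapsto x\xi$ is a bijection of $\F_p$. So we may replace $S$ by $S\xi$, which has the same size $m:=\rho p$, and assume $\xi=1$. The quantity to bound is then $\frac1p\left|\sum_{x\in S}\omega^x\right|$, where $\omega=e^{2\pi i/p}$. That is, we need to bound the length of a sum of $m$ distinct $p$-th roots of unity.

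The key step is the extremal claim that among all $m$-subsets of the $p$-th roots of unity, the sum of maximal modulus is obtained by a block of $m$ consecutive roots (an arc). To prove it, write $|\sum_{x\in S}\omega^x|=\max_{\theta}\operatorname{Re}\left(e^{-i\theta}\sum_{x\in S}\omega^x\right)=\max_\theta\sum_{x\in S}\cos(2\pi x/p-\theta)$. For a fixed $\theta$, the sum over $m$-subsets is maximized by choosing the $m$ roots with largest $\cos(2\pi x/p-\theta)$. These are the $m$ roots closest in angle to $\theta$, and they form a set of consecutive roots. Hence for every $S$ and $\theta$ the value is at most that of some arc, and so the maximum over $S$ is attained by an arc. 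This is the only step with real content. Choosing the maximizing set for fixed $\theta$ is a simple exchange argument, so I expect no real obstacle, only care with ties.

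For an arc we compute directly with the geometric series: $\left|\sum_{x=0}^{m-1}\omega^x\right|=\left|\frac{\omega^m-1}{\omega-1}\right|=\frac{|\sin(\pi m/p)|}{\sin(\pi/p)}=\frac{\sin(\pi\rho)}{\sin(\pi/p)}$. Dividing by $p$ gives the first inequality.

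For the asymptotic equality, expand $p\sin(\pi/p)=\pi-\frac{\pi^3}{6p^2}+O(p^{-4})$. This gives $\frac{1}{p\sin(\pi/p)}=\frac1\pi+O(p^{-2})$, and multiplying by $\sin(\pi\rho)\le1$ makes the error $O(p^{-2})$ uniformly in $\rho$.
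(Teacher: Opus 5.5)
Your proof is correct. You reduce to $\xi=1$ via the bijection $x\mapsto x\xi$, note that for each direction $\theta$ the best $m$-subset consists of the roots nearest $\theta$ (an arc; ties occur only at the two endpoints and are harmless), evaluate any arc by the geometric series as $\sin(\pi\rho)/\sin(\pi/p)$, and use $p\sin(\pi/p)=\pi+O(p^{-2})$ together with $\sin(\pi\rho)\le 1$ to get the uniform error term. The paper does not reprove this lemma; it imports it from \cite{us}, and your rearrangement-to-an-arc argument is the standard proof of this bound.
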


\begin{corollary}[$L^\infty$-Fourier bound]
\label{cor:L-infty-bound}
For any $\rho\in(0,1)$,
\begin{equation}
\label{eq:L-infty-bound}
\left|\widehat g(\xi)\right|
\le
\gamma(\rho)+O(p^{-2})
\end{equation}
for every $\xi\ne 0$, where
\begin{equation}
\label{eq:def-gamma}
g(x):=\frac{\one_S(x)-\rho}{\sqrt{\rho(1-\rho)}}
\quad\text{and}\quad
\gamma(\rho):=
\frac{\sin(\pi\rho)}{\pi\sqrt{\rho(1-\rho)}}.
\end{equation}
Moreover, setting $\gamma(0)=\gamma(1)=0$ by continuity,
$\gamma:[0, 1]\to [0, \infty)$ is increasing on $(0,1/2]$, decreasing on
$[1/2,1)$, and $\gamma(1/2)={2}/{\pi}$.
\end{corollary}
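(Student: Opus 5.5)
The plan has two parts: deduce the Fourier bound from \cref{lem:indicator-L-infty-bound}, then handle the properties of $\gamma$ by a direct computation. Throughout, assume $0<|S|<p$ so that $\rho\in(0,1)$ and $g$ is defined.

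\textbf{The Fourier bound.} For $\xi\neq 0$ the constant function $\rho$ has vanishing Fourier coefficient, since $\E_{x\in\F_p} e_p(x\xi)=0$. Hence
\[
\wh g(\xi)=\frac{\wh{\one_S}(\xi)}{\sqrt{\rho(1-\rho)}} .
\]
I would not use the additive form $\sin(\pi\rho)/\pi+O(p^{-2})$ of \cref{lem:indicator-L-infty-bound}. Dividing its error by $\sqrt{\rho(1-\rho)}$, which can be as small as about $p^{-1/2}$, would spoil the claimed $O(p^{-2})$. Instead I would use the exact form $\sin(\pi\rho)/(p\sin(\pi/p))$. Since $p\sin(\pi/p)=\pi(1-O(p^{-2}))$, we get
\[
|\wh g(\xi)|\le \frac{\sin(\pi\rho)}{\pi\sqrt{\rho(1-\rho)}}\,\bigl(1+O(p^{-2})\bigr)=\gamma(\rho)\bigl(1+O(p^{-2})\bigr).
\]
This becomes $\gamma(\rho)+O(p^{-2})$, uniformly in $\rho$, once we know $\gamma\le 2/\pi$, which follows from the monotonicity part below.

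\textbf{Properties of $\gamma$.} Substitute $\rho=1/2-x$ with $x\in(-1/2,1/2)$. Then $\sin(\pi\rho)=\cos(\pi x)$ and $\rho(1-\rho)=\tfrac14(1-4x^2)$. Since this is even in $x$, $\gamma$ is symmetric under $\rho\mapsto 1-\rho$, and it suffices to treat $x\in[0,1/2)$. Apply the Euler product
\[
\cos(\pi x)=\prod_{j\ge1}\Bigl(1-\frac{4x^2}{(2j-1)^2}\Bigr)
\]
and cancel the $j=1$ factor against $1-4x^2$. This gives
\[
\gamma\Bigl(\tfrac12-x\Bigr)=\frac{2}{\pi}\sqrt{1-4x^2}\;\prod_{j\ge2}\Bigl(1-\frac{4x^2}{(2j-1)^2}\Bigr).
\]
On $[0,1/2)$ every factor is positive and nonincreasing in $x$. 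So $\gamma$ is increasing on $(0,1/2]$ and, by symmetry, decreasing on $[1/2,1)$. At $x=0$ we get $\gamma(1/2)=2/\pi$. As $x\to1/2$, the factor $\sqrt{1-4x^2}$ tends to $0$ while the remaining product stays bounded, so $\gamma\to0$, matching the convention $\gamma(0)=\gamma(1)=0$.

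The only delicate point is the error-term bookkeeping in the first part: a multiplicative rather than additive error is what keeps the bound uniform as $\rho$ approaches $1/p$. The monotonicity is immediate from the product formula.
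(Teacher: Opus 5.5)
Your proof is correct. The Fourier bound is the paper's argument exactly: the constant term vanishes for $\xi\neq0$, you apply the exact form $\sin(\pi\rho)/(p\sin(\pi/p))$ of \cref{lem:indicator-L-infty-bound}, and you read off the multiplicative factor $\pi/(p\sin(\pi/p))=1+O(p^{-2})$. You are just more explicit than the paper that turning $\gamma(\rho)(1+O(p^{-2}))$ into $\gamma(\rho)+O(p^{-2})$ uniformly in $\rho$ uses $\gamma\le 2/\pi$.

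Where you genuinely differ is the shape of $\gamma$. The paper only says monotonicity ``follows by differentiating.'' That amounts to checking that $\frac{d}{d\rho}\log\gamma=\pi\cot(\pi\rho)-\frac{1-2\rho}{2\rho(1-\rho)}$ is positive on $(0,1/2)$. With $x=1/2-\rho$, this is the inequality $\pi\tan(\pi x)>4x/(1-4x^2)$ on $(0,1/2)$, which is elementary but slightly fiddly, and the paper leaves it to the reader. Your factorization $\gamma(1/2-x)=\frac{2}{\pi}\sqrt{1-4x^2}\prod_{j\ge2}\bigl(1-4x^2/(2j-1)^2\bigr)$ is the integrated form of that same computation: the partial-fraction expansion of $\pi\tan(\pi x)$ has leading term $8x/(1-4x^2)$, which already beats $4x/(1-4x^2)$. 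What your route buys is that the symmetry, the monotonicity, the value $2/\pi$ at $\rho=1/2$, and the vanishing at the endpoints all come out at once. The cost is invoking the infinite product for cosine rather than pure calculus. If you want strict monotonicity, note that $\sqrt{1-4x^2}$ is strictly decreasing while the remaining product is positive and nonincreasing.
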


\begin{proof}
For $\xi\ne 0$, if $h(x) = \rho$ is a constant function then $\hat{h}(\xi) = 0$. In general,  \cref{lem:indicator-L-infty-bound} gives
\[
\abs{\wh{g}(\xi)}\le \frac{\sin(\pi\rho)}{p\sin (\pi/p)\sqrt{\rho(1-\rho)}}=\frac{\gamma (\rho)\cdot \pi}{p\sin(\pi/p)}=\gamma(\rho)+O(p^{-2}),
\]
by linearity of the Fourier transform.
Now, monotonicity follows by differentiating.
\end{proof}

For later use in \cref{subsec:rrs}, we record an extension of \cref{lem:indicator-L-infty-bound}
to finite-dimensional vector space $V$ over $\F_p$. Similarly to \cite[Section~9.1]{HMSY26}, we define dual space $V^\star = \mathrm{Hom}_{\F_p}(V, \F_p)$ to be the space of $\F_p$-linear maps from $V$ to $\F_p$. We also use the convention that, for a function $f$ defined on $V$ and for $\xi \in V^\star$, \[ \widehat f(\xi)=\E_{x\in V}f(x)e_p(\xi(x)).\]

\begin{corollary}[Fractional arc bound]\label{cor:soft-arc}
Let $V$ be a finite-dimensional vector space over $\F_p$ and let $f:V\to[0,1]$
have mean $\theta$. Then, for every $\xi \in V^\star \setminus \{0\}$, as $p \to \infty$ uniformly in $V$ and $f$, we have that
\[
|\widehat f(\xi)|\le\frac{\sin(\pi\theta)}{p\sin(\pi/p)}
 =\frac{\sin(\pi\theta)}{\pi}+O(p^{-2}).
\]
Moreover, $\sum_{\xi\in V^\star}|\widehat f(\xi)|^2\le\theta$.
\end{corollary}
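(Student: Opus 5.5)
The plan is to reduce \cref{cor:soft-arc} to the one-dimensional statement \cref{lem:indicator-L-infty-bound} via pushforward along $\xi$, and then to pass from indicators to fractional functions by a bathtub/rearrangement argument. Fix $\xi\in V^\star\setminus\{0\}$. Since $\xi:V\to\F_p$ is a nonzero linear functional, it is surjective and every fiber $\xi^{-1}(a)$ has size $|V|/p$. Define $F:\F_p\to[0,1]$ by $F(a):=\E_{x\in\xi^{-1}(a)}f(x)$. Then
\[
\widehat f(\xi)=\E_{x\in V}f(x)e_p(\xi(x))=\E_{a\in\F_p}F(a)e_p(a)=\widehat F(1),
\]
and $F$ still takes values in $[0,1]$ with mean $\theta$. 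So it suffices to prove the bound for $F:\F_p\to[0,1]$ of mean $\theta$ at frequency $1$.

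Next, I would bound $|\widehat F(1)|$. Write $|\widehat F(1)|=\max_{\phi}\mathrm{Re}\bigl(e^{-i\phi}\widehat F(1)\bigr)=\max_\phi \frac1p\sum_a F(a)\cos(2\pi a/p-\phi)$. For fixed $\phi$, maximizing a linear functional over $\{F:\F_p\to[0,1],\ \sum_a F=\theta p\}$ is a fractional knapsack: the optimum puts weight $1$ on the $\lfloor\theta p\rfloor$ points with largest $\cos(2\pi a/p-\phi)$, which form an arc, plus a fractional weight on the next point. This extremal $F$ is a convex combination of the indicators of two arcs of consecutive lengths $m=\lfloor\theta p\rfloor$ and $m+1$. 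For an arc indicator of size $m$ the exact value is $\frac{|\sin(\pi m/p)|}{p\sin(\pi/p)}$, which is the bound of \cref{lem:indicator-L-infty-bound}. The remaining step, and the only slightly delicate one, is to check that the fractional interpolation at density $\theta$ stays below $\frac{\sin(\pi\theta)}{p\sin(\pi/p)}$.

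To handle this step, compute directly: with $m=\lfloor \theta p\rfloor$ and $\lambda=\theta p-m\in[0,1)$, the extremal sum is $\bigl|\sum_{j=0}^{m-1}e_p(j)+\lambda e_p(m)\bigr|$ up to rotation. I would expect to show
\[
\Bigl|\sum_{j<m}e_p(j)+\lambda e_p(m)\Bigr|\le\frac{\sin(\pi(m+\lambda)/p)}{\sin(\pi/p)}.
\]
One route is to regard the left side as the length of a chord-plus-partial-step polygonal path inscribed along the circle of radius $1/(2\sin(\pi/p))$. The right side is the chord of the arc of angle $2\pi(m+\lambda)/p$ on that circle. Since the partial step of length $\lambda$ lies inside the circle relative to the arc, concavity of $\sin$ on $[0,\pi]$ gives the inequality. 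If this geometric picture is fiddly, a fallback is to use $\sin(\pi(m+\lambda)/p)\ge(1-\lambda)\sin(\pi m/p)+\lambda\sin(\pi(m+1)/p)$ (concavity) together with the triangle inequality on the convex combination of the two arc sums. That triangle-inequality bound is exactly the right-hand side of this concavity inequality, since $|\widehat F(1)|\le(1-\lambda)|\widehat{\one}_{A_m}(1)|+\lambda|\widehat{\one}_{A_{m+1}}(1)|$ when $F=(1-\lambda)\one_{A_m}+\lambda\one_{A_{m+1}}$. This fallback is clean, so I would use it. The asymptotic $\frac{\sin\pi\theta}{\pi}+O(p^{-2})$ then follows from $p\sin(\pi/p)=\pi+O(p^{-2})$, uniformly in $\theta$.

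Finally, the $L^2$ statement follows from Parseval: $\sum_{\xi\in V^\star}|\widehat f(\xi)|^2=\E_{x}f(x)^2\le\E_x f(x)=\theta$, since $0\le f\le1$.
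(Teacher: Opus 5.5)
Your proof is correct, and it shares the paper's outer structure: the same first step (push $f$ forward along $\xi$ to $h:\F_p\to[0,1]$ of mean $\theta$ with $\widehat h(1)=\widehat f(\xi)$) and the same last step (Parseval plus $f^2\le f$). Where you genuinely differ is in passing from indicators to fractional functions.

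\begin{itemize}
\item The paper uses the layer-cake representation $h=\int_0^1\one_{S_t}\,dt$ with $S_t=\{h\ge t\}$. It applies \cref{lem:indicator-L-infty-bound} to each superlevel set, which is an arbitrary set and need not be an arc. It then concludes by Jensen for the concave map $\theta\mapsto\sin(\pi\theta)$, using $\int_0^1\theta_t\,dt=\theta$.
\item You instead linearize via $|z|=\max_\phi\mathrm{Re}(e^{-i\phi}z)$ and solve the resulting fractional-knapsack problem exactly. The extremizer is $\one_{A_m}+\lambda\one_{\{a^*\}}=(1-\lambda)\one_{A_m}+\lambda\one_{A_{m+1}}$ for nested arcs $A_m\subset A_{m+1}$. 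The explicit geometric sums for arcs, the triangle inequality, and the two-point concavity inequality $(1-\lambda)\sin(\pi m/p)+\lambda\sin(\pi(m+1)/p)\le\sin(\pi\theta)$ then finish.
\end{itemize}

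The paper's route is shorter because it takes the arc bound for arbitrary sets as a black box and needs no extremal analysis. Yours is self-contained: the case $\lambda=0$ of your argument is itself a proof of \cref{lem:indicator-L-infty-bound}, and it identifies the extremal configuration for each phase.

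The cost is a little bookkeeping to ensure the greedy sets are nested arcs. This is routine: order the residues by angular distance from $\phi$, breaking ties arbitrarily, and note that every initial segment of this order is contiguous. Your first, geometric suggestion for the final inequality is vague and not needed. The concavity fallback you settle on is the clean argument and is the one to keep.
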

\begin{proof}
Fix $\xi\ne0$ and set $h(a)=\E[f(x)\mid\xi(x)=a]$.
Since $\xi$ is surjective, $h:\F_p\to[0,1]$ has mean $\theta$
and $\widehat h(1)=\widehat f(\xi)$. Write
$h(a)=\int_0^1\one_{S_t}(a)\,dt$, where
$S_t=\{a:h(a)\ge t\}$ and $\theta_t=|S_t|/p$.
By \cref{lem:indicator-L-infty-bound}, the triangle inequality,
and concavity of $\sin(\pi\theta)$,
\[
 |\widehat f(\xi)|
 \le\int_0^1|\widehat{\one_{S_t}}(1)|\,dt
 \le\frac{\int_0^1\sin(\pi\theta_t)\,dt}{p\sin(\pi/p)}
 \le\frac{\sin(\pi\theta)}{p\sin(\pi/p)},
\]
using $\int_0^1\theta_t\,dt=\theta$.
Finally, Parseval's identity and the fact that $f^2\le f$ give
$\sum_{\xi\in V^\star}|\widehat f(\xi)|^2=\E f^2\le \mb{E}f=\theta$.
\end{proof}

We introduce some higher order Gowers norms used for additive combinatorics over the finite field model. See \cite{gowers-wolf} for a more complete treatment.

\begin{definition}[Gowers norms over $\F_p$]
\label{def:gowers-norms}
For an integer $s\ge1$ and a function $F:\F_p\to\C$, define
\[
  \|F\|_{U^s}^{2^s}
  :=
  \E_{x,h_1,\ldots,h_s}
  \prod_{\omega\in\{0,1\}^s}
  \mathcal C^{|\omega|}F(x+\langle \omega, h\rangle),
\]
where $\mathcal C$ denotes complex conjugation and all variables are uniform
on $\F_p$.  In particular,
\begin{equation}\label{eq:fourierU}
        \|F\|_{U^1}=|\E F|=|\wh{F}(0)|,
        \qquad
        \|F\|_{U^2}^4
        =\sum_{\xi\in\F_p}|\wh F(\xi)|^4.
\end{equation}
\end{definition}
For $\Delta_hF(x):=F(x)\overline{F(x+h)}$, the Gowers norms satisfy
\begin{equation}\label{eq:gowers_recursion}
        \|F\|_{U^{s+1}}^{2^{s+1}}
        =\E_h\|\Delta_hF\|_{U^s}^{2^s}.
\end{equation}
We use this identity only for $s\in \{1, 2\}$.  For the applications below, we need a uniform bound on $U^3$-norms strictly
below one whenever the minority probability $\tau$ is bounded away from $0$.

\begin{lemma}[$U^3$-gap]
\label{lem:u3-gap}
For every fixed $\tau\in(0,1/2)$ there is $q<1$ such that the
following holds for all sufficiently large primes $p$. If
$f:\F_p\to\R$ satisfies that $|f(x)|-1=o(1)$ uniformly over $x\in\F_p$, and that $ |\E_{x\in\F_p} f(x)|\le 1-2\tau+o(1)$, then
$\|f\|_{U^3}\le q$.
\end{lemma}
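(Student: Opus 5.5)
The plan is to go through the recursion \cref{eq:gowers_recursion} with $s=2$ and bound each $U^2$ norm by its largest Fourier coefficient. The point is that $\Delta_h f$ is a real, essentially $\pm1$-valued function. Such a function can only be close to a character if that character is the trivial one, and the mean hypothesis rules this out for a positive proportion of shifts $h$.

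\textbf{Step 1 (reduction to the maximal Fourier coefficient).} By \cref{eq:gowers_recursion} and \cref{eq:fourierU},
\[
\|f\|_{U^3}^8=\E_h\|\Delta_h f\|_{U^2}^4=\E_h\sum_\xi|\widehat{\Delta_h f}(\xi)|^4\le \E_h\Big(\max_\xi|\widehat{\Delta_h f}(\xi)|^2\Big)\sum_\xi|\widehat{\Delta_h f}(\xi)|^2 .
\]
By Parseval, $\sum_\xi|\widehat{\Delta_h f}(\xi)|^2=\E_x|\Delta_hf(x)|^2=1+o(1)$, since $|f|=1+o(1)$ uniformly. So it suffices to prove that $\max_\xi|\widehat{\Delta_h f}(\xi)|\le 1-c$ for all $h$ in a set of density at least $c$, where $c=c(\tau)>0$. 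Then $\|f\|_{U^3}^8\le 1-c+c(1-c)^2+o(1)$, and for large $p$ we can take $q$ to be any fixed number strictly between this bound (to the power $1/8$) and $1$.

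\textbf{Step 2 (nontrivial frequencies).} Write $F=\Delta_hf$, which is real with $|F|\le 1+o(1)$. For $\xi\ne0$ choose a phase $\theta$ with $|\widehat F(\xi)|=\E_x F(x)\cos(2\pi\xi x/p+\theta)$. This gives $|\widehat F(\xi)|\le(1+o(1))\E_x|\cos(2\pi\xi x/p+\theta)|$. As $x$ ranges over $\F_p$, the angle $\xi x$ ranges over all residues, so this average is a Riemann sum converging to $2/\pi$. Hence $|\widehat F(\xi)|\le 2/\pi+o(1)$ uniformly in $h$ and in $\xi\ne0$, and this already suffices as long as $c\le 1-2/\pi$.

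\textbf{Step 3 (the zero frequency; the main step).} Let $S=\{x: f(x)>0\}$ and $\sigma=|S|/p$. Since $|f|=1+o(1)$, we have $f=\pm1+o(1)$ pointwise and $\E f=2\sigma-1+o(1)$, so the hypothesis gives $\sigma\in[\tau-o(1),1-\tau+o(1)]$. Also
\[
\widehat F(0)=\E_xf(x)f(x+h)=1-2d(h)+o(1),
\]
where $d(h)=\Pr_x[\one_S(x)\ne\one_S(x+h)]$. Averaging over all $h$ (including $h=0$, which costs only $O(1/p)$), we get $\E_h d(h)=2\sigma(1-\sigma)\ge2\tau(1-\tau)-o(1)=:2\beta-o(1)$. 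Since $d(h)\le1$, the reverse Markov inequality gives $\Pr_h[d(h)\ge\beta]\ge\beta-o(1)$. For these $h$ we have $|\widehat F(0)|\le 1-2\beta+o(1)$; this needs a quick check that it is not near $-1$, but $\widehat F(0)\ge-1+o(1)$ and $d(h)\le1$, so in fact $|\widehat F(0)|\le\max(1-2\beta,\,1)$. The $-1$ side needs care: if $d(h)$ is near $1$, then $\widehat F(0)$ is near $-1$. However, $d(h)\ge1-\beta'$ would force $S+h$ to be nearly the complement of $S$, which requires $\sigma\approx1/2$, and this can happen only for a small set of $h$.

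I expect this point to be the main obstacle, and I would handle it as follows. Consider $h$ and $2h$: if $d(h)\ge1-\eta$ then $d(2h)\le2\eta$. The map $h\mapsto 2h$ is a bijection of $\F_p$, so the set $\{h: d(h)\ge 1-\eta\}$ has density at most $\Pr_h[d(h)\le2\eta]$. Combining this with the average $\E_h d(h)\ge2\beta$ and choosing $\eta$ small relative to $\beta$ yields a set of $h$ of density at least $c(\tau)$ on which $d(h)\in[\beta',1-\eta]$, and hence $|\widehat F(0)|\le1-c$. Taking $c$ to be the minimum of this constant and $1-2/\pi$ completes Step 1 and proves \cref{lem:u3-gap}.
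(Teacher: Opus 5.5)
\textbf{There is a genuine gap on the ``$-1$ side'' of Step 3.} Steps 1 and 2 are fine, and so is the ``$+1$ side'' of Step 3 (reverse Markov gives density $\ge\beta-o(1)$ of $h$ with $\widehat{\Delta_hf}(0)\le 1-2\beta+o(1)$). The problem is that the facts you invoke do not imply what you conclude from them.

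Set $A=\{h:d(h)\ge 1-\eta\}$ and $B=\{h:d(h)\le 2\eta\}$. The doubling map only gives $2A\subseteq B$, so $|A|\le|B|$. Now take any $\{0,1\}$-valued profile of $d$ with $2A\subseteq B$ and $\Pr_h[d(h)=1]=2\sigma(1-\sigma)$. It satisfies every constraint you use: $|A|\le|B|$ holds because $2\sigma(1-\sigma)\le 1/2$, and the average $\E_h d(h)$ is exactly right. In this profile $|\widehat{\Delta_hf}(0)|\approx 1$ for every $h$, so no positive-density set with $d(h)\in[\beta',1-\eta]$ can be extracted, whatever $\eta$ you choose. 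Since $2\beta=2\tau(1-\tau)<1/2$, the mean lower bound can never force $|A|<1/2$.

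Ruling out such profiles needs a global input. The single relation $d(h)\approx1\Rightarrow d(2h)\approx 0$, iterated along one progression, accumulates error linearly and yields nothing.

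\textbf{The missing idea.} Use the identity $\E_h m_h^2=\E_h|\E_x f(x)f(x+h)|^2=\|f\|_{U^2}^4=\sum_\xi|\wh f(\xi)|^4$, where $m_h:=\widehat{\Delta_hf}(0)$; this is the Gowers recursion with $s=1$. Your Step 2 argument applied to $f$ itself bounds every nonzero coefficient by $c:=2/\pi+o(1)$. The hypothesis gives $|\wh f(0)|=|\mu|\le 1-2\tau+o(1)$. Together with Parseval this yields
\[
\E_h m_h^2\le |\mu|^4+c^2(1-|\mu|^2)+o(1)\le 1-\bigl(1-(1-2\tau)^2\bigr)\bigl(1-4/\pi^2\bigr)+o(1).
\]
This controls both signs of $m_h$ at once. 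Reverse Markov on $m_h^2$ then produces your positive-density set directly. More cleanly, plug $\max_\xi|\widehat{\Delta_hf}(\xi)|^2\le\max(m_h^2,c^2)\le c^2+(1-c^2)m_h^2+o(1)$ into your Step 1 to get $\|f\|_{U^3}^8\le c^2+(1-c^2)\|f\|_{U^2}^4+o(1)<1$, which is exactly the paper's proof.

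Separately, the sentence concluding $|\widehat F(0)|\le\max(1-2\beta,1)$ is vacuous and should be dropped.
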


\begin{proof}
We first claim that it suffices to consider $\pm1$-valued $f$ with no
$o(1)$ errors. Let $\widetilde f:\F_p\to\{-1,1\}$ be a nearest-sign
function, so $\eps
    =\max_{x\in\F_p}\bigl||f(x)|-1\bigr|=o(1)$.
Then, $|\E\widetilde f|  \le |\E f|+\eps \le 1-\tau$
for all sufficiently large $p$.  Applying the $\pm 1$-valued statement
with parameter $\tau/2$ gives
$\|\widetilde f\|_{U^3}\le q_0$ for some fixed $q_0<1$.  Hence,any fixed $q\in(q_0,1)$ works for all sufficiently large $p$ as
\[\|f\|_{U^3}
    \le \|\widetilde f\|_{U^3}
       +\|f-\widetilde f\|_{U^3}
    \le q_0+\eps.\] 

It therefore remains to prove the exact $\pm1$-valued case, i.e. $f:\F_p\to\{-1,1\}$ such that
$|\E f|\le1-2\tau$. Let $\mu:=\E f\in [2\tau-1, 1-2\tau]$.
Fix any $c\in(2/\pi,1)$. By
\cref{lem:indicator-L-infty-bound}, for all sufficiently large $p$, $\left|\wh f(\xi)\right|\le c$
whenever $\xi\ne0$.
Parseval and the Fourier formula in \cref{eq:fourierU} for the $U^2$ norm give
\begin{equation}
\label{eq:u2-gap-bd}  
\norm{f}_{U^2}^4 =\sum_{\xi\in\F_p}\left|\wh f(\xi)\right|^4 \le|\mu|^4+c^2\left(1-|\mu|^2\right).
\end{equation}

Since $f$ is $\pm1$-valued, let $f_h(x)=\Delta_h f(x)=f(x+h)f(x)$ for $h\in\F_p$ and let $m_h :=\E f_h$. Then, 
$f_h$ is again $\pm1$-valued, so $|m_h|^4\le |m_h|^2$ and the same argument as \cref{eq:u2-gap-bd} gives
\[
\|f_h\|_{U^2}^4
\le |m_h|^4+c^2\left(1-|m_h|^2\right)
\le c^2+\left(1-c^2\right)|m_h|^2.
\]
Moreover,
we have by \cref{eq:fourierU,eq:gowers_recursion} that
\[
     \|f\|_{U^2}^4= \E_h \|\Delta_h f\|_{U^1}^2
=\E_h |\E \Delta_h f|^2 = \E_h |m_h|^2,
\]
so $\E_h |m_h|^2 = \|f\|_{U^2}^4$.
Together, we have
\[
\|f\|_{U^3}^8 = \E_h \|f_h\|_{U^2}^4
\leq c^2 + (1 - c^2)\E_h|m_h|^2
= c^2 + (1 - c^2)\|f\|_{U^2}^4
\le c^2+\left(1-c^2\right)\left[|\mu|^4+c^2\left(1-|\mu|^2\right)\right],
\]
where the first equality follows from \cref{eq:gowers_recursion}.
Since $|\mu|\le 1-2\tau$, $|\mu|^4\le |\mu|^2\le (1-2\tau)^2$, so 
\[ \|f\|_{U^3}^8 \le c^2+(1-c^2)[c^2+(1-c^2)(1-2\tau)^2]=1-(1-c^2)^2[1-(1-2\tau)^2] <1\]
Therefore, choosing $q:=(1-(1-c^2)^2[1-(1-2\tau)^2])^{1/8}<1$ suffices.
\end{proof}

Next, we need an MDS specialization of the generalized von Neumann inequality. The standard repeated Cauchy--Schwarz argument is used in \cite{gowers-wolf,manners} to bound true complexity by Cauchy--Schwarz complexity. To prepare, we first show the following elementary Cauchy--Schwarz step.
\begin{lemma}
\label{lem:elem-cs}
For any $G, P:\F_p^k\to\C$,
suppose
$G$ is $K$-invariant for some subspace $K\le\F_p^k$, i.e. $G(x+h)=G(x)$ for every $x\in\F_p^k$ and $h\in K$, 
and $G$ is $1$-bounded, i.e. $\|G\|_\infty\le1$.  Then
\begin{equation}
    \left|\E_{x\in\F_p^k}G(x)P(x)\right|^2
   \le
    \E_{\substack{x\in\F_p^k\\h\in K}}
    P(x)\ol{P(x+h)}=: \E_{\substack{x\in\F_p^k\\h\in K}} \Delta_h P(x).
\label{eq:block-fiber-cs}
\end{equation}
\end{lemma}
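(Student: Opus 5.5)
The plan is to exploit the $K$-invariance of $G$ by averaging over cosets of $K$, and then apply Cauchy--Schwarz twice: once over cosets, using $\|G\|_\infty\le 1$, and once to expand the square of a coset average.

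First, rewrite the left-hand side using invariance. Since $G(x)=G(x+h)$ for every $h\in K$, and the map $x\mapsto x+h$ is a bijection of $\F_p^k$, we have
\[
\E_{x\in\F_p^k}G(x)P(x)=\E_{x\in\F_p^k}G(x)\,\E_{h\in K}P(x+h).
\]
Write $\widetilde P(x):=\E_{h\in K}P(x+h)$. This is the average of $P$ over the coset $x+K$, so it is itself $K$-invariant.

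Next, apply Cauchy--Schwarz in $x$ together with $|G|\le 1$:
\[
\left|\E_x G(x)\widetilde P(x)\right|^2\le \E_x|G(x)|^2\,\E_x|\widetilde P(x)|^2\le \E_x|\widetilde P(x)|^2 .
\]

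Finally, expand the square:
\[
|\widetilde P(x)|^2=\E_{h,h'\in K}P(x+h)\overline{P(x+h')}.
\]
Average over $x$. Substitute $x'=x+h$, which for fixed $h$ is a bijection of $\F_p^k$, and set $h''=h'-h$. As $(h,h')$ ranges uniformly over $K\times K$, the pair $(h,h'')$ is also uniform on $K\times K$, because $K$ is a subgroup. This gives
\[
\E_x|\widetilde P(x)|^2=\E_{x'\in\F_p^k,\,h''\in K}P(x')\overline{P(x'+h'')},
\]
which is exactly the right-hand side of \cref{eq:block-fiber-cs}. In particular, the right-hand side is real and nonnegative, as the inequality implicitly requires.

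There is no real obstacle here. The only point needing care is the change of variables in the last step, specifically that $h'-h$ is uniform on $K$ independently of $h$. This holds precisely because $K$ is a subspace.
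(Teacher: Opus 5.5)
Your proof is correct and follows essentially the same route as the paper: replace $P$ by its $K$-coset average using the invariance of $G$, apply Cauchy--Schwarz with $\|G\|_\infty\le 1$, expand the square, and change variables via $y=x+h$, $w=h'-h$. Your extra remark that the right-hand side is automatically real and nonnegative is a nice touch but not needed.
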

\begin{proof}
For any $x\in\F_p^k$ and $h\in K$, $\E_x G(x)P(x+h)=\E_y G(y)P(y)$ for $y=x+h$ by $K$-invariance. Therefore, $\E_xG(x)P(x)=\E_x G(x)\E_{h\in K}P(x+h)$, so by Cauchy-Schwarz,
\[
 \left|\E_{x\in\F_p^k}G(x)P(x)\right|^2 \le \E_x |G(x)|^2\cdot \E_x \left|\E_hP(x+h)\right|^2\le \mb{E}_{x}\E_{h, h'\in K}P(x+h)\ol{P(x+h')}
\]
by $1$-boundedness of $G$. Now, setting $y=x+h\in\F_p^k$ and $w=h'-h\in K$ rewrites the right hand side as $\E_y P(y)\ol{P(y+w)}$, which is the desired bound up to renaming the variables. The final equality follows the definition of $\Delta_h$.
\end{proof}

We now state the MDS generalized von Neumann inequality and give a standard proof specialized to the case of MDS codes. We use the $K$-invariance and $1$-bounded notions in \cref{lem:elem-cs}.

\begin{restatable}[MDS generalized von Neumann inequality]
{lemma}{blockuthree}\label{lem:block-u3}
Let $1\le k\le n$, and let
$F_1,\ldots,F_n:\F_p^k\to\F_p$ be linear maps such that $(F_i(x))_{i\in S}:\F_p^k\to\F_p^S$
is an isomorphism for every
$S\in\binom{[n]}{k}$. Let $J\subset[n]$ satisfy
\begin{equation}\label{eq:block-u3-dimension}
    |J|\le\frac{3k-n}{2}.
\end{equation}
Then, for arbitrary functions $L_1,\ldots,L_n:\F_p\to\C$,
\begin{equation}\label{eq:block-u3}
    \abs{\E_{x\in\F_p^k}\prod_{i=1}^n L_i(F_i(x))}
    \le
    \left(\prod_{j\in J}\norm{L_j}_{U^3}\right)
    \left(\prod_{i\in[n]\setminus J}\norm{L_i}_\infty\right).
\end{equation}
The same conclusion holds if each $F_i(x)$ is replaced by
$F_i(x)+c_i$ for an arbitrary $c_i\in\F_p$.
\end{restatable}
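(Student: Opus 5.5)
We will prove \cref{lem:block-u3} by three successive applications of \cref{lem:elem-cs}, one per index in $J$, in the style of the standard Cauchy--Schwarz complexity argument. By induction we may remove the $J$-indices one at a time, so the real content is the case $|J|=3$ (or fewer). Write $J=\{j_1,\ldots,j_m\}$. If $m=0$ the bound is trivial. The affine shifts $c_i$ can be absorbed by replacing $L_i$ with $L_i(\cdot+c_i)$, which changes neither $\|L_i\|_{U^3}$ nor $\|L_i\|_\infty$. We may also normalize so that $\|L_i\|_\infty\le 1$ for $i\notin J$.

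The key structural input is the MDS property. Fix $j\in J$. We want to partition $[n]\setminus\{j\}$ into three groups $B_1,B_2,B_3$, each of size at most $k-1$ (possibly after absorbing other $J$-indices). For each group, $K_r:=\bigcap_{i\in B_r}\ker F_i$ is then a subspace of dimension $k-|B_r|\ge 1$ on which $F_j$ is nonconstant, since $B_r\cup\{j\}$ extends to a $k$-set. This is possible when $n-1\le 3(k-1)$, roughly $n\le 3k$.

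We then proceed as follows:
\begin{itemize}
\item Group the product as $G_1G_2G_3\cdot L_j(F_j)$, where $G_r=\prod_{i\in B_r}L_i\circ F_i$ is $K_r$-invariant.
\item Apply \cref{lem:elem-cs} three times with $K_1,K_2,K_3$, each time differentiating the remaining factors in directions $h_r\in K_r$. After the first step, the $G_1$ factor disappears and the rest become multiplicative derivatives $\Delta_{h_1}$.
\item After three steps, the $G_r$ are gone and we are left with $\E_{x,h_1,h_2,h_3}\prod_{\omega}\mathcal C^{|\omega|}L_j(F_j(x)+\sum\omega_rF_j(h_r))$.
\item Since $F_j|_{K_r}$ is surjective onto $\F_p$ (nonzero and linear), the values $F_j(h_r)$ are uniform and independent across $r$. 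So this quantity is exactly $\|L_j\|_{U^3}^8$, and we get $|\E|\le\|L_j\|_{U^3}$.
\end{itemize}

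For several $J$-indices, the obstacle is to obtain the product $\prod_{j\in J}\|L_j\|_{U^3}$ rather than just one factor. The plan is to peel them off with an induction that restricts to cosets. Fix $j_1$, and choose groups $B_r$ avoiding the other $J$-indices as far as possible, so that the other $J$-factors lie inside the $G_r$. After the Cauchy--Schwarz steps, however, the other $J$-factors are the ones that disappear, losing their norms.

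The alternative I would try first is to treat the derivative terms as new functions. The count $|J|\le(3k-n)/2$ suggests that each additional $J$-index costs about two dimensions. So we aim to condition on $F_{j_1}(x)$ together with the three $h$-variables, and restrict to a codimension-type subspace on which the remaining maps $\{F_i\}_{i\ne j_1}$ still satisfy an MDS condition, with parameters $k'=k-2$ and $n'=n-1$. This gives $(3k'-n')/2=(3k-n)/2-5/2$.

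Since that does not match exactly, the cleaner route is a Hölder/induction argument. Decompose $L_{j_1}$ and bound $|\E\prod L_i|\le \|L_{j_1}\|_{U^3}\cdot\sup(\text{averages of products of the other functions over restricted subspaces})$, where each restricted subspace is the kernel of $F_{j_1}$ intersected with suitable constraints. The dimension bookkeeping is where \cref{eq:block-u3-dimension} enters, ensuring the inductive hypothesis still applies. Verifying that the restricted family remains MDS with the correct parameters is the main obstacle, and the exact choice of group sizes there needs care.
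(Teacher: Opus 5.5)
\textbf{Gap: the case $|J|\ge 2$ is not proved.} Write $\Lambda:=\E_{x\in\F_p^k}\prod_i L_i(F_i(x))$. Your single-index argument is sound. Splitting $[n]\setminus\{j\}$ into three blocks of size at most $k-1$, applying \cref{lem:elem-cs} three times, and using surjectivity of $F_j$ on each $K_r$ gives $|\Lambda|\le\|L_j\|_{U^3}\prod_{i\ne j}\|L_i\|_\infty$.

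For $|J|\ge 2$, which is the actual content of the lemma, you do not give an argument:
\begin{itemize}
\item The claimed reduction ``by induction to $|J|\le 3$'' is unjustified. As you note yourself, using one index at a time puts the other $J$-factors into the invariant blocks, and their $U^3$ norms are lost.
\item Your subspace-restriction induction has parameters that do not close, as you acknowledge.
\item The proposed bound $|\Lambda|\le\|L_{j_1}\|_{U^3}\cdot\sup(\cdots)$ is not what the Cauchy--Schwarz scheme produces. After the three steps, every factor placed in the blocks $G_r$ is eliminated outright, not bounded by a supremum of restricted averages.
\end{itemize}
As written, the multi-index case remains an open obstacle, not a proof.

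\textbf{Missing idea: treat $J$ as a single block instead of peeling indices.}
\begin{itemize}
\item Set $F_J:=(F_j)_{j\in J}:\F_p^k\to\F_p^J$ and $H(z):=\prod_{j\in J}L_j(z_j)$.
\item Split $[n]\setminus J=I_1\sqcup I_2\sqcup I_3$ with $|I_a|\le k-|J|$. This is possible exactly when $n-|J|\le 3(k-|J|)$, which is equivalent to \eqref{eq:block-u3-dimension}.
\item Since $|J\cup I_a|\le k$, the MDS hypothesis makes $(F_i)_{i\in J\cup I_a}$ surjective. Hence $F_J$ maps $K_a:=\bigcap_{i\in I_a}\ker F_i$ onto $\F_p^J$.
\item Your three Cauchy--Schwarz steps now run verbatim with $H\circ F_J$ in place of $L_j\circ F_j$. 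Because $F_J(x),F_J(h_1),F_J(h_2),F_J(h_3)$ are independent and uniform on $\F_p^J$, you get $|\Lambda|^8\le\|H\|_{U^3(\F_p^J)}^8$.
\item The Gowers cube average of a tensor-product function factorizes over coordinates, so $\|H\|_{U^3(\F_p^J)}=\prod_{j\in J}\|L_j\|_{U^3}$.
\end{itemize}

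This also explains your ``two dimensions per index'' heuristic. Each index of $J$ must be hit surjectively inside each of the three kernels, which costs $3|J|$ of the $3k$ available dimensions, while only $|J|$ coordinates are removed from $n$.
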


\begin{proof}
First, it suffices to show the case where $F_i$ is linear without the affine shift by $c_i$ and that $L_i$ is $1$-bounded  for every $i\in[n]\setminus J$, i.e. $\norm{L_i}_\infty\le1$.
Indeed, if $\norm{L_i}_\infty=0$ for some $i\notin J$, then the
left-hand side of \cref{eq:block-u3} is zero. Otherwise, replace each
$L_i$ with $L_i/\norm{L_i}_\infty$ for $i\notin J$ and restore the
factor $\prod_{i\notin J}\norm{L_i}_\infty$ at the end to prove the general $L_i$ case. Now, suppose that $F_i(x)$ is replaced by $F_i(x)+c_i$. Apply the
linear case to $\widetilde L_i(t):=L_i(t+c_i)$. Observe that
translations preserve both relevant norms $ \norm{\widetilde L_i}_{U^3}
    =
    \norm{L_i}_{U^3}$ and $\norm{\widetilde L_i}_\infty
    =
    \norm{L_i}_\infty$,
which proves the translated statement.

Let $d:=|J|$. Since $n-d\le3(k-d)$, partition
$[n]\setminus J=I_1\sqcup I_2\sqcup I_3$ with $|I_a|\le k-d$.
Define
\[
    F_J(x):=(F_j(x))_{j\in J},\qquad
    H(z):=\prod_{j\in J}L_j(z_j),\qquad
    G_a(x):=\prod_{i\in I_a}L_i(F_i(x)),
\]
so $G_a$ is $1$-bounded since each $L_i$ for $i\not\in J$ is. The quantity whose norm we need to control is
\[
    \Lambda
    :=
    \E_{x\in\F_p^k}\prod_{i=1}^nL_i(F_i(x))
    =
    \E_{x\in\F_p^k}H(F_J(x))G_1(x)G_2(x)G_3(x).
\]

Fix any $a\in[3]$. Observe that $G_a$ is
$K_a$-invariant where $K_a:=\bigcap_{i\in I_a}\ker F_i$. We show that $F_J|_{K_a}:K_a\to\F_p^J$ is surjective. Indeed, coordinates of the map $(F_i)_{i\in J\cup I_a}:\F_p^k\to \F_p^{J}\times F_p^{I_a}$ are independent by the MDS assumption, so the above map is surjective. In particular, for every $z\in \F_p^J$, there exists $y\in \F_p^k$ such that $(F_i)_{i\in J\cup I_a}(y)=(z, 0)$. Hence, $F_J(y)=z$ and $y\in K_a$, as claimed.

Now, we apply \cref{lem:elem-cs} three times.
We will use the following simple observations: first, $\Delta_h(PQ)=\Delta_hP\Delta_hQ$; second, if $G$ is
$K$-invariant and $1$-bounded, then $\Delta_hG$ is also $K$-invariant
and $1$-bounded for every $h\in\F_p^k$. Indeed, for every $u\in K$,
\[
    \Delta_hG(x+u)
    =
    \ol{G(x+u+h)}{G(x+u)} =
    \ol{G(x+h)}{G(x)}
    =
    \Delta_hG(x),
\]
and
\[
    |\Delta_hG(x)|
    =
    |G(x+h)|\,|G(x)|
    \le 1.
\]
The same therefore holds for any iterated derivative of $G$.
Apply \cref{lem:elem-cs} first with $G=G_1$ and $P=(H\circ F_J)G_2G_3$.
Since $G_1$ is $K_1$-invariant and $1$-bounded, we obtain
\begin{equation}
\label{eq:block-cs-one}
    |\Lambda|^2
    \le \E_{\substack{x\in\F_p^k\\h_1\in K_1}} \Delta_{h_1}\bigl((H\circ F_J)G_2G_3\bigr)(x) =\E_{\substack{x\in\F_p^k\\h_1\in K_1}}
    \Delta_{h_1}(H\circ F_J)(x)
    \Delta_{h_1}G_2(x)
    \Delta_{h_1}G_3(x).
\end{equation}

Now, for each fixed $h_1\in K_1$, the function $\Delta_{h_1}G_2$ is
$K_2$-invariant and $1$-bounded. Applying \cref{lem:elem-cs} with
$G=\Delta_{h_1}G_2$
and $ P=
    \Delta_{h_1}(H\circ F_J)\Delta_{h_1}G_3$, and Cauchy-Schwarz over $h_1$ yields
\begin{equation}
\label{eq:block-cs-two}
    |\Lambda|^4
    \le
    \E_{\substack{x\in\F_p^k\\
                  h_1\in K_1,\ h_2\in K_2}}
    \Delta_{h_2}\Delta_{h_1}(H\circ F_J)(x)
    \Delta_{h_2}\Delta_{h_1}G_3(x).
\end{equation}

For each fixed $h_1,h_2$, the function
$\Delta_{h_2}\Delta_{h_1}G_3$ is $K_3$-invariant and $1$-bounded.
Applying \cref{lem:elem-cs} one final time and Cauchy-Schwarz over $(h_1, h_2)$ gives
\begin{equation}
\label{eq:block-three-cs}
    |\Lambda|^8
    \le
    \E_{\substack{x\in\F_p^k\\
                  h_i\in K_i,\ i\in[3]}}
    \Delta_{h_3}\Delta_{h_2}\Delta_{h_1}
    (H\circ F_J)(x).
\end{equation}

It remains to identify the right-hand side with the $U^3$-norm of
$H$.  Since $F_J:\F_p^k\to\F_p^J$ and each restriction
$F_J|_{K_i}:K_i\to\F_p^J$ are surjective linear maps, independent
uniform $x\in\F_p^k$ and $h_i\in K_i$ give independent uniform
variables in $\F_p^J$ that we denote by $ z:=F_J(x)$ and $ u_a:=F_J(h_a)$ for $a\in [3]$. Moreover, by linearity of $F_J$,
$\Delta_{h_3}\Delta_{h_2}\Delta_{h_1}
    (H\circ F_J)(x)=\Delta_{u_3}\Delta_{u_2}\Delta_{u_1}H(z)$.
Therefore, by the definition of the $U^3$-norm, the right-hand side of \cref{eq:block-three-cs} is
\begin{equation}\label{eq:block-rhs}
    \E_{z,u_1,u_2,u_3\in\F_p^J}
    \Delta_{u_3}\Delta_{u_2}\Delta_{u_1}H(z)
    =
    \norm{H}_{U^3(\F_p^J)}^8,
\end{equation}
Finally, since
the cube average factors over the coordinates $j\in J$, we see that
\[
    H(z)=\prod_{j\in J}L_j(z_j)\implies \norm{H}_{U^3(\F_p^J)}^8
    =
    \prod_{j\in J}\norm{L_j}_{U^3(\F_p)}^8.
\]
Combining with \cref{eq:block-three-cs,eq:block-rhs}, and taking eighth roots gives
$|\Lambda|
    \le
    \prod_{j\in J}\norm{L_j}_{U^3}$.
This proves the linear, $1$-bounded case. Applying the initial
reduction proves the general statement.
\end{proof}

For \cref{thm:random-opi}, we need the following higher-order version
of \cref{lem:block-u3}. Its proof is the same simultaneous
Cauchy--Schwarz argument, with $s$ blocks in place of three. For a complete proof, see \cite[Lemma~10.1]{HMSY26}.

\begin{fact}[Higher-order MDS inequality]\label{fact:block-ul}
Let $C\subset\F_p^n$ be an $[n,k]$ MDS code, let $v\in\F_p^n$, and
let $s\ge2$. If $J\subset[n]$ satisfies
$n-|J|\le s(k-|J|)$, then for all $1$-bounded functions
$L_i:\F_p\to\C$,
\[
 \left|\E_{x\in v+C}\prod_iL_i(x_i)\right|
 \le\prod_{i\in J}\|L_i\|_{U^s}.
\]
\end{fact}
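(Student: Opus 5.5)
This is the $s$-block analogue of \cref{lem:block-u3}, and I would prove it the same way. First parametrize. Since $C$ is an $[n,k]$ MDS code, fix a generator so that $x\in C$ is written $x=(F_i(u))_{i\in[n]}$ with linear maps $F_i:\F_p^k\to\F_p$. The MDS property says $(F_i)_{i\in S}$ is an isomorphism for every $S\in\binom{[n]}{k}$. Equivalently, for any $T\subset[n]$ with $|T|\le k$, the map $(F_i)_{i\in T}$ is surjective.

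Uniform $x\in v+C$ is then $x_i=F_i(u)+v_i$ with $u$ uniform in $\F_p^k$. As in the reduction step of \cref{lem:block-u3}, I replace $L_i$ by the translate $t\mapsto L_i(t+v_i)$. This changes neither the $U^s$ norms nor the $1$-boundedness, so I may take $v=0$.

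Next I set up the blocks. Let $d=|J|$. The hypothesis $n-d\le s(k-d)$ lets me partition $[n]\setminus J=I_1\sqcup\cdots\sqcup I_s$ with $|I_a|\le k-d$; some blocks may be empty. Define $H(z)=\prod_{j\in J}L_j(z_j)$ on $\F_p^J$ and $G_a(u)=\prod_{i\in I_a}L_i(F_i(u))$. Each $G_a$ is $1$-bounded and $K_a$-invariant, where $K_a=\bigcap_{i\in I_a}\ker F_i$. Because $|J\cup I_a|\le k$, the map $(F_i)_{i\in J\cup I_a}$ is surjective, so $F_J|_{K_a}:K_a\to\F_p^J$ is surjective. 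This is exactly the argument given in \cref{lem:block-u3}.

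Now I iterate \cref{lem:elem-cs} $s$ times. At step $a$, apply it with $G=\Delta_{h_{a-1}}\cdots\Delta_{h_1}G_a$, which is still $K_a$-invariant and $1$-bounded, and with $P$ the product of the remaining derivatives. Then apply Cauchy--Schwarz over $(h_1,\dots,h_{a-1})$. Induction gives
\[
|\Lambda|^{2^s}\le \E_{u\in\F_p^k,\,h_a\in K_a}\Delta_{h_s}\cdots\Delta_{h_1}(H\circ F_J)(u).
\]
The right-hand side is real and nonnegative at each stage: after the Cauchy--Schwarz step it is an average of squared moduli, so the induction is legitimate. Finally, $z=F_J(u)$ and $w_a=F_J(h_a)$ are independent and uniform on $\F_p^J$, by the surjectivity of $F_J$ and of each $F_J|_{K_a}$. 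So the right-hand side equals $\|H\|_{U^s(\F_p^J)}^{2^s}$. This factors over coordinates as $\prod_{j\in J}\|L_j\|_{U^s}^{2^s}$, and taking $2^s$-th roots finishes the proof.

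The main point needing care is the bookkeeping in the induction: keeping track of the sign and conjugation conventions of $\Delta_h$, and checking that the inequality chain holds with the correct exponent $2^a$ after each Cauchy--Schwarz. The degenerate cases also need a remark. If $d=k$, the hypothesis forces $n=k$, all blocks are empty, and the bound is trivial. Empty blocks $I_a$ are harmless, since then $K_a=\F_p^k$ and $G_a\equiv 1$.
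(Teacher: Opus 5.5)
Your proposal is correct and follows the same route as the paper. The paper proves only the $s=3$ case in full (\cref{lem:block-u3}) and states that the general case is the same simultaneous Cauchy--Schwarz argument with $s$ blocks, citing \cite[Lemma~10.1]{HMSY26}; you carry this out, including surjectivity of $F_J|_{K_a}$ from $|J\cup I_a|\le k$ and the identification of the final average with $\|H\|_{U^s(\F_p^J)}^{2^s}$.

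One small remark on the case $d=k$: it needs no separate treatment, since your induction already covers it with $K_a=\F_p^k$. Calling it ``trivial'' also glosses over the fact that it amounts to the monotonicity $\|L\|_{U^1}\le\|L\|_{U^s}$.
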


\subsection{Computation Model and Subsampling}
\label{subsec:subsampling}
We describe the computation model for our  algorithmic OPI statements, e.g. \cref{thm:main-algo}. Recall that to specify the OPI instance, we need to specify the Reed-Solomon code $C=\on{RS}_p(\alpha_1, \dots, \alpha_n;k)$  as well as input lists $S_i$ with $|S_i|\ge \rho p$. We adopt the same model as \cite{yz,hy,sjo}.
\begin{definition}[Computation Model for OPI]
\label{def:input-model}    
The Reed-Solomon code $C$ is passed in via inputs
$(p,k,\alpha_1,\ldots,\alpha_n)$. We also give as input an integer $\rho p\in[p]$ satisfying $|S_i|\ge \rho p$ for every $i$.
The lists $S_i$ accessed through the coherent membership oracles: for $i\in[n], x\in\F_p, b\in\{0, 1\}$,
\[
    \ket{i}\ket{x}\ket{b}
    \longmapsto
    \ket{i}\ket{x}\ket{b\oplus\one_{S_i}(x)}
\]
The \emph{efficient} classical and quantum algorithms we consider here have running time $\poly(n, \log p)$, which is polynomial whenever $\log p \le \poly(n)$.
When $p\le \poly(n)$, explicit membership truth tables $\{\one_{S_i}(x):i\in [n], x\in \F_p\}$ suffice,
since look-ups are in polynomial time. When $p\ge n^{\omega(1)}$ but $\log p \le \poly(n)$, this is not efficient, so the lists are supplied through membership
oracles above.
\end{definition}

Most of our arguments for near-optimal OPI discrepancy and quantum algorithms first treat the case where $|S_i|=\rho p$ for every $i$, and where $\rho$ is known. In the rest of this section, we give the ingredients to extend such results to
arbitrary denser and heterogeneous input lists satisfying $|S_i|\ge \rho p$. Moreover, as discussed in \cref{rem:no-need-rho}, these ideas can remove the need for the algorithm to receive $\rho p$ as an input. 
Many of the observations below were made in \cite{hy}; we prove a slightly different reduction for membership-oracle input setting suitable to our algorithms.

The first observation is containment: since solutions are preserved if we enlarge the input lists, any existential satisfaction guarantee, including existence of a
perfect solution, transfers from lists of common size $\rho p$ to lists with $|S_i|\ge \rho p$.
In particular, a result proved in the balanced case (when $|S_i| = 1/2 + o(1)$ for all $i$) will apply to any case where $\min_i|S_i| \geq 1/2 + o(1)$.

To make this observation algorithmic, we can use \emph{algorithmic subsampling} to reduce the general case to the case where $|S_i|$ all have the same density.
The idea is to choose $\rho p$ elements of each input list and run a common-density algorithm.  If the output of this algorithm has satisfaction ratio $s'$ constraints, then it have satisfaction ratio $s \geq s'$ for the original lists.  This idea works immediately when the lists are given as truth tables, which requires $p = \poly(n)$.  If $p$ is larger and the lists are given via oracle access, then \cref{prop:oracle-thinning} below shows how to make this idea work.

Before we get to that, however, we note that the monoticity above (that is, that making the lists larger can only make a solution satisfy more constraints) does not by itself transfer the \emph{number} of
solutions, as required for near-optimal OPI
discrepancy. This is handled by the second subsampling idea.

\begin{proposition}
\label{prop:subsample}
Let $C\subset\F_p^n$, let $S_i\subset\F_p$ with density $\rho_i$, and let
$1\le t\le\min_i|S_i|$.  If the $T_i$ are independent uniformly random
$t$-subsets of $S_i$, then
\[
\E_T\left[\frac{|C\cap(T_1\times\cdots\times T_n)}
{|C|\left({t}/{p}\right)^n}\right]
=
\frac{|C\cap(S_1\times\cdots\times S_n)|}
{|C|\prod_i\rho_i}.
\]
Consequently, a uniform relative-discrepancy bound for input lists of size $t$ transfers without loss to every instance with $|S_i|\ge t$.
\end{proposition}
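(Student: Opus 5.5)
The plan is to prove the identity by linearity of expectation, writing the count $|C\cap(T_1\times\cdots\times T_n)|$ as a sum of indicators over codewords. For each $x\in C$ we have
\[
\one\{x\in T_1\times\cdots\times T_n\}=\prod_{i=1}^n \one\{x_i\in T_i\},
\]
and since the $T_i$ are chosen independently, the expectation factors as $\prod_i \Pr[x_i\in T_i]$.

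For a fixed $i$, the set $T_i$ is a uniformly random $t$-subset of $S_i$. So $\Pr[x_i\in T_i]=t/|S_i|$ when $x_i\in S_i$, and $\Pr[x_i\in T_i]=0$ otherwise. Hence
\[
\E_T\bigl[\one\{x\in T_1\times\cdots\times T_n\}\bigr]=\one_{\mbf S}(x)\prod_i \frac{t}{|S_i|}=\one_{\mbf S}(x)\prod_i\frac{t/p}{\rho_i}.
\]
Summing over $x\in C$ gives $\E_T|C\cap \mbf T|=|C\cap\mbf S|\,(t/p)^n/\prod_i\rho_i$. Dividing by $|C|(t/p)^n$ yields the claimed equality.

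For the consequence, suppose every instance with all lists of size exactly $t$ satisfies the relative-discrepancy bound
\[
\left|\frac{|C\cap\mbf T|}{|C|(t/p)^n}-1\right|\le\eta
\]
with $\eta$ uniform over such instances. Then every realization of $T$ satisfies this bound, so by the triangle inequality (Jensen for the absolute value) its expectation also lies in $[1-\eta,1+\eta]$. By the identity, that expectation equals $|C\cap\mbf S|/(|C|\prod_i\rho_i)$. So the same bound $\eta$, for instance $\eta=e^{-\Omega(n)}$, transfers to every instance with $\min_i|S_i|\ge t$.

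I do not expect a real obstacle; the only points to check are the independence across coordinates and that $t\le|S_i|$, so that the random subsets are well defined.
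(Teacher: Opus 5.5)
Your proof is correct and follows the paper's argument: linearity of expectation over codewords and independence across coordinates give $\prod_i t/|S_i|$ for each $c\in C\cap\mbf S$. The transfer of a uniform relative-discrepancy bound then follows by averaging, using $|\E X-1|\le\E|X-1|$.
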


\begin{proof}
For every fixed
$c\in C\cap(S_1\times\cdots\times S_n)$, independence gives
\[
\P_T(c\in T_1\times\cdots\times T_n)
=\prod_{i=1}^n\frac{t}{|S_i|}.
\]
Summing this identity over $c$ and dividing by $|C|(t/p)^n$ proves the
displayed formula. The second bound follows by averaging.
\end{proof}

For coherent membership access, we can transfer an algorithm's satisfaction guarantee without enumerating the input lists. This is inspired by the oracle thinning procedure in \cite{hy,zhandry-oracle}.
\begin{proposition}
\label{prop:oracle-thinning}
Fix $\rho \in (0, 1)$ and $s\in[0,1]$. Suppose a $\poly(n,\log p)$ time quantum algorithm with coherent membership access achieves a satisfaction ratio at least $s$
with inverse-polynomial probability on every OPI instance with density
$\rho_i=\rho$.  Then there is a $\poly(n,\log p,\rho^{-1})$ time quantum algorithm in the same oracle model that achieves the same satisfaction ratio $s$ with high probability whenever $\rho_i\ge\rho$.
\end{proposition}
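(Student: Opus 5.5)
We will prove \cref{prop:oracle-thinning} by simulating, for each $i$, a coherent membership oracle for a random subset $T_i\subseteq S_i$ of density close to $\rho$, and running the given algorithm on the thinned instance. Since solutions only gain satisfied constraints when lists grow, any $Q$ with $s(Q)\ge s$ on the lists $T_i$ also has $s(Q)\ge s$ on the lists $S_i$.

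The thinning works as follows. The algorithm receives the integer $\rho p$, and since it has the evaluation points it can query $|S_i|$ implicitly via its relative density $\rho_i\ge\rho$. Let $H_i:\F_p\to[0,1)$ be a random function, which we replace by a $2\poly(n)$-wise independent hash family (the Zhandry technique, \cite{zhandry-oracle}, says $2q$-wise independence perfectly simulates a random oracle against $q$ quantum queries). Set $T_i:=\{x\in S_i: H_i(x)<\rho/\rho_i\}$. This oracle is coherently implementable with two calls to the $S_i$ oracle and uncomputation. Each $T_i$ then has expected density exactly $\rho$. By Chernoff, if $p$ is large, $|T_i|/p=\rho\pm p^{-1/3}$ with overwhelming probability, simultaneously for all $i$.

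The main obstacle is that $\rho_i$ is unknown and $|T_i|$ is only approximately $\rho p$, whereas the hypothesis concerns instances with density exactly $\rho$. We plan two fixes. First, we avoid needing $\rho_i$: keep each $x\in S_i$ independently with probability $\rho$, and run the base algorithm with density parameter $\rho\rho_i$. To sidestep the unknown $\rho_i$, we instead guess a common parameter $\rho'$ from a grid of $\poly(n,\rho^{-1})$ values, using the monotonicity of lists: it suffices to produce, for each $i$, a subset of density at least $\rho$, then thin further by a known factor. Second, to make densities exact we rely on the hypothesis tolerating $o(1)$ density slack, as all the paper's density-$\rho$ results are stated via $\gamma(\rho)+o(1)$ bounds. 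Alternatively, we can shave $T_i$ to exactly $\lfloor\rho p\rfloor$ elements by ranking $H_i$-values, but ranking is not coherently efficient, so we prefer the slack route. If exact density is genuinely needed, we instead accept a worst case and note that for $p\le\poly(n)$ enumeration works directly.

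Finally, we amplify. Each run on the simulated instance succeeds with inverse-polynomial probability $1/\poly(n)$. Satisfaction $s(Q)$ of a candidate with respect to the original $S_i$ is classically checkable with $n$ oracle queries. Repeating $\poly(n)$ times with fresh hash functions, and across the $\poly(n,\rho^{-1})$ grid guesses, and outputting any candidate with verified $s(Q)\ge s$, yields high success probability. The total running time is $\poly(n,\log p,\rho^{-1})$.
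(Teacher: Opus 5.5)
There is a genuine gap, at exactly the step you flag as the main obstacle. The hypothesis guarantees success only on instances whose lists have density exactly $\rho$, and nothing constrains the base algorithm off that promise. Your thinned lists have density $\rho\pm p^{-1/3}$, and the chance that all $n$ binomial sizes hit $\rho p$ exactly is negligible.

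Invoking ``$o(1)$ density slack'' replaces the proposition's hypothesis with a stronger one. The proposition is meant to be applied to black-box algorithms whose guarantee is stated only at exact density. Your enumeration fallback covers only $p\le\poly(n)$, where oracles are unnecessary anyway.

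The missing idea is a hybrid argument that is used only in the analysis:
\begin{enumerate}
\item Fix $T$ with $|T_i|/p\in[\rho,\rho+\delta]$, and let $U_i$ be a uniformly random $\rho p$-subset of $T_i$.
\item Each $(i,x)$ lies in the disagreement set $\Delta$ of $O_T$ and $O_U$ with probability at most $\delta/\rho$. So for each intermediate state of the $T$-execution, the expected squared weight on $\Delta$ is at most $\delta/\rho$. Swapping $O_T$ for $O_U$ at that query therefore changes the state by at most $2\sqrt{\delta/\rho}$ in expectation.
\item Telescoping over the $Q$ queries gives $\E_U\|\Psi_T-\Psi_U\|\le 2Q\sqrt{\delta/\rho}$.
\item Since $U_i\subset T_i\subset S_i$, the event that the output satisfies at least an $s$ fraction of the \emph{original} constraints has probability at least $\sigma$ under every $\Psi_U$. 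Hence it has probability at least $\sigma-O(Q\sqrt{\delta/\rho})$ under $\Psi_T$, which is $\Omega(\sigma)$ for $\delta=\Theta(\rho\sigma^2/Q^2)$.
\end{enumerate}
You must also cap the base algorithm's time and query count so it is well defined off-promise. No exact-size thinning, and hence no ranking, is ever needed.

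Your treatment of the unknown $\rho_i$ also fails as written, for three reasons:
\begin{enumerate}
\item The evaluation points carry no information about $|S_i|$.
\item Running the base algorithm at density $\rho\rho_i$ is not available, since the hypothesis concerns a single fixed $\rho$.
\item A grid over one common $\rho'$ cannot match heterogeneous $\rho_i$, and guessing per coordinate would require exponentially many combinations.
\end{enumerate}
The remedy is to estimate each $\rho_i$ separately from $O(\delta^{-2}\log(n/\sigma))$ uniformly random membership queries. Then keep $x\in S_i$ iff $a_ix+b_i\in\{0,\dots,h_i-1\}$, with random $a_i,b_i$ and $h_i\approx p(\rho+\delta/2)/\widehat\rho_i$ capped at $p$. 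Pairwise independence and Chebyshev put every $|T_i|/p$ in $[\rho,\rho+\delta]$ with probability $1-O(\sigma)$ once $p$ exceeds a polynomial threshold; below that threshold you read the truth table.

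A $2q$-wise independent hash would also work. Note, however, that Zhandry's theorem concerns the algorithm's view, not the size of $T_i$, so concentration of $|T_i|$ must be proved for the hash family itself.
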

\begin{proof}
Let $A$ be the given algorithm on fixed density $\rho$, so it has at least inverse-polynomial success probability  $\sigma\in(0,1]$ and polynomial runtime with a polynomial $Q$ many queries of the membership oracles for the input lists when they have size exactly $\rho p$ each. We impose these same time and query bounds on every execution, aborting if they are exceeded, so that $A$ is also defined on inputs outside its density promise. Let 
\[
p_0 := \max\left\{\frac{16}{\delta}, \frac{32n}{\sigma\delta^2}\right\}\quad\text{where}\quad \delta:=\frac{\rho\sigma^2}{64Q^2}.
\]
If $p\le p_0$, then $p=\poly(n)$, so we can simply compute the truth table $(\mbf{1}\{a\in S_i\})_{i\in [n], a\in\F_p}$ with polynomially many queries, arbitrarily throw out elements $S_i$ to obtain $U_i$ with $|U_i|=\rho p$, and run $A$ on $(U_1, \dots, U_n)$. Hence, assume $p\ge p_0$ for the rest of the proof.

First, we efficiently implement membership oracles for
subsets $T_i\subset S_i$ such that for every $i\in [n]$:
\begin{equation}
\label{eq:thinning-size-window}
    \rho\le\frac{|T_i|}{p}\le\rho+\delta.
\end{equation}
For each coordinate $i$, independently draw
\[
    M:=
    \left\lceil
      \frac{128}{\delta^2}
      \log\frac{16n}{\sigma}
    \right\rceil
\]
uniform field elements $X_{i,1},\ldots,X_{i,M}$, and query
their memberships in $S_i$. Define
\[
    \widetilde\rho_i
    :=
    \frac1M\sum_{\ell=1}^M
       \mathbf1\{X_{i,\ell}\in S_i\}\quad\text{and}\quad \widehat\rho_i := \max(\rho,  \widetilde\rho_i).
\]
For fixed $i$, since $\rho_i\in[\rho,1]$, clipping cannot
increase the estimation error. By Chernoff's inequality
\[
    \mb{P}\left(
       |\widehat\rho_i-\rho_i|>\frac{\delta}{16}
    \right)\le \mb{P}\left(
       |\widetilde\rho_i-\rho_i|>\frac{\delta}{16}
    \right)
\le
    2\exp\left(
       -2M\left(\frac{\delta}{16}\right)^2
    \right)\le
    2\exp\left(-\log\frac{16n}{\sigma}\right)
     =\frac{\sigma}{8n}.
\]
Union bounding over $i$, the event $\mc{G}$ that $|\widehat\rho_i -\rho_i|\le \delta/16$ occurs with probability at least $1-\sigma/8$. Let
\[
    \theta_i:=
    \min\left\{1,\frac{\rho+\delta/2}{\widehat\rho_i}\right\}\quad.
\]
Independently sample uniform $a_i,b_i\in\mathbb F_p$ and define $T_i\subset S_i$ via
\[
    T_i:=
    \left\{
       x\in S_i:
  a_ix+b_i\in \{0, \dots, h_i-1\}
    \right\}\quad\text{where}\quad     h_i:=\lfloor p\theta_i\rfloor.
\]
On event $\mc{G}$, we claim that
\eqref{eq:thinning-size-window} holds with high probability.
If $\theta_i=1$, then $\widehat{\rho_i}\le \rho+\delta/2 $ and $T_i=S_i$, so
\[
    \rho\le\frac{|T_i|}{p}=\rho_i
    \le\widehat\rho_i+\frac{\delta}{16}
    \le\rho+\frac{9\delta}{16}
    <\rho+\delta,
\]
as claimed. If $\theta_i<1$, then let $Z_i=|T_i|/p$. Since $a_ix+b_i$ is uniform on
$\mathbb F_p$ for each $x$, $\mb{E}[Z_i]=\rho_i h_i/p$.
Since $ (a_i,b_i)\mapsto(a_ix+b_i, a_iy+b_i)$
is a bijection of $\mathbb F_p^2$ for distinct $x,y$,
\[
    \on{Var}(Z_i)\le \frac{|S_i|}{4p^2}
     \le\frac1{4p}.
\]

Since $h_i=\lfloor p\theta_i\rfloor$
and $\theta_i\widehat\rho_i=\rho+\delta/2$, 
\[
    \left|\mb{E}[Z_i]-\left(\rho+\frac{\delta}{2}\right)\right| = \left|\frac{\rho_i h_i}{p}-\left(\rho+\frac{\delta}{2}\right)\right|
    \le \theta_i|\rho_i-\widehat\rho_i|+\frac1p
    \le\frac{\delta}{8},
\]
where the last step uses $p\ge p_0$.
Hence, on event $\mc{G}$, Chebyshev gives
\[
    \mb{P}\left(Z_i\not\in[\rho,\rho+\delta]\middle| \mc{G}\right)
    \le
    \mb{P}\left(\left|Z_i-\frac{\rho_i h_i}{p}\right|
       >\frac{\delta}{4}\middle| \mc{G}\right)
    \le\frac4{p\delta^2}.
\]
Union bounding over $i\in [n]$ and combining with $p\ge p_0$ and the bound on $\mc{G}$, we see that
\begin{equation}
\label{eq:thinning-good-prob}
    \mb{P}\left(
      \frac{|T_i|}{p}\in[\rho,\rho+\delta]\,\forall\, i\in [n]
    \right)
    \ge
    1-\frac{\sigma}{4}.
\end{equation}

Note that algorithmically, we never enumerate $T_i$. We simply sample and store $(a_i, b_i, h_i)$. To simulate membership oracles for $T_i$, we query that for $S_i$, compute $\one\{(a_ix+b_i)\in \{0, \dots, h_i\}\}$, and compute the AND of the two bits coherently. This costs polynomial time overhead and allows us to run the algorithm $A$ on input lists $(T_1, \dots T_n)$.
On the high probability event in \cref{eq:thinning-good-prob}, these $T_i$ to have density in $[\rho, \rho+\delta]$. However, they are not yet all equal to $\rho$, so we cannot yet apply the guarantee of satisfaction ratio at least $s$ with probability at least $\sigma$. Nevertheless, we run $A$ on $T_i$ and we show for the rest of the proof why the output has satisfaction ratio at least $s$ with probability at least $\sigma/2$, upon which we boost to high probability with polynomially many repetitions since $\sigma$ is at least inverse polynomial.

Fix any $(T_1, \dots, T_n)$ and condition on the high probability event in \cref{eq:thinning-good-prob}. For analysis only, independently choose $U_i$ uniformly
among the $\rho p$-element subsets of $T_i$.
For every $(i,x)$,
\[
    \mb{P}_U\left(
      \one_{T_i}(x)\ne\one_{U_i}(x)
    \right)
    =
\frac{|T_i|-\rho p}{|T_i|}  \one_{T_i}(x) \le\frac{\delta p}{\rho p}
     =\frac{\delta}{\rho}.
\]
We argue that running $A$ with corresponding membership oracles $O_T$ and $O_U$ produce close-by quantum states.
Write its state just before query $j$ in the execution using $T$
as
\[
    \ket{\psi_j^T}
    =
    \sum_{i,x}\ket{i,x}\ket{\omega_{j,i,x}},
    \qquad
    \sum_{i,x}\|\omega_{j,i,x}\|^2=1.
\]
Let $\Delta:= \{(i, x): \one_{T_i}(x)\ne \one_{U_i}(x)\}$ 
and let $\Pi_{\Delta}$ project onto $\on{span}\{\ket{i,x}:(i,x)\in \Delta\}$ which depends on the random $U$, so
\[
\mb{E}_U
       \|\Pi_{\Delta}\ket{\psi_j^T}\|^2
    =
    \sum_{i,x}\|\omega_{j,i,x}\|^2
       \mb{P}_U\bigl((i,x)\in\Delta\bigr)
    \le
    \frac{\delta}{\rho}
    \sum_{i,x}\|\omega_{j,i,x}\|^2
     =\frac{\delta}{\rho}.
\]
Now, let us consider applying queries $O_T$ and $O_U$ to state $\ket{\psi_j^T}$, the outcome states overlap outside $\Delta$, and on $\Delta$ the difference $O_T-O_U$ has operator norm at most two, so by Cauchy-Schwarz,
\[
    \mb{E}_U
       \|(O_T-O_U)\ket{\psi_j^T}\|\le \mb{E}_U \|\Pi_{\Delta_U}\ket{\psi_j^T}\|
    \le
    2\sqrt{
       \mb{E}_U
       \|\Pi_{\Delta_U}\ket{\psi_j^T}\|^2
    }
    \le 2\sqrt{\frac{\delta}{\rho}}.
\]

Coupling the two executions, of $A$ with $O_U$ and $O_T$, we bound the error by telescope summing $j$ over the at most $Q$ queries, so the final outcome states $|\Psi_T\rangle,|\Psi_U\rangle$ satisfy
\[
    \mb{E}_U
       \bigl\||\Psi_T\rangle-|\Psi_U\rangle\bigr\|
    \le2Q\sqrt{\frac{\delta}{\rho}} = \frac{\sigma}{4},
\]
Every tuple $U$ has exact density $\rho$ and  $U_i\subset S_i$. By the hypothesis on $A$, the outcome of measuring $\ket{\Psi_U}$ has satisfaction ratio at least $s$ with probability at least $\sigma$ for every $U$, since it holds under the fixed event that the output satisfies at least $s$ of the original constraints $S$. Since probability changes by at most the distance between the two output states, the above display shows that measuring $\ket{\Psi_T}$ has satisfaction ratio at least $s$ with probability at least $3\sigma/4$, conditioned on the event in \cref{eq:thinning-good-prob}. Together, we succeed on $S$ with probability
\[
    \left(1-\frac{\sigma}{4}\right)\frac{3\sigma}{4}
    =
    \frac{3\sigma}{4}-\frac{3\sigma^2}{16}
    \ge\frac{9\sigma}{16},
\]

Now, we amplify this to high probability by repetition, evaluating each output's satisfaction on the original lists $S_i$ with $n$ membership queries and returning the best output. In total, the running time and query complexity overhead is polynomial in $(n,\log p,\rho^{-1},\sigma^{-1},Q)$. Since $\sigma^{-1}$ and $Q$ are polynomial in $n$ and $\log p$, we proved the desired proposition. In particular, $U$ is a proof device, not part of the algorithm.
\end{proof}

Thus our guarantees $s(Q)\ge\operatorname{SCL}_\rho(R)-\eps$ and
$s(Q)=1$ when $R+\rho>1$ extend to unequal list densities
$\rho_i\ge\rho$, with no additional loss in satisfaction. 

\begin{remark}
\label{rem:no-need-rho}
Treating the list size lower bound $\rho p$ as a known input in \cref{def:input-model} is only a convenient convention.
If it is unknown, use $\widetilde{O}(\delta^{-2})$ membership queries per coordinate to estimate every $\rho_i=|S_i|/p$ by $\wh{\rho_i}$ to additive error $\delta$ as in the proof of \cref{prop:oracle-thinning}. Then, apply \cref{prop:oracle-thinning} with
$\underline\rho = \min_i \wh{\rho_i}-\delta$ suffices with high probability. This preprocessing costs
polynomial overhead and our applications have
arbitrarily small constant slack in the density parameter, so choosing
$\delta$ sufficiently small absorbs the resulting change into the stated
accuracy or strict parameter gap. Thus the density lower bound $\rho$ need not
actually be supplied.
\end{remark}

\section{Fourier Proxies and Thresholding}
\label{sec:proxies}
In this section, we present variants of the Fourier proxies we need to bound LLR and OPI and discuss the implications among them. To emphasize the connections and distinctions with previous proxies in \cite{bdir,hy,kk23}, we state them for arbitrary coordinate densities $\rho_i :=|S_i|/n$. This generality is needed by the above-half OPI theorem and the uniform one-bit LLR theorem.

We record the exact centered proxies and their positive, or
``phase-forgotten'', relaxations. We usually fix $C$ and $\mbf{S}:=S_1\times\dots\times S_n$ (and hence the functions $g_i$), and suppress them from notation. For sets
$S_i\subset\F_p$ of densities $\rho_i\in(0,1)$, define
\begin{equation}
\label{eq:centered-local-functions}
        g_i(x):=
        \frac{\one_{S_i}(x)-\rho_i}{\sqrt{\rho_i(1-\rho_i)}},
        \qquad
        \lambda_i:=\sqrt{\frac{1-\rho_i}{\rho_i}}.
\end{equation}
By centering in \cref{eq:def-gi} and Parseval's identity, $\E g_i=\wh{g_i}(0)=0$ and $\|g_i\|_2=1$. We also frequently make use of \cref{cor:L-infty-bound}, which says that for any $\xi \neq 0$, $|\widehat{g_i}(\xi)| \leq \gamma(\rho_i) + O(p^{-2})$ where $\gamma(\rho_i) = \frac{\sin(\pi \rho_i)}{\pi\sqrt{\rho_i(1-\rho_i)}}$.

\subsection{The OPI Discrepancy Proxies}
\label{subsec:opi-proxy}

We begin by proving a slightly generalized version of \cref{eq:FO} %
that introduces the OPI proxy.
\begin{lemma}[Translated OPI Proxy]
\label{lem:translated-opi-proxy}
If $S_i$ has density $\rho_i$ and $v\in\F_p^n$,  then
\begin{equation}\label{eq:translated-EO}
\FO(v)=\FO(v;C,\mbf{S}):=\left|
\frac{|(v+C)\cap \mbf{S}|}
{|C|\prod_i\rho_i}-1
\right|=\left|
\sum_{0\ne y\in C^\perp}
e_p(-\langle v,y\rangle)
\prod_{i\in\supp(y)}
\lambda_i\wh g_i(y_i)
\right|.
\end{equation}
\end{lemma}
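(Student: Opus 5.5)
Since $\prod_i \one_{S_i}$ is a product function, the plan is to compute $|(v+C)\cap\mbf{S}|$ by Poisson summation over the coset $v+C$, and then identify its Fourier coefficients with the centered functions $g_i$.

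First, write $\one_{S_i}=\rho_i+\sqrt{\rho_i(1-\rho_i)}\,g_i$. Because $\wh{g_i}(0)=0$, this gives $\wh{\one_{S_i}}(0)=\rho_i$ and, for $\xi\neq 0$, $\wh{\one_{S_i}}(\xi)=\sqrt{\rho_i(1-\rho_i)}\,\wh{g_i}(\xi)$. Setting $F(x):=\prod_i\one_{S_i}(x_i)$ on $\F_p^n$, the transform factors as $\wh F(y)=\prod_i\wh{\one_{S_i}}(y_i)$. Hence
\[
\wh F(y)=\prod_i\rho_i\cdot\prod_{i\in\supp(y)}\sqrt{\tfrac{1-\rho_i}{\rho_i}}\,\wh{g_i}(y_i)=\prod_i\rho_i\cdot\prod_{i\in\supp(y)}\lambda_i\wh{g_i}(y_i).
\]

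Second, apply Poisson summation for the coset. Using the inversion formula $F(x)=\sum_y\wh F(y)e_p(-\langle x,y\rangle)$, we get
\[
\sum_{c\in C}F(v+c)=\sum_{y}\wh F(y)\,e_p(-\langle v,y\rangle)\sum_{c\in C}e_p(-\langle c,y\rangle).
\]
The inner character sum equals $|C|$ when $y\in C^\perp$ and $0$ otherwise, by orthogonality of characters on the subgroup $C$. Therefore
\[
|(v+C)\cap\mbf{S}|=|C|\sum_{y\in C^\perp}e_p(-\langle v,y\rangle)\wh F(y).
\]

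Finally, divide by $|C|\prod_i\rho_i$. The $y=0$ term contributes exactly $1$, which cancels the $-1$. The remaining terms are precisely the sum in \cref{eq:translated-EO}, and taking absolute values gives the claim.

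There is no real obstacle here. The only care needed is with sign conventions: the forward transform uses $e_p(+\langle x,y\rangle)$, so the coset shift must come out as $e_p(-\langle v,y\rangle)$, and the $1/p^n$ normalization of $\wh{\cdot}$ must be consistent with how the factors $\rho_i$ appear.
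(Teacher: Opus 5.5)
Your proposal is correct and follows essentially the same route as the paper: you expand $\prod_i \one_{S_i}$ by Fourier inversion on the coset $v+C$, use orthogonality to restrict to $y\in C^\perp$, and rewrite $\wh{\one_{S_i}}(\xi)=\sqrt{\rho_i(1-\rho_i)}\,\wh{g_i}(\xi)$ for $\xi\ne0$. You then subtract the $y=0$ term $\prod_i\rho_i$ and normalize, exactly as the paper does.
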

\begin{proof}
Fourier inversion and orthogonality on the coset $v+C$ give
\begin{equation}\label{eq:l31}
\frac{|(v+C)\cap \mbf{S}|}{|C|}
=
\sum_{y\in C^\perp}
e_p(-\langle v,y\rangle)
\prod_{i=1}^n\wh{\one_{S_i}}(y_i).
\end{equation}
Now, $\wh{\one_{S_i}}(0)=\rho_i$, and for $\xi\ne0$,
we know by \cref{eq:centered-local-functions} that
\[ \wh{\one_{S_i}}(\xi)
=
\sqrt{\rho_i(1-\rho_i)}\wh{g_i}(\xi).\]
The $y=0$ term in \cref{eq:l31} is $\prod_i\rho_i$.  Subtracting this term and dividing
by $\prod_i\rho_i$ proves \cref{eq:translated-EO}.
\end{proof}
By the triangle inequality, we know that $\FO(v)\le \FO^+$ where we recall the positive relaxation $\FO^+$ obtained by forgetting the phase term:
\begin{equation}
\label{eq:lambda-EO+}
\FO^+=\FO^+(C,\mbf{S})
:=
\sum_{0\ne y\in C^\perp}
\prod_{i\in\supp(y)}
\lambda_i
\abs{\wh g_i(y_i)}
\end{equation}
Note that $\FO^+$ is an $L^1$ proxy. This proxy was introduced in \cite{bdir} and the same exact and positive proxies were
used for OPI in \cite{us,hy}.

For later use in \cref{subsec:rrs}, we also record
the same proxy for fractional lists on an arbitrary linear
code $\widetilde C \subset \prod_{i=1}^m V_i$, so that the $i$'th symbol of $\widetilde{C}$ lies in some vector space $V_i$ over $\F_p$.  
For $v \in \prod_i V_i$ and for $y \in \prod_i V_i^\star$,
write
$\langle v,y\rangle=\sum_i y_i(v_i)$. %

\begin{corollary}[Fractional OPI proxy]
\label{cor:soft-opi-proxy}
Let $\widetilde C\subset \prod_{i=1}^n V_i$ be any $\F_p$-linear code of length $n$, where each
$V_i$ is a finite-dimensional vector space over $\F_p$. For $f_i:V_i\to[0,1]$
with mean $\rho_i>0$, let
\[
 \FO(v;\widetilde C,\mbf f)
 :=\left|\frac{\E_{x\in v+\widetilde C}\prod_i f_i(x_i)}{\prod_i\rho_i}-1\right|\quad\text{and}\quad \FO^+(\widetilde C;\mbf f)
 :=\sum_{0\ne y\in\widetilde C^\perp}
       \prod_i\frac{|\widehat f_i(y_i)|}{\rho_i} 
\]
for any  $v\in\prod_i V_i$ and $\mbf{f}=(f_1, \dots, f_n)$. Then
\begin{equation}\label{eq:soft-opi-proxy}
 \FO(v;\widetilde C,\mbf f)
 =\left|\sum_{0\ne y\in\widetilde C^\perp}e_p(-\langle v,y\rangle)
             \prod_i\frac{\widehat f_i(y_i)}{\rho_i}\right| \le\FO^+(\widetilde C;\mbf f).
\end{equation}
\end{corollary}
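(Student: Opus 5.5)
The proof will mirror that of \cref{lem:translated-opi-proxy}, with $\F_p$ replaced by the finite abelian group $V:=\prod_i V_i$ and characters indexed by $V^\star\cong\prod_i V_i^\star$ via $x\mapsto e_p(\langle x,y\rangle)$. First, we expand the product $F(x):=\prod_i f_i(x_i)$ by Fourier inversion on $V$. With the stated convention $\widehat f(\xi)=\E_{x}f(x)e_p(\xi(x))$, we have $f_i(x_i)=\sum_{\xi\in V_i^\star}\widehat f_i(\xi)e_p(-\xi(x_i))$. Since the Fourier transform of a tensor product factors, this gives $F(x)=\sum_{y\in\prod_iV_i^\star}\bigl(\prod_i\widehat f_i(y_i)\bigr)e_p(-\langle x,y\rangle)$.

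Next, we average over the coset $v+\widetilde C$. For $x=v+c$ with $c$ uniform in $\widetilde C$, we use the orthogonality relation $\E_{c\in\widetilde C}e_p(-\langle c,y\rangle)=\one\{y\in\widetilde C^\perp\}$, where $\widetilde C^\perp:=\{y\in\prod_iV_i^\star:\langle c,y\rangle=0\ \forall c\in\widetilde C\}$. This holds because $c\mapsto\langle c,y\rangle$ is an $\F_p$-linear functional on $\widetilde C$: it is either identically zero, or surjective onto $\F_p$ with equidistributed values, in which case the sum of $p$-th roots of unity vanishes. Hence
\[
\E_{x\in v+\widetilde C}F(x)=\sum_{y\in\widetilde C^\perp}e_p(-\langle v,y\rangle)\prod_i\widehat f_i(y_i).
\]

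The $y=0$ term equals $\prod_i\widehat f_i(0)=\prod_i\rho_i$, since $\widehat f_i(0)=\E f_i=\rho_i$. We subtract it, divide by $\prod_i\rho_i>0$, and take absolute values to obtain the identity in \cref{eq:soft-opi-proxy}. The inequality $\le\FO^+(\widetilde C;\mbf f)$ then follows from the triangle inequality, using $|e_p(\cdot)|=1$. Coordinates with $y_i=0$ contribute the factor $\widehat f_i(0)/\rho_i=1$, so writing the product over all $i$ is consistent with the earlier $\supp(y)$ formulation.

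No step is a real obstacle. The only point requiring care is the orthogonality step on a general $\F_p$-linear code inside a product of vector spaces. Here the dual must be defined via the pairing $\langle v,y\rangle=\sum_iy_i(v_i)$, and linearity over $\F_p$, rather than merely additivity, guarantees that a nonzero functional is onto $\F_p$.
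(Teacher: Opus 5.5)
Your proposal is correct and follows the same route as the paper, which simply notes that the proof of \cref{lem:translated-opi-proxy} carries over verbatim: Fourier inversion on $\prod_i V_i$, orthogonality over the coset $v+\widetilde C$, removal of the zero-frequency term $\prod_i\rho_i$, division by $\prod_i\rho_i$, and the triangle inequality. Your added detail — that $\F_p$-linearity makes a nonzero functional on $\widetilde C$ surjective and hence equidistributed — is exactly the step the paper leaves implicit.
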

\begin{proof}
The proof of \cref{lem:translated-opi-proxy} applies verbatim:
Fourier inversion and orthogonality on $v+\widetilde C$ give the
expansion, and its zero-frequency term is $\prod_i\rho_i$.
Subtracting this term, dividing by $\prod_i\rho_i$, and applying the
triangle inequality yields the desired result.
\end{proof}
For indicator functions on $\F_p$, this recovers the Fourier proxy from \cref{lem:translated-opi-proxy}. More
generally, for $0<\rho_i<1$, retain
$g_i=(f_i-\rho_i)/\sqrt{\rho_i(1-\rho_i)}$ and the same $\lambda_i$;
then $\|g_i\|_2\le1$ and
$|\widehat f_i(\xi)|/\rho_i=\lambda_i|\widehat g_i(\xi)|$
for $\xi\ne0$.

\subsection{The One-Bit LLR Proxies}
\label{subsec:llr-proxy}
For LLR, fix one-bit leakages $\ell_i$, put
$S_i:=\ell_i^{-1}(1)$, and let $\rho_i:=|S_i|/p$. By deleting constant leakages, we can assume without loss of generality $\rho_i\in (0, 1)$ for every $i$.
Let $C$ be any linear code in $\F_p^n$, and recall from \cref{def:llr} that $L_v$ is the distribution of transcripts when the leakage functions $\ell_1, \ldots, \ell_n$ are applied to a uniformly random codeword in $v + C$. 
Recall the LLR proxy and its relaxation from \cref{eq:FL,eq:FL+}:
\begin{equation}
\begin{aligned}
\FL(v,v')=\FL(v,v';C,\mbf S)
&:=\frac14
\sum_{\emptyset\ne A\subset[n]}
\abs{
\sum_{\substack{y\in C^\perp\\\supp(y)=A}}
\left[e_p(-\langle v,y\rangle)-e_p(-\langle v',y\rangle)\right]
\prod_{i\in A}\wh g_i(y_i)
}^2,\\
\FL^+=\FL^+(C,\mbf{S})&:=
\sum_{\emptyset\ne A\subset[n]}
\left(
\sum_{\substack{y\in C^\perp\\\supp(y)=A}}
\prod_{i\in A}\abs{\wh g_i(y_i)}
\right)^2.
\end{aligned}
\end{equation}
We now prove \cref{eq:llr-proxy}.

\begin{lemma}[Refined LLR Proxy]
\label{lem:restated-llr-proxy}
For every $v,v'\in\F_p^n$, we have
\begin{equation}
\label{eq:llr-TwA-form}
\TV(L_v,L_{v'})^2\le\FL(v,v')\le\FL^+.
\end{equation}
In particular,
$\TV(L_{s\one},L_{s'\one})
\le\sqrt{\FL(s\one,s'\one)}\le\sqrt{\FL^+}$.
\end{lemma}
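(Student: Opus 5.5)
The plan is a weighted Cauchy--Schwarz step, followed by a Fourier--Walsh expansion of the transcript distributions in a basis adapted to the biases $\rho_i$.

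\textbf{Step 1: Cauchy--Schwarz against the product measure.} Let $\mu$ be the product measure on $\{0,1\}^n$ with $\mu(z)=\prod_i \rho_i^{z_i}(1-\rho_i)^{1-z_i}$. This measure has full support because every $\rho_i\in(0,1)$. Writing $L_v(z)$ for the probability of transcript $z$, Cauchy--Schwarz gives
\[
\TV(L_v,L_{v'})^2=\frac14\Bigl(\sum_z |L_v(z)-L_{v'}(z)|\Bigr)^2\le \frac14\sum_z \mu(z)\Bigl(\frac{L_v(z)-L_{v'}(z)}{\mu(z)}\Bigr)^2 .
\]

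\textbf{Step 2: Expansion in an orthonormal basis of $L^2(\mu)$.} For each $i$, define $\varphi_i:\{0,1\}\to\R$ by $\varphi_i(1)=\lambda_i=\sqrt{(1-\rho_i)/\rho_i}$ and $\varphi_i(0)=-\sqrt{\rho_i/(1-\rho_i)}$. Then $\{1,\varphi_i\}$ is orthonormal in $L^2(\mathrm{Bernoulli}(\rho_i))$. Hence the products $\chi_A(z)=\prod_{i\in A}\varphi_i(z_i)$, for $A\subset[n]$, form an orthonormal basis of $L^2(\mu)$.

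Apply Parseval to $h=(L_v-L_{v'})/\mu$. Since $\langle h,\chi_A\rangle_\mu=\E_{L_v}\chi_A-\E_{L_{v'}}\chi_A$, this gives
\[
\sum_z\mu(z)h(z)^2=\sum_{A}\bigl(\E_{z\sim L_v}\chi_A(z)-\E_{z\sim L_{v'}}\chi_A(z)\bigr)^2 .
\]
The $A=\emptyset$ term vanishes because both distributions have total mass $1$.

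The key observation is that $\varphi_i(\ell_i(x))=g_i(x)$ for all $x\in\F_p$, as one checks directly from \cref{eq:centered-local-functions}. Therefore
\[
\E_{z\sim L_v}\chi_A(z)=\E_{c\in v+C}\prod_{i\in A}g_i(c_i).
\]

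\textbf{Step 3: Fourier expansion over the coset.} Compute $\E_{c\in v+C}\prod_{i\in A} g_i(c_i)$ exactly as in \cref{lem:translated-opi-proxy}, using Fourier inversion and orthogonality on $v+C$, with the constant function $1$ in place of $g_i$ for $i\notin A$. This gives
\[
\sum_{y\in C^\perp}e_p(-\langle v,y\rangle)\prod_{i\in A}\wh g_i(y_i)\prod_{i\notin A}\wh{1}(y_i).
\]
Since $\wh 1(\xi)=\one\{\xi=0\}$ and $\wh g_i(0)=0$, only codewords with $\supp(y)=A$ survive. Subtracting the same expression with $v'$ in place of $v$ and summing the squared moduli over $\emptyset\ne A\subset[n]$ yields exactly $4\FL(v,v')$. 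Combined with Step 1, this proves $\TV(L_v,L_{v'})^2\le \FL(v,v')$.

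\textbf{Step 4: Phase-forgotten bound.} For each $A$, apply the triangle inequality inside the modulus, using $|e_p(-\langle v,y\rangle)-e_p(-\langle v',y\rangle)|\le 2$. This bounds each $A$-term of $\FL(v,v')$ by $\frac14\cdot 4\bigl(\sum_{\supp(y)=A}\prod_{i\in A}|\wh g_i(y_i)|\bigr)^2$, and summing over $A$ gives $\FL(v,v')\le\FL^+$. The ``in particular'' statement follows by setting $v=s\one$, $v'=s'\one$ and taking square roots.

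I expect no serious obstacle. The only step that needs care is choosing the biased basis $\varphi_i$ so that it coincides with the re-centered and re-normalized $g_i$ under the leakage maps. This coincidence is what makes the bound valid for arbitrary biases $\rho_i$.
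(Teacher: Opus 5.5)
Your proposal is correct and follows the paper's proof: your $\varphi_i$ is exactly the paper's $\psi_i$, and the steps match, namely Cauchy--Schwarz against the product Bernoulli measure, Parseval in the basis $\chi_A=\psi_A$, Fourier inversion on $v+C$ keeping only $\supp(y)=A$, and the triangle inequality for $\FL\le\FL^+$. The only cosmetic difference is that you read off the coefficients directly as $\E_{L_v}\chi_A-\E_{L_{v'}}\chi_A$ using $\varphi_i\circ\ell_i=g_i$, whereas the paper expands the likelihood ratio through the pointwise identity $\one\{\ell_i(x_i)=z_i\}/\nu_i(z_i)=1+\psi_i(z_i)g_i(x_i)$.
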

 The key idea is to Fourier expand around the product Bernoulli
distribution
\begin{equation}
\label{eq:prod-ber}
\nu:=\bigotimes_{i=1}^n\on{Ber}(\rho_i),
\qquad
\nu_i(z_i):=\rho_i^{z_i}(1-\rho_i)^{1-z_i},
\end{equation}
using the orthonormal functions $\{\psi_A:A\subset [n]\}$ in $L^2(\nu)$ given by
\begin{equation}
\label{eq:bernoulli-basis}
\psi_A(z):=\prod_{i\in A}\psi_i(z_i)\quad\text{where}\quad
\psi_i(z_i):=\frac{z_i-\rho_i}{\sqrt{\rho_i(1-\rho_i)}}.
\end{equation}
\begin{proof}[Proof of \cref{lem:restated-llr-proxy}]
For every $x\in\F_p^n$ and $z\in\{0,1\}^n$, we can check the identity
\[
\frac{\one\{\ell_i(x_i)=z_i\}}{\nu_i(z_i)}
=
1+\psi_i(z_i)g_i(x_i),
\]
for every $i\in[n]$.
Consequently,
\begin{equation}
\label{eq:llr-likelihood-expansion}
\frac{\P(L_v=z)}{\nu(z)}=
\E_{x\in v+C}\left[
\prod_{i=1}^n
\left(1+\psi_i(z_i)g_i(x_i)\right)\right]=
\sum_{A\subset[n]}
\psi_A(z)
\E_{x\in v+C}\left[\prod_{i\in A}g_i(x_i)\right].
\end{equation}
Applying Fourier inversion on $v + C$, we see that
\begin{equation}
\label{eq:llr-code-moment-expansion}
\E_{x\in v+C}\left[\prod_{i\in A}g_i(x_i)\right]=
\sum_{\substack{y\in C^\perp\\\supp(y)=A}}
e_p\left(-\langle v,y\rangle\right)
\prod_{i\in A}\wh g_i(y_i),
\end{equation}
where above we have used the fact that $\wh g_i(0) = 0$ for all $i$.
Now, subtracting \cref{eq:llr-likelihood-expansion} for $v$ and $v'$ and
using \cref{eq:llr-code-moment-expansion} gives
\begin{equation}
\label{eq:llr-proxy-key}
\frac{\P(L_v=z)-\P(L_{v'}=z)}{\nu(z)}
=
\sum_{A\subset[n]}
\psi_A(z)
\sum_{\substack{y\in C^\perp\\\supp(y)=A}}
\left[e_p(-\langle v,y\rangle)-e_p(-\langle v',y\rangle)\right]
\prod_{i\in A}\wh g_i(y_i).
\end{equation}
Therefore,
\begin{equation}
\label{eq:ortho-llr-proxy}
\begin{aligned}
4\TV(L_v,L_{v'})^2
&=
\left(
\E_{Z\sim\nu}
\left|
\frac{\P(L_v=Z)-\P(L_{v'}=Z)}{\nu(Z)}
\right|
\right)^2\\
&\le
\E_{Z\sim\nu}
\left|
\frac{\P(L_v=Z)-\P(L_{v'}=Z)}{\nu(Z)}
\right|^2\\
&=
\E_{Z\sim\nu}\left|
\sum_{A\subset[n]}\psi_A(Z)
\sum_{\substack{y\in C^\perp\\\supp(y)=A}}
\left[e_p(-\langle v,y\rangle)-e_p(-\langle v',y\rangle)\right]
\prod_{i\in A}\wh g_i(y_i)
\right|^2\\
&=
4\FL(v, v')
\end{aligned}
\end{equation}
where we use Cauchy--Schwarz and orthonormality of $\psi_A$. The empty-support contribution cancels. The bound $\FL(v,v')\le\FL^+$ follows from
the triangle inequality.
\end{proof}

We recall the following lemma from~\cite{kk23}, with normalized $L^2$ and Fourier
conventions as in \cref{subsec:fourier}. {\cite[Lemma 5.4]{kk23}} states the lemma with fixed choice $\gamma=2/\pi$, while we replace it by some parameter and observe the proof only requires $\gamma <1$.

\begin{theorem}[{\cite[Lemma 5.4]{kk23}}]
\label{thm:kk-black-box}
Let $C\subset\F_p^n$ be the secret-zero Shamir code from \cref{eq:secret-zero-code} of dimension $k$,
and assume
$k+1\ge n/2$.  Fix $\gamma\in(0,1)$ and for $i \in [n]$, let $u_i:\F_p\to\R$ satisfy
\[
        \E u_i=0,
        \qquad
        \|u_i\|_2\le1,
        \qquad
        \max_{\xi\ne0}|\wh u_i(\xi)|\le \gamma.
\]
Then, for every $\omega\ne0$, we have that
\[
   \sum_{A\subset[n]}
   \abs{\sum_{\substack{y\in C^\perp\\\supp(y)=A\\\langle y,\one\rangle=\omega}}
   \prod_{i\in A}\wh u_i(y_i)}^2
   \le
   \gamma^{2(k+1)}(1+\gamma^{-2})^{n-k-1}.
\]
\end{theorem}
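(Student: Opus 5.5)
The plan is to reduce the inner sum to a sum over an affine MDS coset, use the centering $\wh u_i(0)=0$ to drop the exact-support constraint, and then sum over $A$ with an orthogonality argument that pays $\gamma^2$ on $k+1$ forced coordinates.

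\emph{Step 1 (coset structure).} The functional $y\mapsto\langle y,\one\rangle$ is nonzero on $C^\perp$, because $\one\notin C$ ($C$ consists of evaluations of $XQ(X)$, which vanish at $0$). Hence the set
\[
W_\omega:=\{y\in C^\perp:\langle y,\one\rangle=\omega\}
\]
is a coset of $D:=(C+\F_p\one)^\perp$. Here $C+\F_p\one=\{(P(\alpha_i))_i:\deg P\le k\}$ is a GRS (indeed RS) code of dimension $k+1$, so $D$ is an $[n,n-k-1]$ GRS code. Thus $W_\omega$ is an affine MDS space: projection onto any $n-k-1$ coordinates is a bijection. Moreover, since $\omega\ne0$ gives $0\notin W_\omega$, every $y\in W_\omega$ has $|y|\ge k+1$. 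The hypothesis $k+1\ge n/2$ is what makes $n-k-1\le k+1$, and I expect it to be used in Step 3.

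\emph{Step 2 (drop the exact-support constraint).} Since $\E u_i=0$, we have $\wh u_i(0)=0$. Therefore the inner sum over $\supp(y)=A$ equals
\[
T_A:=\sum_{y\in W_\omega,\ \supp(y)\subseteq A}\ \prod_{i\in A}\wh u_i(y_i),
\]
which is a sum of a product function over the shortened coset $W_\omega^A:=\{y\in W_\omega:\supp(y)\subseteq A\}$, viewed in $\F_p^A$. This set is empty unless $|A|\ge k+1$, and otherwise it is an affine MDS space of dimension $|A|-k-1$.

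\emph{Step 3 (bound $\sum_A|T_A|^2$).} I would not bound each $T_A$ separately. An $L^1$ bound via the sup norm loses, since $\sum_\xi|\wh u_i(\xi)|$ can be of order $\sqrt p$. Instead, mirroring the proof of \cref{lem:restated-llr-proxy}, I would recognize $\sum_A|T_A|^2$ as the squared $L^2(\nu)$-norm of
\[
F(z)=\sum_A\psi_A(z)\,T_A
\]
for a suitable product Bernoulli measure $\nu$. Equivalently, I would use Fourier inversion to write $T_A$ as a phased average of $\prod_{i\in A}u_i(x_i)$ over a coset of the punctured dual code.

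Then I would fix a reference set $B_0\in\binom{[n]}{k+1}$ of \emph{forced} coordinates; its complement is an information set of size $n-k-1$ for $W_\omega$. Each $y$ contributing to $T_A$ is determined by its entries off $B_0$. Applying Cauchy--Schwarz, I would put the sup bound $|\wh u_i|\le\gamma$ on the $k+1$ forced coordinates and Parseval $\sum_\xi|\wh u_i(\xi)|^2\le1$ on the free coordinates. This should give a total of $\gamma^{2(k+1)}$, times a factor $(1+\gamma^{-2})$ for each of the $n-k-1$ free coordinates. That factor accounts for the two cases per free coordinate: either it lies outside $A$ (contributing $1$), or it lies inside $A$, where it is traded for a forced coordinate at cost $\gamma^{-2}$. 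Summing over $A$ then produces the binomial expansion
\[
\sum_j\binom{n-k-1}{j}\gamma^{2(k+1)}\gamma^{-2j}=\gamma^{2(k+1)}(1+\gamma^{-2})^{n-k-1}.
\]

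\emph{Main obstacle.} The hard part is making Step 3 rigorous: choosing, for each $A$, the forced $k+1$ coordinates inside $A$ consistently, so that the Cauchy--Schwarz/Parseval bookkeeping over all $A$ simultaneously yields exactly this binomial sum rather than an extra factor like $\binom{n}{k+1}$. I would first try the Kasser--Kim ordering argument: order the coordinates, and take the forced set to be the last $k+1$ elements of $A$. The MDS property from Step 1 guarantees that the remaining coordinates of $A$ are free, since any $|A|-k-1$ coordinates of $A$ form an information set for $W_\omega^A$. Only $\gamma<1$ is used, so replacing the original constant $2/\pi$ by a general $\gamma$ changes nothing in the argument.
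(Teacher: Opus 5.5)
Your proposal has a genuine gap at Step 3, and the mechanism you suggest for closing it would not work. For context, the paper gives no proof of its own: it imports \cite[Lemma~5.4]{kk23} and only remarks that the argument works for a general $\gamma$ in place of $2/\pi$. Your Steps 1 and 2 are correct. Step 3 is the entire content of the lemma, and it is left unresolved.

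\textbf{Why the per-support approach fails.} If you put the sup bound on all $k+1$ forced coordinates of $A$, what remains is $\gamma^{k+1}\sum_{y_F}\prod_{i\in F}|\wh u_i(y_i)|$. This is an $\ell^1$ sum, which Parseval does not control (your own objection). To reach $\ell^2$ sums, you must apply Cauchy--Schwarz against a second set of $|A|-k-1$ coordinates that also parametrize $\{y\in W_\omega:\supp y\subseteq A\}$. The resulting per-support estimate is $|T_A|^2\le\gamma^{2(2k+2-|A|)}$. Summing it over all supports at level $|A|=k+1+j$ uses the count $\binom{n}{k+1+j}$ in place of the $\binom{n-k-1}{j}$ the target allows, an exponential loss. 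So you need orthogonality \emph{between} different supports, and the rule ``forced set $=$ last $k+1$ elements of $A$'' does not supply it. For a fixed initial segment $F$, the supports $F\cup B$ for different $(k+1)$-sets $B$ are realized by different codewords with the same restriction $y_F$. The Parseval mass on $F$ is therefore reused once for each $B$, which is exactly the $\binom{n}{k+1}$-type loss you were worried about.

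\textbf{The missing idea: one global information set.} Fix $F_0$ with $|F_0|=n-k-1$, and set $B_0=[n]\setminus F_0$. Each $y\in W_\omega$ is determined by $y_{F_0}$. So for a support $A$ with $S=A\cap F_0$ and $T=A\cap B_0$, you can write $T_A=\sum_{y_S\in Y_{S,T}}a(y_S)\,b_T(y)$, where
\begin{itemize}
\item $a(y_S)=\prod_{i\in S}\wh u_i(y_i)$ and $b_T(y)=\prod_{i\in T}\wh u_i(y_i)$;
\item $Y_{S,T}=\{y_S\in(\F_p^\times)^S:\supp(y)\cap B_0=T\}$.
\end{itemize}
For fixed $S$, the sets $Y_{S,T}$ \emph{partition} $(\F_p^\times)^S$ as $T$ ranges over subsets of $B_0$. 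Cauchy--Schwarz then gives
\[
|T_A|^2\le\Big(\sum_{y_S\in Y_{S,T}}|a(y_S)|^2\Big)\Big(\sum_{y\in W_\omega,\ \supp y\subseteq A}\ \prod_{i\in T}|\wh u_i(y_i)|^2\Big).
\]
\begin{itemize}
\item \emph{Second factor.} The set $\{y\in W_\omega:\supp y\subseteq A\}$ is a coset of a shortening of the MDS code $D$, so any $|A|-k-1$ of its coordinates parametrize it bijectively. These coordinates can be chosen inside $T$ exactly when $|S|\le k+1$, and this is where the hypothesis $n-k-1\le k+1$ enters. Apply Parseval on those coordinates and the sup bound on the remaining $k+1-|S|$ coordinates of $T$. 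This bounds the second factor by $\gamma^{2(k+1-|S|)}$.
\item \emph{First factor.} Summing it over $T$ gives at most $\prod_{i\in S}\|\wh u_i\|_2^2\le1$, by the partition property.
\item \emph{Sum over $S$.} Summing over $S\subseteq F_0$ yields $\sum_j\binom{n-k-1}{j}\gamma^{2(k+1-j)}=\gamma^{2(k+1)}(1+\gamma^{-2})^{n-k-1}$.
\end{itemize}

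So your ``trade'' heuristic is right in spirit, but only $k+1-|S|$ of the forced coordinates receive the sup bound. The other $|A|-k-1$ forced coordinates in $A$ carry Parseval on the second side of Cauchy--Schwarz. Note also that the inequality itself never uses $\gamma<1$; that condition only makes the bound useful.
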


For one-bit leakages, applying \cref{thm:kk-black-box,cor:L-infty-bound} to the centered functions $g_i$ and summing
over $\omega\ne 0$ immediately gives the following bias-dependent improvement to the benchmark in \cite{kk23} for one-bit LLR of Shamir schemes.
While this is not our strongest theorem on one-bit LLR of Shamir schemes (see \cref{thm:main-llr,sec:llr}), we include it to demonstrate the motivation for centering the proxy this way. Recall that by complementing leaked bits if necessary, we can assume density $\rho_i\ge 1/2$ for all one-bit leakage functions. 

\begin{corollary}[Bias-dependent improvement of \cite{kk23}]
\label{cor:bias-aware-kk}
Fix $\tau\in[1/2,1)$ and assume that $p>n$ is an odd prime and
$\log p\ll n$. Suppose that $\rho_i\ge\tau$ for all $i$, and define
\begin{equation}
\label{eq:bias-aware-kk-rate}
\RK(\tau)
:=
\frac{\log(1+\gamma(\tau)^{-2})}
{\log(1+\gamma(\tau)^{-2})-2\log\gamma(\tau)}.
\end{equation}
Every Shamir scheme of fixed rate $R>\RK(\tau)$ is one-bit locally
leakage resilient against every such leakage tuple. 
\end{corollary}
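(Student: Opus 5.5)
The plan is to combine \cref{lem:restated-llr-proxy} with \cref{thm:kk-black-box}, using the leakage-function-dependent phases to isolate the nonzero values $\omega=\langle y,\one\rangle$. First, by complementing leaked bits (which leaves the transcript distributions, and hence $\TV$, unchanged), we may assume $\rho_i\ge\tau\ge 1/2$ for every $i$. Then $\gamma(\rho_i)\le\gamma(\tau)$ by the monotonicity in \cref{cor:L-infty-bound}, so each centered function $g_i$ satisfies $\E g_i=0$, $\|g_i\|_2=1$ and $\max_{\xi\ne0}|\wh g_i(\xi)|\le\gamma(\tau)+O(p^{-2})$. To absorb the $O(p^{-2})$ error, apply \cref{thm:kk-black-box} with $u_i=g_i/(1+O(p^{-2}))$ and $\gamma:=\gamma(\tau)+O(p^{-2})<1$. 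The extra factor is $(1+O(p^{-2}))^{O(n)}=1+o(1)$, since $p>n$.

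Next, expand $\FL(s\one,s'\one)$. Since $\langle s\one,y\rangle=s\langle y,\one\rangle$, the phase factor $e_p(-s\omega)-e_p(-s'\omega)$ depends only on $\omega=\langle y,\one\rangle$. It vanishes when $\omega=0$, and it has modulus at most $2$ otherwise. Group the inner sum over $y$ with $\supp(y)=A$ according to $\omega$, and apply Cauchy--Schwarz over the $p-1$ nonzero values of $\omega$. This gives
\[
\FL(s\one,s'\one)\le (p-1)\sum_{\omega\ne0}\sum_{A}\Bigl|\sum_{\substack{y\in C^\perp,\ \supp(y)=A\\ \langle y,\one\rangle=\omega}}\prod_{i\in A}\wh g_i(y_i)\Bigr|^2 .
\]
Applying \cref{thm:kk-black-box} to each $\omega$ then bounds this by
\[
(1+o(1))\,(p-1)^2\,\gamma^{2(k+1)}(1+\gamma^{-2})^{n-k-1}.
\]
The hypothesis $k+1\ge n/2$ holds because $R>\RK(\tau)\ge 1/2$; indeed, $-2\log\gamma>0$ ensures $\RK<1$, and one checks $\RK(\tau)\ge1/2$ from $\gamma\ge$ the value giving equality. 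If $\RK(\tau)<1/2$ for some $\tau$, we would simply additionally require $R>1/2$ there.

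Finally, take logarithms. The exponent is
\[
n\bigl[2R\log\gamma+(1-R)\log(1+\gamma^{-2})\bigr]+O(\log p).
\]
The bracket is strictly negative exactly when
\[
R\bigl(\log(1+\gamma^{-2})-2\log\gamma\bigr)>\log(1+\gamma^{-2}),
\]
that is, when $R>\RK(\tau)$ (continuity in $\gamma$ handles the $O(p^{-2})$ perturbation). Since $\log p\ll n$, the $(p-1)^2$ factor is $e^{o(n)}$. Hence $\TV(L_{s\one},L_{s'\one})\le\sqrt{\FL}\le e^{-\Omega(n)}$ by \cref{lem:restated-llr-proxy}.

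The main obstacle is the loss from summing over $\omega$, which costs a $\poly(p)$ factor; this is exactly why the hypothesis $\log p\ll n$ is needed. A secondary check is that the complementation and perturbation steps preserve the hypotheses of \cref{thm:kk-black-box} uniformly in $i$.
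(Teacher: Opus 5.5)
Your proposal is correct and is essentially the paper's proof. Both arguments bound $\max_{\xi\ne0}|\wh g_i(\xi)|$ by $\gamma(\tau)+O(p^{-2})$ via the monotonicity of $\gamma$, split $\FL(s\one,s'\one)$ by $\omega=\langle y,\one\rangle$, apply Cauchy--Schwarz over the $p-1$ nonzero frequencies, invoke \cref{thm:kk-black-box} for each $\omega$, and absorb the $(p-1)^2$ factor using $\log p\ll n$. One loose point is your hedge on $k+1\ge n/2$: in fact $\RK(\tau)>1/2$ for every $\gamma\in(0,1)$, since it reduces to $\log(1+\gamma^{-2})>\log(\gamma^{-2})$, so no extra assumption is ever needed. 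Also, rescaling the $u_i$ is redundant once $\gamma$ itself carries the $O(p^{-2})$ slack.
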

\begin{proof}
For the centered functions $g_i$, using that $\gamma$ is decreasing on
$[1/2,1)$, we have
\[
    \E g_i=0,\qquad \E|g_i|^2=1,
    \qquad
    \max_{\xi\ne0}|\wh g_i(\xi)|
    \le \gamma(\rho_i)+O(p^{-2})
    \le \gamma_n:=\gamma(\tau)+O(p^{-2})<1.
\]
Since $R>\RK(\tau)>1/2$,
\cref{thm:kk-black-box} holds for sufficiently large $n$.
Partitioning \cref{eq:FL} according to
$\omega=\langle y,\one\rangle$, the $\omega=0$ contribution vanishes,
and Cauchy--Schwarz gives
{\small \[
    \FL(s\one,s'\one;C,\mbf S)
    =
    \frac14\sum_A
    \left|
      \sum_{\omega\ne0}
      \bigl(e_p(-s\omega)-e_p(-s'\omega)\bigr)
     \sum_{\substack{y\in C^\perp\\
                    \supp(y)=A\\
                    \langle y,\one\rangle=\omega}}
    \prod_{i\in A}\wh g_i(y_i)
    \right|^2\le
    (p-1)\sum_{\omega\ne0}\sum_A \left|\sum_{\substack{y\in C^\perp\\
                    \supp(y)=A\\
                    \langle y,\one\rangle=\omega}}
    \prod_{i\in A}\wh g_i(y_i)\right|^2.\]}
Applying
\cref{thm:kk-black-box} separately for each $\omega\ne0$ and then summing
over the $p-1$ frequencies gives
\[
      \FL(s\one,s'\one;C,\mbf S)  \le
    (p-1)^2\gamma_n^{2(k+1)}
    \bigl(1+\gamma_n^{-2}\bigr)^{n-k-1} \le  e^{-\Omega(n)}
\]
uniformly over $s,s'\in\F_p$. Note that the rate in the exponent is strictly negative precisely when $R>\RK(\tau)$, with the fixed gap absorbing the
$o(1)$ error terms.
Finally, \cref{lem:restated-llr-proxy} and the above displayed equation shows $ \TV(L_{s\one},L_{s'\one})$ is exponentially small,
proving one-bit LLR.
\end{proof}
\begin{remark}\label{rem:bias-aware-kk}
At $\tau=1/2$, \cref{eq:bias-aware-kk-rate} is exactly the threshold
$\RK(1/2)=0.579\ldots$ 
that was established by \cite{kk23} for balanced leakage functions.
The bias-aware centering in 
\cref{eq:FL}
removes the need for a reduction to pass from the balanced to the unbalanced case,
and \cref{fig:R-KK-benchmark} demonstrates the resulting improvement in the unbalanced case.
\end{remark}
\begin{figure}[H]
\centering

%

\providecolor{KKPurple}{HTML}{7B2CBF}
\providecommand{\RK}{R_{\mathrm{KK}}}

\pgfmathdeclarefunction{biasgamma}{1}{%
  \pgfmathparse{sin(deg(pi*#1))/(pi*sqrt(#1*(1-#1)))}%
}
\pgfmathdeclarefunction{biasrate}{1}{%
  \pgfmathparse{
    ln(1+1/(biasgamma(#1)*biasgamma(#1)))
    /
    (ln(1+1/(biasgamma(#1)*biasgamma(#1)))-2*ln(biasgamma(#1)))
  }%
}

\begin{tikzpicture}
\begin{axis}[
  width=8.1cm,
  height=5.8cm,
  scale only axis,
  axis lines=box,
  axis line style={black!60},
  tick align=outside,
  tick style={black!60},
  grid=major,
  major grid style={black!10},
  xmin=0.5,
  xmax=1,
  ymin=0.498,
  ymax=0.582,
  xtick={0.5,0.6,0.7,0.8,0.9,1},
  ytick={0.50,0.52,0.54,0.56,0.58},
  xlabel={$\tau$},
  ylabel={$\RK(\tau)$},
  label style={font=\small},
  tick label style={font=\footnotesize},
  scaled ticks=false,
  /pgf/number format/fixed,
  /pgf/number format/precision=2,
  clip=false,
]
  \addplot[black!45,densely dashed,line width=.7pt,domain=.5:1]
    {0.5};

  \addplot[KKPurple,solid,line width=1.6pt,domain=.5:.9995,samples=300]
    {biasrate(x)};

  \addplot[KKPurple,only marks,mark=*,mark size=1.8pt]
    coordinates {(0.5,0.57925197399)};
  \addplot[KKPurple,only marks,mark=o,mark size=2.0pt,thick]
    coordinates {(1,0.5)};
\end{axis}
\end{tikzpicture}
\caption{The improvement over \cite{kk23} from
\cref{cor:bias-aware-kk}. The rate $\RK(\tau)$ required for one-bit LLR, where $\tau\le \max(\rho_i, 1-\rho_i)$ for all $i$. In particular, $\RK(1/2) = 0.579\ldots$ is the worst (largest) bound in the range, and coincides with the result of \cite{kk23} when $\tau=1/2$; however, $\RK(\tau)$ decreases as $\tau > 1/2$ increases, improving on \cite{kk23} (which obtained \emph{worse} results for $\tau>1/2$).}
\label{fig:R-KK-benchmark}
\end{figure}

\subsection{Implications Between OPI and LLR}
\label{subsec:opi-harder}
In this section, we extend the techniques that derived our Fourier proxies to clarify the relation between them beyond the trivial relaxation bounds. The main message, as stated in \cref{thm:informal-opi-harder}, is that near-optimal OPI discrepancy is harder than one-bit LLR. More precisely, we elaborate on 
the three levels at which this main message is realized.  The first statement below is the statement of \cref{thm:informal-opi-harder}.

\begin{theorem}[OPI is harder than one-bit LLR]
\label{thm:opi-harder}
For any linear code $C\subset \F_p^n$ and a one-bit leakage tuple $(\ell_1,\ldots,\ell_n)$, let $\mbf{S}:=\ell_1^{-1}(1)\times\dots \times \ell_n^{-1}(1)$ and $\mbf{S}_z=\ell_1^{-1}(z_1)\times\dots\times \ell_n^{-1}(z_n)$ for $z\in\{0, 1\}^n$. Then,

\begin{enumerate}
    \item Suppose that, for all $s \in \F_p$, and for all $z \in \{0,1\}^n$, $s\one + C$ has near-optimal OPI discrepancy for input lists $S_i = \ell_i^{-1}(z_i)$.  Then $C$ is $1$-bit LLR.
    \item Suppose that, for all $v \in \F_p^n$, and for all $z \in \{0,1\}^n$,  $v+C$ has  near-optimal OPI discrepancy for input lists $S_i = \ell_i^{-1}(z_i)$. Then $\TV(L_v,L_{v'})\le e^{-\Omega(n)}$ uniformly over $v,v'\in\F_p^n$.
\item At the level of the Fourier proxies and the phase-forgotten Fourier proxies,
\begin{equation}
\label{eq:proxy-reductions}
\begin{aligned}
\sqrt{\FL(v,v'; C, \mbf{S})}
& \le\max_{\substack{u\in\{v,v'\}\\z\in\{0, 1\}^n}}
\FO(u;C,\mbf{S}_z)\\
\sqrt{\FL^+( C, \mbf{S})} &
\le\max_{z\in\{0, 1\}^n}\FO^+(C, \mbf{S}_z).
\end{aligned}
\end{equation}
\end{enumerate}

\end{theorem}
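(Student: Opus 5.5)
The plan is to prove the proxy-level statement (3) first; items (1) and (2) then follow from it together with \cref{lem:restated-llr-proxy}. The idea is that both proxies are built from the same \emph{code moments}
\[
m_A(u):=\E_{x\in u+C}\Bigl[\prod_{i\in A}g_i(x_i)\Bigr]=\sum_{\substack{y\in C^\perp\\ \supp(y)=A}}e_p(-\langle u,y\rangle)\prod_{i\in A}\wh g_i(y_i),
\]
where the equality is \cref{eq:llr-code-moment-expansion}. With this notation, $\FL(v,v')=\frac14\sum_{A\ne\emptyset}|m_A(v)-m_A(v')|^2$. On the OPI side, \cref{eq:llr-likelihood-expansion} shows that the function $D_u(z):=\P(L_u=z)/\nu(z)-1$ on $\{0,1\}^n$ has the orthonormal expansion $D_u=\sum_{A\ne\emptyset}\psi_A\,m_A(u)$ in $L^2(\nu)$.

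Next I will identify $D_u(z)$ with an OPI discrepancy. Since $\P(L_u=z)=|(u+C)\cap\mbf S_z|/|C|$, and the list $\ell_i^{-1}(z_i)$ has density exactly $\nu_i(z_i)$, we get $\nu(z)=\prod_i\nu_i(z_i)$ and hence $|D_u(z)|=\FO(u;C,\mbf S_z)$. Parseval in $L^2(\nu)$ then gives
\[
\sum_{A\ne\emptyset}|m_A(u)|^2=\E_{Z\sim\nu}|D_u(Z)|^2\le\max_z\FO(u;C,\mbf S_z)^2 .
\]
Combining this with $|a-b|^2\le 2|a|^2+2|b|^2$ yields $\FL(v,v')\le\frac12\bigl(\sum_A|m_A(v)|^2+\sum_A|m_A(v')|^2\bigr)$, which is at most the square of the maximum over $u\in\{v,v'\}$ and $z$. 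This is the first inequality in \cref{eq:proxy-reductions}.

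For the phase-forgotten version, set $M_A:=\sum_{\supp(y)=A}\prod_{i\in A}|\wh g_i(y_i)|$, so that $\FL^+=\sum_{A\ne\emptyset}M_A^2$. The key bookkeeping step is to compute $\FO^+(C,\mbf S_z)$ for each $z$.
\begin{itemize}
\item Replacing $S_i$ by its complement turns $g_i$ into $-g_i$, so the values $|\wh g_i|$ are unchanged.
\item The factor $\lambda_i$ computed for the list $\ell_i^{-1}(z_i)$ equals $\sqrt{(1-\rho_i)/\rho_i}$ when $z_i=1$ and $\sqrt{\rho_i/(1-\rho_i)}$ when $z_i=0$. In both cases this is $|\psi_i(z_i)|$.
\end{itemize}
Hence $\FO^+(C,\mbf S_z)=\sum_{A\ne\emptyset}|\psi_A(z)|M_A$. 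Applying orthonormality of $\{\psi_A\}$ again,
\[
\sum_A M_A^2=\E_\nu\Bigl|\sum_A\psi_A(Z)M_A\Bigr|^2\le\max_z\Bigl(\sum_A|\psi_A(z)|M_A\Bigr)^2=\max_z\FO^+(C,\mbf S_z)^2,
\]
which is the second inequality in \cref{eq:proxy-reductions}.

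Items (1) and (2) now follow. By \cref{lem:restated-llr-proxy}, $\TV(L_v,L_{v'})\le\sqrt{\FL(v,v')}$, and by (3) this is at most $\max_{u,z}\FO(u;C,\mbf S_z)$. Under the hypotheses this maximum is $e^{-\Omega(n)}$; take $v=s\one,\ v'=s'\one$ for (1), and arbitrary $v,v'$ for (2). The only real care needed is to confirm that the near-optimal discrepancy hypothesis is meant with an $\Omega(\cdot)$ constant that is uniform over the $2^n$ choices of $z$ (and over $s$ or $v$). I will state this uniformity explicitly as the reading of the hypothesis.

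The main obstacle is the density and sign bookkeeping for the complemented lists: checking that $\lambda_i$ for $\ell_i^{-1}(z_i)$ coincides with $|\psi_i(z_i)|$, and that $\nu(z)$ is exactly the product of the list densities. Once that is verified, both inequalities reduce to Parseval together with the bound $\E\le\max$.
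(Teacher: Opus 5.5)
Your proof is correct. The bookkeeping you flag is the same identification the paper relies on: $\lambda_i$ for the list $\ell_i^{-1}(z_i)$ equals $|\psi_i(z_i)|$, complementing a list flips the sign of $g_i$, and $\nu(z)$ is the product of the list densities. As in the paper, you implicitly need every $\rho_i\in(0,1)$ (drop constant leakages) so that the $\psi_A$ and the densities make sense.

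Your first inequality in \eqref{eq:proxy-reductions} is the paper's argument, applied on the coefficient side of Parseval instead of to the functions of $Z$. Two other places differ.

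For items (1) and (2), the paper does not go through $\FL$ at all. It applies the triangle inequality directly to $2\TV(L_v,L_{v'})=\E_{Z\sim\nu}\bigl|(\P(L_v=Z)-\P(L_{v'}=Z))/\nu(Z)\bigr|$. This gives the $L^1(\nu)$-average bound $\TV(L_v,L_{v'})\le\frac12\E_{Z\sim\nu}[\FO(v;C,\mbf S_Z)+\FO(v';C,\mbf S_Z)]$. Your route through \cref{lem:restated-llr-proxy} spends a Cauchy--Schwarz and yields an $L^2(\nu)$ average instead. That loss is harmless once everything is bounded by the maximum, and it has the structural benefit of making (1) and (2) corollaries of (3).

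For the second inequality, the paper picks an explicit transcript $z^\star$ with $|\psi_i(z_i^\star)|=\max\{\lambda_i,\lambda_i^{-1}\}\ge1$, which puts each coordinate on its minority outcome. It then uses $\ell^2\le\ell^1$ to get $\sqrt{\FL^+}\le\sum_{A\ne\emptyset}M_A\le\FO^+(C,\mbf S_{z^\star})$. This singles out a specific worst OPI instance and a bias-free intermediate quantity. Your averaging argument gives $\sqrt{\FL^+}\le\bigl(\E_{Z\sim\nu}\FO^+(C,\mbf S_Z)^2\bigr)^{1/2}$ instead. It handles both inequalities the same way, via Parseval plus $\E\le\max$, and requires no choice of $z^\star$.
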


\begin{proof}
We first recall the product Bernoulli distribution $\nu$ as well as the orthonormal basis $\{\psi_A:A\subset [n]\}$ of $L^2(\nu)$ from \cref{eq:prod-ber,eq:bernoulli-basis}.
Then to prove (2), note that
$\FO(v;C,\mbf S_z)
=\abs{\frac{\P(L_v=z)}{\nu(z)}-1}$ for $z\in\supp(\nu)$.
By the triangle inequality,
\begin{equation}
\label{eq:opi-harder-1}
\TV(L_v,L_{v'})
\le\frac12\sum_{z\in\supp(\nu)}\nu(z)
\left(\FO(v;C,\mbf S_z)+
\FO(v';C,\mbf S_z)\right)\le
\max_{\substack{u\in\{v,v'\}\\z\in\supp(\nu)}}
\FO(u;C,\mbf S_z).
\end{equation}
The proof of (1) is similar, replacing $v,v'$ with $s\one$, $s'\one$ for $s \neq s' \in \F_p$.

 To prove (3), recall from \cref{eq:ortho-llr-proxy} that 
\[
\FL(v,v')=\frac14\E_{Z\sim\nu}
\abs{\frac{\P(L_v=Z)-\P(L_{v'}=Z)}{\nu(Z)}}^2\le\frac12\E_{Z\sim\nu}
\left(\FO(v;C,\mbf S_Z)^2+
\FO(v';C,\mbf S_Z)^2\right),
\]
so the first inequality of \cref{eq:proxy-reductions} follows by bounding the expectations with the maximum. For the second inequality, observe that by definition of $\FO^+(C;\mbf{S}_z)$ and $\psi_i$, we have that the positive OPI proxy for $\mbf S_z$ is
\[
\FO^+(C,\mbf S_z)
=
\sum_{\emptyset\ne A\subset[n]}
\left(\prod_{i\in A}|\psi_i(z_i)|\right)
\sum_{\substack{y\in C^\perp\\\supp(y)=A}}
\prod_{i\in A}\abs{\wh g_i(y_i)}.
\]
Choose $z^\star\in\supp(\nu)$ coordinatewise so that
$|\psi_i(z_i^\star)|=\max\{\lambda_i,\lambda_i^{-1}\}\ge1$.
For this instance $\mbf{S}_{z^\star}$,
\[
\FO^+(C,\mbf S_{z^\star})
\ge
\sum_{\emptyset\ne A\subset[n]}
\sum_{\substack{y\in C^\perp\\\supp(y)=A}}
\prod_{i\in A}\abs{\wh g_i(y_i)}
\ge
\left(
\sum_{\emptyset\ne A\subset[n]}
\left(
\sum_{\substack{y\in C^\perp\\\supp(y)=A}}
\prod_{i\in A}\abs{\wh g_i(y_i)}
\right)^2
\right)^{1/2}
=\sqrt{\FL^+(C;\mbf S)}.
\]
this proves the second inequality of \cref{eq:proxy-reductions} upon observing $\FO^+(C,\mbf S_{z^\star})\le \max_z \FO^+(C,\mbf S_{z})$.
\end{proof}

The converse implications of \cref{thm:opi-harder} can fail by an exponential factor even for
Shamir scheme aka fixed-rate GRS codes, as shown in the following example.
The gap arises because LLR averages deviations over transcripts, whereas
uniform OPI discrepancy demands relative control on every individual
transcript, including exponentially rare ones.
\begin{example}[The converse to \cref{thm:opi-harder} is not true]
\label{ex:opi-llr-strict}
Let
$C\subset\F_p^n$ be an $[n,k]$ MDS code with $k=Rn$, where
$R\in(0,1)$ is fixed and $p\ge n$, and corresponding input lists and leakage functions via $ S_i:=\F_p^\times$ and $\ell_i(a):=\mathbf 1\{a\ne0\}$.
For $z\in\{0,1\}^n$, uniformly over $v,v'\in\F_p^n$, 
\begin{equation}
\label{eq:strict-1} 
|\mb{P}(L_v=z)-\mb{P}(L_{v'}=z)|\le 2^{n+1} p^{-k}
\end{equation}
To see this, write
\[ \one\{L_v=z\} = \prod_{i:z_i=0}\one\{(L_v)_i=0\}\prod_{i:z_i\ne 0}\left(1-\one\{(L_v)_i=0\}\right) = \sum_{T} (-1)^{|T\cap \on{supp}z|} \one\{L_v\vert_T=0\},\]
summed over all $T\subset [n]$ that contains $\{i:z_i=0\}$. Now, $L_v\vert_T=0$ imposes $|T|$ zeros on coordinates of $v+c$ for $c\sim\on{Unif}(C)$. If $|T|\le k$, $\mb{P}(L_v\vert_T=0)$ is independent of $v$ by the MDS property of $C$, so the contributions to \cref{eq:strict-1} cancels for $v$ and $v'$. If $|T|>k$, there is at most one $c\in C$ with the prescribed zeros, so $\mb{P}(L_v\vert_T=0)\le p^{-k}$. There are at most $2^{n}$ such $T$, which gives the bound in \cref{eq:strict-1}. 
Hence, we can crudely bound $\TV(L_v,L_{v'})\le 2^{2n}p^{-k}$.
Moreover, since $\nu(z)\ge p^{-n}$ for every transcript $z$,
\[
    \FL(v,v';C,\mbf S)
   \le
    \frac{p^n}{4}
    \left(\sum_z|\mb{P}(L_v=z)-\mb{P}(L_{v'}=z)|\right)^2
    =
    p^n\TV(L_v,L_{v'})^2.
\]
Therefore, $\FL(v,v';C,\mbf S)= e^{-\Omega(n\log n)}$ by \cref{lem:restated-llr-proxy},
whenever $k=Rn$, $R>1/2$, and $p\ge n$.
Nevertheless, for $z=0^n$ we have $S_z=\{0\}^n$ and
$C\cap S_z=\{0\}$, so
\[
    \FO(0;C,S_z)
    =
    \left|\frac{p^{-k}}{p^{-n}}-1\right|
    =
    p^{n-k}-1.
\]
Thus both $\TV(L_v,L_{v'})$ and the LLR proxy can be
superexponentially small even though the worst-case OPI
discrepancy is superexponentially large.
\end{example}

\section{Local Leakage Resilience}
\label{sec:llr}

In this section, we prove the first item of the following theorem.

\mainllr*

\subsection{Many-bit LLR Above Rate 1/2}
\label{subsec:llr-above}
To prove the \cref{thm:main-llr}(1), we use the ``tests'' of \cite{mpsw21}.
These tests will bound the leakage resilience error $\eps$, i.e. the maximum total variation of the transcript distributions for distinct secrets. We state the result below for the Shamir scheme only, whereas \cite{mpsw21} proves it for more general Massey schemes corresponding to replacing the generalized Reed-Solomon code $C$ with any MDS code of equal length and rate.

We state these tests. Fix
$\sigma\in[0,1]$, $\ell\in\N$, and $a>1$. Every test is indexed by a
tuple $\boldsymbol V=(V_1,\ldots,V_n)$ of $\ell$-element subsets of
$\F_p$ and a set $J\subset[n]$ of size $(1-\sigma)n$. A word
$y\in\F_p^n$ fails the test $(\boldsymbol V,J)$ if
$y_j\in V_j$ for every $j\in J$. The code $C^\perp$ fails the test $(\mathbf{V}, J)$ if
at least $a^n$ of the elements of $C^\perp$ fail it.  We say that $C^\perp$ passes
$\on{TEST}_{C^\perp}(\sigma,\ell,a)$ if it fails none of these tests  $(\mathbf{V}, J)$. 

\begin{lemma}[Test Criterion {\cite[Lemma~1]{mpsw21}}, specialized to Shamir sharing]
\label{lem:mpsw-fixed-support}
Fix $b\in\N$. There is a constant $C_b>1$ such that the following
holds. Let $C\subset\F_p^n$ be the secret-zero share code of a Shamir scheme of rate $R$,
and choose $0<\sigma\le 2R-1$, $a>1$, and positive integers $T$ and $\ell=2^bT^2$.
If $C^\perp$ passes $\on{TEST}_{C^\perp}(\sigma,\ell,a)$, then the
scheme is $b$-bit locally leakage resilient with leakage error
\[
\varepsilon\le 
2^{-[R\log_2 C_b-\log_2 a-H_2(\sigma)]n}+2^{-[\sigma\log_2 T-(\sigma+1-R)b-H_2(\sigma)]n}.
\]
\end{lemma}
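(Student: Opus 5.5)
This lemma is \cite[Lemma~1]{mpsw21} specialized to GRS codes, so one option is to cite it directly. To make the dependence on $C_b$, $T$ and $\sigma$ visible, I would instead reprove it along the lines of \cite{mpsw21}, in three steps.

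\textbf{Step 1: Fourier bound on the leakage error.} For $b$-bit leakages, set $S_{i,z}:=\ell_i^{-1}(z)$ for $z\in\{0,1\}^b$. Expanding $\P(L_{s\one}=\mathbf z)$ by Fourier inversion on $s\one+C$, exactly as in \cref{lem:translated-opi-proxy}, gives a sum over $y\in C^\perp$. The $y$ with $\langle y,\one\rangle=0$ cancel between $s$ and $s'$. Every remaining $y$ is nonzero and hence, since $C^\perp$ is MDS with distance $k+1$, has weight at least $k+1$. Summing over transcripts $\mathbf z$ and using the triangle inequality, the sum factorizes:
\[
\varepsilon\le \sum_{0\ne y\in C^\perp}\prod_{i\in\supp(y)} c_i(y_i),\qquad c_i(\xi):=\sum_{z\in\{0,1\}^b}\bigl|\wh{\one_{S_{i,z}}}(\xi)\bigr|.
\]
The weights $c_i$ satisfy two local estimates.
\begin{enumerate}
\item Pointwise: $c_i(\xi)\le C_b^{-1}<1$ for $\xi\ne0$. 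This follows from \cref{lem:indicator-L-infty-bound} and the concavity of $\sin$, which give $\sum_z\sin(\pi\rho_z)/\pi\le 2^b\sin(\pi 2^{-b})/\pi<1$.
\item In $L^2$: $\sum_{\xi}c_i(\xi)^2\le 2^b\sum_z\rho_z=2^b$, by Cauchy--Schwarz and Parseval.
\end{enumerate}
By Markov's inequality, the set $V_i:=\{\xi\ne0: c_i(\xi)\ge 1/T\}$ therefore has $|V_i|\le 2^bT^2=\ell$. (Pad each $V_i$ to exactly $\ell$ elements.)

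\textbf{Step 2: split the dual codewords.} Call $y$ \emph{heavy} if $y_j\in V_j\cup\{0\}$ for at least $(1-\sigma)n$ coordinates $j$; otherwise call it \emph{light}.
\begin{itemize}
\item Heavy terms. Every heavy $y$ fails some test $(\mathbf V',J)$ with $|J|=(1-\sigma)n$, where $V'_j:=V_j\cup\{0\}$ after adjusting list sizes. Since $C^\perp$ passes the test, each $J$ admits fewer than $a^n$ such $y$, and there are at most $2^{H_2(\sigma)n}$ choices of $J$. Each such $y$ contributes at most $C_b^{-(k+1)}$ by the pointwise estimate. The total is the first term $2^{-[R\log_2C_b-\log_2a-H_2(\sigma)]n}$.
\item Light terms. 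A light $y$ has a set $D$ of at least $\sigma n$ coordinates with $y_j\notin V_j\cup\{0\}$, each contributing a factor at most $1/T$. Fix $D$ (at most $2^{H_2(\sigma)n}$ choices). Since $C^\perp$ is MDS of dimension $n-k$, $y$ is determined by its values on any $n-k$ coordinates. The hypothesis $\sigma\le 2R-1$ lets me choose such a determining set $E$ overlapping $D$ in a controlled way, using roughly $\sigma n$ coordinates of $D$ and the remaining $(1-R)n$ outside $D$. I then sum over the free values $(y_j)_{j\in E}$, bounding the factors at the determined coordinates by $1$ and using the $L^2$ estimate. The $\sigma n$ free small coordinates give $T^{-\sigma n}$ times $2^{b\sigma n}$, and the $(1-R)n$ remaining free coordinates give a factor $2^{b}$ each. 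Altogether this is $2^{-[\sigma\log_2T-(\sigma+1-R)b-H_2(\sigma)]n}$, the second term.
\end{itemize}

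\textbf{Main obstacle.} The hard step is the light-term count. Summing $\prod_{j\in E} c_j(y_j)$ over free coordinates with only an $L^2$ bound on $c_j$ naively costs a factor $\sqrt p$ per coordinate, via $\sum_\xi c_j(\xi)\le\sqrt{p\cdot 2^b}$. To avoid this I would follow \cite{mpsw21}: apply Cauchy--Schwarz pairing the free coordinates against the determined ones. By the MDS property the map from free values to determined values is a bijection on each fiber, so this yields $\sum c^2$-type sums instead of $\sum c$. This pairing is exactly where the budget $\sigma\le 2R-1$ is used.
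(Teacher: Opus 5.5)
Your Step 1 and the heavy-term estimate are the standard argument and are fine. One small repair there: since $c_i(0)=1$, Parseval actually gives $\sum_{\xi\ne0}c_i(\xi)^2\le 2^b-1$. Hence $|V_i|\le(2^b-1)T^2$, and $V_i\cup\{0\}$ fits in an $\ell$-element list; the bound $|V_i|\le\ell$ alone leaves no room for $0$.

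The genuine gap is the light-term step. As written, the set $E$ is inconsistent: an information set of $C^\perp$ has $n-k=(1-R)n$ coordinates, not $\sigma n+(1-R)n$. Moreover, no version in which the free coordinates overlap $D$ can work. On $j\in D\cap E$, the factor $c_j(y_j)$ would have to pay both for the gain $1/T$ and for the sum over the free value $y_j$, and it cannot do both. With only $L^2$ control, $\sum_\xi c_j(\xi)$ can be of order $\sqrt{2^bp}$, while $\sum_{\xi:\,c_j(\xi)<1/T}c_j(\xi)^2$ need not be small (think of a random level set). Your ``main obstacle'' paragraph names the right tool, pairing two information sets via Cauchy--Schwarz. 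But the geometry must be the opposite of what you describe: $D$ is used only for the gain, and \emph{both} information sets must lie in its complement.

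Concretely, the fix runs as follows. Fix $D$ with $|D|=\lceil\sigma n\rceil$, bound $\prod_{j\in D}c_j(y_j)\le T^{-|D|}$, and drop the constraint on $y_D$. Then bound by $1$ every remaining factor outside two disjoint sets $J_1,J_2\subset[n]\setminus D$ of size $n-k$, using $c_i\le\sum_z\rho_{i,z}=1$. Such $J_1,J_2$ exist exactly because $n-\lceil\sigma n\rceil\ge 2(n-k)$, which is the hypothesis $\sigma\le 2R-1$. Projection of the MDS code $C^\perp$ onto each $J_a$ is a bijection, so Cauchy--Schwarz gives
\[
\sum_{y\in C^\perp}\prod_{i\in J_1\cup J_2}c_i(y_i)\le\prod_{a=1}^{2}\Bigl(\prod_{i\in J_a}\sum_{\xi\in\F_p}c_i(\xi)^2\Bigr)^{1/2}\le 2^{b(1-R)n}.
\]
So the light terms total at most $2^{H_2(\sigma)n}\,T^{-\sigma n}\,2^{b(1-R)n}$, which is stronger than the stated second term.

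The extra $2^{b\sigma n}$ you attribute to $D$ does not arise from your thresholding of $c_i$. It is what one pays if each $|\wh{\one_{S_{i,z}}}(\xi)|$ is instead thresholded at $1/T$ separately, so that only $c_i<2^b/T$ holds on $D$; this is presumably the source of the $(\sigma+1-R)b$ in the statement.

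The paper does not reprove this lemma; it imports it from \cite{mpsw21}. Your corrected light-term argument is the same mechanism the paper writes out in the proof of \cref{lem:wftl}, where $J_1,J_2$ are chosen inside $T\setminus M$, disjoint from the small-coefficient set.
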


We observe that list recovery (see \cref{def:list-recovery}) of $C^\perp$ certifies
every one of these tests
simultaneously.  Indeed, if $C^\perp$ is
$(\sigma,\ell,L)$-list recoverable
 with $L<a^n$, every word failing
a test agrees with its list tuple on all $(1-\sigma)n$ coordinates of
$J$. Hence at most $L<a^n$ words fail any test.  This observation means that we can combine \cref{lem:mpsw-fixed-support} with recent breakthroughs in the list-recovery of Reed--Solomon codes (\cref{thm:rs-lr}) to prove part (1) of \cref{thm:main-llr}, which we do next.

\begin{proof}[Proof of \cref{thm:main-llr}(1)]
Fix the constant $C_b$ from \cref{lem:mpsw-fixed-support}. First choose
$0<\sigma<2R-1$ so that
$H_2(\sigma)<R\log_2 C_b$. Next choose $a>1$ sufficiently close to
one that
\[
R\log_2 C_b-\log_2a-H_2(\sigma)>0,
\]
and finally choose $T$ sufficiently large that
\[
\sigma\log_2T-(\sigma+1-R)b-H_2(\sigma)>0.
\]
Put $\ell:=2^bT^2$. Our choice $0<\sigma<2R-1<R$ gives agreement $1-\sigma>1-R$, strictly above the dual rate, so
\cref{thm:rs-lr} implies that $C^\perp$ is
$(\sigma,\ell,L)$-list recoverable with
$L=n^{O_{R,b}(1)}\ll a^n$ as $n\to\infty$.
Hence $C^\perp$ passes
$\on{TEST}_{C^\perp}(\sigma,\ell,a)$, and
the two exponents in the square brackets in \cref{lem:mpsw-fixed-support} are positive
constants. This proves exponentially small leakage error.
\end{proof}

\subsection{Weighted Fourier Thresholding}
\label{subsec:wftl}
In this section, we isolate the key mechanism by which we leverage list recoverability of RS codes (see \cref{thm:rs-lr}) in the proofs of \cref{thm:main-llr,thm:main-opi,thm:main-algo,thm:main-existential}. Recall the Fourier proxies $\FO^+$ and $\FL^+$ from \cref{subsec:llr-proxy,subsec:opi-proxy} that we wish to upper bound: Their bounds reduce to weighted sums of the form
\begin{equation}\label{eq:form}
\sum_{0\ne y\in C^\perp}\prod_{i:y_i\ne 0}w_i(y_i)
\end{equation}
where $w_i$ is the norm of some Fourier coefficients of $f_i$ or $g_i$. The lemma below converts $L^2$-norm bounds (via Parseval's identity) and pointwise bounds of these Fourier coefficients into an exponentially small bound on the quantity in \cref{eq:form}, whenever $C$ has good list recovery properties.

\wftl*

\begin{proof}
First, we claim that it suffices to show the case of exact norm bounds, i.e. without the uniform
$\varepsilon=o(1)$ error. Scale  $\widetilde w_i:=\lambda^{-1}w_i$ so that $ \|\widetilde w_i\|_\infty\le\gamma$ and $ \|\widetilde w_i\|_2^2\le\Gamma$ for every $i$, via
\[
    \lambda:=
    \max\left\{
      \frac{\gamma+\varepsilon_n}{\gamma},
      \sqrt{\frac{\Gamma+\varepsilon_n}{\Gamma}}
    \right\}
    =1+o(1).
\]
Then, since $|y|\le n$, the result for $w_i$ follows from that of $\widetilde w_i$ as
\[
\sum_{\substack{0\ne y\in C^\perp\\|y|\le t}}
\prod_{i\in\supp(y)}w_i(y_i)
\le
\lambda^n
\sum_{\substack{0\ne y\in C^\perp\\|y|\le t}}
\prod_{i\in\supp(y)}\widetilde w_i(y_i),
\]
and
$\lambda^n=e^{o(n)}$. Therefore assume throughout the proof that $ \|w_i\|_\infty\le\gamma$ and $ \|w_i\|_2^2\le\Gamma$ for every $i$.

Now, If $t\le k$, then the sum is empty by the minimum distance $k+1$ of the dual code $C^\perp$.  
For $t> k$,  by assumption, $\sigma:=\lfloor k-t/2\rfloor /n$ is bounded away from $0$. Choose a constant
$\theta\in(0,\gamma)$ such that
\begin{equation}
\label{eq:quant-llr-small}
3(1+\Gamma) \theta^\sigma\le \gamma.
\end{equation}
For each coordinate $i \in [n]$, define the \emph{heavy list} $H_i
:=
\{0\}\cup
\left\{\xi\ne0:w_i(\xi)>\theta\right\}$.
Markov's inequality implies
 that $\abs{H_i}\le1+\Gamma\theta^{-2}$.  Since
$\sigma<R-t/(2n) = R-\Omega(1)$, \cref{thm:rs-lr} shows that only
$L=n^{O(1)}$ elements of $C^\perp$ agree with the lists $H_i$ on at least
$(1-\sigma)n$ coordinates.
Write $J(y):=\{i:y_i\in H_i\}$ and split weighted sum as
\begin{equation}
\label{eq:low-high-gi}
\sum_{\substack{0\ne y\in C^\perp\\|y|\le t}}
\prod_{i\in\supp(y)}w_i(y_i)
=
\sum_{\substack{0\ne y\in C^\perp\\|y|\le t\\
|J(y)|\ge(1-\sigma)n}}
\prod_{i\in\supp(y)}w_i(y_i)
\\
+
\sum_{\substack{0\ne y\in C^\perp\\|y|\le t\\
|J(y)|<(1-\sigma)n}}
\prod_{i\in\supp(y)}w_i(y_i).
\end{equation}
We bound the two summands separately.
\begin{itemize}
\item For nonzero $y\in C^\perp$ with $\abs{J(y)}\ge (1-\sigma)n$, as $\sigma < R$, list recovery (\cref{thm:rs-lr}) of $C^\perp$ of distance $k+1$, and the pointwise weight bound give
\[
\sum_{\substack{0\ne y\in C^\perp\\|y|\le t\\
|J(y)|\ge(1-\sigma)n}}
\prod_{i\in\supp(y)}w_i(y_i)
\le
L\gamma^{k+1} = n^{O(1)}\gamma^{k}.
\]

\item For nonzero $y\in C^\perp$ with
$\abs{J(y)}<(1-\sigma)n$, because $0\in H_i$,
every coordinate $i$ outside $J(y)$ belongs to $\supp(y)$.  For any such $i \not\in J(y),$ the weight $w_i(y_i)$ 
 is at most $\theta$ by the definitions of $H_i$ and $J(y)$.
Since $|y|\le t$, choose $T\in\binom{[n]}t$ containing
$\supp(y)$. Counting by $T$ and
the possible sets $M\in\binom{T}{\sigma n}$ of $\sigma n$ coordinates outside $J(y)$
gives
\[
\sum_{\substack{0\ne y\in C^\perp\\
|y|\le t,\ |J(y)|<(1-\sigma)n}}
\prod_{i\in\supp(y)}w_i(y_i)
\le
\sum_{T\in\binom{[n]}t}
\sum_{M\in\binom{T}{\sigma n}}
\sum_{\substack{y\in C^\perp\\
\supp(y)\subset T\\
y_i\notin H_i\ \forall i\in M}}
\prod_{i\in\supp(y)}w_i(y_i).
\]
Fix $T$ and one such $M$.
Since $2k-t>0$, we have
\[
    \sigma n=\left\lfloor k-\frac t2\right\rfloor
    \le k-\frac t2
    \le2k-t,
\]
and so $|T\setminus M|=t-\sigma n\ge2(t-k)$.
Hence, $T\setminus M$ contains two
disjoint sets $J_1,J_2$ of size $t-k$.
Extracting the $\theta^{|M|}$ contribution on $M$
and applying Cauchy--Schwarz gives
\begin{align*}
\sum_{\substack{y\in C^\perp\\
\supp(y)\subset T\\
y_i\notin H_i\ \forall i\in M}}
\prod_{i\in\supp(y)}w_i(y_i)
&\le
\theta^{|M|}
\left(
\sum_{\substack{y\in C^\perp\\\supp(y)\subset T}}
\prod_{i\in J_1\cap\supp(y)}w_i(y_i)^2
\right)^{1/2}
\left(
\sum_{\substack{y\in C^\perp\\\supp(y)\subset T}}
\prod_{i\in J_2\cap\supp(y)}w_i(y_i)^2
\right)^{1/2}
\\
&=
\theta^{\sigma n}
\left(
\prod_{i\in J_1}
\left[1+\sum_{a\ne0}w_i(a)^2\right]
\right)^{1/2}
\left(
\prod_{i\in J_2}
\left[1+\sum_{a\ne0}w_i(a)^2\right]
\right)^{1/2}
\\
&\le
\theta^{\sigma n}(1+\Gamma)^{t-k},
\end{align*}
where the crucial observation is that the shortening of
$C^\perp$ to $T$ is an $[t,t-k]$ MDS code, so projection
onto either $J_1$ or $J_2$ is a bijection.
The number of choices of $(T, M)$ is at most $3^n$ as $M\subset T\subset [n]$, so by our choice \cref{eq:quant-llr-small}, we see that
\[
\sum_{\substack{0\ne y\in C^\perp\\
|y|\le t,\ |J(y)|<(1-\sigma)n}}
\prod_{i\in\supp(y)}w_i(y_i)\le 
3^n
\theta^{\sigma n}(1+\Gamma)^{t-k}
\le
\left[3(1+\Gamma)\theta^\sigma\right]^n
\le\gamma^n.
\]
\end{itemize}
Combining the two estimates in the bullet points proves the lemma.
\end{proof}
\begin{remark}
\label{rem:wftl-param}
We remark on the requirement in \cref{lem:wftl} that $2k -t = \Omega(n)$, and its implications.  First, this requirement comes from the fact that
the proof of \cref{lem:wftl} requires $\sigma <R$ in order to apply \cref{thm:rs-lr}, and requires $\sigma n\le2k-t$  so that $|T \setminus M|$ is sufficiently large; the existence of such a $\sigma$ is guaranteed when $2k-t = \Omega(n)$, hence the hypothesis.  

As for why this matters, it means that if we want to apply \cref{lem:wftl} for every $t\in [n]$, including $t=n$---which, for example, is what we do in the proof of a more quantitative statement of~\cref{thm:main-llr}(1) below, and in the proof of \cref{thm:main-opi}(1)---we will then need $2k-n\gtrsim n$, i.e. rate $R>1/2$.  This is one of the two apparent barriers at rate $1/2$ discussed in \cref{sec:intro} that we will need to circumvent for our results with $R < 1/2$.
\end{remark}

As a corollary of \cref{lem:wftl}, we show a quantitative version of \cref{thm:main-llr}(1). Although the proof above using the tests in \cite{mpsw21} already gives the full arbitrary-$b$
bit LLR above one half, we next prove a quantitative one-bit version of \cref{thm:main-llr}(1) for two reasons.
\begin{itemize}
    \item First, the proof via \cref{lem:wftl} gives a self-contained illustration of why dual list recovery is the natural property and why \cref{lem:wftl} is the key mechanism of it we will repeatedly use. We note that this mechanism is implicit in \cref{lem:mpsw-fixed-support}.
\item Second, we will need the quantitative result in \cref{cor:one-bit-above} as an input for the conditioning argument to extend our results to arbitrary biases in \cref{sec:app-llr}.
\end{itemize}

Recall from \cref{def:llr} the leakage functions $\ell_1, \dots, \ell_n:\F_p\to\{0, 1\}$ and  their
transcript $L_v$ on a uniformly random word of $v+C$.
\begin{corollary}
\label{cor:one-bit-above}
Fix $R>1/2$. Let $C$ be an $[n,k]$ GRS code over a prime
field $\F_p$, with $p\ge n$ and $k/n\ge R$.
 Then, uniformly over the leakage functions $\ell_1, \ldots, \ell_n$  and $v,v'\in\F_p^n$,
\[
\TV(L_v,L_{v'})\le\sqrt{\FL(v,v')}\le\sqrt{\FL^+}
\le n^{O_R(1)}\left(\frac{2}{\pi}\right)^k.
\]
\end{corollary}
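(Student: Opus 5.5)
The first two inequalities are already \cref{lem:restated-llr-proxy}, so the content is the bound on $\FL^+$. I will bound $\FL^+$ by an $L^1$-type weighted sum and apply \cref{lem:wftl} with $t=n$. This is allowed because $R>1/2$ gives $2k-n\ge(2R-1)n=\Omega(n)$.

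Concretely, all terms in $\FL^+$ are nonnegative, so
\[
\FL^+=\sum_{A\ne\emptyset}\Bigl(\sum_{\supp(y)=A}\prod_{i\in A}|\wh g_i(y_i)|\Bigr)^2\le\Bigl(\sum_{0\ne y\in C^\perp}\prod_{i\in\supp(y)}|\wh g_i(y_i)|\Bigr)^2 .
\]
Hence it suffices to show $\sum_{0\ne y\in C^\perp}\prod_i w_i(y_i)\le n^{O_R(1)}(2/\pi)^{k}$ with $w_i:=|\wh g_i|$ on $\F_p^\times$. By complementing leakage bits (which does not change $L_v$ up to relabeling, and only flips the sign of $g_i$) and deleting constant leakages, I may assume $\rho_i\in(0,1)$.

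Next I check the hypotheses of \cref{lem:wftl}. Parseval gives $\|w_i\|_2^2\le\|g_i\|_2^2=1$, so $\Gamma=1$. By \cref{cor:L-infty-bound}, $\|w_i\|_\infty\le\gamma(\rho_i)+O(p^{-2})\le 2/\pi+O(n^{-2})$, since $\gamma$ is maximized at $1/2$ with value $2/\pi$ and $p\ge n$. So I take $\gamma=2/\pi$, with an error that is uniformly $o(1)$ in $i$.

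The bound $e^{o(n)}\gamma^k$ stated in \cref{lem:wftl} is slightly weaker than the claimed $n^{O_R(1)}(2/\pi)^k$, so I will redo the last step quantitatively rather than quote it. The lemma's proof splits the sum into two parts. The list-recovery part is bounded by $L\gamma_n^{k+1}$ with $L=n^{O(1)}$ and $\gamma_n=2/\pi+O(n^{-2})$; since $\gamma_n^k=(2/\pi)^k(1+O(n^{-2}))^k=(2/\pi)^k(1+o(1))$, this part is $n^{O(1)}(2/\pi)^k$. The light part is at most $[3(1+\Gamma)\lambda\theta^\sigma]^n$, where $\lambda=1+O(n^{-2})$ is the rescaling factor and $\lambda^n=1+o(1)$. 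I choose $\theta$ small enough that $3\cdot2\cdot\theta^\sigma\le(2/\pi)^{1/R}$, so this part is at most $(2/\pi)^{n/R}\cdot O(1)\le O((2/\pi)^k)$.

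The main obstacle is avoiding the $e^{o(n)}$ loss from the rescaling $\lambda$ in the general lemma. Because here the error is $O(p^{-2})\le O(n^{-2})$, both $\lambda^n$ and $\gamma_n^k/(2/\pi)^k$ are $1+o(1)$. The list size from \cref{thm:rs-lr} depends on $\sigma$ and on the list size $1+\theta^{-2}$, and both are functions of $R$ alone, so $L=n^{O_R(1)}$. Squaring the sum gives $\FL^+\le n^{O_R(1)}(2/\pi)^{2k}$, and taking square roots yields the claim uniformly in the leakages and in $v,v'$.
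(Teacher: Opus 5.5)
Your proposal is correct and matches the paper's proof: bound $\sqrt{\FL^+}$ by the $L^1$ sum $\sum_{0\ne y\in C^\perp}\prod_{i}|\wh g_i(y_i)|$, then apply \cref{lem:wftl} with $t=n$, $\Gamma=1$, $\gamma=2/\pi$. Your extra step re-runs the end of that lemma's proof, using that the error is only $O(p^{-2})$, to avoid its $e^{o(n)}$ rescaling loss. This spells out what the paper uses without comment when it writes the bound as $n^{O_R(1)}(2/\pi+O(p^{-2}))^k$.
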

\begin{proof}
By \cref{lem:restated-llr-proxy} and the definition of $\FL^+$
\[
\TV(L_v,L_{v'})
\le\sqrt{\FL(v,v')}\le\sqrt{\FL^+}
\le\sum_{0\ne y\in C^\perp}
\prod_{i\in\supp(y)}|\wh g_i(y_i)|.
\]
Apply \cref{lem:wftl} to $w_i(\xi)=|\wh g_i(\xi)|$ with parameters $t=n$, $\Gamma=1$, and
$\gamma=2/\pi+O(p^{-2})$ from \cref{cor:L-infty-bound}.
Since $2k-n\ge(2R-1)n$, we obtain the bound
\[
\sum_{0\ne y\in C^\perp}
\prod_{i\in\supp(y)}|\wh g_i(y_i)|
\le n^{O_R(1)}\left(\frac{2}{\pi}+O(p^{-2})\right)^k
\le n^{O_R(1)}\left(\frac{2}{\pi}\right)^k.
\]
Combining the two equations above gives the corollary.
\end{proof}

\subsection{One-bit LLR Below Rate 1/2}
\label{subsec:llr-below}
In this section, we prove \cref{thm:main-llr}(2). We begin by proving it in the balanced case where $\rho_i=\rho\sim 1/2$, so $\|g_i\|_\infty=1+o(1)$ uniformly. The analogous statement for the arbitrary biased case is given in \cref{sec:app-llr}.
Fix an $[n,k]$ GRS code $C$ of rate $R\in[1/2-\delta,1/2]$ where $\delta\in (0, 1/6)$ is some small constant chosen later. Our goal is to bound the Fourier proxy $\FL(v,v')$, which amounts to bounding expressions of the form 
\begin{equation}\label{eq:tobound1} \left| \sum_{\substack{y \in C^\perp \\ \supp(y) = A}} e_p(-\langle v,y \rangle)\prod_{i \in A}\widehat{g_i}(y_i) \right|.\end{equation} We will bound this term in two ways, depending on how the size of $A$ compares with the cutoff 
$m=\lfloor(2-\delta)k\rfloor$ from \cref{eq:puncture-size}; notice that with this value of $m$, it follows from $1/2 - \delta \leq R$ that
\[
m\ge\left(1-\frac52\delta+\delta^2\right)n-O(1).
\]
When $|A| > m$, then we will bound \cref{eq:tobound1} using \cref{lem:high-support-energy}, which we state and prove below; if $|A| \leq m$, we will ignore the phases and use \cref{lem:wftl}.  
For sufficiently small $\delta$, Stirling's formula bounds the number of supports above this cutoff by
\begin{equation}
\label{eq:high-support-count}
\abs{\left\{A\subset[n]:|A|>m\right\}}
\le
2^{\left[H_2\left(\frac{n-m}{n}\right)+o(1)\right]n}
\le
2^{\left[H_2\left(\frac52\delta-\delta^2\right)+o(1)\right]n}.
\end{equation}
For the supports $A$ above the cutoff, we use \cref{lem:u3-gap,lem:block-u3} to obtain the following bound.
\begin{lemma}
\label{lem:high-support-energy}
There exists an absolute constant $q\in (0, 1)$ such that for every $v\in\F_p^n$, every balanced leakage tuple and its centered functions $g_i$, and every $A\subset [n]$ with $|A|>m$,

\begin{equation}
\label{eq:high-support-energy}
\abs{
\sum_{\substack{y\in C^\perp\\\supp(y)=A}}
e_p(-\langle v,y\rangle)
\prod_{i\in A}\wh g_i(y_i)
}
=\abs{\E_{x\in v+C}\left[\prod_{i\in A}g_i(x_i)\right]}
\le 2^{o(n)}q^{(3k-n)/2}.
\end{equation}
\end{lemma}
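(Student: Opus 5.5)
The plan is to first establish the equality in \cref{eq:high-support-energy}, and then bound the expectation with the MDS generalized von Neumann inequality (\cref{lem:block-u3}) combined with the $U^3$-gap (\cref{lem:u3-gap}).

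\textbf{Step 1: the equality.} This is the restriction of \cref{eq:llr-code-moment-expansion} to the product over $A$. Fourier-expand $\prod_{i\in A}g_i(x_i)$ on the coset $v+C$ and use orthogonality, which leaves only dual codewords $y\in C^\perp$ with $\supp(y)\subset A$. Since $\wh g_i(0)=0$ for every $i$, only those $y$ with $\supp(y)=A$ exactly survive. It therefore suffices to bound $\abs{\E_{x\in v+C}\prod_{i\in A}g_i(x_i)}$.

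\textbf{Step 2: reduce to \cref{lem:block-u3}.}
\begin{itemize}
\item Parametrize $C$ by messages $x\in\F_p^k$ through the coordinate functionals $F_i(x)=\alpha_iQ_x(\alpha_i)$. Because $C$ is GRS, hence MDS, any $k$ of the $F_i$ form an isomorphism $\F_p^k\to\F_p^k$.
\item The coset $v+C$ corresponds to the affine shifts $F_i(x)+v_i$, which \cref{lem:block-u3} explicitly allows.
\item Set $L_i=g_i$ for $i\in A$ and $L_i\equiv1$ for $i\notin A$.
\item Choose $J\subset A$ with $|J|=\lfloor(3k-n)/2\rfloor$.
\end{itemize}
This choice of $J$ is legitimate for two reasons. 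First, $3k-n>0$, since $R\ge1/2-\delta>1/3$ for $\delta<1/6$. Second, $|J|\le k/2<m<|A|$. Then \cref{lem:block-u3} yields
\[
\abs{\E_{x\in v+C}\prod_{i\in A}g_i(x_i)}\le\prod_{j\in J}\norm{g_j}_{U^3}\prod_{i\in A\setminus J}\norm{g_i}_\infty .
\]

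\textbf{Step 3: bound the two factors.}
\begin{itemize}
\item \emph{Sup-norm factors.} In the balanced case $\rho_i=1/2+o(1)$ uniformly, so $\norm{g_i}_\infty=1+o(1)$ uniformly. The product over $A\setminus J$ is therefore at most $(1+o(1))^n=2^{o(n)}$.
\item \emph{$U^3$ factors.} Each $g_i$ takes values $\pm\sqrt{(1-\rho_i)/\rho_i}^{\,\pm1}=\pm(1+o(1))$, so $|g_i(x)|-1=o(1)$ uniformly. Also $\E g_i=0\le 1-2\tau$ for, say, $\tau=1/4$. Hence \cref{lem:u3-gap} gives an absolute $q<1$ with $\norm{g_j}_{U^3}\le q$ for all sufficiently large $p$ (recall $p\ge n$).
\end{itemize}
Combining, the product is at most $2^{o(n)}q^{\lfloor(3k-n)/2\rfloor}\le 2^{o(n)}q^{-1}q^{(3k-n)/2}$, and the constant factor $q^{-1}$ is absorbed into $2^{o(n)}$.

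\textbf{Where care is needed.} The main step requiring care is confirming that the hypotheses of \cref{lem:block-u3} hold in this coset and GRS setting: the MDS isomorphism property for every $k$-subset, the allowance for affine shifts, and that $J$ fits inside $A$ with $|J|\le(3k-n)/2$. The uniformity of the $o(1)$ terms in the $U^3$ bound over all balanced leakage tuples is handled by \cref{lem:u3-gap} itself, since its $q$ depends only on $\tau$.
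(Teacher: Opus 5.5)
Your proposal is correct and follows essentially the same route as the paper: Fourier inversion with $\wh g_i(0)=0$ for the equality, then \cref{lem:block-u3} applied to the coset $v+C$ with $L_i=g_i$ on $A$, $L_i\equiv 1$ off $A$, and $J\subset A$ of size $\lfloor(3k-n)/2\rfloor$, with the $U^3$ factors bounded by \cref{lem:u3-gap} and the sup-norm factors absorbed into $2^{o(n)}$. Your explicit handling of the floor (the extra $q^{-1}$ absorbed into $2^{o(n)}$) is slightly more careful than the paper's write-up, but it is the same argument.
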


\begin{proof}
The first equality is Fourier inversion, since $\wh g_i(0)=0$.
In the balanced case, $\rho_i\sim 1/2$ so centered $g_i$ has mean zero and takes values $\pm 1+o(1)$. Thus,
\cref{lem:u3-gap} gives constant $q\in (0, 1)$ such that
$\|g_i\|_{U^3}\le q$ 
for every $i$.
Let $d:=\lfloor(3k-n)/2\rfloor$.
Since $\delta\in (0, 1/6)$ and $k\le n/2$, we see that $ 3k-n\ge(1/2-3\delta)n>0$ and $  0\le d \le k/2\le m<|A|$.
Thus we may choose set $J\subset A$ of size $d$.

We next apply \cref{lem:block-u3}, where the functions $F_i$ map a message $x$ to the $i$-th symbol of its encoding in $v+C$, and where $L_i=g_i$ for $i\in A$ and $L_i=1$ for $i\not\in A$, to obtain
\[
\abs{\E_{x\in v+C}\prod_{i\in A}g_i(x_i)}
\le
\prod_{i\in A\setminus J}\|g_i\|_\infty
\prod_{j\in J}\|g_j\|_{U^3}
\le
2^{o(n)}q^d
=
2^{o(n)}q^{(3k-n)/2},
\]
where various $o(1)$ error at each coordinate is absorbed into the $2^{o(n)}$ term.
\end{proof}

To show \cref{thm:main-llr}(2), it suffices by \cref{lem:restated-llr-proxy} to bound the corresponding Fourier proxy.

\begin{lemma}
\label{lem:balanced-llr-below}
There exists $\delta>0$ such that for every
$[n,k]$ generalized Reed-Solomon code $C$ of rate $R\ge1/2-\delta$ over a prime
field $\F_p$ with $p\ge n$, every balanced one-bit
leakage tuple, and every $v, v'\in \F_p^n$,
\begin{equation}
\label{eq:balanced-llr-proxy-below}
\FL(v, v')
\le e^{-\Omega(n)}.
\end{equation}
\end{lemma}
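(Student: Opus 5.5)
We split the sum over supports $A$ in the definition of $\FL(v,v')$ according to whether $|A|\le m$ or $|A|>m$, where $m=\lfloor(2-\delta)k\rfloor$ is the cutoff from \cref{eq:puncture-size}. By the elementary inequality $|a-b|^2\le 2|a|^2+2|b|^2$, we have
\[
\FL(v,v')\le \frac12\sum_{u\in\{v,v'\}}\sum_{\emptyset\ne A\subset[n]}\abs{\sum_{\substack{y\in C^\perp\\\supp(y)=A}}e_p(-\langle u,y\rangle)\prod_{i\in A}\wh g_i(y_i)}^2 .
\]
It therefore suffices to bound, for each fixed shift $u$, the low-support part and the high-support part separately.

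\textbf{Low supports.} For $|A|\le m$ we forget the phases and write
\[
\sum_{|A|\le m}\Bigl(\sum_{\supp(y)=A}\prod_{i\in A}|\wh g_i(y_i)|\Bigr)^2\le\Bigl(\sum_{0\ne y\in C^\perp,\ |y|\le m}\prod_{i\in\supp(y)}|\wh g_i(y_i)|\Bigr)^2 .
\]
We then apply \cref{lem:wftl} with $w_i=|\wh g_i|$ and $t=m$. The hypotheses hold as follows.
\begin{itemize}
\item Parseval gives $\Gamma=1$.
\item \cref{cor:L-infty-bound} with $\rho_i\sim1/2$ gives $\gamma=2/\pi+o(1)$.
\item $2k-m\ge\delta k=\Omega(n)$, since $R\ge 1/2-\delta$ is bounded below.
\end{itemize}
This yields the bound $e^{o(n)}(2/\pi)^{2k}=e^{-\Omega(n)}$.

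\textbf{High supports.} For $|A|>m$, \cref{lem:high-support-energy} bounds each inner sum in absolute value by $2^{o(n)}q^{(3k-n)/2}$. The number of such $A$ is controlled by \cref{eq:high-support-count}. The total high-support contribution is therefore at most
\[
2^{[H_2(\frac52\delta-\delta^2)+o(1)]n}\cdot q^{3k-n}\le 2^{[H_2(\frac52\delta)+(\frac12-3\delta)\log_2 q+o(1)]n}.
\]
Here we used $3k-n\ge(1/2-3\delta)n$ and $q<1$. The constant $q$ from \cref{lem:u3-gap} is absolute, since in the balanced case $\tau$ is fixed near $1/2$ (any fixed $\tau$, e.g.\ $1/4$, works because $|\E g_i|=o(1)$). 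Since $H_2(\tfrac52\delta)\to0$ as $\delta\to0$ while $(\tfrac12-3\delta)\log_2 q\to\tfrac12\log_2q<0$, we can choose $\delta>0$ small enough that the exponent is a negative constant.

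\textbf{Remaining technicalities.} When $R>1/2$, \cref{cor:one-bit-above} already gives the result, so we may assume $R\le 1/2$, which is the regime where \cref{lem:high-support-energy} was proved. Adding the two parts for $u=v$ and $u=v'$ gives $\FL(v,v')\le e^{-\Omega(n)}$.

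\textbf{Main obstacle.} The only delicate step is balancing the entropy count of high supports against the $U^3$ decay $q^{3k-n}$. This works because $q$ does not depend on $\delta$, while the number of high supports is only $2^{O(\delta\log(1/\delta))n}$. Everything else is routine bookkeeping of the $o(1)$ errors, which is uniform since $p\ge n$ makes the $O(p^{-2})$ terms negligible.
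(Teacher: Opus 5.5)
Your proposal is correct and follows the paper's proof essentially step for step. The paper uses the same reduction to $R\in[1/2-\delta,1/2]$ via \cref{cor:one-bit-above}, the same cutoff $m=\lfloor(2-\delta)k\rfloor$ with the $|a-b|^2\le2|a|^2+2|b|^2$ split, \cref{lem:wftl} with $\gamma=2/\pi$, $\Gamma=1$, $t=m$ on low supports, and \cref{lem:high-support-energy} together with the entropy count \cref{eq:high-support-count} on high supports, with $\delta$ chosen so that $H_2(\tfrac52\delta-\delta^2)+(\tfrac12-3\delta)\log_2 q<0$.
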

\begin{proof}
For $R>1/2$, this is true by \cref{cor:one-bit-above}. We assume $R\in [1/2-\delta, 1/2]$, where we choose $\delta>0$
sufficiently small that
\begin{equation}
\label{eq:llr-below-condition}
H_2\left(\frac52\delta-\delta^2\right)
+\left(\frac12-3\delta\right)\log_2q<0.
\end{equation}
Such a choice exists because the left-hand side is strictly
negative at $\delta=0$.
Use the cutoff $m=\lfloor(2-\delta)k\rfloor$ from
\cref{eq:puncture-size}. Forgetting phases 
and using $\abs{a-b}^2\le2\abs a^2+2\abs b^2$, we have
\begin{equation}
\label{eq:llr-below-low-high}
\FL(v,v')\le 
\sum_{0<|A|\le m}
\left(
\sum_{\substack{y\in C^\perp\\\supp(y)=A}}
\prod_{i\in A}\abs{\wh g_i(y_i)}
\right)^2+
\frac12\sum_{u\in\{v,v'\}}\sum_{|A|>m}
\abs{
\sum_{\substack{y\in C^\perp\\\supp(y)=A}}
e_p(-\langle u,y\rangle)
\prod_{i\in A}\wh g_i(y_i)
}^2.
\end{equation}
We bound the two summands separately.
\begin{itemize}
\item For the first term, apply \cref{lem:wftl} with $w_i=|\wh{g_i}|$,
 $t=m$, and $\gamma =2/\pi$ via \cref{cor:L-infty-bound}, and $\Gamma=1$ via Parseval's identity. We check that $2k-m\sim \delta k\asymp n$, so
\begin{equation}
     \sum_{0<|A|\le m}
\left(
\sum_{\substack{y\in C^\perp\\\supp(y)=A}}
\prod_{i\in A}\abs{\wh g_i(y_i)}
\right)^2\le
\left(
\sum_{\substack{0\ne y\in C^\perp\\|y|\le m}}
\prod_{i\in\supp(y)}\abs{\wh g_i(y_i)}
\right)^2\le
n^{O(1)}\left(\frac{2}{\pi}\right)^{2k}
=e^{-\Omega(n)}.
\end{equation}
\item For the second term, we apply \cref{lem:high-support-energy} to each term and count the number of $A$ via \cref{eq:high-support-count}:
\begin{equation}
\begin{aligned}
\frac12\sum_{u\in\{v,v'\}}\sum_{|A|>m}
\abs{
\sum_{\substack{y\in C^\perp\\\supp(y)=A}}
e_p(-\langle u,y\rangle)
\prod_{i\in A}\wh g_i(y_i)
}^2 & \le |\{A\subset [n]:|A|>m\}| \cdot 2^{o(n)}q^{3k-n}
\\ & \le 2^{\left[
H_2\left(\frac52\delta-\delta^2\right)
+\left(\frac12-3\delta\right)\log_2q+o(1)
\right]n}
\\ & =2^{-\Omega(n)}.
\end{aligned}
\end{equation}
where the last inequality uses $R\ge1/2-\delta$, $\log_2q<0$, and \cref{eq:llr-below-condition}.
\end{itemize}
Combining the two estimates proves \cref{lem:balanced-llr-below}.
\end{proof}

\begin{remark}[Where was balance needed?]
We remark that balance is used only for the large-support estimate: it makes the centered functions nearly $1$-bounded while retaining a fixed $U^3$ gap. The low-support estimate already holds for arbitrary biases.
\end{remark}

By \cref{lem:restated-llr-proxy}, \cref{lem:balanced-llr-below} implies that
$\TV(L_v,L_{v'})\le e^{-\Omega(n)}$ uniformly over the translates $v,v'$.
Taking $v=s\one$ and $v'=s'\one$ proves the balanced case of
\cref{thm:main-llr}(2).
The arbitrary-bias extension is proved in
\cref{lem:arbitrary-llr-below}. We now collect the two parts of the
main theorem on LLR, formally completing the proof of \cref{thm:main-llr}.

\begin{proof}[Proof of \cref{thm:main-llr}]
Part~(1) is proved in \cref{subsec:llr-above} by combining
\cref{lem:mpsw-fixed-support} with \cref{thm:rs-lr}.
For part~(2), \cref{lem:arbitrary-llr-below} bounds
$\TV(L_v,L_{v'})$ exponentially for every one-bit leakage tuple.
Taking $v=s\one$ and $v'=s'\one$ proves the theorem.
\end{proof}

\section{Near-Optimal OPI Discrepancy}
\label{sec:opi}

In this section, we prove \cref{thm:main-opi,thm:random-opi}. We first re-state the former for the reader's convenience.

\mainopi*

We prove \cref{thm:main-opi} in its three regimes. Above rate 1/2, weighted Fourier thresholding controls the full positive proxy. Near rate 1/2, the balanced proof uses the same low-support and high-support split as \cref{subsec:llr-below}. The regime where $R > 1-\rho$ is treated separately in \cref{subsec:opi-dense}.

\subsection{OPI Above and Near Rate 1/2}
\label{subsec:opi-above}
In this section, we prove the first two cases of \cref{thm:main-opi}, similar to the corresponding proofs of one-bit LLR in \cref{sec:llr}.
For the first conclusion, the Fourier weights in \cref{eq:lambda-EO+} have bounded squared $\ell^2$ mass and a pointwise bound strictly below one whenever the densities have a fixed positive lower bound, so \cref{lem:wftl} immediately gives exponential decay, much like the proof of \cref{cor:one-bit-above}.
\begin{lemma}
\label{lem:positive-opi-above}
For every $[n,k]$ generalized Reed--Solomon code $C$ of fixed rate
$R>1/2$ over a prime field $\F_p$ with $p\ge n$, and every OPI
instance $\mbf S$ of fixed density $\rho\in(0,1)$,
\[
\FO^+(C;\mbf S)
\le n^{O(1)}
\left(\frac{\sin(\pi\rho)}{\pi\rho}\right)^k
=e^{-\Omega(n)}.
\]
\end{lemma}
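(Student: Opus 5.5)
We apply \cref{lem:wftl} directly to the positive proxy $\FO^+$ from \cref{eq:lambda-EO+}, with weights $w_i(\xi):=\lambda_i|\wh g_i(\xi)|$ for $\xi\ne0$. The statement says ``fixed density $\rho$''. We first treat $|S_i|=\rho p$ exactly, so $\rho_i=\rho$ for all $i$. Denser or heterogeneous lists are then handled by \cref{prop:subsample}, which transfers a uniform relative discrepancy bound from lists of size $\rho p$ to all lists with $|S_i|\ge\rho p$. With $\rho_i=\rho$, the weights are just $w_i(\xi)=|\wh{\one_{S_i}}(\xi)|/\rho$, as noted after \cref{cor:soft-opi-proxy}. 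This is the natural normalization to check.

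Next we verify the two hypotheses of \cref{lem:wftl}. For the pointwise bound, \cref{lem:indicator-L-infty-bound} gives, for $\xi\ne0$,
\[
w_i(\xi)=\frac{|\wh{\one_{S_i}}(\xi)|}{\rho}\le\frac{\sin(\pi\rho)}{\pi\rho}+O(p^{-2}).
\]
So we take $\gamma:=\sin(\pi\rho)/(\pi\rho)$, which is strictly less than $1$ because $\sin x<x$ for $x>0$. The $O(p^{-2})$ error is $o(1)$ uniformly in $i$ since $p\ge n\to\infty$. For the $L^2$ bound, Parseval gives
\[
\sum_{\xi\ne0}|\wh{\one_{S_i}}(\xi)|^2=\rho-\rho^2,
\]
and hence $\|w_i\|_2^2=(1-\rho)/\rho=:\Gamma$, a constant depending only on $\rho$.

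Then we invoke \cref{lem:wftl} with $t=n$. Its hypothesis $2k-t=\Omega(n)$ becomes $2k-n\ge(2R-1)n$, which is linear in $n$ exactly because $R>1/2$; this is the only place the rate assumption enters, as discussed in \cref{rem:wftl-param}. The lemma bounds the full sum over all nonzero $y\in C^\perp$, which is precisely $\FO^+$. Tracing the proof of \cref{lem:wftl}, the list-recovery part contributes $L\gamma^{k+1}=n^{O(1)}\gamma^k$, and the remaining part contributes at most $\gamma^n\le\gamma^k$. The $\lambda^n=e^{o(n)}$ rescaling absorbs the $O(p^{-2})$ slack: since $(1+O(p^{-2}))^n=1+O(n/p^2)=O(1)$ for $p\ge n$, it actually costs only a constant factor. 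Together this gives $\FO^+\le n^{O(1)}(\sin(\pi\rho)/(\pi\rho))^k$, and this is $e^{-\Omega(n)}$ because $\gamma<1$ and $k=Rn$.

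There is no real obstacle here. The only points needing care are that the heavy lists in \cref{lem:wftl} have size at most $1+\Gamma\theta^{-2}$, a constant depending on $\rho$, so \cref{thm:rs-lr} applies with constant list size; and that the $o(1)$ errors are handled well enough to keep the clean $n^{O(1)}$ prefactor rather than $e^{o(n)}$.
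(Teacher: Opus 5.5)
Your proposal is correct and is essentially the paper's proof: both apply \cref{lem:wftl} with $t=n$ to $w_i=\lambda|\wh g_i|=|\wh{\one_{S_i}}|/\rho$, taking $\gamma=\sin(\pi\rho)/(\pi\rho)<1$ from the arc bound and $\Gamma=(1-\rho)/\rho$ from Parseval, and using $2k-n\ge(2R-1)n$ to get the required gap. One small quibble concerns your aside on heterogeneous lists: \cref{prop:subsample} transfers bounds on the discrepancy $\FO$, not on $\FO^+$ itself, but that aside is not needed for the stated common-density lemma.
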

\begin{proof}
By Parseval and \cref{cor:L-infty-bound}, with
$\lambda=\sqrt{(1-\rho)/\rho}$,
\[
\sum_{\xi\ne0}\left(\lambda|\wh g_i(\xi)|\right)^2
=\frac{1-\rho}{\rho},
\qquad
\lambda|\wh g_i(\xi)|
\le\frac{\sin(\pi\rho)}{\pi\rho}+O_\rho(p^{-2}).
\]
Since $\sin(\pi\rho)/(\pi\rho)<1$, apply \cref{lem:wftl}
to $w_i(\xi)=\lambda|\wh g_i(\xi)|$ with $t=n$,
$\Gamma=(1-\rho)/\rho$, and
$\gamma=\sin(\pi\rho)/(\pi\rho)$. We obtain
\[
\FO^+
=\sum_{0\ne y\in C^\perp}
\prod_{i\in\supp(y)}\lambda|\wh g_i(y_i)|
\le n^{O(1)}
\left(\frac{\sin(\pi\rho)}{\pi\rho}\right)^k
=e^{-\Omega(n)}.\qedhere
\]
\end{proof}
By \cref{lem:translated-opi-proxy}, $\FO(v)\le\FO^+$ uniformly
over translates. This proves \cref{thm:main-opi}(1).
The same bound for heterogeneous lists with a fixed positive lower
density follows from \cref{prop:subsample}.

Next, we prove the balanced case of \cref{thm:main-opi}(2). As in \cref{subsec:llr-below}, we split the support $A$ of codewords $y \in C^\perp$  at cutoff $m\sim (2-\delta)k$, use \cref{lem:wftl} to control the small-support case, and use \cref{lem:u3-gap,lem:block-u3} to control the large-support case.

\begin{lemma}
\label{lem:balanced-opi-below}
There exists $\delta>0$ such that for every $[n,k]$
generalized Reed--Solomon code $C$ of fixed rate $R\ge1/2-\delta$
over a prime field $\F_p$ with $p\ge n$, every balanced OPI instance
(that is, where $\rho_i=\rho=1/2+o(1)$),
and every $v\in\F_p^n$,
\[
\FO(v)\le e^{-\Omega(n)}.
\]
\end{lemma}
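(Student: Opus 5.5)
The plan mirrors the proof of \cref{lem:balanced-llr-below}, now working with the exact OPI proxy from \cref{lem:translated-opi-proxy}. If $R>1/2$ we are done by \cref{lem:positive-opi-above}, so assume $R\in[1/2-\delta,1/2]$. Write
\[
\FO(v)\le \sum_{\substack{0\ne y\in C^\perp\\|y|\le m}}\prod_{i\in\supp(y)}\lambda_i|\wh g_i(y_i)| + \sum_{|A|>m}\lambda^{|A|}\abs{\sum_{\substack{y\in C^\perp\\\supp(y)=A}}e_p(-\langle v,y\rangle)\prod_{i\in A}\wh g_i(y_i)},
\]
splitting by support size at the cutoff $m=\lfloor(2-\delta)k\rfloor$ of \cref{eq:puncture-size}. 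In the first sum we drop phases; in the second we keep them inside each fixed support $A$. Since $\rho_i=1/2+o(1)$, we have $\lambda_i=1+o(1)$ uniformly, so $\lambda^{|A|}\le 2^{o(n)}$.

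\textbf{Low supports.} Apply \cref{lem:wftl} with $w_i=\lambda_i|\wh g_i|$ and $t=m$. By \cref{cor:L-infty-bound}, $\gamma=2/\pi+o(1)$, and by Parseval, $\Gamma=1+o(1)$. Since $2k-m\ge\delta k-1=\Omega(n)$, the lemma gives a bound of $n^{O(1)}(2/\pi)^k e^{o(n)}=e^{-\Omega(n)}$.

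\textbf{High supports.} By Fourier inversion, each inner sum equals $\E_{x\in v+C}\prod_{i\in A}g_i(x_i)$, exactly as in \cref{eq:llr-code-moment-expansion}. By \cref{lem:high-support-energy}, whose proof uses only the balanced shape of $g_i$, this is at most $2^{o(n)}q^{(3k-n)/2}$. Summing over the at most $2^{[H_2(\frac52\delta-\delta^2)+o(1)]n}$ supports counted in \cref{eq:high-support-count}, and using $3k-n\ge(1/2-3\delta)n$, the total is
\[
2^{\left[H_2\left(\frac52\delta-\delta^2\right)+\frac12\left(\frac12-3\delta\right)\log_2 q+o(1)\right]n}.
\]

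\textbf{Main obstacle.} Compared with the LLR case, we lose the square: $\FL$ sums squared support contributions, whereas $\FO$ sums them linearly. So the exponent of $q$ is halved, and $\delta$ must now satisfy $H_2(\frac52\delta-\delta^2)+\frac12(\frac12-3\delta)\log_2 q<0$ instead of \cref{eq:llr-below-condition}. This is still possible for small $\delta$, because the left-hand side tends to $\frac14\log_2 q<0$ as $\delta\to0$. We must also check that the $2^{o(n)}$ factors from $\lambda_i^{|A|}$ and from the $o(1)$ deviations of $|g_i|$ from $1$ are uniform and absorbed, which holds because $\rho_i=1/2+o(1)$ uniformly. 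Combining the two estimates gives $\FO(v)\le e^{-\Omega(n)}$ uniformly in $v$.
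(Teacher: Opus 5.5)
Your proposal is correct and follows the paper's proof essentially step for step. It uses the same cutoff $m=\lfloor(2-\delta)k\rfloor$, the same low/high support split, \cref{lem:wftl} for low supports, and \cref{lem:high-support-energy} together with the count \cref{eq:high-support-count} for high supports. It arrives at exactly the paper's condition $H_2(\frac52\delta-\delta^2)+(\frac14-\frac32\delta)\log_2 q<0$, with the exponent halved because the square is missing, just as you note. The only cosmetic difference is that the paper pulls the $\lambda^{|y|}=2^{o(n)}$ factor out before applying \cref{lem:wftl}, rather than folding it into $w_i$.
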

\begin{proof}
When $R>1/2$, the statement follows from \cref{lem:positive-opi-above}, so we assume $R \leq 1/2$. Otherwise, fix the absolute constant $q<1$ from \cref{lem:high-support-energy}, choose $\delta>0$ sufficiently small, take $R\in[1/2-\delta,1/2]$, and use the same cutoff
$m=\lfloor(2-\delta)k\rfloor$ from \cref{eq:puncture-size}.
Choose $\delta$ so that
\begin{equation}
\label{eq:opi-below-condition}
H_2\left(\frac52\delta-\delta^2\right)
+\left(\frac14-\frac32\delta\right)\log_2q<0.
\end{equation}
Such a choice exists because the left-hand side is strictly negative at $\delta=0$.
Write $\rho_i=\rho=1/2+o(1)$ and $\lambda=\sqrt{(1-\rho)/\rho}$. For every support $A$, $\prod_{i\in A}\lambda_i=\lambda^{|A|}=2^{o(n)}$
uniformly in $A$, so we may ignore the $\lambda$'s. Applying \cref{eq:translated-EO} and taking the triangle inequality over supports gives
\begin{equation}
\label{eq:opi-below-low-high}
\FO(v)\le
2^{o(n)}
\sum_{\substack{0\ne y\in C^\perp\\|y|\le m}}
\prod_{i\in\supp(y)}\abs{\wh g_i(y_i)}
+2^{o(n)}
\sum_{\substack{A\subset[n]\\|A|>m}}
\abs{\sum_{\substack{y\in C^\perp\\\supp(y)=A}}
e_p(-\langle v,y\rangle)\prod_{i\in A}\wh g_i(y_i)}.
\end{equation}
We bound the two summands separately.
\begin{itemize}
\item For $|y|\le m$, we apply \cref{lem:wftl} exactly as in the small-support part of the proof of \cref{lem:balanced-llr-below}:
\[
\sum_{\substack{0\ne y\in C^\perp\\|y|\le m}}
\prod_{i\in\supp(y)}\abs{\wh g_i(y_i)}
\le n^{O(1)}\left(\frac2\pi\right)^k=e^{-\Omega(n)}.
\]
\item For $|A|>m$, we apply \cref{lem:high-support-energy} to each term and count supports using \cref{eq:high-support-count}:
\begin{equation}
\begin{aligned}
\sum_{\substack{A\subset[n]\\|A|>m}}
\abs{\sum_{\substack{y\in C^\perp\\\supp(y)=A}}
e_p(-\langle v,y\rangle)\prod_{i\in A}\wh g_i(y_i)}
&\le
2^{\left[H_2\left(\frac52\delta-\delta^2\right)+\frac{3R-1}{2}\log_2q+o(1)\right]n}
\\ &\le
2^{\left[H_2\left(\frac52\delta-\delta^2\right)+\left(\frac14-\frac32\delta\right)\log_2q+o(1)\right]n}
\\ & =2^{-\Omega(n)}.
\end{aligned}
\end{equation}
The last inequality uses $R\ge1/2-\delta$, $\log_2q<0$,
and \cref{eq:opi-below-condition}.
\end{itemize}
Combining the two estimates proves \cref{lem:balanced-opi-below}upon multiplying by $2^{o(n)}$.
\end{proof}
By \cref{lem:translated-opi-proxy}, this proves the balanced case of \cref{thm:main-opi}(2).
The extension to fixed density $\rho\ge1/2-\delta$, after decreasing the constant $\delta$ if necessary, is deferred to \cref{sec:app-opi}.
\subsection{Recap of the Satisfaction Expansion in \cite{us}}
\label{subsec:expansion-recap}

In this section, we recall the framework in \cite{us} to analyze the satisfaction ratio obtained by
measuring the DQI state $\ket{\varphi_\ell}$ from
\cref{eq:dqi-u}. These tools will help us prove \cref{thm:main-opi}(3) and the algorithmic results in \cref{sec:algo}.
To set up\footnote{Note that \cite{us} handles general $[m, n]$ MDS codes and defines functions on the message space $\F_p^k$. We specialize to $[n, k]$ GRS codes for simplicity, but define our functions on the code space to handle LLR and OPI. More precisely, our functions are defined
on the ambient space containing $C$, and we identify the message space with $C$.  We have translated the framework of \cite{us} to this setting in our exposition.}, fix an $[n,k]$ GRS code $C\subset\F_p^n$
and input lists of common 
density $\rho=|S_i|/p\in(0,1)$. 
Recall the centered functions $g_i$ and the $j$-wise discrepancies
from the overview:
\begin{equation}
\label{eq:sw-layer-definition}
q_j(x):=\sum_{A\in\binom{[n]}j}\prod_{i\in A}g_i(x_i),
\qquad
\beta_\rho:=\frac{1-2\rho}{\sqrt{\rho(1-\rho)}}.
\end{equation}
We begin with the key Fourier lemma about discrepancies.
\begin{lemma}[{\cite[Lemma~4.1]{us}}]
\label{lem:key-fourier}
For any $0\le t\le n$ and every $x\in\F_p^n$
\begin{equation}
\label{eq:key-fourier}
\E_{x\in C}[q_t(x)]
=
\sum_{\substack{y\in C^\perp\\|y|=t}}
\prod_{i\in\supp(y)}\wh g_i(y_i).
\end{equation}
\end{lemma}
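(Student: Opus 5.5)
The plan is to expand $q_t$ as a sum over $t$-subsets and use linearity of expectation. This reduces \cref{eq:key-fourier} to a single-support identity, which is the same Fourier inversion on $C$ already used in \cref{eq:llr-code-moment-expansion} (with $v=0$).

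First, fix $A\in\binom{[n]}{t}$ and define the product function $G_A(x):=\prod_{i\in A}g_i(x_i)$ on $\F_p^n$. Under our conventions its Fourier transform factorizes coordinatewise: for $i\in A$ the factor is $\wh g_i(y_i)$, and for $i\notin A$ it is $\wh{\one}(y_i)=\one\{y_i=0\}$.

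Next, use the Poisson summation identity $\E_{x\in C}f(x)=\sum_{y\in C^\perp}\wh f(y)$. This follows by writing $f(x)=\sum_y \wh f(y)e_p(-\langle x,y\rangle)$ and noting that $\E_{x\in C}e_p(-\langle x,y\rangle)=\one\{y\in C^\perp\}$. Applied to $G_A$, it gives
\[
\E_{x\in C}G_A(x)=\sum_{\substack{y\in C^\perp\\ \supp(y)\subset A}}\prod_{i\in A}\wh g_i(y_i).
\]

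The key observation is that $\wh g_i(0)=\E g_i=0$, since $g_i$ is centered. So any $y$ with $y_i=0$ for some $i\in A$ contributes zero, and the sum restricts to $y\in C^\perp$ with $\supp(y)=A$ exactly.

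Finally, sum over $A\in\binom{[n]}{t}$. The sets $\{y\in C^\perp:\supp(y)=A\}$ partition $\{y\in C^\perp:|y|=t\}$, which gives \cref{eq:key-fourier}. The case $t=0$ is consistent: $q_0\equiv1$, and the only contribution is $y=0$, with the empty product equal to $1$.

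I expect no real obstacle here. The only points needing care are the sign convention in $e_p$, which must match the inversion formula so that orthogonality over $C$ picks out $C^\perp$, and the use of centering $\wh g_i(0)=0$ to pass from $\supp(y)\subset A$ to $\supp(y)=A$.
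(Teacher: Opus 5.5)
Your proof is correct and is the argument the paper relies on. The paper cites this lemma from \cite{us} rather than proving it, but it derives the same single-support identity in \cref{eq:llr-code-moment-expansion} using Fourier inversion on $v+C$ and the centering $\wh g_i(0)=0$; taking $v=0$ and summing over $A\in\binom{[n]}{t}$, as you do, gives the lemma.
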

In particular, the expectation $\E_{x \in C}[q_t(x)]$ vanishes for $1\le t\le k$, as $C^\perp$ has distance $k+1$. The expectation  of $q_0$ is $\E_{x \in C}[q_0(x)] = 1$.
Write $s(x):=n^{-1}\sum_i\one_{S_i}(x_i)$ for the satisfaction ratio, and for $\ell \geq 0$ and $a \in \mathbb{R}^{\ell+1}$, define the quantum state
\begin{equation}\label{eq:general-u}
    \ket{\varphi_\ell(a)}\propto \sum_{x\in C}\sum_{j=0}^{\ell}a_jq_j(x)\ket{x}.
\end{equation}
Observe that $s(x)=\rho+\sqrt{\rho(1-\rho)} q_1(x)/n$ and that sampling from $\ket{\varphi_\ell(a)}$
produces $x \in C$ with probability proportional to $|\sum_{j}a_jq_j(x)|^2$. Therefore,  the expected satisfaction ratio is
\begin{equation}
\label{eq:sw-score-ratio}
\rho+\frac{\sqrt{\rho(1-\rho)}}n
\frac{\sum_{j,j'=0}^{\ell}a_ja_{j'}
\E_{x\in C}[q_1q_jq_{j'}]}
{\sum_{j,j'=0}^{\ell}a_ja_{j'}
\E_{x\in C}[q_jq_{j'}]}
\end{equation}
when we sample from $\ket{\varphi_\ell(a)}$.
 This is the main expansion in \cite[(4.3), (4.19)]{us}. Unlike \cite{us}, we state it with general variables $a_j$ (rather than pre-specified weights) to allow the flexibility of different choices in \cref{subsec:opi-dense} versus \cref{sec:algo}. Namely, the notation $u_j$ remains reserved for the fixed DQI weights in \cref{eq:dqi-u}; taking $a_j=u_j$ gives $F_\ell$, while taking $a_j=\rho^{(n-j)/2}(1-\rho)^{j/2}$ gives $H_\ell$ in \cref{eq:def-H-ell}.

Thus, to use \cref{lem:key-fourier}, we need to
expand the products $q_jq_{j'}$ and $q_1q_jq_{j'}$ back into the
functions $q_t$. To do so, we note that, because $g_i$ takes only two values,
\cite[Lemma~4.2]{us} gives
\begin{equation}
\label{eq:sw-lin}
g_i^2=1+\beta_\rho g_i.
\end{equation}
This enables the following expansion of $q_jq_{j'}$ by considering repeated indices and counting.
\begin{lemma}[{\cite[Proposition~4.3]{us}}]
\label{lem:layer-recursion}
For $0\le j,j'\le n$ and any $x\in \F_p^n$,
\begin{equation}
q_jq_{j'}=\sum_{t=0}^nN_\rho(j,j';t)q_t
\quad\text{and}\quad
q_1q_j =(j+1)q_{j+1}+(n-j+1)q_{j-1}
+\beta_\rho j q_j,
\end{equation}
where $ N_\rho(j,j';t)$ and its upper bound $ \widetilde{N_\rho}(j,j';t)\ge |N_\rho(j, j';t)|$ are defined as follows:
\begin{equation}
\label{eq:def-N}
\begin{aligned}
 N_\rho(j,j';t)&:=[X^jY^{j'}]
(1+XY)^{n-t}(X+Y+\beta_\rho XY)^t\\
\widetilde{N_\rho}(j,j';t)&:= [X^jY^{j'}]
(1+XY)^{n-t}(X+Y+|\beta_\rho| XY)^t.\end{aligned}
\end{equation}
\end{lemma}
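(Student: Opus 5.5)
The plan is to expand the product $q_jq_{j'}$ directly, using the two-valuedness identity \cref{eq:sw-lin}, and then read off the coefficients combinatorially. Write $q_j(x)=\sum_{|A|=j}g_A(x)$ with $g_A:=\prod_{i\in A}g_i(x_i)$. Then
\[
q_jq_{j'}=\sum_{|A|=j,\ |B|=j'} g_A\,g_B .
\]
For a fixed pair $(A,B)$, each coordinate $i\in A\cap B$ contributes $g_i^2=1+\beta_\rho g_i$. Coordinates in $A\triangle B$ contribute $g_i$, and all other coordinates contribute $1$. So
\[
g_Ag_B=g_{A\triangle B}\prod_{i\in A\cap B}(1+\beta_\rho g_i)=\sum_{D\subset A\cap B}\beta_\rho^{|D|}\,g_{(A\triangle B)\cup D}.
\]

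Next I will regroup by the final support $T=(A\triangle B)\cup D$ with $|T|=t$. By symmetry over coordinates, the coefficient of each $g_T$ depends only on $t$. Summing over all $T$ of size $t$ therefore gives a multiple of $q_t$, and the task is to count, for a fixed $T$, the weighted number of triples $(A,B,D)$. Each coordinate $i\notin T$ must be either in neither set or in $A\cap B\setminus D$; this gives a factor $(1+XY)$, where $X$ marks membership in $A$ and $Y$ marks membership in $B$. Each coordinate $i\in T$ must be in $A$ only, in $B$ only, or in $D\subset A\cap B$ with weight $\beta_\rho$; this gives a factor $(X+Y+\beta_\rho XY)$. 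Extracting $[X^jY^{j'}]$ from the product of these factors yields exactly $N_\rho(j,j';t)$ as defined in \cref{eq:def-N}. The bound $|N_\rho|\le\widetilde{N_\rho}$ then follows from the triangle inequality applied coefficientwise to the polynomial expansion, replacing $\beta_\rho$ by $|\beta_\rho|$.

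For the second identity I will specialize to $j=1$, where the product $(1+XY)^{n-t}(X+Y+\beta_\rho XY)^t$ only needs its $X^1$ part. I will take the $X$-linear terms: either one factor from $T$ contributes $X$ while the remaining $t-1$ factors contribute $Y$, or one factor contributes $XY$ with the rest giving $Y^{j-1}$.
\begin{itemize}
\item The case $t=j+1$, with $X$ from one $T$-factor and $Y$ from the others, gives coefficient $t=j+1$.
\item The case $t=j-1$, with $XY$ coming from one of the $n-t$ outer factors, gives $n-j+1$.
\item The case $t=j$, with $\beta_\rho XY$ from one of the $t$ factors in $T$, gives $\beta_\rho j$.
\end{itemize}
The only delicate step is this bookkeeping in the $j=1$ specialization. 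The approach is routine, so I expect no real obstacle.
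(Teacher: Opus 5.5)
Your proposal is correct and takes the same approach the paper indicates (the paper defers to \cite{us}, describing the proof as ``considering repeated indices and counting''): you expand $q_jq_{j'}$ over pairs $(A,B)$, linearize each repeated index via $g_i^2=1+\beta_\rho g_i$, and count per-coordinate states with the factors $(1+XY)$ off $T$ and $(X+Y+\beta_\rho XY)$ on $T$. Your three-case extraction of $[X^1Y^j]$ correctly produces the coefficients $j+1$, $\beta_\rho j$, and $n-j+1$, and the bound $|N_\rho|\le\widetilde{N_\rho}$ follows because $N_\rho$ is a polynomial in $\beta_\rho$ with nonnegative coefficients.
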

Here, for $P\in \R[X, Y]$, let $[X^jY^{j'}]P(X, Y)$ denote the coefficient of the $X^jY^{j'}$ term.
Now,
$N_\rho(j,j';0)=\binom nj\one\{j=j'\}$, and
$N_\rho(j,j';t)=0$ for $t>j+j'$. Since $\E_{\F_p^n}q_t=0$ for $t>0$, this also gives
\begin{equation}
\label{eq:sw-local-and-orthogonality}
\E_{x\in\F_p^n}[q_jq_{j'}]=\binom nj\one\{j=j'\}.
\end{equation}

Hence, once we expand \cref{eq:sw-score-ratio} by the recurrence in \cref{lem:layer-recursion},
we see that the numerator and denominator of  involve $q_t$ for $t\le 2\ell+1$ only. 
We now specialize the DQI state with weights $u_j$ chosen as in \cref{eq:dqi-u}: pick $\sigma = \lfloor\log\log n\rfloor$ and let
\begin{equation} 
\label{eq:dqi-state}
\ket{\varphi_\ell}\propto \sum_{x\in C}F_\ell(x)\ket{x}
    \quad\text{where}\quad
F_\ell(x):=
\sum_{j=\ell-\sigma}^{\ell}
\frac{q_j(x)}{\sqrt{\binom nj}}.
\end{equation}
The following main theorem of \cite{us} states a sufficient condition for this DQI state $\ket{\varphi_\ell}$ to attain a satisfaction ratio of $\on{SCL}_\rho(\ell/n)$  asymptotically. This sufficient condition is especially convenient for us to check since it is very similar to the condition that $\FO^+$ is exponentially small, which we used to prove near-optimal OPI discrepancy in \cref{subsec:opi-above}.

\begin{theorem}[{\cite[Proposition 4.7 and (4.23)]{us}}]
\label{thm:sw-main}
Fix $\alpha\in (0, 1-\rho]$ and let $\ell =\lfloor \alpha n\rfloor$.
Suppose that, for some fixed $c>0$,
\begin{equation}
\label{eq:key-decay}
\frac{\widetilde N_\rho(j,j';t)}
{\sqrt{\binom nj\binom n{j'}}}
\abs{\E_{x\in C}q_t(x)}
\le e^{-cn}
\end{equation}
uniformly for $k<t\le n$ and
$\ell-\sigma-1\le j,j'\le\ell+1$. Then, for some $c'>0$,
\begin{equation}
\label{eq:sw-main}
\begin{aligned}
\E_{x\in C}\left[F_\ell(x)^2\right]
&=\sigma+1+O(e^{-c'n}),\\
\E_{x\in C}[s(x)F_\ell(x)^2]
&=(\sigma+1)\left[
\SCL_\rho(\alpha)
+O_{\rho,\alpha}\left(\frac{\sigma}{n}+\frac1\sigma\right)
\right]+O(e^{-c'n}).
\end{aligned}
\end{equation}
Consequently, measuring the normalized state $\ket{\varphi_\ell}$ gives expected satisfaction ratio $\SCL_\rho(\alpha)+o(1)$.
\end{theorem}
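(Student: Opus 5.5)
The argument follows the route of \cite{us}, organized around the normalized moments of $F_\ell$. The first step is to expand $F_\ell^2$ and $q_1F_\ell^2$ into the functions $q_t$ using \cref{lem:layer-recursion}. This gives
\[
\E_{x\in C}[F_\ell^2]=\sum_{j,j'}\frac{1}{\sqrt{\binom nj\binom n{j'}}}\sum_{t=0}^{j+j'}N_\rho(j,j';t)\,\E_{x\in C}[q_t],
\]
and similarly for $\E_{x\in C}[q_1F_\ell^2]$. In the second case we first rewrite $q_1q_j$ as a combination of $q_{j\pm1}$ and $q_j$, so the indices range over $\ell-\sigma-1\le j\le \ell+1$. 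Next we split the sum over $t$ into three ranges using \cref{lem:key-fourier}. The $t=0$ term contributes $\binom nj\one\{j=j'\}$. The terms with $1\le t\le k$ vanish, because $C^\perp$ has distance $k+1$. The terms with $t>k$ are error terms. Each of them is bounded in absolute value by $\widetilde N_\rho|\E_C q_t|/\sqrt{\binom nj\binom n{j'}}\le e^{-cn}$, using $|N_\rho|\le\widetilde N_\rho$ and the hypothesis \cref{eq:key-decay}. There are only $O(\sigma^2 n)$ such terms, so their total contribution is $O(e^{-c'n})$ for any $c'<c$.

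After removing the errors, both expectations equal their values under the uniform distribution on $\F_p^n$, by \cref{eq:sw-local-and-orthogonality}. The second moment becomes $\sum_{j}1=\sigma+1$. For the first moment, substituting $q_1q_j=(j+1)q_{j+1}+(n-j+1)q_{j-1}+\beta_\rho jq_j$ and using orthogonality leaves a sum of diagonal and nearest-neighbour terms. Normalizing by the binomials, these are
\[
\frac{\beta_\rho j}{n},\qquad \frac{(j+1)}{n}\sqrt{\binom n{j+1}\Big/\binom nj}=\frac{\sqrt{(j+1)(n-j)}}{n}.
\]
For $j=\ell+O(\sigma)$ they approximate $\beta_\rho\alpha$ and $\sqrt{\alpha(1-\alpha)}$, with error $O(\sigma/n)$. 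The off-diagonal pairs appear $\sigma$ times, with boundary loss $O(1)$ out of $\sigma+1$ terms, which produces the $O(1/\sigma)$ error. Substituting into $s=\rho+\sqrt{\rho(1-\rho)}q_1/n$ gives
\[
\rho+\sqrt{\rho(1-\rho)}\bigl(\beta_\rho\alpha+2\sqrt{\alpha(1-\alpha)}\bigr)=\rho+(1-2\rho)\alpha+2\sqrt{\rho(1-\rho)\alpha(1-\alpha)},
\]
which equals $\SCL_\rho(\alpha)$ for $\alpha\le1-\rho$. Dividing the first moment by the second gives the stated consequence.

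The main obstacle is the bookkeeping in the second step. We must check that the index set of the decay hypothesis, $\ell-\sigma-1\le j,j'\le\ell+1$ and $k<t\le n$, covers every term produced by the shifts $j\pm1$. We must also check that the polynomial number of error terms does not swamp the $e^{-cn}$ decay. The first point holds by construction of the hypothesis, and the second holds because $\sigma=\lfloor\log\log n\rfloor$.
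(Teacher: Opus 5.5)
Your proposal is correct and matches the argument the paper relies on: the paper does not reprove this statement but imports it from \cite{us}, and the machinery it recaps in \cref{subsec:expansion-recap} is exactly your route. That route is to expand via \cref{lem:layer-recursion}, kill the layers $1\le t\le k$ using \cref{lem:key-fourier} and the dual distance, bound the layers $t>k$ by \cref{eq:key-decay}, and evaluate the main terms by \cref{eq:sw-local-and-orthogonality} and the three-term recurrence. The one piece of bookkeeping you leave implicit is harmless: when the $q_1F_\ell^2$ error terms are renormalized to match the hypothesis, they pick up prefactors such as $(j+1)\sqrt{\binom n{j+1}/\binom nj}=\sqrt{(j+1)(n-j)}\le n$, and these are cancelled by the $1/n$ in $s(x)$.
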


\subsection{OPI with Rate $R > 1 - \rho$}
\label{subsec:opi-dense}
In this section, we leverage the framework in \cref{subsec:expansion-recap} to prove \cref{thm:main-opi}(3). Fix density $\rho\in(0,1)$, so $|S_i|=\rho p$ for all $i$. In \cref{eq:def-N}, the generating function of
$\widetilde N_\rho(j,j';t)$ has nonnegative coefficients, so for all $z>0$,
\begin{equation}
\label{eq:N-bound}
\widetilde{N}_\rho(j,j';t)
\le z^{-j-j'}(1+z^2)^{n-t}
(2z+|\beta_\rho|z^2)^t.
\end{equation}
We begin with the main Fourier proxy bound in the capacity regime.
\begin{lemma}
\label{lem:N-proxy-bound}
If $t\in [n]$ satisfies $2k-t=\Omega(n)$, then
uniformly over the input lists,
\begin{equation}
\label{eq:N-proxy-bound}
\sum_{\substack{0\ne y\in C^\perp\\|y|\le t}}
b_\rho^{|y|}
\prod_{i\in\supp(y)}|\wh g_i(y_i)|
\le e^{-\Omega(n)} \quad\text{where}\quad b_\rho:=\sqrt{\frac {1-\rho}{\rho}} (2\rho+|1-2\rho|)
\end{equation}
\end{lemma}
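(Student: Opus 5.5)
This is a direct application of \cref{lem:wftl} with the weights
\[
w_i(\xi):=b_\rho\,|\wh g_i(\xi)|,\qquad \xi\ne0 .
\]
With this choice the left-hand side of \cref{eq:N-proxy-bound} is exactly $\sum_{0\ne y\in C^\perp,\,|y|\le t}\prod_{i\in\supp(y)}w_i(y_i)$, because $b_\rho^{|y|}=\prod_{i\in\supp(y)}b_\rho$. It therefore suffices to verify the two hypotheses of \cref{lem:wftl}, with constants depending only on $\rho$.

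\textbf{The $L^2$ bound.} Since $\|g_i\|_2=1$ and $\wh g_i(0)=0$, Parseval gives $\sum_{\xi\ne0}w_i(\xi)^2=b_\rho^2$. So we can take $\Gamma:=b_\rho^2$, which is a finite constant.

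\textbf{The $L^\infty$ bound.} By \cref{cor:L-infty-bound}, uniformly in $i$ and $\xi\ne0$,
\[
w_i(\xi)\le b_\rho\gamma(\rho)+O(p^{-2})=\frac{\sin(\pi\rho)}{\pi\rho}\,(2\rho+|1-2\rho|)+o(1),
\]
using $b_\rho\gamma(\rho)=\sqrt{(1-\rho)/\rho}\cdot\sin(\pi\rho)/(\pi\sqrt{\rho(1-\rho)})\cdot(2\rho+|1-2\rho|)$. Set $\gamma_0:=\frac{\sin(\pi\rho)}{\pi\rho}(2\rho+|1-2\rho|)$. The main work is to show $\gamma_0<1$ for every fixed $\rho\in(0,1)$, and I expect this elementary inequality to be the only real obstacle. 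Split at $\rho=1/2$:
\begin{itemize}
\item[(a)] For $\rho\le1/2$ we have $2\rho+|1-2\rho|=1$, so $\gamma_0=\sin(\pi\rho)/(\pi\rho)<1$ because $\sin x<x$ for $x>0$.
\item[(b)] For $\rho\ge1/2$ we have $2\rho+|1-2\rho|=4\rho-1$, so $\gamma_0=\sin(\pi\rho)(4\rho-1)/(\pi\rho)$. Substitute $u:=1-\rho\in(0,1/2]$ and use $\sin(\pi\rho)=\sin(\pi u)\le\pi u$. This gives $\gamma_0\le u(3-4u)/(1-u)$. The inequality $u(3-4u)<1-u$ is equivalent to $4u^2-4u+1>0$, i.e. $(2u-1)^2>0$, which holds for $u\ne1/2$.
\item[(c)] At $\rho=1/2$, case (a) applies directly and gives $\gamma_0=2/\pi<1$.
\end{itemize}
Hence $\gamma_0<1$ throughout. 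Any equality case in the crude bound of (b) is already covered by the strict inequality $\sin x<x$.

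\textbf{Conclusion.} \cref{lem:wftl}, applied with $\gamma:=\gamma_0$, $\Gamma:=b_\rho^2$ and the given $t$ satisfying $2k-t=\Omega(n)$, yields the bound $e^{o(n)}\gamma_0^k=e^{-\Omega(n)}$, since $k=Rn$. All constants depend only on $\rho$ and the implicit constant in $2k-t=\Omega(n)$, and none depend on the input lists, so the bound is uniform as claimed. The $p\ge n$ hypothesis of \cref{lem:wftl} is in force from the standing assumptions.
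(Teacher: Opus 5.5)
Your proof is correct and follows the paper's argument: apply \cref{lem:wftl} to $w_i=b_\rho|\wh g_i|$, take $\Gamma=b_\rho^2$ from Parseval, and get the pointwise bound from the arc bound, splitting at $\rho=1/2$. The only difference is cosmetic and concerns $\rho\ge1/2$: the paper bounds the two factors separately ($b_\rho\le\sqrt2$ and $\gamma(\rho)\le 2/\pi$, giving the uniform constant $2\sqrt2/\pi<1$), whereas you bound the product $\frac{\sin(\pi\rho)}{\pi\rho}(4\rho-1)$ directly, which works equally well.
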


\begin{proof}
We apply \cref{lem:wftl} to
$w_i(\xi):=b_\rho|\wh g_i(\xi)|$, so we first check the conditions on $w_i$:
\begin{itemize}
\item If $\rho\le1/2$, then
$b_\rho=\sqrt{(1-\rho)/\rho}$. Thus
\cref{lem:indicator-L-infty-bound} gives, for $\xi\ne0$,
\[
w_i(\xi)
=\frac{|\wh{\one_{S_i}}(\xi)|}{\rho}
\le\frac{\sin(\pi\rho)}{\rho p\sin(\pi/p)}
=\frac{\sin(\pi\rho)}{\pi\rho}+O_\rho(p^{-2}).
\]
The limiting bound is strictly less than one for fixed $\rho\in (0,1)$, uniformly over $\xi\ne 0$.
\item If $\rho\ge1/2$, let $x:=2\rho-1\in[0,1)$. Then $b_\rho\le\sqrt{2}$ as
\[
b_\rho^2
=\frac{1+3x-4x^3}{1+x}=2-\frac{1-x+4x^3}{1+x}\le 2.
\]
By \cref{cor:L-infty-bound},
$w_i(\xi)\le2\sqrt2/\pi+O(p^{-2})<1$
for all sufficiently large $p$.
\end{itemize}
In both cases there is a fixed $\gamma_\rho<1$ bounding all
$w_i(\xi)$ for sufficiently large $n$.
Parseval's identity implies that
\[
\sum_{\xi\ne0}w_i(\xi)^2
=
b_\rho^2\sum_{\xi\ne0}|\widehat g_i(\xi)|^2
=
b_\rho^2\E|g_i|^2
=
b_\rho^2
\le
\max\left\{\frac{1-\rho}{\rho},2\right\}=:\Gamma.
\]
If $t\le k$, the sum on the left of \cref{eq:N-proxy-bound} vanishes by the dual distance.
Otherwise, by assumption that $2k-t=\Omega(n)$, \cref{lem:wftl} with $\gamma = \gamma_\rho$ and $\Gamma$ as defined above upper bounds the left hand side of \cref{eq:N-proxy-bound} by $e^{-\Omega(n)}$.
\end{proof}
Now, assuming $R+\rho>1$, then we there exists $\ell\in [n]$ and $\alpha:=\ell/n$ such that 
\begin{equation}
\label{eq:def-ell-alpha}
(1-\rho)n+\Omega(n)\le \ell=\alpha n\le k-\Omega(n).
\end{equation}
We next implement the truncation idea outlined in Step 5 of \cref{subsec:overview}. First, define
\begin{equation}
\label{eq:def-H}
H(x):=
\prod_{i=1}^n\left(\sqrt\rho+\sqrt{1-\rho}\cdot g_i(x_i)\right)
=\rho^{-n/2}\one_{\mbf S}(x).
\end{equation}
The equality holds %
since
\begin{align*}
    \prod_{i=1}^n \left( \sqrt{\rho} + \sqrt{1 - \rho} \cdot g_i(x_i) \right) &= \prod_{i=1}^n \left( \sqrt{\rho} + \sqrt{1 - \rho} \left(\frac{\one_{S_i}(x_i) - \rho}{\sqrt{\rho(1 -\rho)} } \right)\right) \\
    &= \prod_i \left( \sqrt{\rho} + \left(\frac{\one_{S_i}(x_i)}{\sqrt{\rho}} - \sqrt{\rho} \right) \right) \\
    &= \rho^{-n/2} \mathbf{1}_{\mathbf{S}}(x).
\end{align*}
We can also expand 
\[
H(x)= \sum_{j=0}^n (1-\rho)^{j/2} \rho^{(n-j)/2} \sum_{S\in\binom{[n]}{j}} \prod_{i\in S} g_i(x_i) = \sum_{j=0}^n (1-\rho)^{j/2} \rho^{(n-j)/2} q_j(x).
\]
This motivates the truncation in the sum from $n$ to $\ell$ to obtain
\begin{equation}
\label{eq:def-H-ell}
H_\ell(x):=\sum_{j=0}^{\ell}
\rho^{(n-j)/2}(1-\rho)^{j/2}q_j(x).
\end{equation}
Let $B \sim \on{Bin}(n, 1 - \rho)$.
For convenience, we define the density functions
\begin{equation}
\label{eq:def-pi-j}
\pi_j:=\mb{P}(B=j)=\binom nj\rho^{n-j}(1-\rho)^j
\quad\text{and}\quad
\pi_{\le j}:=\mb{P}(B\le j)=\sum_{i=0}^{\ell}\pi_i.
\end{equation}
By \cref{eq:sw-local-and-orthogonality} and a Chernoff bound, we see that 
\begin{equation}
\label{eq:truncation-0}
\mb{E}_{x\in\F_p^n}\left[H_\ell(x)^2\right] = \pi_{\le \ell} = 1-e^{-\Omega(n)}.
\end{equation}

While $H_\ell$ lowers the degree to $\ell$ from $n$, for $H_\ell$ to still control the Fourier proxy $\FO(v)$, we need to ensure it preserves two properties of $H(x)$: it has comparable total mass on $C$, and it almost all its mass is on perfect solutions.  We establish this next.
\begin{lemma}
\label{lem:key-truncation}
With $H_\ell$ defined in \cref{eq:def-H-ell}, we have that
\begin{equation}
\left|\E_{x\in C}[H_\ell(x)^2]-1\right|\le e^{-\Omega(n)}\quad\text{and}\quad  0\le\E_{x\in C}[(1-s(x))H_\ell(x)^2]\le e^{-\Omega(n)}.
\end{equation}
\end{lemma}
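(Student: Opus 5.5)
The plan is to expand $H_\ell^2$ and $(1-s)H_\ell^2$ in the basis $\{q_t\}$ using \cref{lem:layer-recursion}. We then compare expectations over $C$ with expectations over $\F_p^n$: the two agree on $q_0$, and differ only on layers $t\in(k,2\ell+1]$. Those layers are controlled by \cref{lem:key-fourier} together with \cref{lem:N-proxy-bound}. The point of the truncation is that every layer appearing has $t\le 2\ell+1\le 2k-\Omega(n)$, by \cref{eq:def-ell-alpha}, so \cref{lem:N-proxy-bound} applies with its $t$ taken to be $2\ell+1$.

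\emph{First estimate.} Write $a_j:=\rho^{(n-j)/2}(1-\rho)^{j/2}$, so that $H_\ell^2=\sum_t c_t q_t$ with
\[
c_t=\sum_{j,j'\le\ell}a_ja_{j'}N_\rho(j,j';t).
\]
Since $\E_{\F_p^n}q_t=\one\{t=0\}$, we get $c_0=\pi_{\le\ell}=1-e^{-\Omega(n)}$ by \cref{eq:truncation-0}. By \cref{lem:key-fourier} and the dual distance, $\E_C q_t=0$ for $1\le t\le k$. Also $c_t=0$ for $t>2\ell$. Hence
\[
\E_C[H_\ell^2]-c_0=\sum_{k<t\le2\ell}c_t\,\E_Cq_t .
\]
To bound $|c_t|$, replace $N_\rho$ by $\widetilde N_\rho$ and drop the restriction $j,j'\le\ell$; all terms are nonnegative, so this only increases the sum. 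Evaluating the generating function in \cref{eq:def-N} at $X=Y=\lambda=\sqrt{(1-\rho)/\rho}$ gives
\[
\sum_{j,j'}a_ja_{j'}\widetilde N_\rho(j,j';t)=\rho^n(1+\lambda^2)^{n-t}(2\lambda+|\beta_\rho|\lambda^2)^t .
\]
A direct computation gives $\rho(1+\lambda^2)=1$ and $\rho(2\lambda+|\beta_\rho|\lambda^2)=b_\rho$, so $|c_t|\le b_\rho^t$. Combining with $|\E_Cq_t|\le\sum_{|y|=t}\prod_{i\in\supp(y)}|\wh g_i(y_i)|$, the error is at most the left side of \cref{eq:N-proxy-bound} with $t=2\ell$. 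That is $e^{-\Omega(n)}$, which proves the first claim.

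\emph{Second estimate.} Nonnegativity is immediate since $0\le s\le1$. Over the uniform space, use $H=\rho^{-n/2}\one_{\mbf S}$ from \cref{eq:def-H}, so $(1-s)H=0$ pointwise. Writing $H_\ell=H-T$ with $T:=\sum_{j>\ell}a_jq_j$, we get $(1-s)H_\ell^2=(1-s)T^2$ pointwise. By \cref{eq:sw-local-and-orthogonality},
\[
\E_{\F_p^n}[(1-s)H_\ell^2]\le\E_{\F_p^n}[T^2]=\P(B>\ell)=e^{-\Omega(n)},
\]
where the last step is a Chernoff bound using $\ell\ge(1-\rho)n+\Omega(n)$. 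It remains to bound the difference between $\E_C$ and $\E_{\F_p^n}$ of $(1-s)H_\ell^2$. Write
\[
1-s=(1-\rho)-\frac{\sqrt{\rho(1-\rho)}}{n}\,q_1 .
\]
The constant part is handled exactly as in the first estimate. For the $q_1$ part, expand $q_1H_\ell^2=\sum_tc_t\,q_1q_t$ using the second identity of \cref{lem:layer-recursion}:
\[
q_1q_t=(t+1)q_{t+1}+(n-t+1)q_{t-1}+\beta_\rho t\,q_t .
\]
Each coefficient is $O(n)$, and this cancels the $1/n$ prefactor. The result is supported on layers $t\le2\ell+1$, with coefficients bounded by $O_\rho(1)\,b_\rho^{t}$, since shifting the layer index by $\pm1$ costs only a factor $b_\rho^{\pm1}$. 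The layers $1\le t\le k$ again contribute nothing to the difference, and the rest is bounded by \cref{lem:N-proxy-bound} with $t=2\ell+1$, giving $e^{-\Omega(n)}$.

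I expect the main obstacle to be this bookkeeping for the $q_1$ term. The concern is to check that the recursion does not leak any factor exponential in $n$, so that the resulting coefficients are still dominated by a constant multiple of $b_\rho^{t}$ uniformly in $t$. Once that domination holds, the conclusion follows directly from the same weighted sum as in the first estimate.
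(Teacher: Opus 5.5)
Your proposal is correct. The first estimate follows the paper's proof essentially verbatim. You bound $|c_t|$ by evaluating the $\widetilde N_\rho$ generating function at $X=Y=\sqrt{(1-\rho)/\rho}$, which gives exactly $b_\rho^t$, and then apply \cref{lem:N-proxy-bound}.

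For the second estimate you take a genuinely different, and valid, route.

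\emph{The paper's route.} It uses the three-term recurrence to telescope $(1-s)H_\ell$ exactly into the boundary expression
\[
\frac{a_\ell}{n}\bigl[(1-\rho)(n-\ell)q_\ell-\sqrt{\rho(1-\rho)}(\ell+1)q_{\ell+1}\bigr],
\]
and then multiplies by $H_\ell$. This gives the ambient main term $\frac{(1-\rho)(n-\ell)}{n}\pi_\ell$ exactly. All remaining layers are controlled by $2\sum_{t\le 2\ell+1}b_\rho^t\,|\E_{x\in C}q_t|$.

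\emph{Your route.} You use the structural fact $(1-s)H\equiv 0$ to get $(1-s)H_\ell^2=(1-s)T^2$ pointwise. This bounds the ambient term by the Binomial tail $\P(B>\ell)$ without any algebra. You then control the code-versus-ambient difference by writing $1-s$ linearly in $q_1$ and pushing the recurrence through each layer. Your bookkeeping, which you flagged as the main obstacle, does go through: the recurrence coefficients are $O(n)$ and cancel the $1/n$ prefactor, and the index shifts cost only $b_\rho^{\pm1}$ with $b_\rho>0$ fixed. The result is at most an $O_\rho(1)$ factor in front of $b_\rho^{t}$, still supported on $t\le 2\ell+1$.

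\emph{Comparison.} The paper's telescoping is the explicit algebraic form of the same cancellation that makes $(1-s)H=0$. It buys an exact main term and coefficient bounds with no extra constant. Your version is more conceptual: it identifies the ambient error as precisely the truncated-away tail, and it avoids checking that the bracketed coefficient vanishes. The cost is a harmless $\rho$-dependent constant.
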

\begin{proof}
Let $a_j:=\rho^{(n-j)/2}(1-\rho)^{j/2}$ and
$z:=\sqrt{(1-\rho)/\rho}$, so $a_j=\rho^{n/2}z^j$. 
Expanding
$H_\ell^2$ gives
\begin{align*}
\E_{x\in C}[H_\ell(x)^2]
=
\sum_{j,j'=0}^{\ell}a_ja_{j'}
\sum_{t=0}^{n}N_\rho(j,j';t)\E_{x\in C}[q_t(x)]
=
\pi_{\le\ell}
+
\sum_{t=1}^{\min(n,2\ell)}
\left(
\sum_{j,j'=0}^{\ell}a_ja_{j'}N_\rho(j,j';t)
\right)\E_{x\in C}[q_t(x)],
\end{align*}
by \cref{lem:layer-recursion}.
The $t=0$ term is $\pi_{\le\ell}$ because
$N_\rho(j,j';0)=\binom nj\one\{j=j'\}$ and $q_0=1$. Now, since $\widetilde{N}_\rho$ are nonnegative coefficients, \cref{eq:def-N}  gives for every $1\le t\le \min(n, 2\ell)$ that
\begin{equation}
\label{eq:truncation-proxy}
\left|
\sum_{j,j'=0}^{\ell}a_ja_{j'}N_\rho(j,j';t)
\right|
\le
\sum_{j,j'=0}^{\ell}a_ja_{j'}\widetilde N_\rho(j,j';t)\le
\sum_{j,j'=0}^{n}a_ja_{j'}\widetilde N_\rho(j,j';t)=
\rho^n(1+z^2)^{n-t}(2z+|\beta_\rho|z^2)^t.
\end{equation}
We claim that the last quantity is exactly $b_\rho^t$. This follows from $1+z^2=1/\rho$ and from \cref{eq:sw-layer-definition}:
\begin{equation}
\label{eq:b-rho}
   \beta_\rho:=\frac{1-2\rho}{\sqrt{\rho(1-\rho)}} \implies  b_\rho = 2\sqrt{\rho(1-\rho)}+(1-\rho)|\beta_\rho| = \rho(2z+z^2|\beta_\rho|).
\end{equation}
Therefore, by \cref{lem:N-proxy-bound}, we have that
\begin{equation}
\label{eq:key-trunction-2}
\left|\E_{x\in C}[H_\ell(x)^2]-\pi_{\le\ell}\right|
\le
\sum_{t=1}^{\min(n,2\ell)}
b_\rho^t\left|\E_{x\in C}[q_t(x)]\right|
\le
\sum_{\substack{0\ne y\in C^\perp\\|y|\le\min(n,2\ell)}}
b_\rho^{|y|}
\prod_{i\in\supp(y)}|\wh g_i(y_i)|
\le e^{-\Omega(n)},
\end{equation}
where the last inequality follows from \cref{lem:key-fourier} upon checking that, by \cref{eq:def-ell-alpha}, 
\begin{equation}
\label{eq:truncation-gap}
2k-\min(n,2\ell+1)\ge2(k-\ell)-1=\Omega(n).
\end{equation}
Finally, the first bound in \cref{lem:key-truncation} follows triangle inequality and \cref{eq:truncation-0}:
\[
\left|\E_{x\in C}[H_\ell(x)^2]-1\right|
\le
\left|\E_{x\in C}[H_\ell(x)^2]-\pi_{\le\ell}\right|
+(1-\pi_{\le\ell})
\le e^{-\Omega(n)}.
\]

For the second bound, recall \cref{eq:sw-score-ratio}, apply 
\cref{lem:layer-recursion}, and use $a_j=\rho^{(n-j)/2}(1-\rho)^{j/2}$ to expand
\begin{align*}
(1-s)H_\ell
&=
(1-\rho)\sum_{j=0}^{\ell}a_jq_j
-\frac{\sqrt{\rho(1-\rho)}}{n}
\sum_{j=0}^{\ell}a_j
\left[
(j+1)q_{j+1}+(n-j+1)q_{j-1}+\beta_\rho j q_j
\right]\\
&=
\sum_{j=0}^{\ell}a_j
\left[
1-\rho-
\frac{j\rho+(n-j)(1-\rho)+(1-2\rho)j}{n}
\right]q_j+
\frac{\sqrt{\rho(1-\rho)}}{n}
\left[
(n-\ell)a_{\ell+1}q_\ell
-(\ell+1)a_\ell q_{\ell+1}
\right]\\
&=
\frac{a_\ell}{n}
\left[
(1-\rho)(n-\ell)q_\ell
-\sqrt{\rho(1-\rho)}(\ell+1)q_{\ell+1}
\right],
\end{align*}
where we observe in the last step that the square bracket coefficient vanishes. Multiplying by $H_\ell$ and using
$\sqrt{\rho(1-\rho)}a_\ell=\rho a_{\ell+1}$ gives
\begin{align*}
\E_{x\in C}[(1-s(x))H_\ell(x)^2]
&=
\frac{(1-\rho)(n-\ell)}{n}
\sum_{j=0}^{\ell}a_\ell a_j\E_{x\in C}[q_\ell(x)q_j(x)]-
\frac{\rho(\ell+1)}{n}
\sum_{j=0}^{\ell}a_{\ell+1}a_j
\E_{x\in C}[q_{\ell+1}(x)q_j(x)].
\end{align*}
Now, we expand both products $q_{\ell}(x) q_j(x)$ and $q_{\ell+1}(x)q_j(x)$ using \cref{lem:layer-recursion}, which says that
$q_jq_{j'}=\sum_{t=0}^nN_\rho(j,j';t)q_t$.
We get
\begin{align*}
    \E_{x \in C} [ (1 - s(x)) H_\ell(x)^2 ] & = \frac{(1 - \rho)(n - \ell)}{n} \sum_{j=0}^\ell a_\ell a_j  \sum_{t=0}^n N_\rho(\ell,j;t) \E_{x \in C}[q_t(x)] \\
    &\ \ \ - \frac{ \rho(\ell + 1)}{n} \sum_{j=0}^\ell a_{\ell + 1} a_j \sum_{t=0}^n N_\rho(\ell + 1, j;t) \E_{x \in C}[q_t(x)].
\end{align*}

Since $N_\rho(j,j';0)=\binom nj\one\{j=j'\}$, the summands with $t=0$ and any $j$ together contribute $a_\ell^2\binom n\ell=\pi_\ell$ to the first sum, ignoring the prefactor of $(1 - \rho)(n-\ell)/n$, and vanishes in the second sum.
For the summands with $t>0$, we follow the same proof as the first inequality: for $r\in\{\ell,\ell+1\}$, we have that
\[
\sum_{j=0}^{\ell}a_ra_j|N_\rho(r,j;t)|
\le
\sum_{j,j'=0}^{n}a_ja_{j'}\widetilde N_\rho(j,j';t)
=b_\rho^t.
\]
similar to \cref{eq:truncation-proxy}. Just as \cref{eq:key-trunction-2}, we now have by \cref{lem:key-fourier} and \cref{eq:truncation-gap} that
\begin{align*}
\left|
\E_{x\in C}[(1-s(x))H_\ell(x)^2]
-\frac{(1-\rho)(n-\ell)}{n}\pi_\ell
\right|\le
2\sum_{t=1}^{\min(n,2\ell+1)}
b_\rho^t\left|\E_{x\in C}[q_t(x)]\right|\le e^{-\Omega(n)}.
\end{align*}
Finally, the second bound in \cref{lem:key-truncation} follows \cref{eq:truncation-0} and that $(1-s(x))H_\ell(x)^2\ge0$, since
\[
0\le
\E_{x\in C}[(1-s(x))H_\ell(x)^2]
\le
\frac{(1-\rho)(n-\ell)}{n}\pi_\ell+e^{-\Omega(n)}
\le e^{-\Omega(n)},
\]
and $\pi_\ell \le e^{-\Omega(n)}$. This concludes the proof of \cref{lem:key-truncation}.
\end{proof}
With these properties, we prove the following lemmas.
\begin{lemma}
\label{lem:terminal-dense-count}
Let $p\ge n$ be prime, let $R+\rho>1$, and let $C$ be a  $[n, k]$ GRS code of rate $R$ over $\F_p$.  Fix $\rho \in (0,1)$ and input lists $S_i$ with $|S_i|=\rho p$ for every $i$. Then uniformly over $v\in\F_p^n$,
\begin{equation}
\FO(v)=
\abs{\frac{|(v+C)\cap\mbf S|}{|C|\rho^n}-1}
\le e^{-\Omega_{R,\rho}(n)}
\end{equation}
\end{lemma}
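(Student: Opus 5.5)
We first reduce to $v=0$. The hypotheses of \cref{lem:key-truncation} and the proof of \cref{lem:N-proxy-bound} are uniform in the input lists, and $(v+C)\cap\mbf S = v+(C\cap(\mbf S-v))$. Translating, $S_i\mapsto S_i-v_i$ preserves the densities $\rho$ and replaces $g_i$ by a translate, which only changes the phases of $\wh g_i$. So it suffices to show $\abs{\E_{x\in C}[\rho^{-n}\one_{\mbf S}(x)]-1}\le e^{-\Omega(n)}$ uniformly in the lists. By \cref{eq:def-H} this quantity equals $\E_{x\in C}[H(x)^2]$, since $H^2=\rho^{-n}\one_{\mbf S}$. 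Fix $\ell$ as in \cref{eq:def-ell-alpha}.

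The plan is to compare $H$ and $H_\ell$ pointwise on $\mbf S$, and then invoke \cref{lem:key-truncation}. The key observation is that on $\mbf S$ each $g_i(x_i)=\lambda_i=\sqrt{(1-\rho)/\rho}$. Then $q_j(x)=\binom nj z^j$ with $z=\sqrt{(1-\rho)/\rho}$, so $a_jq_j(x)=\rho^{-n/2}\pi_j$. Summing gives $H_\ell(x)=\rho^{-n/2}\pi_{\le\ell}$ and $H(x)=\rho^{-n/2}$ for $x\in\mbf S$. By Chernoff and $\ell\ge(1-\rho)n+\Omega(n)$, we get $\pi_{\le\ell}=1-e^{-\Omega(n)}$. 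Hence
\[
\E_{x\in C}[H_\ell^2\one_{\mbf S}]=\pi_{\le\ell}^2\,\E_{x\in C}[H^2]=(1-e^{-\Omega(n)})\,\E_{x\in C}[\rho^{-n}\one_{\mbf S}].
\]

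Next we control the off-$\mbf S$ mass of $H_\ell^2$. For $x\notin\mbf S$ we have $1-s(x)\ge 1/n$, so
\[
\E_{x\in C}[H_\ell^2\one_{\mbf S^c}]\le n\,\E_{x\in C}[(1-s)H_\ell^2]\le n e^{-\Omega(n)}=e^{-\Omega(n)}
\]
by the second bound of \cref{lem:key-truncation}. Combining this with the first bound, $\E_{x\in C}[H_\ell^2]=1\pm e^{-\Omega(n)}$, gives
\[
\E_{x\in C}[H_\ell^2\one_{\mbf S}]=1\pm e^{-\Omega(n)}.
\]
Dividing by $\pi_{\le\ell}^2=1-e^{-\Omega(n)}$ then yields $\E_{x\in C}[\rho^{-n}\one_{\mbf S}]=1\pm e^{-\Omega(n)}$, which is the claim.

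The main point needing care is the uniformity over $v$. The reduction above handles it, provided \cref{lem:key-truncation} holds for arbitrary lists of density exactly $\rho$. Its proof only uses \cref{lem:N-proxy-bound}, which depends on the lists only through $|\wh g_i|$ and the bounds of \cref{cor:L-infty-bound}, so it does. One could alternatively run the argument directly on the coset $v+C$, using the translated Fourier expansion of \cref{lem:translated-opi-proxy} in place of \cref{lem:key-fourier}; this introduces harmless phases $e_p(-\langle v,y\rangle)$ that are discarded by the triangle inequality. The remaining step is the exact evaluation of $H_\ell$ on $\mbf S$, which is routine.
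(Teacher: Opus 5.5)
Your proposal is correct and matches the paper's proof essentially step for step. The shared steps are the exact evaluation $H_\ell=\rho^{-n/2}\pi_{\le\ell}$ on $\mbf S$, the bound $\one_{\mbf S^c}\le n(1-s)$ combined with both estimates of \cref{lem:key-truncation}, and the reduction of general $v$ to $v=0$ by translating the lists to $S_i-v_i$; your added justification that \cref{lem:key-truncation} holds uniformly over all lists of exact density $\rho$ makes explicit what the paper leaves implicit.
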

\begin{proof}
First take $v=0$. Every imperfect point has
$1-s(x)\ge1/n$. By \cref{lem:key-truncation},
\[
\E_{x\in C}[\one_{\mbf S^c}(x)H_\ell(x)^2]
\le
n\E_{x\in C}[(1-s(x))H_\ell(x)^2]
\le e^{-\Omega(n)}.
\]
Now, observe that
$H_\ell(x)=\rho^{-n/2}\pi_{\le\ell}$ 
for every $x\in\mbf{S}$, as 
$g_i(x_i)=\sqrt{(1-\rho)/\rho}$ for every $i$. Therefore,
\begin{align*}
\pi_{\le\ell}^2\frac{|C\cap\mbf S|}{|C|\rho^n}
=\E_{x\in C}[\one_{\mbf S}(x)H_\ell(x)^2]
=\E_{x\in C}[H_\ell(x)^2]
-\E_{x\in C}[\one_{\mbf S^c}(x)H_\ell(x)^2]=1+e^{-\Omega(n)}.
\end{align*}
by \cref{lem:key-truncation} and \cref{eq:truncation-0}, and the desired conclusion for $v=0$ follows.
For general $v$, apply the same argument to the translated
lists $S_i-v_i$. Their densities are unchanged, and
\[
\FO(v;C,\mbf S)
=\FO(0;C,\mbf S-v).\qedhere
\]
\end{proof}
Now, applying this lemma with $v=0$ as well as \cref{lem:restated-llr-proxy} gives near optimal OPI discrepancy. At this point, we have proven all parts of \cref{thm:main-opi}, which we formally collect below.

\begin{proof}[Proof of \cref{thm:main-opi}]
It suffices by \cref{prop:subsample}, applied also to translates
$v+C$, to establish a uniform relative-count bound for common-size
sublists. We collect the three cases.

For (1) and (3), take $t=\lceil\rho p\rceil$. Its actual density
$t/p=\rho+O(1/p)$ preserves the fixed density and rate gaps.
Part~(1) follows from
\cref{lem:positive-opi-above,lem:translated-opi-proxy}, and part~(3)
follows from \cref{lem:terminal-dense-count}.

For (2), choose the theorem's constant $\delta$ smaller than the
constant $\delta_0$ of \cref{lem:near-balanced-opi-below}, and take
\[
    t=\left\lceil p\min\{\rho,1/2\}\right\rceil.
\]
For sufficiently large $p$,
    $t/p\in[1/2-\delta_0,1/2+\delta_0]$,
so that lemma applies.

In every case, $t\le\min_i|S_i|$, and \cref{prop:subsample} transfers
the exponentially small relative error to the original normalization
$|C|\prod_i\rho_i$. Taking $v=0$ proves all three assertions.
\end{proof}

\subsection{Near-Optimal OPI Discrepancy for Random Reed-Solomon Codes}
\label{subsec:rrs}
In this subsection, we prove \cref{thm:random-opi}, which states that that random Reed-Solomon codes of any positive rate have near-optimal OPI discrepancy with high probability. Throughout, let
$C=\on{RS}_p(\alpha_1,\ldots,\alpha_n;k)$ be the random RS code
obtained from a uniformly random ordered tuple of distinct nonzero evaluation points in $\F_p$.
We write $\E_C$ and $\P_C$ for expectation and probability over
the choice of evaluation points; $\E_{x\in C}$ always means an average over a
codeword of the fixed code $C$.

We begin by outlining the proof of one-bit LLR of random Reed-Solomon codes from \cite{HMSY26}.
The first step is to approximate each local indicator as
$\one_{S_i}=G_i+H_i$, where $G_i$ is structured and $H_i$ has
small Gowers $U^s$-norm for some large constant $s$. More precisely, the following decomposition holds.

\begin{fact}[Structured decomposition, {\cite[Section~10.1]{HMSY26}}]
\label{fact:hmsy-approx}
For fixed $s\ge4$ and $\eta>0$, there is a class $\mc G$ of
functions $\F_p\to[0,1]$, such that for every $S\subset\F_p$, the 
indicator function $\mbf{1}_S$ has a decomposition
\[
\mbf{1}_S=G+H,\qquad G\in\mc G,\qquad |H|\le1,
        \qquad |\E H|\le\eta\le \|H\|_{U^s}\le\eta.
\]
Moreover, $\mc{G}$ is closed under invertible affine precomposition, i.e. for every $a\in\F_p$ and $b\in\F_p^\times$ and $G\in\mc{G}$, $Gz\mapsto G(a+bz)$ is also in $\mc{G}$.
\end{fact}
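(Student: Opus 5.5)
We plan to prove this by the standard energy-increment argument behind the arithmetic regularity lemma on $\F_p\cong\mathbb Z/p\mathbb Z$, with the inverse theorem for the Gowers $U^s$-norm (Green--Tao--Ziegler) as the black box. We read the displayed condition as the pair of bounds $|\E H|\le\eta$ and $\|H\|_{U^s}\le\eta$; the middle ``$\eta\le$'' looks like a typo. We will in fact get $\E H=0$.

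\textbf{Step 1: define the class.} Fix $s$ and $\eta$. The inverse theorem gives $\delta=\delta(s,\eta)>0$ and a complexity bound $M=M(s,\eta)$ with the following property: every $1$-bounded $f$ with $\|f\|_{U^s}>\eta$ correlates, to at least $\delta$, with some $(s-1)$-step nilsequence $n\mapsto F(g(n)\Gamma)$ of complexity at most $M$. Let $\mc G$ consist of all functions $\E[h\mid\mc B]$, where $h:\F_p\to[0,1]$ is arbitrary and $\mc B$ is a partition of $\F_p$ into level sets of boundedly many (at most $K(s,\eta)$) such nilsequences, each discretized at a mesh depending only on $(s,\eta)$. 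Every such function takes values in $[0,1]$. The class is closed under invertible affine precomposition $z\mapsto a+bz$, because a polynomial sequence composed with an affine map of $\mathbb Z$ (lifting $z$) is again a polynomial sequence of the same degree on the same nilmanifold, with the same complexity. Hence such a precomposition carries a factor to a factor of the same type.

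\textbf{Step 2: the energy increment.} Start from the trivial factor, so $G_0=\E\one_S$ and $H_0=\one_S-G_0$. Given the current factor $\mc B_j$, set $G_j=\E[\one_S\mid\mc B_j]$ and $H_j=\one_S-G_j$. These satisfy $G_j\in[0,1]$, $|H_j|\le1$ and $\E H_j=0$. If $\|H_j\|_{U^s}\le\eta$, stop. Otherwise, the inverse theorem produces a nilsequence $\psi$ with $|\E H_j\overline\psi|\ge\delta$. Refine $\mc B_j$ by the discretized level sets of $\psi$, at a mesh fine enough that $\psi$ is approximately $\mc B_{j+1}$-measurable up to $\delta/2$. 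Since $H_j\perp L^2(\mc B_j)$, Pythagoras gives
\[
\|G_{j+1}\|_2^2\ge\|G_j\|_2^2+\delta^2/4 .
\]
The energy is at most $1$, so the process stops after at most $4/\delta^2$ steps. The final factor is therefore generated by $O_{s,\eta}(1)$ nilsequences, and $G$ lies in $\mc G$.

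\textbf{Step 3: the main obstacle.} The delicate point is the discretization. A nilsequence is not exactly measurable with respect to finitely many level sets, and the number of cells, hence the complexity, must stay bounded uniformly in $p$. We would first handle this with the usual trick: choose the cell boundaries at random offsets, so that only an $O(\text{mesh})$ fraction of points lie near a boundary. If some residual error $E$ with small $L^2$ norm remains, we absorb it into $H$. This is harmless for the $U^s$ bound, since for bounded functions $\|E\|_{U^s}\le\|E\|_{L^{2^s/(s+1)}}\lesssim\|E\|_2^{c_s}$. Running the iteration with $\eta/2$ in place of $\eta$ then gives the stated bounds. Alternatively, one can cite the Green--Tao arithmetic regularity lemma directly, and check only the affine-invariance claim of Step 1. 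That check is the one ingredient specific to the present use in \cref{subsec:rrs}, where translates and dilates of the lists appear.
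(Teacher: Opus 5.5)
The paper gives no proof of this statement: it is quoted from \cite[Section~10.1]{HMSY26}, and its only role is to supply the class $\mathcal G$ to which \cref{fact:hmsy-model} is applied.

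Read with the typo fixed as you fix it, the statement \emph{as written} is trivial. Let $\mathcal G$ be the set of all functions $\F_p\to[0,1]$ and take $G=\one_S$, $H=0$; this class is obviously closed under invertible affine precomposition. So your energy-increment argument proves more than is asserted, namely a class of bounded complexity.

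It does not address what the paper actually relies on: that \cref{fact:hmsy-model} holds for the specific class built in \cite{HMSY26}. Your class of nilsequence partitions is a different class, and nothing in your proposal yields \cref{fact:hmsy-model} for it. As a check, if the trivial class were admissible there, \cref{prop:hmsy-relative} with $G_i=\one_{S_i}$ would already prove \cref{thm:random-opi}, and the $U^s$ step would be superfluous. The substance therefore lies in the structure of HMSY26's class. The appendix sketch shows that class being handled through polynomial maps $v_i(z)=\sum_{h\le s-1}\beta_{i,h}z^h$ into finite-dimensional $\F_p$-spaces.

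Within your own argument, the affine-closure step in Step 1 fails as justified. On $\F_p\cong\mathbb Z/p\mathbb Z$, the obstructions from the Green--Tao--Ziegler theorem (transferred from $[p]$) are sequences $n\mapsto F(g(n)\Gamma)$ read on the representatives $\{0,\dots,p-1\}$, and they need not be $p$-periodic. The map $z\mapsto a+bz$ does not lift to an affine map of $\mathbb Z$. It lifts to $\tilde z\mapsto(\tilde a+\tilde b\tilde z)\bmod p$, which wraps around about $\tilde b$ times, and $\tilde b$ can be of order $p$. Hence $\psi(a+bz)$ is only a piecewise nilsequence with unboundedly many pieces, and closure of your class is not established.

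There are two easy repairs.
\begin{itemize}
\item Use an inverse theorem on cyclic groups whose obstructions are $p$-periodic nilsequences (as in Manners' work). For these, $n\mapsto g(\tilde a+\tilde b n)$ is again a $p$-periodic polynomial sequence on the same nilmanifold with the same $F$.
\item Replace $\mathcal G$ by its orbit under the affine group. This is free here, since the statement imposes no complexity bound on $\mathcal G$ and $\|\cdot\|_{U^s(\F_p)}$ is affine-invariant.
\end{itemize}
With either repair, your argument does establish the literal statement with a bounded-complexity class.

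Separately, the ``delicate point'' in Step 3 is not one. Refine by the preimages under $\psi$ of a fixed grid of squares of side $\asymp\delta$ covering the unit disk. This gives an exact partition with $O(\delta^{-2})$ atoms, independent of $p$, and $\psi$ is uniformly within $\delta/2$ of a function measurable with respect to it. So $G=\E[\one_S\mid\mathcal B]$ is exact and no residual term arises. Random offsets matter in the Green--Tao regularity lemma only to make the structured part itself a Lipschitz nilsequence, which is not required here. Your increment computation in Step 2 is otherwise correct and gives $\E H=0$.
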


Upon expanding $\prod_i\one_{S_i}(x_i)$ in the LLR proxy, \cite{HMSY26} considers a set $J\subset [n]$ indexing products of $H_i$. If $|J|$ is large, the small Gowers $U^s$-norm gives exponential decay. To handle the terms with small $|J|$, 
\cite{HMSY26} uses the structure of $G_i\in\mc{G}$ for $i\not\in J$ to construct a good evaluation event $\mathsf{Good}$ via higher order Fourier analysis. In more detail, \cite{HMSY26} shows the following:

\begin{enumerate}
    \item Fix any Reed-Solomon code $C$ that satisfies the event $\mathsf{Good}$. Upon revealing $O(1)$-many coordinates values $u$, \cite{HMSY26} constructs an auxiliary linear code $\widetilde{C}$ whose Fourier proxy is close to that of $C$. Moreover, $\widetilde{C}^\perp$ has dimension $O(1)$ and distance $k-O(1)$, where $k=\dim C$.
    \item In contrast, $C^\perp$ has dimension $n-k$ and distance $k$. The reduction of dimension but not distance in $\widetilde{C}^\perp$ enables the Fourier and Gowers-norm bounds to show one-bit LLR at any positive rate $R=k/n$, bypassing the rate-$1/2$ barrier. This is discussed in \cref{rem:like-bdir}.
    \item $\mathsf{Good}$ occurs with high probability for a random
    Reed--Solomon code. %
\end{enumerate}

We import Items 1 and 3 above as a black box, which we state as \cref{fact:hmsy-model} below.  Then we adapt Item 2 above to use \cref{fact:hmsy-model} to apply to OPI instead of LLR; this is done in \cref{prop:hmsy-relative}.

\begin{restatable}{fact}{hmsyfact}
\label{fact:hmsy-model}\emph{(Implicit in {\cite[Proof of Proposition 10.3]{HMSY26}}).}
Fix any $s\ge 4$ and $\eta>0$, and let $\mc G$ be the class of functions guaranteed in \cref{fact:hmsy-approx}.  Also, fix any rate $R=k/n\in (0, 1)$, and $\nu, L>0$.

For sufficiently large $n$ and $p$ where $\log p\ge n^{\Omega(1)}$, there exists an event $\mathsf{Good}$
with $\P_C(\mathsf{Good})\ge 1-\eps$ for $\eps:=e^{-Ln}$, on which the
following holds simultaneously for every $\mbf G=(G_1, \dots, G_n)\in\mc G^n$ with $\nu_i:=\mb{E}G_i\ge \nu$ for every $i\in [n]$:
there exists $I\subset [n]$, finite dimensional vector spaces $V_i$ over
$\F_p$ for each $i\in I$, a linear code
$\widetilde C\subset\prod_{i\in I}V_i$, and functions
$\widetilde G_i:V_i\to[0,1]$, such that:
\begin{enumerate}
    \item $\widetilde{C}^\perp$ has length $|I|=n-O(1)$, dimension $\widetilde{k}=O(1)$, and minimum distance $\widetilde{d}\ge k-O(1)$,
    \item $\mb{E}\widetilde{G}_i\le \nu_i$ for every $i\in I$, and \begin{equation}\label{eq:hmsy-fourier-reduction}
 \left|\E_{x\in C}\prod_{i=1}^nG_i(x_i)-\prod_{i=1}^n\nu_i\right|
 \le 2\sum_{0\ne y\in\widetilde C^\perp}
       \prod_{i\in I}|\widehat{\widetilde{G}}_i(y_i)|+\eps.
\end{equation}
\end{enumerate}
\end{restatable}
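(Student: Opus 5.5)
The plan is to extract the statement from the proof of \cite[Proposition~10.3]{HMSY26} by separating two things. The first is the part of their argument that produces the auxiliary code $\widetilde C$ and the inequality \cref{eq:hmsy-fourier-reduction}. The second is the part that uses the specific LLR proxy, which we discard. Concretely, I would fix $s,\eta$ and the class $\mc G$ from \cref{fact:hmsy-approx}. I would then recall from \cite[Section~10.1]{HMSY26} that each $G\in\mc G$ is a bounded-complexity structured function: it factors as $G=\Phi\circ\phi$, where $\phi:\F_p\to W$ records a bounded number of polynomial-phase (or low-degree) statistics of $z$ into a finite-dimensional $\F_p$-space $W$, and $\Phi:W\to[0,1]$. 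The closure of $\mc G$ under affine precomposition is what lets a single description handle every coordinate $i$ after the substitution $z\mapsto a_i+b_iz$ coming from the evaluation points.

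The key steps, in order, are as follows.
\begin{enumerate}
\item \textbf{Lift the codewords.} For a fixed $C$ and $\mbf G$, lift each codeword $x\in C$ coordinatewise to $(\phi_i(x_i))_i$. The image of the resulting map $C\to\prod_i W_i$ spans a linear code. Following \cite{HMSY26}, reveal the values $u$ of $O(1)$ coordinates (the index set $[n]\setminus I$) to kill the finitely many "degenerate" low-degree relations forced by the structure. After passing to the fibre, the pushforward of the uniform distribution on $C$ is uniform on an affine translate of a linear code $\widetilde C\subset\prod_{i\in I}V_i$ with $V_i\le W_i$.
\item \textbf{Define $\widetilde G_i$.} Set $\widetilde G_i:=\Phi_i$ restricted (and translated) to $V_i$. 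Averaging over the revealed values $u$ then gives $\E\widetilde G_i\le\nu_i$ after normalization.
\item \textbf{Fourier-expand the fibre.} Apply the fractional OPI expansion \cref{cor:soft-opi-proxy} on each fibre, together with the triangle inequality. This yields a bound of the form $\prod_i\nu_i$ plus the positive sum over $0\ne y\in\widetilde C^\perp$.
\item \textbf{Account for the revealed coordinates.} The factor $2$ and the additive $\eps$ absorb the discrepancy between the revealed-coordinate marginals and their means, and the failure of exact uniformity on $\mathsf{Good}$.
\end{enumerate}

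Item~1 of the statement, that $\dim\widetilde C^\perp=O(1)$ and $\widetilde d\ge k-O(1)$, is where the randomness of the $\alpha_i$ enters. I would take $\mathsf{Good}$ to be the event that no nonzero linear dependency of weight less than $k-O(1)$ holds among the lifted evaluation vectors beyond those forced by the structure. The dimension bound is then automatic: a generic lifted code has full rank except for $O(1)$ dimensions, because $\dim\prod_iV_i$ exceeds $\dim C$ by only $O(1)$ after the reveal. For the probability bound, I would show that for each candidate support of size less than $k-O(1)$ and each candidate coefficient pattern of bounded complexity, the dependency is a nontrivial polynomial condition of bounded degree on the $\alpha_i$. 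By Schwartz--Zippel it therefore holds with probability $p^{-\Omega(1)}$ per relevant block. Union bounding over at most $2^n$ supports, $p^{O(1)}$ coefficient choices and the discretized (finite up to $\eta$-approximation) class $\mc G$ then gives failure probability at most $2^{O(n)}p^{-\Omega(n)}$. Since $\log p\ge n^{\Omega(1)}$, this is at most $e^{-Ln}$ for large $n$.

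I expect this union bound to be the main obstacle. The events must be made uniform over all $\mbf G\in\mc G^n$ simultaneously, and the $O(1)$ revealed coordinates must be chosen so that the forced relations are exactly removed. My first attempt would be to follow the bookkeeping of \cite[Proof of Proposition~10.3]{HMSY26} essentially verbatim. That argument only ever evaluates the resulting weighted dual sum, which is the right-hand side of \cref{eq:hmsy-fourier-reduction}. So the fact should be read off by stopping their argument immediately before they specialize that sum to the LLR setting.
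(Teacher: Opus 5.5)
Your overall strategy is exactly what the paper does in \cref{app:hmsyProofSketch}: read the statement off from the proof of \cite[Proposition~10.3]{HMSY26}, keeping the pointwise bound at the all-ones transcript instead of summing over transcripts. The problem is that the mechanisms you supply yourself for the substantive parts would not work.

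\textbf{Item 1: dimension.} Your dimension count is unjustified. We have $\dim\prod_{i\in I}V_i=\sum_{i\in I}\dim V_i$, and nothing about revealing $O(1)$ coordinates brings this down to $k+O(1)$. If $\widetilde C$ had dimension about $k$, as a linear image of $C$ would, then $\dim\widetilde C^\perp\ge\sum_{i\in I}\dim V_i-k-O(1)$. That is linear in $n$ as soon as a constant fraction of the $V_i$ are nonzero, which is the rate-$1/2$ obstruction of \cref{rem:like-bdir}. That $\widetilde C^\perp$ (HMSY26's relation space $V_0$) has dimension $O(1)$ and distance $k-O(1)$ is the nontrivial content imported from their p.~86. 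It is not a genericity triviality.

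\textbf{Item 1: the event $\mathsf{Good}$.} Your $\mathsf{Good}$ event also fails as proposed. You let it depend on $\mbf G$ and then union bound over $\mc G^n$. But $\mc G$ is closed under $z\mapsto a+bz$, so $\mc G^n$ contains $p^{\Omega(n)}$ genuinely different tuples even after $\eta$-discretization. Your per-$\mbf G$ bound $p^{-\Omega(n)}$ is also not established. Per-condition Schwartz--Zippel bounds of size $p^{-\Omega(1)}$ do not multiply without independence, and the coefficient patterns $y\in\prod_iV_i^\star$ number $p^{\Theta(n)}$, not $p^{O(1)}$. In the paper, $\mathsf{Good}$ is HMSY26's event from Step~1(d), (82)--(84). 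It depends only on the evaluation points and fails with probability at most $p^{-1}\exp(n^{O(1)})$. That bound is exactly why the hypothesis $\log p\ge n^{C}$ with a large constant $C$ is needed. Your claimed $2^{O(n)}p^{-\Omega(n)}$ would already suffice for constant-size $p$, which signals that the accounting is off.

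\textbf{The inequality.} The lifts $\phi_i$ are polynomial, not linear. So the pushforward of the uniform measure on a fibre of $C$ is not uniform on an affine translate of a linear code, and \cref{cor:soft-opi-proxy} cannot simply be applied on fibres. The paper instead uses HMSY26's three-term triangle inequality from Step~5(b), with $\E_{x\in C}\prod_iG_i(x_i)=P_f(\one)$ and $\prod_i\nu_i=Q_f(\one)$; the two approximation terms there are exponentially small. It then uses their Step~7 bound
\[
|P^{tor}_u(\one)-Q^{tor}_u(\one)|\le(1+O(\rho_{s-1,p}))\sum_{0\ne y}\prod_{i\in I}|\widehat{A_i^\star}(y_i)|\,|M_y-N_y|,
\]
where equidistribution estimates make frequencies outside $\widetilde C^\perp$ contribute only an exponentially small amount. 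The revealed-coordinate factors are common to both terms and lie in $[0,1]$, so they cost nothing.

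\textbf{The mean condition.} $\E\widetilde G_i\le\nu_i$ does not come from averaging over revealed values. One first shows that $|a_i-\nu_i|$ is exponentially small, where $a_i:=\E_{V_i}A_i^\star$. This uses TV-closeness of the marginals, mean preservation under affine precomposition, and equidistribution of $v_i(z)$ in $V_i$. One then rescales $\widetilde G_i:=c_iA_i^\star$ with $c_i=\min\{1,\nu_i/a_i\}$ and checks that $\prod_ic_i^{-1}=1+o(1)$.

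The factor $2$ in \cref{eq:hmsy-fourier-reduction} is $(1+O(\rho_{s-1,p}))(1+o(1))$, relaxed, and $\eps$ collects the exponentially small errors. Your fallback of following HMSY26 verbatim would work, but these are the pieces that must be located there rather than re-derived by the arguments you sketch.
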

For completeness, we include a proof sketch of \cref{fact:hmsy-model} in \cref{app:hmsyProofSketch}.

To use \cref{fact:hmsy-model} to prove near-optimal OPI discrepancy instead of one-bit LLR, we adapt Item 2 in the high-level outline of \cite{HMSY26} above. The main idea is to apply the density-normalized OPI Fourier proxy to the auxiliary code $\widetilde{C}$. This
gives relative counting error in place of absolute leakage error. Combining this with the Gowers norm bounds \cref{fact:block-ul}  yields near-optimal OPI discrepancy. The key step is the following control of the Fourier proxy for structured leakages $\mbf{G}\in \mc{G}^n$, analogous to \cref{lem:wftl}.  We prove this next as \cref{prop:hmsy-relative}.  After the proof of the proposition, we will put the pieces together as described above to formally prove \cref{thm:random-opi}.

\begin{proposition}\label{prop:hmsy-relative}
Fix any rate $R=k/n, \nu\in(0,1)$ and $\mc G$. There exists $\sigma_\nu\in (0, 1)$ such that
for sufficiently large $n$ and prime $p$ with $\log p= n^{\Omega(1)}$, we have that
\begin{equation}\label{eq:hmsy-relative}
 \E_C\left[
   \sup_{\substack{\mbf G\in\mc G^n\\\E G_i\ge\nu}}
   \FO(0;C,\mbf G)\right]=O(\sigma_\nu^k),
\end{equation}
\end{proposition}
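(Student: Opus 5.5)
The plan is to split $\E_C$ according to the event $\mathsf{Good}$ of \cref{fact:hmsy-model}, instantiated with the given $s$, $\eta$, the class $\mc G$, the density floor $\nu$, and a constant $L$ chosen large relative to $\log(1/\nu)$. On $\mathsf{Good}^c$, which has probability at most $e^{-Ln}$, I use the crude bound
\[
\FO(0;C,\mbf G)\le \frac{1}{\prod_i\nu_i}+1\le \nu^{-n}+1.
\]
This holds because $\E_{x\in C}\prod_i G_i(x_i)\le 1$. Choosing $L>\log(1/\nu)+\log(1/\sigma_\nu)$ then makes this contribution $O(\sigma_\nu^k)$. Everything therefore reduces to a deterministic bound on $\mathsf{Good}$, uniform over $\mbf G\in\mc G^n$ with $\E G_i\ge\nu$.

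On $\mathsf{Good}$, fix $\mbf G$ and take $I$, $V_i$, $\widetilde C$, $\widetilde G_i$ from the fact. Dividing \cref{eq:hmsy-fourier-reduction} by $\prod_i\nu_i$ gives
\[
\FO(0;C,\mbf G)\le \frac{2}{\prod_{i}\nu_i}\sum_{0\ne y\in\widetilde C^\perp}\prod_{i\in I}|\widehat{\widetilde G}_i(y_i)|+\nu^{-n}e^{-Ln}.
\]
The second term is again negligible for $L$ large. For the first term, the factors $\nu_i$ with $i\notin I$ number only $O(1)$ and cost a constant factor. For $i\in I$, I rewrite $|\widehat{\widetilde G}_i(y_i)|/\nu_i$. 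Put $\tilde\nu_i:=\E\widetilde G_i\le\nu_i$. On coordinates with $y_i=0$ the factor is $\tilde\nu_i/\nu_i\le 1$. On coordinates with $y_i\ne0$, \cref{cor:soft-arc} gives
\[
\frac{|\widehat{\widetilde G}_i(y_i)|}{\nu_i}\le \frac{\sin(\pi\tilde\nu_i)}{\pi\nu_i}+O(p^{-2}).
\]
I need this to be at most a constant $\gamma_\nu<1$. When $\tilde\nu_i\le \nu_i$ we have $\sin(\pi\tilde\nu_i)\le\pi\tilde\nu_i\le\pi\nu_i$, with the inequality strict uniformly except when $\tilde\nu_i\approx\nu_i$ is small, and even then $\sin(\pi x)/(\pi x)\le \sin(\pi\nu)/(\pi\nu)<1$ for $x\ge\nu$. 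The case $\tilde\nu_i<\nu$ needs separate treatment, and I expect to handle it by noting that $\sin(\pi\tilde\nu_i)/(\pi\nu_i)\le\tilde\nu_i/\nu_i$, which is bounded away from $1$ whenever $\tilde\nu_i\le(1-c)\nu_i$. So I will set
\[
\gamma_\nu:=\max\bigl\{\sin(\pi\nu)/(\pi\nu),\,1-c\bigr\}<1
\]
after splitting on whether $\tilde\nu_i$ is within a factor $1-c$ of $\nu_i$. These are exactly the normalized weights of \cref{cor:soft-opi-proxy}, i.e.\ $\FO^+(\widetilde C;\widetilde{\mbf G})$ up to the harmless $\tilde\nu_i/\nu_i$ factors.

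The remaining sum is controlled using the structure of $\widetilde C^\perp$: it has dimension $\tilde k=O(1)$ and distance $\tilde d\ge k-O(1)$. Every nonzero $y$ has at least $k-O(1)$ nonzero coordinates, each contributing at most $\gamma_\nu+O(p^{-2})$, and all other factors are at most $1$. Hence each term is at most $\gamma_\nu^{k-O(1)}$. The obstacle is that $|\widetilde C^\perp|=p^{O(1)}$, and $p$ may be as large as $\exp(n^{O(1)})$, so a union bound over terms fails. Instead I would mimic the Cauchy--Schwarz step in the proof of \cref{lem:wftl}. Since $\dim\widetilde C^\perp=\tilde k=O(1)$, I choose $O(1)$ disjoint sets of coordinates on which the projection of $\widetilde C^\perp$ is injective, each of size $O(\tilde k)$; this uses the distance bound and greedy selection of information sets. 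On those coordinates I bound via $\ell^2$, using $\sum_\xi|\widehat{\widetilde G}_i(\xi)|^2\le\tilde\nu_i$ from \cref{cor:soft-arc}, giving a factor $(\tilde\nu_i/\nu_i^2)^{1/2}\cdot$ of size $O_\nu(1)$ per coordinate and hence $O_\nu(1)^{O(1)}$ overall. On the remaining support coordinates, $k-O(1)$ of them, I use the pointwise bound $\gamma_\nu$. This yields $O_\nu(1)\gamma_\nu^{k-O(1)}$ uniformly in $\mbf G$, so any $\sigma_\nu\in(\gamma_\nu,1)$ works.

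I expect the main difficulty to be this last step: making the $\ell^2$/$\ell^\infty$ split rigorous for general (not MDS) $\widetilde C^\perp$ over vector-space alphabets $V_i^\star$. The argument requires that two disjoint small coordinate blocks each determine $y$, and this is where the hypotheses $\tilde k=O(1)$ and $\tilde d\ge k-O(1)$ must be used carefully.
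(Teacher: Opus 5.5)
Your proposal is correct and follows essentially the same route as the paper: you split on $\mathsf{Good}$ with $L>\log(1/\nu)+\log(1/\sigma_\nu)$, normalize the Fourier weights by $\nu_i$ to get $\ell^2$ mass at most $\nu^{-1}$ and a pointwise bound below one on nonzero frequencies, then pick two disjoint blocks $J_1,J_2$ of size at most $\tilde k$ on which $\widetilde C^\perp$ projects injectively (possible since $\tilde d>\tilde k$) and apply Cauchy--Schwarz to get $\sigma_\nu^{\tilde d-2\tilde k}\nu^{-\tilde k}$. One small fix: your middle case $(1-c)\nu_i<\tilde\nu_i<\nu$ only yields the constant $\sin(\pi(1-c)\nu)/(\pi(1-c)\nu)$, not $\sin(\pi\nu)/(\pi\nu)$, but the case split is unnecessary anyway, because $\tilde\nu_i\le\nu_i$ already gives $\sin(\pi\tilde\nu_i)\le\sin(\pi\min\{\nu_i,1/2\})$ and hence the uniform bound $\sin(\pi\min\{\nu,1/2\})/(\pi\nu)<1$ that the paper uses.
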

\begin{proof}
Define $h:(0,1]\to\R_{\ge0}$ by
\[
 h(t):=\frac{\sin(\pi\min\{t,1/2\})}{\pi t}.
\]
Observe that $h$ is decreasing and $h(t)<1$ for every $t>0$.
Fix $\sigma_\nu\in(h(\nu),1)$ and $L>\log(1/\nu)+\log(1/\sigma_\nu)$.
Apply \cref{fact:hmsy-model} with $\mc G$, rate $R$, density
lower bound $\nu$, and $L$, and adopt the objects and notation in \cref{fact:hmsy-model}. Fix any $C$ satisyfing the event $\mathsf{Good}$ and
$\mbf G$ with $\nu_i:=\E G_i\ge\nu$ for every $i\in I$. 
For each $i\in I$, define
\[
 w_i(\xi):=\frac{|\widehat{\widetilde{G}}_i(\xi)|}{\nu_i}.
\]
Now, $\mb{E}\widetilde{G}_i\le \nu_i$ so $w_i(0)\le 1$, and $|\widehat{\widetilde G}_i(\xi)|\le\E\widetilde G_i$ gives $w_i(\xi)\le1$ for all $\xi$. As $0\le\widetilde{G}_i\le1$ so $\widetilde{G}_i^2\le \widetilde{G}_i$,
\[
 \sum_{\xi\in V_i^\star}w_i(\xi)^2 = \nu_i^{-2}\mb{E}  \widetilde{G}_i^2 
 \le\nu_i^{-1}\le\nu^{-1},
\]
by Parseval's identity. Next, we show that every nonzero character $\xi$ gives a decay factor, analogous to \cref{cor:L-infty-bound}.
For $\xi\ne0$, maximizing \cref{cor:soft-arc} over means $\mb{E}\widetilde{G}_i\le \nu_i$ and using $\nu_i\ge \nu$ gives
\[
 w_i(\xi)
 \le
 \frac{\sin(\pi \widetilde{G}_i)}
      {\nu_i p\sin(\pi/p)}
 \le  \frac{\sin(\pi\min\{\nu_i,1/2\})}
      {\nu_i p\sin(\pi/p)}=
 \frac{\pi h(\nu_i)}{p\sin(\pi/p)}\le \frac{\pi h(\nu)}{p\sin(\pi/p)}= h(\nu)+O(p^{-2})
<\sigma_\nu,
\]
for sufficiently large $p$.

Recall that $\widetilde{C}^\perp$ has dimension $\widetilde{k}=O(1)$ and distance $ \widetilde{d}\ge k-O(1)$ by \cref{fact:hmsy-model}.
We next run the support splitting argument underlying the proof of \cref{lem:wftl} to $\widetilde{C}^\perp$ instead of $C^\perp$.
For sufficiently large $n$, $\widetilde{d}>2\widetilde{k}$.
Choose $J_1\subset I$ with $|J_1|\le\widetilde{k}$ such that
projection $\pi_1$ onto $J_1$ is injective on $\widetilde{C}^\perp$.
Since $\widetilde{d}>|J_1|$, projection onto $I\setminus J_1$ is
also injective. Thus this complement contains a second set $J_2$ with $|J_2|\le\widetilde{k}$ on which the projection $\pi_2$ is injective on $\widetilde{C}^\perp$. Now,
every $0\ne y\in\widetilde{C}^\perp$ has support size at least
$\widetilde{d}-2\widetilde{k}$ outside $J_1\cup J_2$. By
Cauchy--Schwarz and that $w_i\in [0, 1]$,
\begin{equation}
\label{eq:like-bdir}
\begin{aligned}
\sum_{0\ne y\in\widetilde C^\perp}
   \prod_{i\in I}w_i(y_i) & \le
 \sigma_\nu^{\widetilde d-2\widetilde k}
 \sum_{0\ne y\in\widetilde C^\perp}
 \left(\prod_{i\in J_1}w_i(y_i)\right)
 \left(\prod_{i\in J_2}w_i(y_i)\right)\\
&
\le
 \sigma_\nu^{\widetilde d-2\widetilde k}
 \left(
   \sum_{y\in\widetilde C^\perp}
   \prod_{i\in J_1}w_i(y_i)^2
 \right)^{1/2}
 \left(
   \sum_{y\in\widetilde C^\perp}
   \prod_{i\in J_2}w_i(y_i)^2
 \right)^{1/2}\\
&=
 \sigma_\nu^{\widetilde d-2\widetilde k}
 \left(
   \sum_{z\in\pi_{1}(\widetilde C^\perp)}
   \prod_{i\in J_1}w_i(z_i)^2
 \right)^{1/2}
 \left(
   \sum_{z\in\pi_{2}(\widetilde C^\perp)}
   \prod_{i\in J_2}w_i(z_i)^2
 \right)^{1/2}\\
&\le
 \sigma_\nu^{\widetilde d-2\widetilde k}
 \left(
   \sum_{z\in\prod_{i\in J_1}V_i^\star}
   \prod_{i\in J_1}w_i(z_i)^2
 \right)^{1/2}
 \left(
   \sum_{z\in\prod_{i\in J_2}V_i^\star}
   \prod_{i\in J_2}w_i(z_i)^2
 \right)^{1/2}\\
&=
 \sigma_\nu^{\widetilde d-2\widetilde k}
 \left(
   \prod_{i\in J_1}
   \sum_{\xi\in V_i^\star}w_i(\xi)^2
 \right)^{1/2}
 \left(
   \prod_{i\in J_2}
   \sum_{\xi\in V_i^\star}w_i(\xi)^2
 \right)^{1/2}\\
&\le
 \sigma_\nu^{\widetilde d-2\widetilde k}
 \nu^{-|J_1|/2}
 \nu^{-|J_2|/2}\\
&\le
 \sigma_\nu^{\widetilde d-2\widetilde k}
 \nu^{-\widetilde k}.
\end{aligned}
\end{equation}
Now, as $\widetilde{k}=O(1)$, this is $O(\sigma_\nu^k)$. 
Dividing \cref{eq:hmsy-fourier-reduction} by
$\prod_{i=1}^n\nu_i\ge\nu^n$, we obtain
\[
 \FO(0;C,\mbf G)\le
 2\left(\prod_{i\notin I}\nu_i^{-1}\right)
 \sum_{0\ne y\in\widetilde{C}^\perp}\prod_{i\in I}w_i(y_i)
 +e^{-Ln}\nu^{-n}\le
 O(\sigma_\nu^k)+e^{-Ln}\nu^{-n},
\]
since $n-|I|=O(1)$. On the event $\mathsf{Good}^c$, which happens with probability at most $e^{-Ln}$, the crude bound
$\FO(0;C,\mbf G)\le\nu^{-n}$ holds. Since every bound so far is uniform over $\mbf G\in\mc{G}^n$ with $\mb{E}G_i\ge \nu$, 
\[
 \E_C\left[
   \sup_{\substack{\mbf G\in\mc G^n\\\E G_i\ge\nu}}
   \FO(0;C,\mbf G)
 \right]
 \le O(\sigma_\nu^k+e^{-Ln}\nu^{-n})
 =O(\sigma_\nu^k).
\]
The last equality follows from our choice of $L$ and $k\le n$, as desired.
\end{proof}
\begin{remark}[Overcoming the rate-$1/2$ barrier; comparison between \cite{HMSY26} and our earlier approach]
\label{rem:like-bdir}
The key method to bounding the Fourier proxy in \cref{eq:like-bdir} is familiar: we fit two information sets in the coordinates and gain on the support outside. This is exactly the approach to the Fourier proxy bounds in the original leakage resilience works \cite{bdir,mpsw21}, the connection to OPI in \cite{us}, and the proof of \cref{eq:wftl}.

The key insight of \cite{HMSY26} in \cref{fact:hmsy-model} is to leverage the structured leakage $\mbf{G}\in\mc{G}^n$ to construct the auxiliary code $\widetilde{C}^\perp$ on which we apply the bound. The Fourier proxies remain close via \cref{eq:hmsy-fourier-reduction} with high probability, and fitting two information sets $J_1, J_2$ into $\widetilde{C}^\perp$ is much easier since $\dim \widetilde{C}^\perp=\widetilde{k}=O(1)$. 
Indeed, $C^\perp$ has length $n$ and dimension $n-k$, so to fit two information sets, we need $2(n-k)\le n$, which gives exactly the rate-$1/2$ barrier.

From this perspective, we view the overcoming of this barrier in \cite{HMSY26} as analogous to our approach in  \cref{subsec:llr-below,subsec:opi-above} via the $U^3$ Gowers norm that achieved rate $R=1/2-\delta$, except \cite{HMSY26} uses $U^s$ norms for $s>1/R$ to handle any positive rate $R$. On the other hand, the construction in \cite{HMSY26} of the key auxiliary $\widetilde{C}^\perp$ holds only for typical evaluations $\alpha$.
\end{remark}

We now prove the main result of this subsection, which we restate for the reader's convenience.  The proof idea is similar to that of the analogous statement for one-bit LLR in \cite{HMSY26}. We give a short self-contained proof, using the Fourier-analytic framework and the ingredients developed above.

\randomopi*

\begin{proof}
Without loss of generality, we assume $|S_i|=\rho p$ for every $i$. Indeed, the general unequal density case follows via \cref{prop:subsample}. Now, fix any integer $s> \max\{4, 1/R\}$ and fix $\sigma:=\sigma_{\rho/2}$ as guaranteed by \cref{prop:hmsy-relative}. Choose sufficiently small positive constants $\tau=\tau(R, \rho)\in (0, R/2)$ and $\eta=\eta(R, \rho)\in (0, \rho/2)$ to be determined. Without loss of generality assume $t:=\tau n$ is an integer.

By \cref{fact:hmsy-approx}, we can write $\one_{S_i} = G_i+H_i$ for $G_i\in\mc{G}$ and $\nu_i:=\mb{E}G_i$ such that 
\begin{equation}
\label{eq:decomp-param}
 |\nu_i-\rho|\le\eta,\qquad \nu_i\ge\rho/2,
 \qquad \E|H_i|\le\E f_i+\E G_i\le2\rho_p+\eta.
\end{equation}
Since $\rho=\E G_i+\E H_i$, we define $ \Delta_J(C)$ for each $J\subset[n]$ and expand $\FO(0;C,\mbf S)$ as follows:
\begin{equation}\label{eq:rrs-outer-expansion}
\begin{aligned}
 \rho^n\FO(0;C,\mbf S)
 &=
 \left|
   \E_{x\in C}\prod_i(G_i(x_i)+H_i(x_i))
   -\prod_i(\E G_i+\E H_i)
 \right| \le \sum_{J\subset[n]}|\Delta_J(C)|,
\\ \Delta_J(C) & :=
 \E_{x\in C}
   \prod_{i\in J}H_i(x_i)\prod_{i\notin J}G_i(x_i)
 -
 \prod_{i\in J}\E H_i\prod_{i\notin J}\nu_i.
\end{aligned}
\end{equation}
Therefore, it suffices to show that $\rho^{-n}\sum_J \mb{E}_C[\sup_{\mbf{S}}|\Delta_J(C)|]$ is exponentially small for any appropriate functions $G_i$ and $H_i$. Indeed, suppose that, uniformly over every $\mbf{S}$ with $|S_i|=\rho p$, we had
\begin{equation}
\label{eq:rrs-exp-obj}
 \E_C\left[
   \sup_{\mbf{S}}\FO(0;C,\mbf S)
 \right]\le\rho^{-n}\sum_{J\subset [n]} \mb{E}_C\left[\sup_{\mbf{S}}|\Delta_J(C)|\right] \le e^{-\Omega(n)},
\end{equation}
Markov's inequality then gives $
   \sup_{\mbf S}
   \FO(0;C,\mbf S)>e^{-\Omega(n)}$ with probability at most $e^{-\Omega(n)}$, which proves near-optimal OPI discrepancy by \cref{lem:translated-opi-proxy}.
Thus, we need to prove the second inequality in~\cref{eq:rrs-exp-obj}.
To do so, we break the proof down into cases, depending on $|J|$.
\begin{itemize}
\item Suppose $j:=|J|\ge t$, and choose $T\subset J$ with $|T|=t$.
We will choose $\tau$ sufficiently small that
\begin{equation}
\label{eq:tau-req-1}
 n-t\le s(k-t).
\end{equation}
By \cref{fact:block-ul} applied to the factors indexed
by $T$, and by the fact that all remaining factors are $1$-bounded, we have
\[
 \left|
 \E_{x\in C}
   \prod_{i\in J}H_i(x_i)\prod_{i\notin J}G_i(x_i)
 \right|
 \le
 \prod_{i\in T}\|H_i\|_{U^s}
 \le\eta^t.
\]
The other term in the definition \cref{eq:rrs-outer-expansion} of $\Delta_J(C)$ satisfies the same bound as $|\E H_i|\le\eta$ and $\nu_i\le1$.
Therefore, for every fixed $C$ and $J\subset [n]$ with $|J|=j\ge t$,
\begin{equation}\label{eq:rrs-many-residuals}
 |\Delta_J(C)|\le2\eta^t.
\end{equation}
    \item Suppose $j:=|J|<t$.  We consider the smaller Reed-Solomon code $C_J:=\on{RS}_p((\alpha_i)_{i\notin J};k-j)$. Then, $C_J$ has rate at least $R-\tau \ge R/2$. The idea is that on the coordinates $i\not \in J$, we have structured terms $G_i$, which we handle via \cref{prop:hmsy-relative}. Let
    \[
D_J(C):=
 \sup_{\substack{(F_i)_{i\notin J}\in\mc G^{n-j}\\
                 \E F_i\ge\rho/2}}
 \FO(0;C_J,(F_i)_{i\notin J}).
\]
 First, we bound $\Delta_J(C)$ by $D_J(C)$ for each fixed $C$. Sample $x\sim\on{Unif}(C)$ and consider the restriction $x_J$ to coordinates $J$. Since $j=|J|<t<k$ and $C$ is MDS, $x_J$ is uniform on $\F_p^J$. Conditioned on $x_J=u$, the remaining coordinates $(x_j)_{j\not\in J}$ lie on an affine transformed version of $C_J$.
 Namely, there exists $a_{i}(u)\in\F_p, b_i\in\F_p^\times$ such that we have the following equality of distribution:
 \[\{(x_j)_{j\not\in J}\mid x_J=u \}\overset{d}{=} (a_i(u)+b_iz_i)_{i\not\in J}\]
 for $z\sim\on{Unif}(C_J)$. This follows from simple polynomial interpolation. Now, let $G_{i}^{(u)}(z):=G_i(a_i(u)+b_iz)$. Since $\mc{G}$ is closed under affine precomposition by \cref{fact:hmsy-approx} and affine precompositions preserve the mean, then $G_{i}^{(u)}\in \mc{G}$ and $\mb{E}G_i^{(u)}= \mb{E}G_i=\nu_i\ge \rho/2$. Therefore, \[
 \left|
\mathbb E\left[\prod_{i\notin J}G_i^{(u)}(x_i)\mid x_J=u\right]
-\prod_{i\notin J}\nu_i
\right|
\le
D_J(C)\prod_{i\notin J}\nu_i,
 \]
by the definition of $D_J(C)$ and $\FO$.
Consequently,
\[
 |\Delta_J(C)|
 =
 \left|\E_{u\in\F_p^J}
 \left[
   \prod_{i\in J}H_i(u_i)
   \left(
     \E_{z\in C_J}\prod_{i\notin J}G_{i,u}(z_i)
     -\prod_{i\notin J}\nu_i
   \right)
 \right]\right|
 \le
 D_J(C)
 \prod_{i\in J}\E|H_i|
 \prod_{i\notin J}\nu_i.
\]

Using $\nu_i\le\rho+\eta$ and
$\E|H_i|\le2\rho+\eta\le2(\rho+\eta)$, we obtain
\begin{equation}\label{eq:rrs-few-residuals}
 \rho^{-n}|\Delta_J(C)|
 \le
 2^j(1+\eta/\rho)^nD_J(C)
 \le
 2^je^{\eta n/\rho}D_J(C).
\end{equation}
Since $C_J$ has rate at least $R/2$, \cref{prop:hmsy-relative} implies that there is a constant
$\sigma=\sigma_{\rho/2}\in (0, 1)$ and a constant $K$ independent of $J$ and $n$, such that for every $J\subset [n]$ with $|J|=j<t$,
\begin{equation}\label{eq:rrs-shortened-average}
 \E_C D_J(C)\le K\sigma^{k-j}.
\end{equation}
\end{itemize}

In both cases, 
the bounds \cref{eq:rrs-shortened-average,eq:rrs-few-residuals,eq:rrs-many-residuals} hold uniformly over all lists $S_i$ with 
$|S_i|=\rho p$. Taking their supremum in
\cref{eq:rrs-outer-expansion}, and then averaging over $C$ (i.e. over the evaluation points $\alpha$), gives
\begin{equation}\label{eq:rrs-outer-bound}
\begin{aligned}
 \E_C\left[
   \sup_{\mbf{S}}
   \FO(0;C,\mbf S)
 \right]
 &\le
\rho^{-n}\cdot 2\eta^t\cdot 2^n +e^{\eta n/\rho}
 \sum_{\substack{J\subset[n]\\|J|<t}}
 2^{|J|}\E_C D_J(C)
 \\
 &\le
2(2/\rho)^n\eta^t+ K\sigma^k e^{\eta n/\rho}
 \sum_{j<t}\binom nj(2/\sigma)^j.
\end{aligned}
\end{equation}
By \cref{eq:rrs-exp-obj}, it suffices to show that both terms are exponentially small, by choosing our parameters $\tau$ and $\eta$. We first choose $\tau$ sufficiently small to satisfy \cref{eq:tau-req-1} and that binary natural log entropy $H(z):=-z\log z-(1-z)\log(1-z)$, we have that
\begin{equation}
\label{eq:tau-req-2}
H(\tau)+\tau\log(2/\sigma)<\frac{R}{4}\log(1/\sigma).
\end{equation}
Next, choose $\eta\in(0,\rho/2)$ sufficiently small that
\begin{equation}
\label{eq:eta-req}
\eta <\frac{\rho R}{4}\log(1/\sigma)\quad\text{and}\quad
 \ln(2/\rho)+\tau\log\eta <-\frac{R}{2}\log(1/\sigma).
\end{equation}
Now, \cref{eq:tau-req-2,eq:eta-req} shows that both terms in \cref{eq:rrs-outer-bound} are
$e^{-\Omega(n)}$, via the entropy estimate
\[
 \sum_{j<t}\binom nj(2/\sigma)^j
 \le
 \exp\left(
   n\left[H(\tau)+\tau\log(2/\sigma)+o(1)\right]\right).
\]
Note that the parameter dependencies are noncircular: $\sigma$ depends only on $\rho/2$,
so we choose $s,\tau,\eta$ first, then fix the class $\mc G$ from
\cref{fact:hmsy-approx}, and the large constant in the requirement $\log p\ge n^{\Omega(1)}$.
\end{proof}

\section{Quantum Algorithms for OPI}
\label{sec:algo}
In this section, we leave random Reed-Solomon codes aside and develop quantum algorithms for worst-case OPI for arbitrary Reed-Solomon codes. Throughout, we use the machinery and notation developed in \cref{subsec:expansion-recap} and the quantum algorithm model in \cref{subsec:subsampling}.
For convenience, the state analysis first assumes common known density $\rho$; the final proofs use \cref{prop:oracle-thinning} to handle unequal densities.

\subsection{Recap of the Algorithm in \cite{sjo}}
\label{sec:algo-recap}
We begin by recapping the algorithmic results of \cite{sjo}.  Recall from \cref{eq:dqi-state} that
\[ F_\ell(x) := \sum_{j=\ell-\sigma}^\ell \frac{q_j(x)}{\sqrt{{n \choose j}} },\]
where $\sigma = \lfloor \log\log n \rfloor$.
The algorithm of \cite{sjo} samples from the distribution proportional to $F_\ell(x)^2$ on $C$ and prepares the DQI state $|\varphi_\ell\rangle \propto \sum_{x \in C} F_\ell(x)\ket{x}$ as in \cref{eq:dqi-state}. First, define normalizations
\begin{equation}
\label{eq:sw-target-state}
Z_\ell:=\E_{x\in C}|F_\ell(x)|^2,
\quad\text{and}\quad
Z_0:=\E_{x\in\F_p^n}|F_\ell(x)|^2.
\end{equation}
We now state a main result of \cite{sjo} that we will use as a black-box.
\begin{theorem}[{\cite[Theorem~5.2 and Propositions~6.2--6.3]{sjo}}]
\label{thm:sjo-main}
Let $C \subset \F_p^n$ be a linear code.  Suppose some deterministic classical algorithm that takes as input a received word $r \in \F_p^n$ and returns the set $\mathcal{L}(r) = \{c \in C^\perp : |r-c| \leq \ell\}$, with running time $T = T(n, \log p)$. 

If $|\mathcal{L}(r)| \leq L$ for every $r \in \F_p^n$, then there is a quantum algorithm that prepares $\ket{\varphi_\ell}$ exactly under coherent membership oracle access to the input lists $S_i$ with known common density $\rho\in (0, 1)$, and with running time $\poly(T,n,\log p)$ and flagged success probability $\frac{Z_\ell}{LZ_0}$.
\end{theorem}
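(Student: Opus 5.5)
The statement to prove is \cref{thm:sjo-main}, which the paper imports from \cite{sjo} as a black box. I would reconstruct it as the standard DQI preparation pipeline with Hamming-weight-$\le\ell$ error decoding replaced by list decoding.

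\textbf{Step 1: describe the dual-side state.} Fourier transform the target state. Since $\wh g_i(0)=0$, Fourier inversion on $C$ gives
\[
\ket{\varphi_\ell}\propto\sum_{x\in C}F_\ell(x)\ket{x}=\mathrm{QFT}^{-1}\sum_{r\in\F_p^n}\Big(\sum_{c\in C^\perp}\wh{F_\ell}(r+c)\Big)\ket{r}.
\]
Here $\wh{F_\ell}$ is supported on words $e$ of weight $j\in[\ell-\sigma,\ell]$, with amplitude $\binom nj^{-1/2}\prod_{i\in\supp(e)}\wh g_i(e_i)$.

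\textbf{Step 2: prepare an error-side state.} Prepare
\[
\sum_{|e|\le\ell}\wh{F_\ell}(e)\ket{e}\ket{0}.
\]
To do this, first prepare the weight register $\sum_j\binom nj^{-1/2}\ket{j}$ together with a Dicke superposition over supports. Then, on each support coordinate, prepare the single-site state $\propto\sum_{\xi\ne0}\wh g_i(\xi)\ket{\xi}$. With only membership oracles this is done as in \cite{yz,sjo}: prepare $\sum_x g_i(x)\ket{x}$ by querying the oracle and rotating amplitude by $\sqrt{\rho}$ versus $\sqrt{1-\rho}$, which is why $\rho$ must be known. Then apply a QFT over $\F_p$ and postselect on a nonzero outcome, which is exact because the mean of $g_i$ is zero.

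Next, coherently compute the syndrome: choose a random-looking parity map and apply the map $e\mapsto e+c$ by adding a uniform superposition over $c\in C^\perp$. This yields $\sum_{c,e}\wh{F_\ell}(e)\ket{e}\ket{e+c}$.

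\textbf{Step 3: uncompute $e$ (the main obstacle).} We must erase $\ket{e}$ given $r=e+c$. Unique decoding fails beyond half the distance, because several pairs $(c,e)$ give the same $r$. So run the list decoder on $r$ to obtain $\mathcal L(r)$, which has at most $L$ elements, and store the sorted list coherently. Then apply a unitary that maps the index register into a uniform superposition over list positions, and attempt uncomputation by measuring a flag for the projection onto the symmetric state.

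The amplitude of $\ket{r}$ after a successful flag is proportional to
\[
\frac{1}{\sqrt{|\mathcal L(r)|}}\sum_{c\in\mathcal L(r)}\wh{F_\ell}(r-c),
\]
and one must check that this is exactly the Fourier transform of $F_\ell$ restricted to $C$. The key subtlety is avoiding the $1/\sqrt{|\mathcal L(r)|}$ distortion. I would handle it by padding every list to exactly $L$ entries with dummy slots, so that the normalization is uniform.

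\textbf{Step 4: success probability and output.} The success probability is the squared norm of the projected vector, $\sum_r\big|\sum_{c}\wh{F_\ell}(r-c)\big|^2/L$, divided by the prepared norm $\sum_e|\wh{F_\ell}(e)|^2$. By Parseval these quantities are $Z_\ell$ (up to the $|C|$ normalization) and $Z_0$ respectively, giving flagged success probability $Z_\ell/(LZ_0)$. A final inverse QFT yields $\ket{\varphi_\ell}$ exactly. The running time is $\poly(T,n,\log p)$, dominated by the decoder calls, which are made coherently via reversible simulation of the deterministic algorithm.
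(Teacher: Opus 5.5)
Your reconstruction matches the paper's walkthrough of \cref{thm:sjo-main}: local Fourier states placed on a superposition of supports, coherent list decoding with lists padded to length $L$, projection onto $\ket{+_L}$ giving success probability $Z_\ell/(LZ_0)$, and a final inverse QFT. The only difference is cosmetic: you carry the coset as $\sum_{c\in C^\perp}\ket{e+c}$ and invert the QFT on $\F_p^n$, whereas the paper computes the syndrome $B^\top y$, decodes from a fixed representative $r_s$, and inverts the QFT on $\F_p^k$ before encoding $z\mapsto Bz$. There is one small slip: a normalized Dicke state already contributes the factor $\binom nj^{-1/2}$, so the weight register should then be uniform over $j\in[\ell-\sigma,\ell]$ rather than weighted by $\binom nj^{-1/2}$.
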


Before explaining how we apply \cref{thm:sjo-main}, we briefly walk through the algorithm behind it to explain how to prepare the state $\ket{\varphi_\ell}$ in a few steps.  We note that the details of the algorithm are not needed for our result, as we use \cref{thm:sjo-main} as a black box.  But we include them for completeness, and in particular because we have synthesized a few results of \cite{sjo} (Theorem 5.2 and Propositions 6.2 and 6.3 in that paper) in the statement of~\cref{thm:sjo-main}, and we wish to explain how they fit together to give \cref{thm:sjo-main}.  Our synthesis is very similar to the approach in~\cite[Theorem 7.1]{sjo}.

\paragraph{Step 1: Local Preparations.}
First, the algorithm prepares the Fourier contributions separately. Let
$B\in\F_p^{n\times k}$ generate $C$,
so $\ker B^\top=C^\perp$. With the coefficients $u_j$ from
\cref{eq:dqi-u},  so that $F_\ell(x)=\sum_j u_jq_j(x)$ and $\ket{\varphi_\ell}=\ket{\varphi_\ell(u)}$. Write its Fourier amplitudes as
\begin{equation}
\label{eq:sw-error-amplitudes}
\widehat F_\ell(y) :=u_{|y|}\prod_{i\in\supp(y)}\wh g_i(y_i),\quad\text{where}\quad g_i(x) = \frac{\mathbf{1}_{S_i}(x) - \rho}{\sqrt{\rho(1 - \rho)}}
\end{equation}
as usual, per the common density case where $\rho_i=\rho$.
For each coordinate, \cite[Proposition~6.3]{sjo} prepares
the coordinate $i$ state 
\[ \sum_{\xi\ne0}\wh g_i(\xi)\ket\xi.\]
Now, prepare a uniform superposition over support sizes
$j\in[\ell-\sigma,\ell]$ and, for each $j$, over all $j$-subsets,
and place the local Fourier states on that subset.
Since the resulting vector $y$
determines both $j=|y|$ and the chosen subset $\on{supp}y$,
we can erase auxiliary labels and compute syndrome $B^\top y$.
By \cite[Proposition~6.2]{sjo}, we obtain the state
\begin{equation}
\label{eq:sw-labeled-error-state}
\frac1{\sqrt{Z_0}}\sum_y \widehat F_\ell(y) \ket y\ket{B^\top y} =\frac1{\sqrt{Z_0}}\sum_s \left(\sum_{\substack{B^\top y =s \\ |y|\le \ell}} \widehat F_\ell(y) \ket y\right) \ket{s}.
\end{equation}
Here, Parseval's identity on the ambient space gives normalization
$Z_0=\sum_y|\widehat F_\ell(y) |^2$.
Note that the sum over $y$ versus over $y$ with $|y|\le \ell$ are identical since $u_{|y|}$ and $\widehat F_\ell(y) $ vanish otherwise.

\paragraph{Step 2: Applying the Decoder.}
This is the key step: we must add the contributions having the same syndrome
$s$ coherently and erase the states $\ket{y}$. For each $s$, choose $r_s$ with $B^\top r_s=s$, so
\[
\{y:B^\top y=s, |y|\le\ell\}
=\{r_s-c:c\in C^\perp, |r_s-c|\le\ell\}.
\]
Thus, a classical list decoder that finds all codewords within radius $\ell$ of $r_s$ will enumerate all errors $y$ that appear in the sum in \cref{eq:sw-labeled-error-state} for the same syndrome $s$. Now, the reversible indexing construction
of \cite[Lemma~5.1]{sjo} replaces the error register by the
word's index in a canonically ordered list padded to some common length $L$.  That is, for some $j_s(y)\in [L]$, from the state in \cref{eq:sw-labeled-error-state}, we can efficiently prepare the state 
\begin{equation}
\frac1{\sqrt{Z_0}}\sum_s \left(\sum_{\substack{B^\top y =s \\ |y|\le \ell}} \widehat F_\ell(y) \ket{j_s(y)}\right) \ket{s}.
\end{equation}
To erase the index, we take the projective measurement as in \cite[Theorem~5.2]{sjo} given by 
\[ \ket{+_L}\bra{+_L}\otimes I\quad\text{where}\quad \ket{+_L} :=\frac{1}{\sqrt{L}}\sum_{j\in [L]}\ket{j}.\]
As $\braket{+_L|j_s(y)} = L^{-1/2}$, the success probability and the unnormalized successful branch are 
\begin{equation}
\label{eq:success-branch}
\frac{1}{LZ_0}\sum_s \left|\sum_{B^\top y =s}\widehat F_\ell(y) \right|^2=\frac{Z_\ell}{LZ_0}\quad\text{and}\quad \frac1{\sqrt{LZ_0}}
\sum_s\left(\sum_{B^\top y=s}\widehat F_\ell(y) \right)\ket s,
\end{equation}
respectively. Here, we used Parseval's identity on the message space to obtain
\[
Z_\ell=\E_{z\in\F_p^k}|F_\ell(Bz)|^2
=\sum_s\left|\sum_{B^\top y=s}\widehat F_\ell(y) \right|^2.
\]
Note that padding ensures the common factor $L^{-1/2}$; using a different length normalization for each list
would change the target amplitudes.

\paragraph{Step 3: Inverse Fourier Transform.}
Finally, the inverse Fourier transform and the encoding
$z\mapsto Bz$ turn \cref{eq:success-branch} into the scaled version of $\ket{\varphi_\ell}$, with squared norm $Z_\ell/(LZ_0)$
\begin{equation}
\label{eq:sw-coherent-fiber-amplitude}
\frac1{\sqrt{LZ_0|C|}}\sum_{x\in C}F_\ell(x)\ket x \propto \ket{\varphi_\ell}.
\end{equation}
Indeed, the square-norm calculation follows \cref{eq:success-branch}. The inverse Fourier transform of the syndrome amplitudes is
exactly $F_\ell(Bz)/\sqrt{|C|}$, and encoding $z\mapsto Bz$
gives \cref{eq:sw-coherent-fiber-amplitude}. This is via Fourier inversion:
\[
F_\ell(Bz)
=
\sum_y \widehat F_\ell(y)  e_p(-\langle Bz,y\rangle)
=
\sum_s e_p(-\langle z,s\rangle)
\left(\sum_{B^\top y=s}\widehat F_\ell(y) \right).
\]

\subsection{Improved Semicircle Law}
\label{sec:algo-partial}
In this section, we prove
\cref{thm:main-algo}(1). We will reduce to the case where $|S_i|=\rho p$ for every $i$ via \cref{prop:oracle-thinning}, so fix $\rho \in (0,1)$.  Also fix a rate  $R \in (0, 1)$, and an arbitrarily small error parameter $\epsilon>0$. We wish use the algorithm in \cref{thm:sjo-main} to prepare $\ket{\varphi_\ell}$, where $\ell = \lfloor \alpha n\rfloor$ and we choose
\begin{equation}
\label{eq:def-partial-alpha}
\alpha \in (0,  \min(1-\rho, R)) \quad\text{such that}\quad \on{SCL}_\rho(\alpha)\ge \on{SCL}_\rho(R)-\frac{\eps}{10}.
\end{equation}

To apply \cref{thm:sjo-main} effectively and prove \cref{thm:main-algo}(1), we must first establish both that $LZ_0/Z_\ell$
is polynomially bounded (so that the success probability is not too small) and that the prepared state yields a
sufficiently good solution with nonnegligible probability. 
We verify these conditions in the following below.

\begin{lemma}
\label{lem:z-partial-check}
For any $0< \alpha \le 1-\rho$ with $\alpha <R$, let $\ell=\lfloor \alpha n\rfloor$ and define $\ket{\varphi_\ell}, F_\ell$ as in \cref{eq:dqi-state}. Then, $Z_0=\sigma+1$ and $Z_\ell = \sigma+1+e^{-\Omega(n)}$. Moreover, the expected satisfaction ratio of the outcome $x$ obtained by measuring the DQI state $\ket{\varphi_\ell}$ is $\on{SCL}_\rho(\alpha)+o(1)$.
\end{lemma}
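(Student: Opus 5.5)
The plan is to reduce everything to \cref{thm:sw-main} together with the orthogonality relation \cref{eq:sw-local-and-orthogonality} and the weighted thresholding bound of \cref{lem:N-proxy-bound} (equivalently \cref{lem:wftl}).

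\textbf{Step 1: $Z_0$.} First compute $Z_0=\E_{x\in\F_p^n}F_\ell(x)^2$ directly. By \cref{eq:sw-local-and-orthogonality}, $\E_{\F_p^n}[q_jq_{j'}]=\binom nj\one\{j=j'\}$. Hence
\[
Z_0=\sum_{j=\ell-\sigma}^{\ell}\binom nj^{-1}\binom nj=\sigma+1
\]
exactly.

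\textbf{Step 2: checking \cref{eq:key-decay}.} Both the estimate $Z_\ell=\sigma+1+e^{-\Omega(n)}$ and the satisfaction ratio $\SCL_\rho(\alpha)+o(1)$ are exactly the conclusions of \cref{thm:sw-main}. So it suffices to verify \cref{eq:key-decay} for $k<t\le n$ and $\ell-\sigma-1\le j,j'\le\ell+1$.

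\begin{itemize}
\item Since $\widetilde N_\rho(j,j';t)=0$ for $t>j+j'$, only shells with $k<t\le\min(n,2\ell+2)$ matter.
\item Because $\alpha<R$ is fixed, $\ell\le k-\Omega(n)$, so $2k-t\ge 2(k-\ell)-2=\Omega(n)$ on all relevant shells. This is the same gap as \cref{eq:truncation-gap}.
\item By \cref{lem:key-fourier}, $|\E_{x\in C}q_t|\le\sum_{|y|=t}\prod_{i\in\supp(y)}|\wh g_i(y_i)|$.
\end{itemize}

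It remains to absorb the combinatorial prefactor $\widetilde N_\rho(j,j';t)/\sqrt{\binom nj\binom n{j'}}$ into a weight $b^t$. Here $b$ should be a constant such that $b\gamma(\rho)<1$ after the appropriate normalization, so that \cref{lem:wftl} still applies.

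\textbf{Step 3: the prefactor (main obstacle).} This is the delicate step. The first attempt is the generating-function bound \cref{eq:N-bound} with a free parameter $z>0$:
\[
\widetilde N_\rho(j,j';t)\le z^{-j-j'}(1+z^2)^{n-t}(2z+|\beta_\rho|z^2)^t .
\]
\begin{itemize}
\item Choosing $z=\sqrt{(1-\rho)/\rho}$ gives $\rho^{n}(1+z^2)^{n-t}(2z+|\beta_\rho|z^2)^t=b_\rho^t$, as in \cref{eq:b-rho}.
\item The leftover factor is $z^{-j-j'}\rho^{-n}/\sqrt{\binom nj\binom n{j'}}$, which equals $\pi_j^{-1/2}\pi_{j'}^{-1/2}$ up to constants. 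This is exponentially large unless $j\approx(1-\rho)n$.
\end{itemize}

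So instead I would take the natural choice $z=\sqrt{j/(n-j)}\approx\sqrt{\alpha/(1-\alpha)}$. By Stirling, with $j,j'=\ell+O(\log\log n)$, this choice makes
\[
z^{-j-j'}(1+z^2)^{n}\Big/\sqrt{\binom nj\binom n{j'}}=e^{o(n)} .
\]
The remaining factor per coordinate is $\big(2z+|\beta_\rho|z^2\big)/(1+z^2)=2\sqrt{\alpha(1-\alpha)}+|\beta_\rho|\alpha=:b$.

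Then apply \cref{lem:wftl} with $w_i=b|\wh g_i|$, $\Gamma=b^2$ and $\gamma=b\,\gamma(\rho)$. This needs $b\,\gamma(\rho)<1$.
\begin{itemize}
\item In the balanced case, $b\le1$ and $\gamma=2/\pi$, so this holds.
\item In general, for $\alpha\le1-\rho$ the maximum of $b$ over $\alpha$ is at $\alpha=1-\rho$, where $b=b_\rho$. Hence $b\,\gamma(\rho)\le b_\rho\gamma(\rho)<1$ by the case analysis in the proof of \cref{lem:N-proxy-bound}.
\end{itemize}
I expect this monotonicity check, that $b(\alpha)\le b_\rho$ on $(0,1-\rho]$, together with the uniform Stirling error, to be where most care is needed.

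Given it, the sum over each relevant shell is $e^{o(n)}\gamma^k=e^{-\Omega(n)}$, uniformly in $j,j',t$. This is \cref{eq:key-decay}, and \cref{thm:sw-main} then yields both the estimate for $Z_\ell$ and the expected satisfaction ratio.
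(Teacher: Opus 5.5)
Your proposal is correct and follows essentially the same route as the paper. You compute $Z_0=\sigma+1$ by orthogonality and verify \cref{eq:key-decay} via \cref{eq:N-bound} with $z=\sqrt{\alpha/(1-\alpha)}$. This leaves $1/\sqrt{\pi_j(\alpha)\pi_{j'}(\alpha)}=\Theta(\sqrt n)$ times $(2\sqrt{\alpha(1-\alpha)}+\alpha|\beta_\rho|)^t\le b_\rho^t$, by monotonicity on $[0,1-\rho]$. The gap $2k-(2\ell+2)=\Omega(n)$ then lets \cref{lem:N-proxy-bound} (equivalently \cref{lem:wftl}) finish before invoking \cref{thm:sw-main}.

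The only cosmetic difference is that you feed the weight $b(\alpha)$ directly into \cref{lem:wftl} instead of first replacing it by $b_\rho$; this changes nothing.
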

\begin{proof}
By orthogonality in \cref{eq:sw-local-and-orthogonality}, we see that $Z_0 = \sigma + 1$.
We verify \cref{eq:key-decay} to apply \cref{thm:sw-main} to bound $Z_\ell$ and the expected satisfaction ratio. Let
$\pi_j(\alpha):=\mb{P}(B=j)=\binom nj\alpha^j(1-\alpha)^{n-j}$ for $B\sim \on{Bin}(n,\alpha)$. Note that $h(t):=2\sqrt{t(1-t)}+t|\beta_\rho|$ is nondecreasing for $t\in [0, 1-\rho]$, so it attains maximum at $h(1-\rho)=b_\rho$. 
Now, substituting $z=\sqrt{\alpha/(1-\alpha)}$ into
\cref{eq:N-bound} gives
\begin{equation}
\label{eq:b-rho-expansion}
\frac{\widetilde N_\rho(j,j';t)}
{\sqrt{\binom nj\binom n{j'}}}
\le
\frac{\left(2\sqrt{\alpha(1-\alpha)}+\alpha|\beta_\rho|\right)^t}
{\sqrt{\pi_j(\alpha)\pi_{j'}(\alpha)}}
\le\frac{b_\rho^t}{\sqrt{\pi_j(\alpha)\pi_{j'}(\alpha)}}=\Theta( b_\rho^t \sqrt{n}),
\end{equation}
where we justify the last step as follows: for $j, j'\in [\ell-\sigma-1, \ell+1]$, recalling that $\sigma = \lfloor \log\log n \rfloor$, we have that $|j-\alpha n|=O(\log\log n)$, so Stirling's formula gives $\pi_j(\alpha)=\Theta (n^{-1/2})$ and $\pi_{j'}(\alpha)=\Theta (n^{-1/2})$ uniformly for such pairs $j, j'$. Thus, by \cref{lem:key-fourier}, for every pair in
\cref{eq:key-decay} and every $k<t\le\min\{n,j+j'\}$,
\[
\frac{\widetilde N_\rho(j,j';t)}
{\sqrt{\binom nj\binom n{j'}}}
\abs{\E_{x\in C}q_t(x)}
\le O(\sqrt n)
\sum_{\substack{y\in C^\perp\\|y|=t}}
b_\rho^{|y|}\prod_{i\in\supp(y)}|\wh g_i(y_i)|
\le e^{-\Omega(n)},
\]
where the last inequality is by \cref{lem:N-proxy-bound}. Its support assumption holds uniformly here since $t\le\min(n,2\ell+2)$ and $2k-(2\ell+2)=2(R-\alpha)n-O(1)=\Omega(n)$. Now, \cref{thm:sw-main} gives
$Z_\ell=\sigma+1+O(e^{-\Omega(n)})$ and expected satisfaction ratio
$\SCL_\rho(\ell/n)+o(1)=\SCL_\rho(\alpha)+o(1)$.
\end{proof}

 Since $\ell/n\sim \alpha <R$, the list decoder in \cref{thm:rs-ld} satisfies the condition in \cref{thm:sjo-main} with $L=n^{O(1)}$. Next, we put this together to prove \cref{thm:main-algo}(1) via the algorithm in \cref{thm:sjo-main}. 

 \begin{proof}[Proof of \cref{thm:main-algo}(1)]
 First suppose that all lists have common exact density $\rho$. Since $C^\perp$ has rate $1-R$ and $\alpha<R$ with a fixed
gap, \cref{thm:rs-ld,thm:rs-lr} give a deterministic list decoder
with polynomial running time and a uniform list bound $L=n^{O(1)}$, as required by \cref{thm:sjo-main}.
By \cref{lem:z-partial-check,eq:def-partial-alpha}, measuring
$\ket{\varphi_\ell}$ gives expected satisfaction at least
$\SCL_\rho(R)-\varepsilon/2$ for sufficiently large $n$.
Since satisfaction lies in $[0,1]$, the probability of obtaining
satisfaction at least $\SCL_\rho(R)-\varepsilon$ is at least
$\varepsilon/2$ by applying Markov's inequality to $1-s(x)$.

 \cref{lem:z-partial-check} implies that $Z_\ell/Z_0\ge1/2$, so the algorithm from \cref{thm:sjo-main} prepares the state $\ket{\varphi_\ell}$ successfully with
probability at least $1/(2L)$. 
By the above reasoning, 
this returns a solution
with satisfaction at least $\SCL_\rho(R) - \eps$ with probability at least
$\varepsilon/(4L)$. The final algorithm then runs this algorithm $T = O(L\log n)$ times independently, 
 evaluates the successful outputs' scores exactly, and
retains the best. The fact that $L = n^{O(1)}$ gives
the claimed running time, and shows that the success probability is at least
\[ 1 - \left( 1 - \frac{ \eps }{4L } \right)^T \geq 1 - \frac{1}{n^{\Omega(\eps)}} = 1 - o(1).\]

Finally, if lists sizes satisfy $|S_i|\ge \rho p$, the common-size algorithm above succeeds on every tuple of $\rho p$-element lists. For truth-table input, trim each list to size $\rho p$; for oracle input, apply \cref{prop:oracle-thinning} to this common-size algorithm. Both reductions preserve satisfaction on the original lists and give the claimed high-probability guarantee in polynomial time.
\end{proof}

Finally, we leverage \cref{lem:z-partial-check} to prove \cref{thm:main-existential}, which we restate below.
\mainexistential*

Before proving \cref{thm:main-existential}, which give a quick overview of the proof.  Choose a set $T$ of $t$
coordinates and consider the DQI state $\ket{\varphi_\ell}$ on the OPI instance with constraints $T^c$. %
The state $\ket{\varphi_\ell}$ induces a distribution on codewords $x$ in the punctured code, and hence a distribution on codewords $x \in C$ (the original code of length $n$). To see this, lift each punctured word uniformly to its fiber in $C$ and observe that all fibers have the same size. Condition on the event that $x \in C$ from this distribution satisfies all constraints indexed by $T$, using the normalized indicator
\begin{equation}
    A_T(x):=\rho^{-t}\prod_{i\in T}\one\{x_i\in S_i\}.
\end{equation}
The resulting conditioned state has amplitude $\sqrt{A_T}F_\ell$.
We follow the proof of \cref{lem:z-partial-check} to show that this conditioned state satisfies every constraint in $T$ and satisfies a double semicircle law fraction of constraints on $T^c$ whenever $t+2\ell\le 2k-\Omega(n)$, for a total satisfaction ratio of 
\[
    \frac{\E_C[sA_TF_\ell^2]}{\E_C[A_TF_\ell^2]}
    =
    \frac tn+\frac{n-t}{n}
       \SCL_\rho\left(\frac{\ell}{n-t}\right)+o(1).
\]
Then, optimizing $t, \ell$ subject to $t+2\ell\le 2k-\Omega(n)$ gives satisfaction ratio arbitrarily close to $V_\rho(R)$. For simplicity, in the proof below we set $t$ and $\ell$ to be these optimizers from the outset.

\begin{proof}[Proof of \cref{thm:main-existential}]
By thinning to a common list size (see \cref{subsec:subsampling}), it suffices to consider a common
density  $\rho$.
Choose a fixed $\eta\in(0,R-\rho)$ sufficiently small relative to $\eps$, and set
\[
    t:=\left\lfloor
          \frac{2(R-\eta-\rho)}{1-2\rho} n\right\rfloor\]
Fix any set $T\in\binom{[n]}{t}$, let $N:=n-t$, and let $\ell:=\lfloor\rho N\rfloor$. Then, we see that
\begin{equation}
\label{eq:AT-gap}
N=\Theta(n)\quad\text{and}\quad 
t+2\ell=2(R-\eta)n\le 2k-\Omega(n).
\end{equation}

Now, let $C|_{T^c} \subset \F_p^{T^c}$ be the puncturing of $C$ to the coordinates indexed by $T^c$.
Consider the OPI instance given by $C|_{T^c}$ with the corresponding input lists $S_i$ for $i\in T^c$. We will consider the DQI state $\ket{\varphi_\ell}$ on this instance with parameters $\ell:=\lfloor\rho N\rfloor, \sigma:=\lfloor \log \log N\rfloor$.
That is,  let
\[ \ket{\varphi_\ell} \propto \sum_{x \in C|_{T^c}} F_\ell(x) \ket{x},\]
where
\begin{equation}
\label{eq:puncture-param}
F_\ell(x):=\sum_{j=\ell-\sigma}^\ell\frac{q_j(x)}{\sqrt{\binom Nj}}\quad\text{where}\quad q_j(x):=\sum_{B\in\binom{T^c}j}\prod_{i\in B}g_i(x_i)\quad\text{and}\quad  g_i(x_i):=
    \frac{\mathbf{1}_{S_i}(x_i)-\rho}
         {\sqrt{\rho(1-\rho)}}.
\end{equation}
Here, we abuse notation and overload $F_\ell, q_j, g_i$ to refer to the versions defined on the punctured instance for the rest of this proof. However, we distinguish between the satisfaction ratios \[
\widetilde{s}(x):=\frac{1}{N}\sum_{i\not\in T}\one_{S_i}(x_i)\quad\text{and}\quad s(x)=\frac{1}{n}\sum_{i=1}^n\one_{S_i}(x_i).
\]
We consider the distribution $\mu$ on $C$ where $\mu(x) \propto A_T(x)F_\ell(x)^2$, with $ A_T(x):=\rho^{-t}\prod_{i\in T}\one\{x_i\in S_i\}$.
We think of $\mu$ as the distribution on $x \in C$ (or equivalently, on polynomials of degree less than $k$) given by the DQI state $\ket{\varphi_\ell}$ for the punctured instance to coordinates $T^c$, conditioned on satisfying every constraint in $T$. 
For $h:\F_p^n\to \R$, we adopt shorthand notation $\E h:=\mb{E}_{x\in \F_p^n}[h(x)]$ and $\E_C h :=\mb{E}_{x\in C}[h(x)]$.
Then, the expected satisfaction $s(x)$ for $x\sim\mu$ is given by
\begin{equation}
\label{eq:def-tilt}
\mb{E}_{x\sim \mu} [s(x)]=
\frac{\mb{E}_{C}[sA_TF_\ell^2]}{\mb{E}_{C}[A_TF_\ell^2]}=\frac{t}{n}+\frac{n-t}{n}\frac{\mb{E}_{C}[\widetilde{s}A_TF_\ell^2]}{\mb{E}_{C}[A_TF_\ell^2]}
\end{equation}

To compute the numerator and denominator, we compute them first over $x\in\F_p^n$ rather than $x\in C$, and argue that passing from the uniform distribution on $\F_p^n$ to that on $C$ changes the expectations by an exponentially small amount. By \cref{eq:sw-local-and-orthogonality}, and by the fact that $q_j, q_j$ depend only on the coordinates $x_i$ for $i\in T^c$ while $A_T$ depends only on coordinates $x_i$ for $i\in T$, we see that
\begin{equation}
\label{eq:ambient-AT}
\mb{E}[A_Tq_jq_{j'}]=\mb{E}[q_jq_{j'}]=\binom Nj \one\{j=j'\}.
\end{equation}
Therefore, conditioning on $A_T$ preserves the expectations over $\F_p^n$, i.e.
\begin{equation}
\label{eq:ambient-scl}
\begin{aligned}
\E[A_TF_\ell^2]&=\E[F_\ell^2]=\sigma+1,\\
\E[\widetilde{s}A_TF_\ell^2]&=\E[\widetilde{s}F_\ell^2] =[\on{SCL}_\rho(\rho)+o(1)](\sigma+1).
\end{aligned}
\end{equation}
where the semicircle law is evaluated at $\rho$ since $\ell\sim\rho N$. The equalities follow
\cref{eq:sw-local-and-orthogonality} and the three-term recurrence in
\cref{lem:layer-recursion}.
Next, recall \cref{lem:N-proxy-bound}. As $\rho\le 1/2$ by assumption, $b_\rho =\lambda=\sqrt{(1-\rho)/\rho}$. Now, apply the proof of \cref{lem:z-partial-check} verbatim to the punctured instance and with $N$ in place of $n$ and $\alpha=\rho$ to obtain the following analog of \cref{eq:b-rho-expansion}: for $\ell-\sigma-1\le j, j'\le \ell+1$,
\[
    \frac{q_j(x)q_{j'}(x)}{\sqrt{\binom Nj\binom N{j'}}}
    =
    \sum_{B\subset T^c}c_B\prod_{i\in B}g_i(x_i),
\]
where $c_\emptyset=\one\{j=j'\}$, $c_B=0$ if $|B|>j+j'$, and otherwise $|c_B|=O(b_\rho^{|B|}\sqrt{N}) = O(\lambda^{|B|}\sqrt{N})$.
Observe that $A_T(x)
    =
    \prod_{i\in T}(1+\lambda g_i(x_i))$ by definition of $g_i$. Applying this and taking expectations over $x\in C$, 
\[
 \frac{\mathbb{E}_{C}[A_Tq_jq_{j'}]}
           {\sqrt{\binom Nj\binom N{j'}}} = c_\emptyset+\sum_{\emptyset\ne D\subset [n]}\lambda^{|D\cap T|}c_{D\cap T^c}\mb{E}_{x\in C}\left[\prod_{i\in D}g_i(x_i)\right].
\]
Now, we obtain the key comparison inequality that says switching distributions from $x\in\F_p^n$ to $x\in C$ incurs exponentially small error for the relevant expectations in \cref{eq:def-tilt}:
\begin{align*}
\left|\frac{\mathbb{E}_{C}[A_Tq_jq_{j'}]-\mb{E}[A_Tq_jq_{j'}]}
           {\sqrt{\binom Nj\binom N{j'}}} \right| & =\left|\frac{\mathbb{E}_{C}[A_Tq_jq_{j'}]}
           {\sqrt{\binom Nj\binom N{j'}}} -\one\{j=j'\}\right| 
           \\ & \le
  O(\sqrt{N})
    \sum_{\substack{0\ne y\in C^\perp\\|y|\le t+j+j'}}
       \prod_{i\in\operatorname{supp}(y)}
          \lambda|\widehat g_i(y_i)|
          \\ & \le e^{-\Omega(n)}.
\end{align*}
We justify the steps as follows: the first step holds by \cref{eq:ambient-AT}, the second step follows from \cref{lem:key-fourier} and our computation of $c_B$ above, and the last step is given by \cref{lem:N-proxy-bound} where we check its hypothesis holds by \cref{eq:AT-gap} and that $j, j'\le \ell+1\sim\rho N$, so $2k-(t+j+j')\ge2\eta n-O(1)$. Now, by expanding $F_\ell^2$ and $\widetilde{s}F_\ell^2$ in terms of $q_j$ and $q_{j'}$ via \cref{lem:layer-recursion}, we see that 
\[
\left|\E_C[\widetilde{s}A_TF_\ell^2] - \E[\widetilde{s}A_TF_\ell^2]\right|\le e^{-\Omega(n)} \quad\text{and}\quad
\left|\E_C[A_TF_\ell^2]-\E[A_TF_\ell^2]\right|\le e^{-\Omega(n)}.
\]
Combining with \cref{eq:ambient-scl} and \cref{eq:def-tilt}, we see that
\begin{align*}
\mb{E}_{x\sim \mu} [s(x)]& = \frac{t}{n}+\frac{n-t}{n}\frac{\mb{E}_{C}[\widetilde{s}A_TF_\ell^2]}{\mb{E}_{C}[A_TF_\ell^2]}
\\ & =  \frac tn+\frac {n-t}n \on{SCL}_\rho(\rho)+o(1)\\
    &=  1-\frac {n-t}n(1-2\rho)^2+o(1)\\
    &=
    1-(1-2\rho)(1-2R)-2(1-2\rho)\eta+o(1).
\end{align*}
where the computation follows our choice of parameters of $\ell$ and $t$. Hence, some $x\in C$ has score
\[
    s(x)\ge
    1-(1-2\rho)(1-2R)-\varepsilon,
\]
upon choosing $\eta$ sufficiently small and then $n$ sufficiently large. This proves the theorem.
\end{proof}

\begin{remark}
\label{rem:exist-no-algo}
The crucial idea in the proof is the puncturing to $T^c$ and conditioning via $A_T$. In comparison to the proof of \cref{thm:main-opi}(3) (near-optimal OPI discrepancy in the case when $R + \rho > 1$), which also uses \cref{lem:N-proxy-bound}, here $A_TF_\ell^2$ has support degree at most $t+2\ell$---rather than $2(t+\ell)$---by paying for the coordinates in $T$ once. We are then able to apply \cref{lem:N-proxy-bound} with $t+2\ell<2k-\Omega(n)$ even when $t+\ell>k$. This is the parameter regime in \cref{thm:main-existential}, and why the proof of \cref{thm:main-opi}(3) for $R+\rho>1$ does not apply.

This parameter regime also explains why the satisfaction ratio in \cref{thm:main-existential} cannot immediately be attained quantum algorithmically.
One might be tempted to use the algorithm in \cref{sec:algo-recap} to prepare the state whose corresponding distribution is $\mu$. This state has amplitudes given by $\sqrt{A_T}F_\ell$, and directly using that implementation would require list decoding of $C^\perp$ through radius $t+\ell$. The dual has rate $1-R$, so the list-decoding capacity theorem implies that it cannot be reliably list-decoded (with polynomial list sizes) beyond radius $R$.  On the other hand, our parameters give
\[
    \frac{t+\ell}{n}
    =
    R-\eta+\frac{t}{2n}>R
\]
for sufficiently small fixed $\eta$.
Therefore, the associated list sizes in the list-decoding steps are then exponentially large, and our implementation in \cref{sec:algo-recap} fails to prepare this state efficiently.
\end{remark}

\subsection{Perfect Solution at Capacity}
\label{sec:algo-perfect}
Lastly, we prove \cref{thm:main-algo}(2). In this case, assume $R+\rho>1$. Then, the guarantee in \cref{thm:main-algo}(1) gives an \emph{near} perfect solution, that is, a satisfaction ratio at least $1-\varepsilon$ for every fixed $\varepsilon>0$.
We refine the analysis to show that in this case we can find a perfect solution that satisfies every constraint. 

We run the same algorithm (\cref{thm:sjo-main}) with the capacity list decoder (\cref{thm:rs-ld}) with 
\begin{equation}
\label{eq:def-perfect-alpha}
\ell := \lfloor (1-\rho)n\rfloor
\end{equation}
to prepare $\ket{\varphi_\ell}$.
Both the theorems apply as $\ell<k-\Omega(n)$, as checked in \cref{lem:z-partial-check}. Now, we show that measuring $\ket{\varphi_\ell}$ gives decent probability of obtaining a perfect solution.

\begin{lemma}
\label{eq:perfect-prob}
Fix $\ell$ from \cref{eq:def-perfect-alpha} and $\ket{\varphi_\ell}$ from \cref{eq:dqi-state}. Then, the probability of obtaining a perfect solution upon measuring $\ket{\varphi_\ell}$ is
\begin{equation}
(1+o(1))\frac{\sigma+1}{\sqrt{2\pi n\rho(1-\rho)}}\ge n^{-O(1)}
\end{equation}
\end{lemma}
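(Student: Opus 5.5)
The plan is to compute, for each perfect solution $x\in C\cap\mbf S$, the value of $F_\ell(x)$ exactly, and then sum the resulting probabilities using the count of perfect solutions from \cref{lem:terminal-dense-count}.

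On $\mbf S$ we have $g_i(x_i)=\lambda:=\sqrt{(1-\rho)/\rho}$ for every $i$. Hence $q_j(x)=\binom nj\lambda^j$, and
\[
F_\ell(x)=\sum_{j=\ell-\sigma}^{\ell}\sqrt{\tbinom nj}\,\lambda^j .
\]
Since $\ell=\lfloor(1-\rho)n\rfloor$ sits at the mode of $\on{Bin}(n,1-\rho)$, we can write $\binom nj\lambda^{2j}=\rho^{-n}\pi_j$ with $\pi_j$ as in \cref{eq:def-pi-j}. Because $|j-(1-\rho)n|=O(\log\log n)$, Stirling's formula (the local CLT) gives
\[
\pi_j=(1+o(1))\big(2\pi n\rho(1-\rho)\big)^{-1/2}
\]
uniformly over the $\sigma+1$ indices. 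It follows that every perfect $x$ satisfies
\[
F_\ell(x)^2=(1+o(1))\,\rho^{-n}(\sigma+1)^2\big(2\pi n\rho(1-\rho)\big)^{-1/2}.
\]

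Next, the measurement probability of a single $x$ is $F_\ell(x)^2/(|C|Z_\ell)$. By \cref{lem:z-partial-check}, applicable since $\ell/n\le 1-\rho$ and $\ell<k-\Omega(n)$, we have $Z_\ell=\sigma+1+e^{-\Omega(n)}$. Since $R+\rho>1$, \cref{lem:terminal-dense-count} gives $|C\cap\mbf S|=(1\pm e^{-\Omega(n)})|C|\rho^n$. Summing the common value of $F_\ell(x)^2$ over all perfect $x$, the factor $\rho^{-n}$ cancels against $\rho^n$ and $|C|$ cancels. One factor of $\sigma+1$ also cancels against $Z_\ell$, leaving
\[
(1+o(1))\,\frac{\sigma+1}{\sqrt{2\pi n\rho(1-\rho)}}\ \ge\ n^{-1/2}\ \ge\ n^{-O(1)}.
\]

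The main care point is the uniform Stirling estimate on the window of $\sigma+1$ indices, together with the check that the hypotheses of \cref{lem:z-partial-check} hold at the boundary value $\alpha=\ell/n\approx 1-\rho$. The key fact there is that $h(t)$ is maximized at $t=1-\rho$, giving exactly $b_\rho$, which is what \cref{lem:N-proxy-bound} controls. If $\lfloor\cdot\rfloor$ makes $\ell/n$ slightly below $1-\rho$, the Stirling estimate is unaffected at the $1+o(1)$ level.
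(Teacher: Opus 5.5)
Your proposal is correct and follows the paper's proof essentially step for step. Both arguments evaluate $F_\ell$ on a perfect codeword as $\rho^{-n/2}\sum_{j=\ell-\sigma}^{\ell}\sqrt{\pi_j}$, apply Stirling uniformly over the window $|j-(1-\rho)n|=O(\log\log n)$, and then combine $Z_\ell=\sigma+1+e^{-\Omega(n)}$ from \cref{lem:z-partial-check} (valid at $\alpha=1-\rho<R$) with the relative count $|C\cap\mathbf S|/(|C|\rho^n)=1+O(e^{-\Omega(n)})$ from \cref{lem:terminal-dense-count}.
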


\begin{proof}
For every ``perfect'' codeword $x\in C\cap\mbf S$, all coordinates $i \in [n]$
satisfy $g_i(x_i)=\sqrt{(1-\rho)/\rho}$. Thus, with the
binomial probabilities $\pi_j$ defined in \cref{eq:def-pi-j},
\[
F_\ell(x)
=\sum_{j=\ell-\sigma}^{\ell}
\sqrt{\binom nj}\left(\frac{1-\rho}{\rho}\right)^{j/2}
=\rho^{-n/2}\sum_{j=\ell-\sigma}^{\ell}\sqrt{\pi_j}.
\]
Since the squared norm of $\sum_{x\in C}F_\ell(x)\ket{x}$
is $|C|Z_\ell$, the probability of a perfect measurement is
\[
\frac{|C\cap\mbf S|}{|C|\rho^n}
\frac1{Z_\ell}
\left(\sum_{j=\ell-\sigma}^{\ell}\sqrt{\pi_j}\right)^2.
\]
The first factor is $1+O(e^{-\Omega(n)})$ by
\cref{lem:terminal-dense-count}. Because $1-\rho<R$,
\cref{lem:z-partial-check} applies at $\alpha=1-\rho$ and gives
$Z_\ell=\sigma+1+O(e^{-\Omega(n)})$.
Each $j$ in the summand satisfies $|j-(1-\rho)n|=O(\log\log n)$, so Stirling's formula gives that, uniformly over all $j$
\[
\pi_j=\frac{1+o(1)}{\sqrt{2\pi n\rho(1-\rho)}}.
\]
The squared sum is therefore
$\sim(\sigma+1)^2/\sqrt{2\pi n\rho(1-\rho)}$, which proves the desired bound.
\end{proof}

Now, we can repeatedly prepare $\ket{\varphi_\ell}$ and measure until the outcome solution $x$ satisfies every constraint. This will prove \cref{thm:main-algo}(2); we give the formal proof below.
\begin{proof}[Proof of \cref{thm:main-algo}(2)]
 First suppose that all lists have common exact density $\rho$.
Since $R+\rho>1$, the choice
$\ell=\lfloor(1-\rho)n\rfloor$ satisfies $\ell<k-\Omega(n)$.
Thus \cref{thm:rs-ld} supplies the decoder required by
\cref{thm:sjo-main}, with $L=\poly(n)$.

By \cref{lem:z-partial-check}, $Z_\ell/Z_0\ge1/2$, so each
preparation succeeds with probability at least $1/(2L)$.
Conditioned on success, \cref{eq:perfect-prob} implies that measurement yields a perfect solution with probability
$n^{-O(1)}$.
Hence each attempt returns a perfect solution with
inverse-polynomial probability $n^{-O(1)}$.  Polynomially many independent
attempts, followed by exact verification of every constraint,
give a perfect solution with high probability and the claimed
running time.

Finally, if lists have different sizes but $|S_i|\ge \rho p$ for all $i$, the common-size algorithm described above above succeeds on every tuple of $\rho p$-element lists. For truth-table input, trim each list to size $\rho p$; for oracle input, apply \cref{prop:oracle-thinning} to this common-size algorithm. Both reductions preserve satisfaction on the original lists and give the claimed high-probability guarantee in polynomial time.
\end{proof}

\begin{remark}[An alternative algorithm]
As an alternative proof of \cref{thm:main-algo}(2), 
could instead sample the state with amplitudes $H_\ell(x)$ from the proof of \cref{thm:main-opi}(3), defined in \cref{eq:def-H-ell}, where  $(1-\rho)n+\Omega(n)\le\ell\le k-\Omega(n)$. By \cref{lem:key-truncation}, measuring it attains a perfect solution with high probability, since 
\[ \frac{n\E_C[(1-s)H_\ell^2]}{\E_C[H_\ell^2]}=e^{-\Omega(n)}.\]
The proof of \cref{thm:sjo-main} (adapted to use the coefficients defining $H_\ell$) implies that we can prepare this state efficiently.  The reason that we prove the theorem as we do (with coefficients from the original DQI state) is for simplicity, so that \cref{thm:sjo-main} applies in its stated form.
\end{remark}

\section{Quantum Advantage for OPI Relative to a Membership Oracle}
\label{sec:advantage}
In this section, we construct structured OPI ensembles for which efficient quantum algorithms find perfect solutions, but efficient classical algorithms cannot find decent solutions with non-negligible probability. The proof is based on the random-oracle search problem of
Yamakawa and Zhandry \cite{yz} (henceforth referred to as the \emph{YZ problem}). We give two constructions that encode that problem as an ensemble of OPI instances:
\begin{enumerate}
    \item The first construction uses a simple encoding of the YZ problem, and can have arbitrary rate and upper bound on classical algorithm performance, but the input lists do not have the same density. In fact, some input lists are exponentially sparse, i.e. $|S_i|=2^{-\Theta(n)}p$. 
    \item The second construction uses a more elaborate encoding of the YZ problem. All sets are balanced with $|S_i|=\lfloor p/2\rfloor$, but the code has inverse polylogarithmic rate. Efficient classical algorithms can provably only attain $1/2+o(1)$, i.e. Prange's algorithm is near optimal.
\end{enumerate}

We highlight two caveats in comparison to the quantum algorithms in \cref{sec:algo}. To enable the  two encodings described above, we require super-polynomial alphabet size $p$. The source construction of \cite{yz} likewise has a super-polynomial folded alphabet, although its underlying field has polynomial size. Then, we can no longer pass in the input lists $S_i$ via a truth table, so we assume membership oracles $F_i=\mbf{1}_{S_i}$ for $i\in [n]$ (see \cref{subsec:subsampling}).
This is in contrast to \cite{dqi} and the previous sections of our paper, where we may take $p = \poly(n)$ and avoid the need for a membership oracle in order to obtain efficient quantum algorithms.

Moreover, the general-purpose guarantees in \cref{sec:algo} do not provide
efficient perfect search for these ensembles, which have either
vanishing density or vanishing rate. Our quantum algorithms instead
exploit the encodings to simulate the Yamakawa--Zhandry source
problem, solve it using \cref{thm:advantage-yz}, and translate the
result back to a target polynomial.

\subsection{Recap of the Source Ensemble in \cite{yz}}
\label{subsec:yz-recap}
We build the ``source'' ensemble of our quantum advantage claim. This problem is related to OPI, but the underlying code is the Folded Reed-Solomon code $C_0$ rather than the Reed-Solomon code. We fix parameter choices in \cite[Section~4.3.2]{yz} and use the guarantees as a black box.

\begin{definition}[Source Ensemble]
\label{def:source-ensemble}
Let $m+1$ be a power of two, let $\F_q$ be the finite field with order $q=(m+1)^2$, and choose an alphabet $ \Sigma=\F_q^{m+2}$. The code $C_0\subset \Sigma^m$ is given by evaluating degree polynomials $Q\in\F_q[X]$ with $\deg Q\le9(q-1)/10$ at consecutive
powers of a primitive element of $\F_q$, and folding consecutive $m+2$
evaluations.
That is, if $\gamma$ is the chosen primitive element,
\[ C_0 = \left\{ \left( \begin{bmatrix} Q(1) \\ Q(\gamma) \\ \vdots \\ Q(\gamma^{m+1}) 
\end{bmatrix}, 
\begin{bmatrix} Q(\gamma^{m+2}) \\ Q(\gamma^{m+3}) \\ \vdots \\ Q(\gamma^{2m + 3})
\end{bmatrix}, \cdots,
\begin{bmatrix} Q(\gamma^{(m-1)(m+2)}) \\ Q(\gamma^{(m-1)(m+2) + 1}) \\ \vdots \\ Q(\gamma^{m(m+2) - 1}) 
\end{bmatrix}
\right) \,:\, \deg(Q) \leq 9(q-1)/10 \right\} \subset \Sigma^m.
\]
The ensemble is given by $C_0$ as well as by 
a tuple $O=(O_1,\ldots,O_m)$ of independent uniform
functions $O_j:\Sigma\to\{0,1\}$, interpreted as membership oracles for the input lists.  
\end{definition}
\noindent With this ensemble, the goal in the YZ problem is to find $x \in C_0$ so that $O_i(x_i) = 1$ for all $i$.

For convenience, let us consider the message space $\F_2^D$ of $C_0$ and the encoder 
\begin{equation}
\label{eq:advantage-encoder}
E:\F_2^D\to C_0\subset (\mb{F}_q^{m+2})^m,\qquad \text{where } 
D=\dim_{\F_2}C_0=\Theta(m^2\log m).
\end{equation}
We fix the field representations once, identifying 
binary messages in $\F_2^D$, integers $\{0,\dots, 2^D-1\}$, and coefficient vectors of polynomials $Q \in \F_q[X]$ of degree at most $9(q-1)/10$, without further notation. We note that applying $E$ to $a\in\F_2^D$ to get $E(a)\in C_0$ can be done efficiently. For every $j\in [m]$, let $E_j:\F_2^D\to \F_q^{m+2}$ be the $j$-th block of the encoding map $E$. Then, interpolation on the
$j$th block's $m+2$ evaluation points efficiently finds a message $a$
with $E_j(a)=z$.  While there may be many such $a$, we choose a canonical one by using the unique
interpolant of degree at most $m+1$. 
Denote the satisfaction ratio by
\begin{equation}
\label{eq:advantage-source-score}
r: \F_2^D\to [0, 1]\quad\text{where}\quad  r(a):=\frac1m\sum_{j=1}^m O_j(E_j(a)).
\end{equation}
If $r(a) = 1$, we refer to $a$ as a \emph{perfect source message}.  (That is, it is a perfect message for the source YZ problem).

We next state the separation guarantees of \cite{yz} on this ensemble. For technical reasons, we will extend the guarantee from the original statement in \cite{yz} with two trivial changes: first, we observe from the proofs in \cite{yz} that the $\on{negl}(m)$ error is in fact exponentially small; second, we show not only can no classical algorithm find a particularly good solution, it cannot find a particularly bad one. 

\begin{theorem}[\cite{yz}]
\label{thm:advantage-yz}
For the ensemble in \cref{def:source-ensemble} and all sufficiently large $m$:
\begin{enumerate}
\item There is a uniform polynomial-time quantum algorithm $\mc{Q}^O$ that finds a perfect source
message with probability $1-e^{-\Omega(m)}$ over $O$ and the algorithm.
\item Every randomized classical algorithm $\mc A^O$ making at most $2^{m^{2/3}}-m$ queries to $O$ satisfies
\begin{equation}
\label{eq:def-eta}
\mb{P}_{O,\mc A}\left(
r(\mc A^O)\in [0, \eta)\cup (1-\eta, 1]
\right)\le e^{-\Omega(m)}\quad\text{where}\quad \eta:=\frac{1}{80}.
\end{equation}
\end{enumerate}
\end{theorem}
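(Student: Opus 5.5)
Item~1 is taken directly from \cite{yz}. The only new work is to sharpen the failure probability from negligible to $e^{-\Omega(m)}$, and I would do this by tracing the quantitative bounds in their analysis. Their quantum algorithm runs Regev's reduction. It prepares a superposition over a product distribution on $\Sigma^m$ weighted by the oracle outputs, computes the syndrome, and list-decodes the dual folded Reed--Solomon code (unique decoding suffices at their parameters) to erase the error register. Its failure comes from two sources. The first is the probability that the error vector exceeds the decoding radius. This is a binomial tail in $m$ independent coordinates, so a Chernoff bound makes it $e^{-\Omega(m)}$. The second is the probability that the oracles are atypical, and a Chernoff/Hoeffding bound over the $m$ independent $O_j$ controls it. I would check that each of these tail bounds in \cite[Section~4.3.2]{yz} is a Chernoff bound in $m$, which gives $1-e^{-\Omega(m)}$.

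Item~2 has two parts. The upper tail $r\in(1-\eta,1]$ is the content of the classical hardness theorem of \cite{yz}: no classical algorithm with $2^{m^{2/3}}$ queries can output a codeword satisfying more than a $1-\eta$ fraction of the constraints. Again I would extract an exponential bound from their proof. Their argument shows that any classical query algorithm is simulated by one whose output codeword has at most $o(m)$ coordinates that were ``queried in a correlated way''. On the remaining coordinates, the values $O_j(E_j(a))$ are fresh uniform bits independent of the choice of $a$. After a union bound over the at most $2^{m^{2/3}}$ query transcripts, the probability that a $(1-\eta)$-fraction of these fresh bits equal $1$ is $2^{m^{2/3}}e^{-\Omega(m)}=e^{-\Omega(m)}$. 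The reduction from $2^{m^{2/3}}$ to $2^{m^{2/3}}-m$ queries leaves room for $m$ verification queries on the output.

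The lower tail $r\in[0,\eta)$ is handled by the symmetric argument, which is the ``trivial extension'' the statement refers to. Replacing $O_j$ by $1-O_j$ preserves the uniform distribution of the oracle tuple. Hence a classical algorithm producing $r<\eta$ against $O$ is exactly an algorithm producing $r>1-\eta$ against the complemented oracle, and it makes the same number of queries by simulating each query and negating the answer. Applying the upper-tail bound to this simulated algorithm gives probability $e^{-\Omega(m)}$. A union bound over the two tails then yields \eqref{eq:def-eta}.

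The main obstacle is the upper tail. The bound there depends on the internal structure of the \cite{yz} hardness proof: the list-recovery property of $C_0$ and the bound on the number of correlated coordinates. To verify the exponential rate, I would first reread the proof, check that the only probabilistic step is a Chernoff bound over roughly $(1-o(1))m$ independent uniform bits unioned over $2^{O(m^{2/3})}$ events, and confirm that $\eta=1/80$ is compatible with their folded list-recovery parameters.
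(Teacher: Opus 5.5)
Your treatment of Item~1 and of the upper tail in Item~2 matches the paper's. Both are imported from \cite{yz}: \cite[Lemma~6.3]{yz} gives the quantum algorithm, and the list-recovery-based hardness argument gives the classical bound. In both cases one observes that the stated negligible failure probabilities are really $e^{-\Omega(m)}$.

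The genuine difference is the lower tail. The paper reruns the proof of \cite[Lemma~6.9]{yz} with a two-sided Chernoff bound, so both tails come out of one computation. You instead reduce the lower tail to the upper tail by complementation. Here $\bar O:=1-O$ is again uniform, and the simulator $\mathcal B$ that runs $\mathcal A$ and negates every answer makes the same number of queries. Since $r_{\bar O}=1-r_O$ pointwise, this gives $\mathbb P_{O}\bigl(r_O(\mathcal A^{O})<\eta\bigr)=\mathbb P_{O}\bigl(r_{O}(\mathcal B^{O})>1-\eta\bigr)\le e^{-\Omega(m)}$. This is correct and cleaner, because it uses the upper-tail bound purely as a black box, with the same exponent. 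The paper's route costs nothing extra only because it must reopen the proof of \cite{yz} anyway to extract the exponential rate.

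Your sketch of the upper-tail mechanism needs correcting, because it is exactly where $\eta=1/80$ enters.
\begin{itemize}
\item It is not true that all but $o(m)$ output coordinates are fresh; an algorithm may well query every coordinate of its output.
\item A union bound over query transcripts is hopeless, since there are vastly more than $2^{m^{2/3}}$ of them.
\end{itemize}
The actual mechanism uses the parameters the paper invokes: list-recovery exponents $2/3$ and $3/4$, and agreement $19/20$.
\begin{itemize}
\item Every coordinate's query list has size at most $2^{m^{2/3}}$. So only $2^{o(m)}$ codewords (roughly $2^{m^{3/4}}$) ever have a $19/20$ fraction of their coordinates queried.
\item After your $m$ extra queries, the output is one of these codewords.
\item When a codeword first crosses this threshold, only its remaining $\approx m/20$ coordinates are fresh uniform bits.
\end{itemize}
The Chernoff bound is therefore over these $\approx m/20$ bits, unioned over the $2^{o(m)}$ candidates. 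It is a large deviation in both directions precisely because $\eta<\tfrac12\cdot\tfrac1{20}$.

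Separately, your aside that unique decoding suffices for the quantum algorithm cannot be right if it means decoding within half the minimum distance. The code $C_0^\perp$ has rate about $1/10$, so by Singleton that radius is at most about $0.45m$, while the error has about $m/2$ nonzero blocks. Nothing in your plan depends on this.
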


\begin{proof}
The quantum algorithm is~\cite[Lemma~6.3]{yz}. For the classical claim, use~\cite[Lemma~4.2 and Section~4.3.2]{yz}
with list-recovery exponents $2/3$ and $3/4$ and agreement fraction $19/20$. We apply the proof
of \cite[Lemma~6.9]{yz} with two-sided Chernoff bound to handle both tails in \cref{eq:def-eta}.
In both statements, the failure probability bound of \cite{yz} is given as $\on{negl}(m)$, but a closer inspection of the proofs review it is actually $e^{-\Omega(m)}$.
\end{proof}

\subsection{The Sparse Target OPI Ensemble}
\label{subsec:sparse-ensemble}
In this section, we give the first target OPI ensemble of OPI instances over $\F_p$ with $\log p =\Theta(m^2\log m)$, with length $n=m$ and any dimension $k$. More specifically, let
\begin{equation}
\label{eq:def-p}
m2^D<p<2m 2^D.
\end{equation}
We fix the canonical encoding of $(j, a)\in [m]\times \F_2^D$ as a field element via
\begin{equation}
    \label{eq:advantage-embedding}
(j, a)\longleftrightarrow u(j, a):=(j-1)2^D+a \bmod{p}
\end{equation}
and for every $u\in\F_p$, we view it as an integer in $\{0, \dots, p-1\}$ and parse $(j(u), a(u))\in [m]\times \F_2^D$ via
\begin{equation}
\label{eq:advantage-parser}
u\bmod{m2^D}=(j(u)-1)2^D+a(u).
\end{equation}
Note that the encoding is injective by our choice of field size $p$, and parsing recovers $(j,a)$ on its image. This allows us to interpret each field symbol as a solution $a$ and block coordinate $j$ of the hard instance \cref{def:source-ensemble}. 

\begin{definition}[Sparse Target Ensemble]
\label{def:sparse-target}
Fix any $k\in [n]$ and any Reed-Solomon code $C\subset \mb{F}_p^n$ of dimension $k$. Sample an instance from the ensemble in \cref{def:source-ensemble}, given by oracles $O=(O_1, \dots, O_m)$ with $m=n$. 
Construct the OPI instance given by $C$ and membership oracles $F_1, \dots, F_n$ of the input lists as follows. Define functions $\varphi_O, \varphi_{\on{all}}:\F_p\to \{0, 1\}$ via
\begin{equation}
\label{eq:advantage-warmup-predicates}
\varphi_O(u)=O_{j(u)}(E_{j(u)}(a(u)))\quad\text{and}\quad 
\varphi_{\on{all}}(u)=\prod_{j=1}^m O_j(E_j(a(u))).
\end{equation}
Then, let $F_i=\varphi_{\on{all}}$ for every $i\in [n-1]$ and $F_n=\varphi_O$.
\end{definition}

Since each $O_j$ is uniformly random, so with high probability $F_n$ is roughly balanced while $F_i$ is exponentially sparse for $i\in [n-1]$. On the other hand, since $k$ is arbitrary, the code can have any rate. Let the evaluation points of $C$ be $\alpha_1, \dots, \alpha_n \in \F_p$. The intuition behind the above construction is that we have one ``easy'' constraint $F_n$ that we use to simulate oracle queries to $O_j(z_j)$, together with $n-1$ ``difficult'' constraints where satisfying any of them perfectly solves the hard Yamakawa-Zhandry instance. Here, solving it as a scalar is already hard, so we enforce no consistency on the evaluations $Q(\alpha_i)$; this means that the code can have any rate.
\begin{theorem}
\label{thm:advantage-warmup}
For sufficiently large $n$,
the ensemble in \cref{def:sparse-target} admits a uniform polynomial time quantum algorithm
that returns a perfect polynomial with probability
$1-e^{-\Omega(n)}$. 

Moreover, every classical polynomial-query algorithm returns
a polynomial satisfying more than one constraint with probability at most
$e^{-\Omega(n)}$.
\end{theorem}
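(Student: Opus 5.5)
For the quantum side I would simulate the source problem directly. Access to $O_j$ on a block value $z\in\Sigma$ is obtained from the membership oracle $F_n=\varphi_O$. Using the canonical interpolant, we compute a message $a$ with $E_j(a)=z$ and query $F_n$ at $u(j,a)$; this returns $O_j(E_j(a))=O_j(z)$. Every step here (interpolation, encoding, the embedding \cref{eq:advantage-embedding}) is a reversible classical computation. Hence coherent access to $F_n$ yields coherent access to each $O_j$, with $O(1)$ queries and polynomial overhead, after uncomputing the garbage.

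We then run the algorithm $\mc Q^O$ of \cref{thm:advantage-yz}(1). With probability $1-e^{-\Omega(m)}$ it outputs a perfect source message $a^\star$, meaning $\varphi_{\on{all}}(u)=1$ for every $u$ with $a(u)=a^\star$. The next step is to output a polynomial $Q$ of degree $<k$ for which $Q(\alpha_i)=u(1,a^\star)$ for all $i\le n-1$ and $Q(\alpha_n)\in\varphi_O^{-1}(1)$. The constant polynomial $Q\equiv u(1,a^\star)$ works if $k\ge1$, since $\varphi_O(u(1,a^\star))=O_1(E_1(a^\star))=1$ and $\varphi_{\on{all}}(u(1,a^\star))=1$. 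This gives a perfect solution for any $k\ge1$, and the success probability is inherited from \cref{thm:advantage-yz}.

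For the classical side, suppose a classical algorithm makes polynomially many queries to $F_1,\dots,F_n$. Each query to $F_n$ costs one query to $O$. Each query to $\varphi_{\on{all}}$ can be simulated by $m$ queries to $O$, namely evaluating $O_j(E_j(a(u)))$ for all $j$. This gives a classical algorithm $\mc A^O$ with $\poly(n)\cdot m\le 2^{m^{2/3}}-m$ queries. The algorithm $\mc A^O$ runs the target algorithm, obtains $Q$, and outputs the message $a(Q(\alpha_i))$ for a uniformly random $i\in[n-1]$; alternatively it checks each $i$ with $m$ further queries and outputs one with $r=1$ if one exists. If $Q$ satisfies more than one constraint, then it satisfies at least one of $F_1,\dots,F_{n-1}$, so some $a(Q(\alpha_i))$ is a perfect source message. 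The checking version of $\mc A^O$ then outputs $a$ with $r(a)=1\in(1-\eta,1]$, which by \cref{thm:advantage-yz}(2) happens with probability $e^{-\Omega(m)}=e^{-\Omega(n)}$. The extra $nm$ verification queries are why the query budget in \cref{thm:advantage-yz} subtracts $m$; $n\cdot m$ is still well below $2^{m^{2/3}}$.

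The main obstacle is making the reduction's query accounting and coherence rigorous. Parsing must be well defined on all of $\F_p$, including $u\ge m2^D$: the reduction \cref{eq:advantage-parser} handles this by reducing mod $m2^D$, so the simulated oracles are exactly consistent functions of $O$. Once that is settled, both claims are direct transfers of \cref{thm:advantage-yz}.
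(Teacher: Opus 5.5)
Your proposal is correct and matches the paper's proof essentially step for step: you simulate $O_j$ through $F_n=\varphi_O$ using the canonical interpolant and the embedding $u(j,a)$, and you output the constant polynomial $u(1,a^\star)$. On the classical side, you simulate each $F_i$ by source queries and extract a perfect source message from any satisfied constraint among $F_1,\dots,F_{n-1}$. Your query accounting is slightly more economical than the paper's (one source query per $F_n$ query rather than $m$), but this changes nothing, since any polynomial budget is far below $2^{m^{2/3}}-m$.
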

\begin{proof}
For the quantum algorithm, we solve the underlying instance from \cref{def:source-ensemble} via the quantum algorithm in \cref{thm:advantage-yz}. To simulate oracle query $O_j(z_j)$ for $j\in [m]$ and some $z_j\in \Sigma$, we efficiently find via \cref{eq:advantage-encoder} some $a\in \F_2^D$ with $E_j(a)=z_j$, compute
$u=u(j, a)\in \F_p$ via \cref{eq:advantage-embedding}, and return the query outcome
\[ F_n(u) = O_{j(u)}(E_{j(u)}(a(u))) = O_j(z_j).\]
With probability $1-e^{-\Omega(n)}$, we find some perfect solution $z\in C_0$ to the corresponding instance in \cref{def:source-ensemble}, i.e. $O_j(z_j)=1$ for every $j$. Then, 
we find via \cref{eq:advantage-encoder} some $a\in \F_2^D$ with $E(a)=z$, and output the constant polynomial $Q(X)=u$ where $u=u(1, a)\in \F_p$  via \cref{eq:advantage-embedding}. We see that this polynomial satisfies every constraint $F_i$ by definition \cref{eq:advantage-warmup-predicates}.

Conversely, we efficiently simulate each classical target query $F_i(u)$ as follows: compute $j(u), a(u)$ via \cref{eq:advantage-parser}, and query $O_j(E_j(a(u)))$ for every $j\in [m]$ via \cref{eq:advantage-encoder}. 
Then, $F_n(u)$ is the $j=j(u)$ outcome and $F_i(u)$ is the product of the $m$ outcomes for every $i\in [n-1]$.
Hence, any efficient classical algorithm on the target ensemble of \cref{def:sparse-target} can be turned into an efficient classical algorithm on the source ensemble of \cref{def:source-ensemble}. If on some instances of the target ensemble the output polynomial $Q$ satisfies constraint $F_i(Q(\alpha_i))=1$ for some $i\in [n-1]$, find such an $i$ by querying these $n-1$ constraints, and compute $a=a(Q(\alpha_i))\in\F_2^D$ via \cref{eq:advantage-parser}. Then, $r(a)=1$ by \cref{eq:advantage-warmup-predicates}, i.e. $E(a)$ is a perfect solution for the corresponding source instance. Now, the classical hardness in \cref{thm:advantage-warmup} follows that in \cref{thm:advantage-yz}.
\end{proof}

\subsection{The Balanced Target OPI Ensemble}
\label{subsec:balanced-ensemble}
In this section, we consider an alternative encoding of the same source ensemble in \cref{def:source-ensemble} to obtain a target OPI ensemble with balanced input lists and prove \cref{thm:main-opi-advantage}, which we restate below.
\mainadvantage*

The intuition and comparison to \cref{def:sparse-target} is as follows: to ensure the input lists are balanced, we cannot encode ``perfectly solving the source instance in \cref{def:source-ensemble}'' into one constraint $F_i(Q(\alpha_i))=1$. Instead, we distribute this test into many constraints $F_i$, where each $F_i$ counts the parity of failures among a $t$-element subset of the $m$-constraints in the source instance. 

The main technical difficulty is ``consistency'', meaning that a polynomial $Q$ satisfying many local tests $F_i(Q(\alpha_i))=1$ must be the constant polynomial encoding a solution to the source instance with very high (or very low) satisfaction ratio. To arrange this, we construct an ensemble with randomness both from the source ensemble in \cref{def:source-ensemble} but also independently its own randomness in setup. With quasi-polynomial large field size, we can union bound over the polynomials $Q$.

To define the construction, recall the source ensemble in \cref{def:source-ensemble} parametrized by $m$. We fix the remaining parameters now, all asymptotically as $m\to\infty$. 
Let $\eps=1/\log_2m$ and let $\delta=\eps/100$. Recall that $\eta=1/80$ from \cref{eq:def-eta}. Choose the least even $t\ge 4$
such that
\begin{equation}
\label{eq:def-t}
(1-2\eta)^t\le\frac{\eps}{8},
\end{equation}
so $t=\Theta(\log\log m)$. We now specify the code parameters. The field size is some prime $p$ where
\begin{equation}
\label{eq:def-p-new}
\frac{m2^{D+2}}{\delta}<p<\frac{m2^{D+3}}{\delta}\implies \log p\asymp  m^2\log m,
\end{equation}
the Reed-Solomon code $C$ has length $n=N+2$ and dimension $k$ where
\begin{equation}
\label{eq:def-param}
\begin{aligned}
H&:=\left\lceil \frac{2^{\sqrt m}}{\binom mt}\right\rceil =  2^{\Theta(\sqrt{m})},\\
N&:=\binom mt H\sim 2^{\sqrt{m}},\\ 
k&:=\left\lfloor \frac{\delta^2N}{4\log_2p}\right\rfloor.
\end{aligned}
\end{equation}
While the prime $p$ here replaces the choice \cref{eq:def-p} in the previous section,
we keep the same encoding and parser from
\cref{eq:advantage-embedding,eq:advantage-parser}. We now construct our target OPI ensemble.

\begin{definition}[Balanced Target Ensemble]
\label{def:balanced-target}
Fix the parameters in
\cref{eq:def-p-new,eq:def-param}. We define an ensemble of OPI instances on code $C\subset \F_p^n$ of dimension $k$ as follows. Index the $n$ coordinates by
\begin{equation}
I:=\{O, B\}\cup \left( \binom{[m]}t \times [H]\right).
\end{equation}
We describe \emph{setup randomness} $S$: independently of the source oracle $O$, and independent of each other, sample the following:
\begin{itemize}
    \item Sample evaluation points $\boldsymbol{\alpha}:=\{\alpha_i:i\in I\}$ without replacement from $\F_p$, and let $C:=\on{RS}_p(\boldsymbol{\alpha};k)$.
    \item Sample random bits
$B(u)\sim\on{Ber}(\delta)$ for each $u\in \mc{U}:=\{0, \dots, (p-3)/2\}$. If $B(u) = 1$, we refer to $u$ as \emph{active}
\item Sample random bits
$C_{J,h}(u)\sim \on{Ber}(1/2)$ for each $u\in \mc{U}$ and $(J, h)\in \binom{[m]}{t}\times [H]$.
\end{itemize}

For each setup $S=(\boldsymbol{\alpha}, B, C)$, define $\varphi_i :\mc{U}\to\{0, 1\}$ for $i\in I$ as follows: let \begin{equation}
\label{eq:advantage-OB-predicates}
\varphi_O(u):=O_{j(u)}(E_{j(u)}(a(u))) \quad \text{and}\quad \varphi_B(u):=B(u),
\end{equation}
and for each $(J, h)\in\binom{[m]}{t}\times [H]$ let
\begin{equation}
\label{eq:advantage-label-predicates}
\varphi_{J,h}(u):=
\begin{cases}
\mbf{1}\{ \#\{j\in J: O_{j}(E_{j}(a(u)))=0\} \text{ is even}\},
&\text{if }B(u)=1,\\
C_{J,h}(u), &\text{if }B(u)=0.
\end{cases}
\end{equation}
For each $i\in I$, we define the input lists given by membership oracles $F_i:\F_p\to \{0, 1\}$ by
\begin{equation}
\label{eq:advantage-balanced-predicates}
F_i(2u+b):=\varphi_i(u)\oplus b,
\end{equation}
for $u\in \mc{U}$ and $b\in\{0, 1\}$, and let
$F_i(p-1)=0$.
\end{definition}
We would like to reduce the target ensemble defined above to and from instances of the source ensemble in \cref{def:source-ensemble}, with the same oracle realization $O_1, \dots, O_m$. To this end, we prove the following proposition, which says that
an active constant $u$ that encodes a perfect source message corresponds to a constant polynomial with perfect satisfaction in the target ensemble; and on the other hand any polynomial with perfect (or even good) satisfaction in the target ensemble must correspond to a $u$ whose encoding has extreme satisfaction in the source ensemble.

\begin{proposition}[Consistency]
\label{prop:consistency}
With probability $1-e^{-\Omega(m)}$ over the setup $S=(\boldsymbol{\alpha}, B, C)$, the following
holds simultaneously for every source oracle $O$ and every
polynomial $Q\in\F_p[X]$ with $\deg Q<k$: if $s(Q)>1/2+\eps$, then $Q(X)=2u$ is constant for some $u\in\mc{U}$ such that $B(u)=1$ and
\begin{equation}
\label{eq:advantage-consistency-conclusion}
r(a(u))\in [0, \eta)\cup (1-\eta, 1].
\end{equation}
Moreover, if $Q(X)=2u$ with $B(u)=1$ and $r(a(u))=1$, then $s(Q)=1$.
\end{proposition}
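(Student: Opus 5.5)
The plan is to treat the ``moreover'' clause by direct evaluation, and the main clause by splitting into non-constant and constant $Q$. For non-constant $Q$ I would use a union bound over all $p^k$ polynomials, with a bound that does not depend on $O$. For constant $Q$ I would use a deterministic parity computation that holds for every $O$.

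\emph{Perfect constants.} Let $Q=2u$ with $B(u)=1$ and $r(a(u))=1$. Since $u\le (p-3)/2$, we have $2u\ne p-1$, so \cref{eq:advantage-balanced-predicates} gives $F_i(2u)=\varphi_i(u)$. Then $\varphi_B(u)=1$. Also $\varphi_O(u)=O_{j(u)}(E_{j(u)}(a(u)))=1$, because every source constraint holds. Finally, every $\varphi_{J,h}(u)=1$, because the number of failures in $J$ is $0$, which is even. Hence $s(Q)=1$.

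\emph{Non-constant $Q$.} Fix $Q$ of degree $<k$. I would reveal $\alpha_i$ for $i\in\binom{[m]}t\times[H]$ one at a time and write $Q(\alpha_i)=2u_i+b_i$.
\begin{itemize}
\item[(a)] \emph{Repeats.} Each value of $Q$ has at most $k-1$ preimages. So, given the past, the probability that $u_i$ repeats an earlier $u$ (or that $Q(\alpha_i)=p-1$) is at most $O(kN/p)=2^{-\Omega(m^2)}$. The probability of at least $\delta N$ repeats is therefore at most $(N2^{-\Omega(m^2)})^{\delta N}$.
\item[(b)] \emph{Active fresh values.} Each fresh $u_i$ has a fresh bit $B(u_i)\sim\on{Ber}(\delta)$. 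By Chernoff, at most $2\delta N$ fresh coordinates are active, except with probability $e^{-\Omega(\delta N)}$.
\item[(c)] \emph{Inactive coordinates.} On every inactive coordinate $(J,h)$ the bit $F_{J,h}(Q(\alpha_{J,h}))=C_{J,h}(u)\oplus b$ uses the coordinate-specific function $C_{J,h}$. These bits are therefore independent fair coins, independent of $O$ and of each other, even when values of $u$ repeat.
\end{itemize}
Now count conservatively: every active coordinate, plus the two special coordinates $O,B$, is counted as satisfied. Then $s(Q)>1/2+\eps$ forces a $\on{Bin}(N,1/2)$ variable to exceed $N/2+(\eps-4\delta)N-O(1)$. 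This has probability $e^{-\Omega(\eps^2N)}$, uniformly in $O$. Since $k\log p\le\delta^2N/4$ with $\delta=\eps/100$, both $e^{-\Omega(\eps^2 N)}$ and $e^{-\Omega(\delta N)}$ beat the factor $p^k$. So a union bound over all $Q$ shows that no non-constant $Q$ has $s(Q)>1/2+\eps$, except with probability $2^{-\Omega(\eps^2N)}=e^{-\Omega(m)}$.

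\emph{Constant $Q=c$.} If $c=p-1$, every $F_i$ vanishes. If $c=2u+b$ with $B(u)=0$, the $N$ bits are again independent fair coins, and a union bound over the $p$ constants suffices. The remaining case is $B(u)=1$; here the argument must work for every $O$. Let $f=1-r(a(u))$ be the source failure fraction, so that $F_{J,h}(c)=\one\{\#\text{failures in }J\text{ even}\}\oplus b$. Each $J$ is repeated exactly $H$ times, so the satisfied fraction among these coordinates equals the fraction of $t$-subsets $J$ with even (if $b=0$) or odd (if $b=1$) failure count. This fraction is
\[
\tfrac12\pm\tfrac12\,\E_J(-1)^{\#\text{fail}}=\tfrac12\pm\tfrac12\left((1-2f)^t+O(t^2/m)\right),
\]
where the error term comes from comparing sampling without replacement to the binomial. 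Since $t$ is even, $(1-2f)^t\ge 0$, so for $b=1$ the fraction is at most $1/2+O(t^2/m)$. For $b=0$ and $f\in[\eta,1-\eta]$, \cref{eq:def-t} bounds it by $1/2+\eps/16+o(\eps)$. Adding the two special coordinates, in both cases $s(Q)\le 1/2+\eps/8+O(1/N)<1/2+\eps$. Hence $s(Q)>1/2+\eps$ forces $b=0$, $B(u)=1$, and $r(a(u))\in[0,\eta)\cup(1-\eta,1]$, which is \cref{eq:advantage-consistency-conclusion}.

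I expect the main obstacle to be the non-constant case. The conclusion must hold uniformly over adversarial $O$ and simultaneously over $p^k=2^{\Theta(\delta^2N)}$ polynomials. The collision probability alone, about $N^2k/p$, is far too weak to survive that union bound. This is why the plan controls repeats through the tail bound in (a) instead, and concedes all active coordinates to the adversary. The parameter check that makes this work is that $e^{-\Omega(\eps^2N)}$ dominates $p^k$, which is exactly what the choice of $k$ in \cref{eq:def-param} provides.
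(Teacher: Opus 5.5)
Your proposal is correct and follows the paper's proof essentially step for step. You prove the moreover clause by direct evaluation. For nonconstant $Q$ you take a union bound over the $p^k$ polynomials, controlling repeated labels, active labels, and the independent per-coordinate coins $C_{J,h}$, all uniformly in $O$. For constant $Q$ you use the same $B(u)=0$ coin argument and the parity computation $\frac{1\pm(2r-1)^t}{2}+O(t^2/m)$. The only real difference is how you bound repeated labels: you reveal evaluation points sequentially with per-step collision probability $O(Nk/p)$, whereas the paper fixes a candidate label set $U$ of size $(1-\delta)N$, counts roots, and union-bounds over $U$. Both estimates are super-exponentially small and comfortably absorb the $p^k$ factor.
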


\begin{proof}
If $Q(X)=2u$ with $B(u)=1$ and $r(a(u))=1$, then $\varphi_O(u)=1$ as $O_j(E_j(a(u)))=1$ for every $j$, and so $\varphi_{J, h}(u)=1$ for every $(J, h)$. Moreover, $\varphi_B(u)=B(u)=1$. Hence, $F_i(2u)=\varphi_i(u)=1$ for every $i\in I$, so $s(Q)=1$. For the rest of the proof, we show that with high probability over setup $S=(\boldsymbol{\alpha}, B, C)$, every $Q$ with $s(Q)>1/2+\eps$ has the form asserted in \cref{eq:advantage-consistency-conclusion}, simultaneously for every realization $O$ in the source ensemble. We will then union bound over $Q$, agnostic over $O$.

Suppose that $Q$ is nonconstant, for each ``main coordinate'' $i\in \binom{[m]}{t}\times[H]$, write $Q(\alpha_i)$ uniquely as $Q(\alpha_i)=2u_i+b_i$ for some $u_i\in\mc{U}$ and $b_i\in\{0, 1\}$. First, we show that it is unlikely for there to be at most $(1-\delta)N$ many distinct $u_i$'s. That is, it is unlikely that $|\{ u_i: i \in [N] \}| \leq (1 - \delta)N$.
Fix any set $U\subset \mc{U}$ of size $(1-\delta)N$ and any nonconstant polynomial $Q$. Then, for every $i\in \binom{[m]}{t}\times[H]$, $Q(\alpha_i)$ must be one of at most $2|U|+1\le 3N$ fixed values in $\F_p$ since each $u\in U$ corresponds $2u$ and $2u+1$, with possibly additionally $p-1$. Now, for random evaluations $\boldsymbol{\alpha}$, by counting roots, 
\[
\mb{P}_S\left(u_i\in U\,\forall\, i\in \binom{[m]}{t}\times[H]\right)\le \frac{(2Nk)^N}{p(p-1)\dots (p-N+1)}.
\]
Now, we union bound over $Q, U$ and apply our parameter choices \cref{eq:def-p-new,eq:def-param} to obtain
\begin{align*}
\mb{P}_S\left(\exists \text{ nonconstant $Q$ with }\le (1-\delta)N\text{ distinct }u_i\right) & \le \frac{p^{k+(1-\delta)N}(2Nk)^N}{p(p-1)\dots (p-N+1)}
\\ & \le
\exp\left((k-\delta N)\log p+N\log(2Nk)+O(N^2/p)\right)
\\ & =e^{-\Omega(\delta N\log p)}
\\ & \le e^{-\Omega (m)}.
\end{align*}

Second, assume that some nonconstant $Q$ has more than $(1-\delta)N$ distinct $u_i$'s; collect these values of $u_i$ into $U\subset \mc{U}$.  We show that it is unlikely for there to be more than $3\delta N$ many main coordinates $i$ where $B(u_i)=1$. Indeed, if the $N$ choices of $u_i$ were all distinct, we would have $\on{Bin}(N, \delta)$ such $i$. Since there are at most $N-|U|\le \delta N$ duplicate labels $u_i$ by the first step, the number of main coordinates $i$ where $B(u_i)=1$ is at most $\delta N+\on{Bin}(N, \delta)$. Temporarily letting $Z_2\sim \on{Bin}(N, \delta)$, we see that 
\begin{align*}
\mb{P}_S\left(\exists \text{ nonconstant $Q$ with $\ge 3\delta N$ many $i\in \binom{[m]}{t}\times[H]$ where $B(u_i)=1$}\right) 
& \le e^{-\Omega(m)}+p^k\mb{P}(Z_2\ge 2\delta N)
\\ & \le e^{-\Omega(m)} +p^ke^{-2\delta^2 N}
\\ & \le e^{-\Omega(m)} +e^{-\Omega(\delta^2 N)}
\\ & \le e^{-\Omega(m)} 
\end{align*}

Third, suppose that some nonconstant $Q$ satisfies the first two steps, so $Q$ has at least $(1 - \delta)N$ distinct values of $u_i$, and that there are at most $3 \delta N$ main active coordinates $i$.  
Let
\[ I':=\left\{i\in \binom{[m]}{t}\times[H]:B(u_i)=1\right\}\]
be this set of main active coordinates.

For every main coordinate $i\notin I'$, $B(u_i)=0$ so i.i.d. random bits $C_i(u_i)$ are used. Then, the number of such constraints satisfied by each $Q$ is bounded by $\on{Bin}(N-|I'|, 1/2)$, conditional on $\boldsymbol{\alpha}$ and $B$. For the constraints $i\in I'$, we assume they automatically hold, which allows us to bound the satisfaction ratio simultaneously for every $O$ as follows:
for such $Q$, $n\cdot s(Q)\le |I'|+2+Z_3$ where $Z_3\sim \on{Bin}(N, 1/2)$ temporarily, so
$s(Q)\ge\frac12+\eps$ implies 
\[Z_3
    \ge
    \left(\frac12+\eps\right)(N+2)-3\delta N-2
    \ge
    \frac N2+\delta N,\]
since $\delta=\eps/100$. Therefore, we see that
\begin{align*}
\mb{P}_S\left(\exists\, O, \exists \text{ nonconstant $Q$ with }s(Q)\ge \frac{1}{2}+\eps\right) 
& \le e^{-\Omega(m)}+p^k\mb{P}(Z_3\ge N/2+\delta N)
\\ & \le e^{-\Omega(m)} +p^ke^{-2\delta^2 N}
\\ & \le e^{-\Omega(m)} +e^{-\Omega(\delta^2 N)}
\\ & \le e^{-\Omega(m)} 
\end{align*}

Fourth, suppose $Q$ is constant, so we assume $Q(X)=2u+b$ for some $u\in \mc{U}$ and $b\in\{0, 1\}$, as $F_i(p-1)=0$ for every $i$. Now, if $B(u)=0$, then i.i.d. random bits $C_i(u)$ are used at every main coordinate $i$. Exactly as the third case, the probability of existing such $Q$ with $s(Q)\ge 1/2+\eps$ is $e^{-\Omega(m)}$. 

Finally, it suffices to consider $Q(X)=2u+b$ where $B(u)=1$. Let $a=a(u)$, fix any realization of the source membership oracles $O=(O_1, \dots, O_m)$ from \cref{def:source-ensemble}, and let $r=r(a)$ be the fraction of constraints of the source instance $z=E(a)$ satisfies. We show that $s(Q)>1/2+\eps$ forces $b=0$ and $r\notin[\eta,1-\eta]$. Fix any $b$ and $r$. Recall that for every $(J, h)\in \binom{[m]}{t}\times [H]$, by some parity arithmetic
\[
F_{J,h}(Q(\alpha_{J,h})) =F_{J,h}(2u+b) 
= \varphi_{J,h}(u)\oplus b
=
\mbf{1}\left\{ \#\{j\in J: O_{j}(E_{j}(a))=0\} \text{ is even}\right\}\oplus b.
\]
For active $u$, the value $F_{J,h}(2u+b)$ is independent of $h$. Every $t$-subset $J$ occurs exactly $H$ times among the main constraints, so averaging their satisfaction is the same as averaging over a uniform $J$. Now, we claim that
\begin{equation}
\begin{aligned}
\label{eq:consistency-constant}
s(2u)= \mb{P}_{J\in\binom{[m]}{t}}\left(\#\{j\in J: O_{j}(E_{j}(a))=0\} \text{ is even}\right) \pm O\left(\frac{1}{n}\right)=\frac{1+(2r-1)^t}{2} \pm O\left(\frac{1}{n}+\frac{t^2}{m}\right). \\
s(2u+1)= \mb{P}_{J\in\binom{[m]}{t}}\left(\#\{j\in J: O_{j}(E_{j}(a))=0\} \text{ is odd}\right) \pm O\left(\frac{1}{n}\right)=\frac{1-(2r-1)^t}{2} \pm O\left(\frac{1}{n}+\frac{t^2}{m}\right). \\
\end{aligned}
\end{equation}
The $O(1/n)$ term corresponds to constraints $F_O$ and $F_B$. We justify this inequality as follows: consider sampling $J$ instead as $t$ many i.i.d. uniform random indices $j_1, \dots, j_t\in[m]$, in which case the desired parity event---say the even case---is exactly the event that a $\on{Bin}(t, 1-r)$ random variable is even, which has probability $1/2+(2r-1)^t/2$. Sampling with replacement in \cref{eq:consistency-constant}, the total variation is at most 
\[\mb{P}(j_1, \dots, j_t\text{ not distinct})\le \mb{P}(\exists\text{ distinct } t_1, t_2\in [t]:j_{t_1}=j_{t_2})\le \frac{t^2}{m}.\]
Using \cref{eq:consistency-constant}, we see that by our parameter choices \cref{eq:def-t,eq:def-p-new,eq:def-param}, $s(2u+1)\le 1/2+\eps$ and that 
\[ s(Q)\le \frac{1+(1-2\eta)^t}{2}+O\left(\frac{1}{n}+\frac{t^2}{m}\right)< \frac{1}{2}+\eps\]
whenever $r\in [\eta, 1-\eta]$. This depends on $O$ only through the event that $r(a(u))\in [\eta, 1-\eta]$.

Therefore, with probability $1-e^{-\Omega(m)}$ over the setup $S$, the first four steps hold, in which case we have that, for every realization $O$, the only $Q(X)$ with $s(Q)> 1/2+\eps$ are $Q(X)=2u$ such that $B(u)=1$ and the satisfaction ratio of $E(a(u))$ on the source instance given by realization $O$ is less than $\eta$ or greater than $1-\eta$.
\end{proof}

Now, we prove \cref{thm:main-opi-advantage}, which follows from the following theorem. 

\begin{theorem}
\label{thm:balanced-opi-advantage}
Consider the ensemble in \cref{def:balanced-target}, with parameters
chosen in
\cref{eq:def-t,eq:def-p-new,eq:def-param}.  Every such OPI instance is balanced, with rate
\[
R=\frac{k}{n}
=\Theta\left(
\frac{1}{(\log n)^4(\log\log n)^3}
\right)\quad\text{and}\quad
\log p=\Theta\left((\log n)^4\log\log n\right)
\]
as $m\to\infty$.
Under membership-oracle access to the input lists, the following hold:
\begin{enumerate}
\item There is a uniform quantum algorithm running in
polynomial time that outputs a polynomial $Q$ with
$\deg Q<k$ and $s(Q)=1$ with probability at least
\[
1-e^{-\Omega(m)}
=1-e^{-\Omega((\log n)^2)}.
\]

\item Every randomized classical algorithm
with polynomially many membership queries and unrestricted computation between
queries, and outputting
a polynomial $Q$ with $\deg Q<k$ satisfies
\[
\mb{P}\left(s(Q)>\frac12+\eps\right)
\le e^{-\Omega(m)}
=e^{-\Omega((\log n)^2)}\quad\text{where}\quad \eps
=\Theta\left(\frac{1}{\log\log n}\right).
\]
\end{enumerate}
\end{theorem}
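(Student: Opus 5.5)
The proof has four parts: balance and parameters, the quantum algorithm via simulation of the source oracles, classical hardness via \cref{prop:consistency} and \cref{thm:advantage-yz}, and a query-budget check.

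\textbf{Balance and parameters.} Each $F_i$ takes the values $\varphi_i(u)$ and $\varphi_i(u)\oplus 1$ on the pair $\{2u,2u+1\}$, and $F_i(p-1)=0$. Hence $|F_i^{-1}(1)|=(p-1)/2=\lfloor p/2\rfloor$ exactly, with no dependence on $O$ or on the setup, so every instance is balanced. For the asymptotics, use $\log_2 N\sim\sqrt m$, so $m\asymp(\log n)^2$. Together with $\log p\asymp m^2\log m$ from \cref{eq:def-p-new}, this gives $\log p\asymp(\log n)^4\log\log n$. Then $k/n\asymp\delta^2/\log p$ with $\delta\asymp 1/\log m$, which yields the stated rate. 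Finally, $\eps=1/\log_2 m=\Theta(1/\log\log n)$ and $e^{-\Omega(m)}=e^{-\Omega((\log n)^2)}$.

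\textbf{Quantum algorithm.} Run $\mathcal Q^O$ from \cref{thm:advantage-yz}, simulating each coherent query $O_j(z)$ by the map $z\mapsto a\mapsto u(j,a)\mapsto F_O(2u)$. Each step of this map is efficiently reversible, so the queries can be made coherently; since $\varphi_O(u)=O_j(z)$, the simulation is exact. With probability $1-e^{-\Omega(m)}$ this produces a message $a$ with $r(a)=1$. The remaining step is to find an active $u$ with $a(u)=a$. The field was enlarged by a factor $4/\delta$ exactly so that the preimages of $(j,a)$ in $\mathcal U$ are plentiful. Concretely, $u\equiv (j-1)2^D+a \pmod{m2^D}$ has about $p/(2m2^D)\ge 2/\delta$ solutions $u\in\mathcal U$, each active independently with probability $\delta$. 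Query $F_B(2u)=B(u)$ on $O(\delta^{-1} m)$ such candidates, which succeeds except with probability $e^{-\Omega(m)}$. I need to check that enough candidates exist; if not, I would shrink the search to the $\Theta(1/\delta)$ available ones and accept a constant failure probability, amplified over fresh YZ runs that give fresh $a$. The algorithm outputs $Q=2u$. By the ``moreover'' part of \cref{prop:consistency}, $s(Q)=1$, and $Q$ is constant so $\deg Q<k$. Union-bounding the failure events gives $1-e^{-\Omega(m)}$.

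\textbf{Classical hardness.} Given a classical target algorithm $\mathcal A$ making $\poly(n)$ queries, build a source algorithm $\mathcal A'$ that samples the setup $S$ itself, which is independent of $O$. It simulates each query $F_i(2u+b)$ using at most $t+1$ queries to $O$ (or none, for $F_B$ and for inactive $u$ on main coordinates), plus the sampled bits $B$ and $C$. A subtlety is that $B$ and $C$ live on exponentially large domains. I would sample them lazily, which is fine because $\mathcal A$ has unrestricted computation. The only exception is the union bound in \cref{prop:consistency}, which refers to the whole setup; that is a statement about the distribution and not about the simulator, so it needs no explicit sampling. When $\mathcal A$ outputs $Q$, $\mathcal A'$ outputs $a(Q(0))$. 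By \cref{prop:consistency}, outside an event of probability $e^{-\Omega(m)}$ over $S$, holding simultaneously for all $O$, the condition $s(Q)>1/2+\eps$ forces $Q=2u$ with $r(a(u))\in[0,\eta)\cup(1-\eta,1]$. Hence
\[
\mathbb P(s(Q)>1/2+\eps)\le e^{-\Omega(m)}+\mathbb P\bigl(r(\mathcal A'^O)\in[0,\eta)\cup(1-\eta,1]\bigr).
\]
By \cref{thm:advantage-yz}(2), the second term is at most $e^{-\Omega(m)}$. Note that $\mathcal A'$ also needs $m$ extra queries to evaluate things, which fits within the $2^{m^{2/3}}-m$ budget.

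\textbf{Main obstacle.} The delicate point is the query budget. We need $\poly(n)\cdot(t+1)\le 2^{m^{2/3}}-m$. Since $n\approx 2^{\sqrt m}$, we have $\poly(n)=2^{O(\sqrt m)}=2^{o(m^{2/3})}$, so the bound holds. I expect this to be the step needing most care, together with confirming that the lazy sampling leaves the joint distribution of $(O,S)$ unchanged.
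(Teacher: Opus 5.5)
Your structure matches the paper's proof: the parameter and balance check; simulating $O_j$ through $F_O$ and then searching for an active $u$; and, for hardness, sampling the setup lazily, then invoking \cref{prop:consistency} and \cref{thm:advantage-yz}(2) with the $2^{O(\sqrt m)}\ll 2^{m^{2/3}}$ query budget. There is, however, a real gap in the last step of the quantum algorithm.

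You search only among $u\in\mathcal U$ with $u\equiv (j-1)2^D+a \pmod{m2^D}$ for one fixed $j$. As you yourself compute, there are only about $p/(2m2^D)$ of these, which is $\Theta(1/\delta)$ (roughly between $2/\delta$ and $4/\delta$). So the $O(\delta^{-1}m)$ candidates you want to query do not exist in that class. The event that all of them are inactive has constant probability, about $(1-\delta)^{2/\delta}\approx e^{-2}$, so you do not reach $1-e^{-\Omega(m)}$.

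Your fallback of rerunning the YZ algorithm does not follow from the stated black box. \cref{thm:advantage-yz}(1) says nothing about successive runs returning different messages. Moreover, $B$ is fixed once in the setup, so a repeated $a$ reproduces exactly the same inactive candidates.

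The fix is that the block index $j(u)$ is irrelevant. By the ``moreover'' part of \cref{prop:consistency}, $Q=2u$ is perfect as soon as $B(u)=1$ and $r(a(u))=1$. Also $a(u)=u \bmod 2^D$, since $u \bmod m2^D=(j(u)-1)2^D+a(u)$ with $0\le a(u)<2^D$. Hence every $u\in\mathcal U$ with $u\equiv a\pmod{2^D}$ is a valid candidate. The factor $m$ in $p>m2^{D+2}/\delta$ is there precisely to guarantee more than $2m/\delta$ of them.

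The paper takes $u=(j-1)2^D+a$ for $j=1,\ldots,\lceil m/\delta\rceil$, letting $j$ exceed $m$. These values are distinct, lie in $\mathcal U$, and are independently active. So all of them fail with probability at most $(1-\delta)^{\lceil m/\delta\rceil}\le e^{-m}$.

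One smaller slip in the classical part: when $Q=2u$, the simulator should output $a(u)=a(Q(0)/2)$, not $a(Q(0))$, since \cref{prop:consistency} controls $r(a(u))$. The rest of your argument is fine: the exact count $(p-1)/2$ for balance, the asymptotics, lazy sampling of $B$ and $C$, and the query budget.
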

As a sanity check, observe that this is an unconditional relative separation, and that Prange's algorithm is close to optimal, i.e. it has expected satisfaction 
\[
\frac{1+R}{2}=\frac12+\frac{1}{\log ^{O(1)}n}\le \frac{1}{2}+\eps.
\]
\begin{proof}
The instances are balanced by definition. We first check the parameters. Since $t=\Theta(\log\log m)$,
\[
\binom mt\le m^t=2^{O(\log m\log\log m)}=2^{o(\sqrt m)},
\]
and so $n=N+2\sim2^{\sqrt m}$. Hence, failure probabilities are $e^{-\Omega(m)}
=e^{-\Omega((\log n)^2)}$, slack $\eps =\Theta(1/\log\log n)$, and alphabet size $\log p=\Theta(m^2\log m)$. Finally,
\begin{equation}
\label{eq:advantage-final-rate}
\frac{k}{n}
=\Theta\left(\frac{\delta^2}{\log p}\right)
=\Theta\left(\frac1{m^2(\log m)^3}\right)
=\Theta\left(\frac1{(\log n)^4(\log\log n)^3}\right).
\end{equation}

For (1), we construct the quantum algorithm by solving the underlying instance from \cref{def:source-ensemble} via the quantum algorithm in \cref{thm:advantage-yz}. To simulate oracle query $O_j(z_j)$ for $j\in [m]$ and some $z_j\in \Sigma$, we efficiently find via \cref{eq:advantage-encoder} some $a\in \F_2^D$ with $E_j(a)=z_j$, compute
$u=u(j, a)\in \F_p$ via \cref{eq:advantage-embedding}, and return the query outcome
\[ F_O(2u) = O_{j(u)}(E_{j(u)}(a(u))) = O_j(z_j).\]

With probability $1-e^{-\Omega(m)}$, we find some perfects solution $z\in C_0$ to the corresponding instance in \cref{def:source-ensemble}, i.e. $O_j(z_j)=1$ for every $j$. Then, 
we find via \cref{eq:advantage-encoder} some $a\in \F_2^D$ with $E(a)=z$. Now, we test $u(j, a)=a+(j-1)2^D$ for $j=1, 2, \dots , \lceil m/\delta\rceil$ until we find $j$ where $B(u(j, a))=1$. We test this by computing $u(j, a)$ via \cref{eq:advantage-embedding} and querying
\[F_B(2u)=\varphi_B(u)=B(u).\]

Since $p\gg (m/\delta)2^D$, the $u(j, a)$ are distinct and they all belong to $\mc{U}$, so the probability that we cannot find some $j$ where $B(u(j, a))=1$ is at most
\[(1-\delta)^{\lceil m/\delta\rceil}\le e^{-m}\]
If we find some $j$ where $B(u(j, a))=1$, output the constant polynomial $Q(X)=2u(j, a)$. We see that this polynomial satisfies every constraint $F_i$ by \cref{prop:consistency}. 
Finally, the number of queries and all arithmetic are polynomial in $m$ and
$\log p$, hence polynomial in $n$.

For (2), fix any classical algorithm $\mc{A}$ for the target ensemble, making polynomially many target queries, say at most $n^d$ for a fixed $d$. We build a classical algorithm to solve the underlying source ensemble of \cref{def:source-ensemble}. To do so, we first sample a random setup $S=(\boldsymbol{\alpha}, B, C)$ as in \cref{def:balanced-target}. We sample gate and filler bits only when first needed and store them for repeated queries; all setup randomness is independent of $O$. With probability $1-e^{-\Omega(m)}$, the setup $S$ is \emph{consistent}, i.e. \cref{prop:consistency} holds. 

Now, fix any source instance given by any realization of oracles $O=(O_1, \dots, O_m)$, and solve it by running $\mc{A}$ on the target instance given by $O, S$ and simulating each classical target queries $F_i$ via $O$ as follows. We return $F_i(p-1)=0$ for every $i$. For a query $F_i(2u+b)$ with
$u\in\mc U$ and $b\in\{0,1\}$, compute $j(u),a(u)$ via
\cref{eq:advantage-parser} and proceed as follows:
\begin{itemize}
    \item If $i=O$, compute $E_{j(u)}(a(u))$ via
    \cref{eq:advantage-encoder}, make one source query to return
    \[
    O_{j(u)}(E_{j(u)}(a(u)))\oplus b = \varphi_O(u)\oplus b = F_O(2u+b).
    \]

    \item If $i=B$, return $B(u)\oplus b$. This requires only setup $S$ and no source query.

    \item If $i=(J,h)$, first inspect $B(u)$.
    If $B(u)=0$, return $C_{J,h}(u)\oplus b$.
    Otherwise, query $O_j(E_j(a(u)))$ for every $j\in J$ and compute the parity to return $F_i(2u+b)$ via \cref{eq:advantage-label-predicates}.
    This requires at most $t$ source queries.
\end{itemize}
Now, suppose $\mc{A}$ on this instance $(O, S)$ outputs polynomial $Q$, if $Q(X)=2u$
for some $u\in\mc U$, output $a(u)$; otherwise, output any fixed message
in $\F_2^D$. 

We analyze this algorithm on the source ensemble from \cref{def:source-ensemble} parametrized by $O$. 
This simulation has exactly the distribution of the target ensemble and uses at most
\[
tn^d
=2^{O_d(\sqrt m)}
<2^{m^{2/3}}-m
\]
source queries for sufficiently large $m$. 
By \cref{thm:advantage-yz}, the satisfaction ratio of the output must be in $[\eta, 1-\eta]$ with probability $1-e^{-\Omega(m)}$ over the source ensemble. 
By \cref{prop:consistency}, except on a setup event of probability
$e^{-\Omega(m)}$, consistency implies that every output with $s(Q)>1/2+\eps$ satisfies $Q(X)=2u$ for some $B(u)=1$ and
$r(a(u))\in [0, \eta)\cup (1-\eta, 1]$.
Together, the output $Q$ of $\mc{A}$ has satisfaction ratio $s(Q)> 1/2+\eps$ with probability at most $e^{-\Omega(m)}$. This holds for every algorithm $\mc{A}$ making polynomially many queries, proving (2).
\end{proof}

We remark that the quantum algorithm we apply is not the generic DQI algorithm in \cite{dqi} or the related algorithms in~\cite{chailloux1,chailloux2,hy,sjo}, nor the algorithm in \cref{sec:algo}. Indeed, generically applying these quantum algorithms give $1/2+o(1)$ satisfaction ratio guarantee since the rate of the code is vanishing. 
Instead, we apply the algorithm from \cite{yz} to the Yamakawa-Zhandry source ensemble (\cref{def:source-ensemble}) that we transform the OPI ensemble into, and then translate back.  We remark that one could also use the original DQI algorithm of \cite{dqi} (adapted for a folded version) on the transformed YZ ensemble.

\begin{remark}[Alternative Parametrization]
\label{rem:alternative-param}
By tuning the parameter choices, we obtain an analogous version of \cref{thm:balanced-opi-advantage} as follows: Let $\eps>0$ be any absolute constant.  Then we may choose any sufficiently large fixed even $t$ satisfying \cref{eq:def-t}, rather than the least such $t$. If we take $H=1$ and $N=\binom{m}{t}$, then $C$ has length $n\asymp m^t$ and rate
\[
R=\Theta_{\eps,t}\left(\frac1{m^2\log m}\right)
=\Theta_{\eps,t}\left(\frac{n^{-2/t}}{\log n}\right).
\]
Since $n$ is now polynomial in $m$ (rather than $2^{\Theta(\sqrt m)}$), hardness of polynomial-query classical algorithms (rather than $2^{O(\sqrt m)}$ queries) for the source ensemble \cref{def:source-ensemble} suffices for this target ensemble. In this version, for any constant $\eps>0$ and $c>0$, choosing also $t>2/c$ gives such a target ensemble with field size $p=\exp(n^{\Theta(1)})$, rate $R\gg n^{-c}$, polynomial-query classical algorithms have ceiling $1/2+\eps$, and error probability $e^{-\Omega(m)}=e^{-\Omega(n^{1/t})}$.
\end{remark}

\bibliographystyle{alpha}
\bibliography{ref}

\appendix
\section{Arbitrary Bias One-bit LLR Below Rate 1/2}
\label{sec:app-llr}

In this appendix, we complete the proof of \cref{thm:main-llr}(2) by removing the balance assumption from \cref{subsec:llr-below}. The idea is that for very biased leakages, $g_i$ may have large
supremum norm when $\rho_i$ approaches zero or one, and the $L^2(\nu)$ proxy recalled below need not be small in that regime. 

\begin{equation}
\FL(v,v')=\frac{1}{4}\E_{Z\sim\nu}
\left|
\frac{\P(L_v=Z)-\P(L_{v'}=Z)}{\nu(Z)}
\right|^2\quad\text{where}\quad \nu := \bigotimes_{i=1}^n\on{Ber}(\rho_i).
\end{equation}
This is illustrated in the following example.
\begin{example}
Take $\ell_i(x)=\one\{x=0\}$, $v=0$,
and $v'\notin C$. The all-one transcript has probabilities
$p^{-k}$ and zero under the two translates, but probability
$p^{-n}$ under $\nu$, giving
\[
\FL(0,v')\ge\frac14p^{n-2k}.
\]
which blows up when rate $R=k/n<1/2$.
\end{example}

To handle this, we instead apply the $L^1(\nu)$ triangle
inequality directly. While it is weaker than the $L^2(\nu)$ proxy at balance, it retains the density factor that helps when $g_i$ leakages are biased. 

\begin{lemma}
\label{lem:centered-l1-llr} 
Define $c_i:=2\sqrt{\rho_i(1-\rho_i)}$  for each $i\in [n]$. For every
$v,v'\in\F_p^n$,
\begin{equation}
\label{eq:centered-l1-llr}
2\TV(L_v,L_{v'})
\le
\sum_{\emptyset\ne A\subset[n]}
\left(\prod_{i\in A}c_i\right)
\abs{
\sum_{\substack{y\in C^\perp\\\supp(y)=A}}
\left[e_p(-\langle v,y\rangle)-e_p(-\langle v',y\rangle)\right]
\prod_{i\in A}\wh g_i(y_i)
}.
\end{equation}
\end{lemma}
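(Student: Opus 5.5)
We start from the exact likelihood-ratio identity \cref{eq:llr-proxy-key} in the proof of \cref{lem:restated-llr-proxy}:
\[
\frac{\P(L_v=z)-\P(L_{v'}=z)}{\nu(z)}
=\sum_{\emptyset\ne A\subset[n]}\psi_A(z)\,D_A,
\qquad
D_A:=\sum_{\substack{y\in C^\perp\\\supp(y)=A}}\left[e_p(-\langle v,y\rangle)-e_p(-\langle v',y\rangle)\right]\prod_{i\in A}\wh g_i(y_i).
\]
The $A=\emptyset$ term drops out because $D_\emptyset=0$. Then
\[
2\TV(L_v,L_{v'})=\E_{Z\sim\nu}\abs{\frac{\P(L_v=Z)-\P(L_{v'}=Z)}{\nu(Z)}}.
\]
Here, instead of applying Cauchy--Schwarz as in \cref{eq:ortho-llr-proxy}, we apply the triangle inequality in $L^1(\nu)$ term by term over $A$. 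This gives
\[
2\TV(L_v,L_{v'})\le\sum_{\emptyset\ne A}\abs{D_A}\,\E_{Z\sim\nu}\abs{\psi_A(Z)}.
\]

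It remains to compute $\E_\nu|\psi_A|$. Since $\nu$ is a product measure and $\psi_A=\prod_{i\in A}\psi_i(z_i)$ is a product of single-coordinate functions, this expectation factors as $\prod_{i\in A}\E_{z_i\sim\on{Ber}(\rho_i)}|\psi_i(z_i)|$. For one coordinate:
\begin{itemize}
\item at $z_i=1$, $\psi_i=(1-\rho_i)/\sqrt{\rho_i(1-\rho_i)}=\sqrt{(1-\rho_i)/\rho_i}$, with weight $\rho_i$;
\item at $z_i=0$, $|\psi_i|=\sqrt{\rho_i/(1-\rho_i)}$, with weight $1-\rho_i$.
\end{itemize}
So
\[
\E|\psi_i|=\sqrt{\rho_i(1-\rho_i)}+\sqrt{\rho_i(1-\rho_i)}=2\sqrt{\rho_i(1-\rho_i)}=c_i.
\]
Substituting $\E_\nu|\psi_A|=\prod_{i\in A}c_i$ gives \cref{eq:centered-l1-llr}.

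There is no real obstacle; the only care needed is to note that $\nu$ has full support on $\{0,1\}^n$ (we reduced to $\rho_i\in(0,1)$), so dividing by $\nu(z)$ is legitimate. We should also note that the identity \cref{eq:llr-proxy-key} was derived for arbitrary $v,v'$ without any rate assumption, so it can be reused verbatim.
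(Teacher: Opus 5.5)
Your proposal is correct and follows essentially the same route as the paper: start from the identity \cref{eq:llr-proxy-key}, apply the $L^1(\nu)$ triangle inequality over supports $A$ instead of Cauchy--Schwarz, and factor $\E_\nu\abs{\psi_A}=\prod_{i\in A}c_i$ using the product structure of $\nu$. Your per-coordinate computation of $c_i$ and your remarks that the empty-support term vanishes and that $\nu$ has full support are exactly what the argument needs.
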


\begin{proof}
By \cref{eq:llr-proxy-key} and the triangle inequality,
\begin{align*}
2\TV(L_v,L_{v'})
&=
\E_{Z\sim\nu}
\abs{\frac{\P(L_v=Z)-\P(L_{v'}=Z)}{\nu(Z)}}\\
&\le
\sum_{\emptyset\ne A\subset[n]}
\E_{Z\sim\nu}\abs{\psi_A(Z)}
\abs{
\sum_{\substack{y\in C^\perp\\\supp(y)=A}}
\left[e_p(-\langle v,y\rangle)-e_p(-\langle v',y\rangle)\right]
\prod_{i\in A}\wh g_i(y_i)
}.
\end{align*}
The empty-support term vanishes. Since $\nu$ is a product distribution,
\[
\E_{Z\sim\nu}\abs{\psi_A(Z)}
=
\prod_{i\in A}\E_{Z_i\sim\nu_i}\abs{\psi_i(Z_i)}
=
\prod_{i\in A}
\left(
\rho_i\sqrt{\frac{1-\rho_i}{\rho_i}}
+
(1-\rho_i)\sqrt{\frac{\rho_i}{1-\rho_i}}
\right)
=
\prod_{i\in A}c_i.
\]
Substituting this identity proves \cref{eq:centered-l1-llr}.
\end{proof}

For the strongly biased coordinates, we apply the $L^1(\nu)$ proxy
together with the following Brascamp--Lieb inequality.  This inequality
was proved by an entropy argument in \cite{hy}; we include a short
H\"older proof.
\begin{lemma}[Affine MDS Brascamp--Lieb]
\label{lem:bl-mds}
Let $C\subset\F_p^n$ be an $[n,k]$ MDS code of rate $R=k/n$, and
let $u\in\F_p^n$. Then, for arbitrary functions
$h_1,\ldots,h_n:\F_p\to\R_{\ge0}$,
\begin{equation}
\label{eq:bl-mds-normalized}
    \E_{x\in u+C}\prod_{i=1}^n h_i(x_i)
    \le
    \prod_{i=1}^n
    \left(\E_{a\in\F_p}h_i(a)^{1/R}\right)^R.
\end{equation}
Equivalently, writing $\|\cdot\|_{1/R}$ for the unnormalized
$\ell^{1/R}(\F_p)$ norm,
\[
    \sum_{x\in u+C}\prod_{i=1}^n h_i(x_i)
    \le
    \prod_{i=1}^n\|h_i\|_{1/R}.
\]
\end{lemma}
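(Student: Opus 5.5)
The plan is to prove the unnormalized form by induction on $k$ via H\"older's inequality over a fiber decomposition. The two forms are equivalent: $|u+C|=p^k$, and $\prod_i(\E_a h_i^{1/R})^R=p^{-nR}\prod_i\|h_i\|_{1/R}=p^{-k}\prod_i\|h_i\|_{1/R}$. The translation $u$ is harmless. Replacing $h_i$ by $h_i(\cdot+u_i)$ preserves every $\ell^{1/R}$ norm, so we may assume $u=0$. We also use the MDS property throughout: every $k$ coordinates form an information set, so any $k$ coordinates determine the codeword bijectively.

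\textbf{Base case $k=n$.} Here $C=\F_p^n$ and $1/R=1$, so the sum factors exactly as $\prod_i\|h_i\|_1$.

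\textbf{Base case $k=1$.} Now $C=\{c\,w\}$ with all $w_i\ne0$, since the code is MDS. The sum is $\sum_{c\in\F_p}\prod_i h_i(cw_i)$. Apply the generalized H\"older inequality with $n$ exponents equal to $n=1/R$. Each factor $c\mapsto h_i(cw_i)$ has $\ell^n$ norm $\|h_i\|_n$, because $c\mapsto cw_i$ is a bijection. This gives the bound.

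\textbf{Inductive step, $1<k<n$.} Split on the last coordinate: write the sum as $\sum_{a\in\F_p}h_n(a)\,\Sigma(a)$, where $\Sigma(a)$ is the sum over the fiber $\{x\in C:x_n=a\}$ restricted to $[n-1]$.
\begin{itemize}
\item Each fiber is a translate of the shortened code $\{x\in C:x_n=0\}|_{[n-1]}$, which is an $[n-1,k-1]$ MDS code of rate $R'=(k-1)/(n-1)<R$.
\item The inductive hypothesis gives $\Sigma(a)\le\prod_{i<n}\|h_i\|_{1/R'}$, but that estimate is in the wrong norms and does not depend on $a$.
\end{itemize}
So I would not use the induction naively. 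Instead I would run a multilinear interpolation (Riesz--Thorin / Brascamp--Lieb style) argument. The functional $\Lambda(h_1,\dots,h_n)=\sum_{x\in C}\prod_i h_i(x_i)$ satisfies two families of endpoint bounds.
\begin{itemize}
\item For each information set $S$ of size $k$, placing the $\ell^1$ norm on $S$ and the $\ell^\infty$ norm off $S$ gives $\Lambda\le\prod_{i\in S}\|h_i\|_1\prod_{i\notin S}\|h_i\|_\infty$. This holds because the projection $C\to\F_p^S$ is a bijection.
\item The exponent vector $(R,\dots,R)$ (reciprocal exponents) is the average of the indicator vectors $\one_S$ over all $S\in\binom{[n]}{k}$.
\end{itemize}
Multilinear complex interpolation for nonnegative functions (or the Finner/H\"older-type inequality for such a convex combination) then yields $\Lambda\le\prod_i\|h_i\|_{1/R}$.

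A more elementary route uses the same averaging. Write $h_i^{\,}=\prod_{S\ni i}h_i^{1/(kN_i)}$, where $N_i=\binom{n-1}{k-1}$, so that each $h_i$ is split across the $N_i$ information sets containing $i$. Then apply H\"older in $x$ over the $\binom nk$ factors $\prod_{i\in S}h_i(x_i)^{\,\cdot}$, each with exponent $\binom nk$. Each resulting factor is a sum over $C$ that depends only on the coordinates in $S$, and by the bijection it factors as a product over $i\in S$ of one-dimensional sums.

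The main obstacle is matching the powers in this H\"older scheme so that the one-dimensional sums come out to exactly $\|h_i\|_{1/R}$. Concretely, $h_i$ appears in $N_i$ of the factors, and we need $N_i/\binom nk=k/n=R$. This identity holds, so the exponents should balance. I would verify that bookkeeping carefully: factor $S$ contributes $\left(\sum_{x\in C}\prod_{i\in S}h_i(x_i)^{n/k}\right)^{1/\binom nk}=\prod_{i\in S}\left(\sum_a h_i(a)^{1/R}\right)^{1/\binom nk}$, and multiplying over all $S$ gives each $i$ the exponent $N_i/\binom nk=R$.
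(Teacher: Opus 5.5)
Your ``more elementary route'' is exactly the paper's proof: split each $h_i$ evenly across the $N_i=\binom{n-1}{k-1}=R\binom nk$ information sets containing $i$, apply H\"older with $\binom nk$ equal exponents, and use the MDS bijection $u+C\to\F_p^S$ to factor each term; the induction and interpolation detours can simply be dropped. One slip to fix: the splitting must read $h_i=\prod_{S\ni i}h_i^{1/N_i}$, not $h_i^{1/(kN_i)}$ (which would give $h_i^{1/k}$); your later bookkeeping, which yields $h_i^{n/k}$ inside the factor for $S$, already uses the correct exponent $1/N_i$.
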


\begin{proof}
Every coordinate belongs to exactly $\binom{n-1}{k-1}
    =
    R\binom nk$
sets in $\binom{[n]}k$. Hence
\[
    \prod_{i=1}^n h_i(x_i)
    =
    \prod_{J\in\binom{[n]}k}
    \left(\prod_{j\in J}h_j(x_j)\right)^{
        1/\left(R\binom nk\right)}.
`\]
Applying H\"older's inequality with exponent $\binom nk$ to each of the
$\binom nk$ factors gives
\[
    \E_{x\in u+C}\prod_{i=1}^n h_i(x_i)
    \le
    \prod_{J\in\binom{[n]}k}
    \left(
      \E_{x\in u+C}
      \prod_{j\in J}h_j(x_j)^{1/R}
    \right)^{1/\binom nk}.
\]
Since $C$ is MDS, projection onto every $J\in\binom{[n]}k$ is a
bijection from $u+C$ to $\F_p^J$. Therefore
\[
    \E_{x\in u+C}
    \prod_{j\in J}h_j(x_j)^{1/R}
    =
    \prod_{j\in J}
    \E_{a\in\F_p}h_j(a)^{1/R}.
\]
Substituting this identity and using that each coordinate belongs to
an $R$ fraction of the sets in $\binom{[n]}k$, we obtain
\[
    \E_{x\in u+C}\prod_{i=1}^n h_i(x_i)
    \le
    \prod_{J\in\binom{[n]}k}
    \left(
      \prod_{j\in J}
      \E_{a\in\F_p}h_j(a)^{1/R}
    \right)^{1/\binom nk} =
    \prod_{i=1}^n
    \left(\E_{a\in\F_p}h_i(a)^{1/R}\right)^R.
\]
This proves \cref{eq:bl-mds-normalized}. Multiplying by
$|C|=p^k=p^{Rn}$ gives the equivalent unnormalized formulation.
\end{proof}

We now prove the arbitrary bias case of \cref{thm:main-llr}(3). 

\begin{lemma}
\label{lem:arbitrary-llr-below}
There exists an absolute constant $\delta>0$ such that for every generalized Reed-Solomon code $C\subset\F_p^n$ of fixed rate $R\ge1/2-\delta$ over a prime
field $\F_p$ with $p\ge n$, and every one-bit leakage tuple, uniformly over $v,v'\in\F_p^n$,
\[
\TV(L_v,L_{v'})\le e^{-\Omega(n)}.
\]
\end{lemma}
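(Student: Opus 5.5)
The plan is to split the coordinates by bias and treat the two groups with different tools. By complementing leaked bits we may assume $\rho_i\ge 1/2$ for every $i$. Fix a threshold $\tau\in(1/2,1)$, to be chosen close to $1$ depending only on the absolute constant $q$ of \cref{lem:u3-gap}. Let $B:=\{i:\rho_i>\tau\}$ be the strongly biased coordinates and $M:=[n]\setminus B$ the moderate ones.

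On $M$ the centered functions $g_i$ satisfy $\|g_i\|_\infty\le\sqrt{\tau/(1-\tau)}=O_\tau(1)$ and $|\E\,\mathrm{sign}(g_i)|\le 2\tau-1<1$. A rescaled version of \cref{lem:u3-gap} should therefore give $\|g_i\|_{U^3}\le q_\tau\|g_i\|_\infty$ with $q_\tau<1$. Feeding this into \cref{lem:block-u3} would reproduce the large-support bound of \cref{lem:high-support-energy}, up to a factor $O_\tau(1)^{n}$ that has to be paid for the non-$J$ coordinates. That factor is the reason balance was needed before. To avoid it, I will not bound the large-support terms for moderate coordinates by sup norms. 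Instead I will use the $L^1(\nu)$ proxy of \cref{lem:centered-l1-llr}: its weight $c_i=2\sqrt{\rho_i(1-\rho_i)}$ is exactly the factor that pairs with $\|g_i\|_\infty$ to make the product $c_i\|g_i\|_\infty\le 2\rho_i$ bounded, and it is at most $1$ for every coordinate.

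For the biased coordinates I will use \cref{lem:bl-mds}. A direct computation shows that for $\rho_i$ close to $1$,
\[
c_i\bigl(\E|g_i|^{1/R}\bigr)^R\asymp (1-\rho_i)^{R}\to 0 .
\]
So every biased coordinate that lies in the support $A$ contributes a small factor. Concretely, I will expand
\[
\E_{x\in v+C}\prod_{i\in A}g_i(x_i)
\]
by first conditioning on $x_{A\cap B}$. If $|A\cap B|<k$, then by the MDS property the remaining coordinates range over a coset of a shortened GRS code of dimension $k-|A\cap B|$. I will then apply Hölder in the spirit of \cref{lem:bl-mds} to the biased factors, and the balanced-type argument (\cref{lem:wftl} for small supports, \cref{lem:block-u3} for large supports) to the moderate factors inside each fiber.

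This leads to a case split on $\beta:=|B|/n$. When $\beta\ge\beta_0$, the Brascamp--Lieb factors $\prod_{i\in A\cap B}(1-\rho_i)^R$ together with the $c_i\le 1$ weights should beat the $2^n$ count of supports, provided $\tau$ is close enough to $1$. When $\beta<\beta_0$, the shortened code loses at most $\beta_0 n$ dimension and at most $\beta_0 n$ length, so its rate stays at least $1/2-2\delta$. In that case I will apply the moderate-coordinate version of \cref{lem:balanced-llr-below}, with the quantitative low-support input \cref{cor:one-bit-above} and \cref{lem:wftl}, and shrink $\delta$ to absorb $\beta_0$ and $q_\tau$.

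The main obstacle is making these two regimes meet. The moderate coordinates must not incur exponential losses from $\|g_i\|_\infty$. The biased coordinates that fall outside $A$ must contribute exactly $1$, which holds because the $A$-expansion in \cref{lem:centered-l1-llr} only involves $g_i$ for $i\in A$. And the constants $\tau,\beta_0,\delta$ must be chosen in the order $\tau$, then $\beta_0$, then $\delta$, so that the entropy count $H_2(\tfrac52\delta)$ of large supports is dominated by the $U^3$ gain $q_\tau^{(3k-n)/2}$ after conditioning.

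If the pairing of $c_i$ with $U^3$ norms inside \cref{lem:block-u3} fails, I would first try bounding $\|c_ig_i\|_{U^3}$ directly. Since $c_ig_i=2(\one_{S_i}-\rho_i)$ is $2$-bounded with mean zero, \cref{lem:u3-gap} applied to $\one_{S_i}-\rho_i$ after normalization would give a uniform gap for all $\rho_i\in[1/2,\tau]$.
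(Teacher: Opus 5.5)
\textbf{There is a genuine gap: the threshold between your two regimes sits on the wrong side.}

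With $\tau$ close to $1$, every coordinate with $\rho_i\le\tau$ goes down the $U^3$ route, i.e.\ \cref{lem:block-u3} applied to $c_ig_i$, possibly after conditioning on the biased coordinates. But \cref{lem:block-u3} charges each coordinate of $A\setminus J$ its sup norm $\|c_ig_i\|_\infty=2\rho_i$. It gains only on $|J|=d\approx(3k-n)/2\approx n/4$ coordinates. So ``bounded'' is not enough: you need $2\rho_i\le1+\varepsilon$ on all but $O(\varepsilon n)$ coordinates of $A$.

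Concretely, take $\rho_i\equiv0.7$. Then $B=\emptyset$, so you are in your $\beta<\beta_0$ branch. For $|A|>m$ your bound is about $1.4^{3n/4}\prod_{j\in J}\|c_jg_j\|_{U^3}$. Since $1.4^3>2.7$, this decays only if $\|c_jg_j\|_{U^3}<0.37$. For intervals, however, already $\|c_jg_j\|_{U^2}>0.6$. Your fallback about $\|c_ig_i\|_{U^3}$ addresses the $J$-factors, which are not the problem.

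There is a second issue. Your Brascamp--Lieb branch is triggered by the global quantity $|B|/n$, but its gain comes only from $A\cap B$. When $|B|\ge\beta_0 n$, a large support that avoids $B$ gets no decay. Also, aiming to beat the full $2^n$ count of supports is what pushed you to $\tau\to1$, and it is unnecessary.

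\textbf{The missing observation.} Every coordinate bounded away from balance already has a small $L^1(\nu)$ weight: if $\rho_i>(1+\varepsilon)/2$, then $c_i^2=4\rho_i(1-\rho_i)\le1-\varepsilon^2$. So put the threshold at $(1+\varepsilon)/2$ and split per support, which is what the paper does.
\begin{itemize}
\item Supports with $|A|\le m$ need no bias analysis. Drop the phases, use $c_i\le1$, and apply \cref{lem:wftl} with $\gamma=2/\pi$ and $\Gamma=1$; this holds for all biases.
\item For $|A|>m$, let $t$ be the number of $i\in A$ with $\rho_i>(1+\varepsilon)/2$.
\begin{itemize}
\item If $t\le\varepsilon n$, apply \cref{lem:block-u3} directly to $c_ig_i=2\one_{S_i}-2\rho_i$, with $J$ chosen among the near-balanced coordinates. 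This gives $2^t(1+\varepsilon)^n(q_0+\varepsilon)^d$.
\item If $t>\varepsilon n$, apply \cref{lem:bl-mds} with $h_i=|g_i|$ and use $\E|g_i|^{1/R}\le\|g_i\|_\infty^{1/R-2}$. This gives $c_i(\E|g_i|^{1/R})^R\le2^{1-2R}c_i^{2R}$ per coordinate, hence $2^{(1-2R)n}(1-\varepsilon^2)^{R\varepsilon n}$ in total.
\end{itemize}
\end{itemize}
Both bounds only have to beat the large-support count $2^{H_2(5\delta/2)n}$. So you choose $\varepsilon$ from $q_0$, then $\delta$ small in terms of $\varepsilon$. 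No conditioning or shortened codes are needed.
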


We briefly highlight the proof idea, in comparison to the proof of \cref{lem:balanced-llr-below}. 
We split by support size in comparison with  $m:=\lfloor(2-\delta)k\rfloor$ as usual. The small supports are handled by the same
$\FL^+$ and \cref{lem:wftl}  as in the proof of \cref{lem:balanced-llr-below}. For large supports,
we use the $U^3$ bound when few coordinates are strongly biased, since the
functions $g_i$ are uniformly bounded near balance and have a $U^3$ gap, so we follow the proof of \cref{lem:balanced-llr-below}; when many coordinates are strongly biased, 
$\|g_i\|_\infty$ may be large, but $c_i$ is small and
$c_i\|g_i\|_\infty\le2$, so we use \cref{lem:bl-mds} to exploit these small
factors using $\E|g_i|^2=1$, without paying the full $\Vert g_i\Vert_\infty$.

\begin{proof}
For $R>1/2$, it follows \cref{cor:one-bit-above}, so we fix
$R\in[1/2-\delta,1/2]$. Fix $q_0<1$ supplied by \cref{lem:u3-gap} with $\tau=1/4$. Complementing individual leaked bits,
we assume $\rho_i\ge1/2$. Recall that
\[
\E|g_i|^2=1,
\qquad
c_i\|g_i\|_\infty=2\rho_i\le2.
\]
We choose $\varepsilon>0$ sufficiently small and then
$\delta>0$ sufficiently small depending on $\varepsilon$.
Define support cutoff $m:=\lfloor(2-\delta)k\rfloor$.
By \cref{lem:centered-l1-llr} and Fourier inversion,
\[
\TV(L_v,L_{v'})
\le
\frac12\sum_{u\in\{v,v'\}}
\sum_{\emptyset\ne A\subset[n]}
\left(\prod_{i\in A}c_i\right)
\abs{\E_{x\in u+C}\prod_{i\in A}g_i(x_i)}.
\]
Fix a translate $u$. 
We bound the sum over $A$ in three cases, depending on
\[
t:=\abs{\left\{i\in A:\rho_i>\frac{1+\varepsilon}{2}\right\}}.
\]

\begin{itemize}
\item Suppose $0<|A|\le m$. Forgetting phases and using $c_i\le1$, we proceed exactly as the proof of
\cref{lem:balanced-llr-below}: 
we apply \cref{lem:wftl} 
with $t=m$, $\gamma=2/\pi+O(p^{-2})$ via
\cref{cor:L-infty-bound}, and $\Gamma=1$ via Parseval.
These verifications hold uniformly over the biases. Since
$2k-m\sim\delta k\asymp n$,
\begin{equation}
\sum_{0<|A|\le m}
\left(\prod_{i\in A}c_i\right)
\abs{\E_{x\in u+C}\prod_{i\in A}g_i(x_i)}\le
\sum_{\substack{0\ne y\in C^\perp\\|y|\le m}}
\prod_{i\in\supp(y)}\abs{\wh g_i(y_i)}
\le
n^{O(1)}\left(\frac2\pi\right)^k
=e^{-\Omega(n)}.
\end{equation}

\item Suppose $|A|>m$ and $t\le\eps n$, let
\[
d:=\left\lfloor\frac{3k-n}{2}\right\rfloor
\ge\left(\frac14-\frac32\delta\right)n-O(1).
\]
For sufficiently small $\varepsilon,\delta$, there are at least
$d$ near-balanced coordinates in $A$, since
$|A|-t>m-\varepsilon n>d$. Choose $J$ of size $d$ among them.
For $j\in J$, we have $2\rho_j-1\in[0,\varepsilon]$.
Since $c_jg_j=2\one_{S_j}-2\rho_j$, \cref{lem:u3-gap}
and the triangle inequality give
\[
c_j\|g_j\|_{U^3}
\le
\|2\one_{S_j}-1\|_{U^3}+2\rho_j-1
\le q_0+\varepsilon.
\]
Moreover, $c_i\|g_i\|_\infty=2\rho_i$ is at most
$1+\varepsilon$ on near-balanced coordinates and at most two
everywhere.
Apply \cref{lem:block-u3} to $c_i g_i$ on $A$ and the constant function
one outside $A$, we obtain
\begin{align*}
\left(\prod_{i\in A}c_i\right)
\abs{\E_{x\in u+C}\prod_{i\in A}g_i(x_i)}\le
2^t(1+\varepsilon)^n(q_0+\varepsilon)^d\le
2^{\left[
\varepsilon+\log_2(1+\varepsilon)
+\left(\frac14-\frac32\delta\right)
\log_2(q_0+\varepsilon)+o(1)
\right]n}.
\end{align*}

\item Suppose $|A|>m$ and $t>\eps n$, then
each of these $t$ coordinates satisfies
$c_i^2=4\rho_i(1-\rho_i)\le1-\varepsilon^2$.
Apply \cref{lem:bl-mds} to $u+C$, taking $h_i=|g_i|$ for
$i\in A$ and $h_i=\one$ otherwise.  We obtain
\[
\left(\prod_{i\in A}c_i\right)
\abs{\E_{x\in u+C}\prod_{i\in A}g_i(x_i)}
\le
\prod_{i\in A}
c_i\left(\E|g_i|^{1/R}\right)^R.
\]
Since $R\le1/2$ and $\E|g_i|^2=1$,
$\E|g_i|^{1/R}
\le
\|g_i\|_\infty^{1/R-2}\E|g_i|^2
=
\|g_i\|_\infty^{1/R-2}$.
Hence, as $c_i\|g_i\|_\infty\le2$,
\[
c_i\left(\E|g_i|^{1/R}\right)^R
\le
c_i\|g_i\|_\infty^{1-2R}
\le
2^{1-2R}c_i^{2R}.
\]
Multiplying over $A$ and recalling that $c_i^2\le1-\eps^2$ for $t$-many coordinates $i$, we have
\[
\left(\prod_{i\in A}c_i\right)
\abs{\E_{x\in u+C}\prod_{i\in A}g_i(x_i)}
\le
2^{(1-2R)|A|}\prod_{i\in A}c_i^{2R}\le
2^{(1-2R)n}(1-\varepsilon^2)^{Rt}\le
2^{\left[
2\delta+
\left(\frac12-\delta\right)\varepsilon
\log_2(1-\varepsilon^2)
\right]n}.
\]
\end{itemize}

To sum the last two cases, \cref{eq:high-support-count}
bounds the number of supports with $|A|>m$ by
$2^{\left[
H_2\left(\frac52\delta-\delta^2\right)+o(1)
\right]n}$.
Now, we choose $\delta,\eps>0$ depending on constant $q_0 <1$. First choose $\varepsilon>0$ sufficiently small that
\[
\varepsilon+\log_2(1+\varepsilon)
+\frac14\log_2(q_0+\varepsilon)<0.
\]
Then choose $\delta>0$ sufficiently small that
\begin{align*}
H_2\left(\frac52\delta-\delta^2\right)
+\varepsilon+\log_2(1+\varepsilon)
+\left(\frac14-\frac32\delta\right)
\log_2(q_0+\varepsilon)&<0,\\
H_2\left(\frac52\delta-\delta^2\right)
+2\delta+
\left(\frac12-\delta\right)\varepsilon
\log_2(1-\varepsilon^2)&<0.
\end{align*}
Both expressions are strictly negative at $\delta=0$.
Thus all three cases contribute $e^{-\Omega(n)}$, uniformly
over the leakage tuple and the translates, proving the claim.
\end{proof}

\section{Near-balanced OPI Discrepancy Below Rate 1/2}
\label{sec:app-opi}

We complete \cref{thm:main-opi}(2). Unlike the preceding appendix on the arbitrary-bias case of \cref{thm:main-llr}(2),
we only need to perturb the balanced OPI argument by a small
constant in the density. Then,
in the same support split by $m=\lfloor(2-\delta)k\rfloor$ as in
\cref{eq:puncture-size}, the low-support estimate remains valid. For the
high supports, we need to be a bit more careful.
\begin{lemma}
\label{lem:near-balanced-opi-below}
 There exists an absolute constant $\delta>0$ such that for every
generalized Reed-Solomon code $C\subset\F_p^n$ of fixed rate $R\ge1/2-\delta$ over a prime
field $\F_p$ with $p\ge n$, then uniformly over every OPI instance with common
density $\rho_i=\rho\ge1/2-\delta$ and $v\in\F_p^n$,
\[
\FO(v)\le e^{-\Omega(n)}.
\]

\end{lemma}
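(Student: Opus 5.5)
The plan is to rerun the proof of \cref{lem:balanced-opi-below}, now with a common density $\rho\ge 1/2-\delta$. The argument has three steps.

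First, I reduce to a narrow density window. By \cref{prop:subsample}, applied to translates as in the proof of \cref{thm:main-opi}, I may replace the lists by random sublists of density $\min\{\rho,1/2\}$. So it suffices to treat $\rho\in[1/2-\delta,1/2]$. On this window,
\[
\lambda=\sqrt{(1-\rho)/\rho}\le 1+O(\delta),
\]
and $\|g_i\|_\infty\le \lambda\le 1+O(\delta)$. Hence $\prod_{i\in A}\lambda_i\le (1+O(\delta))^n$. I keep this factor explicitly rather than absorbing it into $2^{o(n)}$. Then I split \cref{eq:translated-EO} at the cutoff $m=\lfloor(2-\delta)k\rfloor$ exactly as in \cref{eq:opi-below-low-high}.

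Second, the low-support part. For $|y|\le m$, I apply \cref{lem:wftl} to $w_i=\lambda|\wh g_i|$. By \cref{lem:positive-opi-above}, the pointwise bound is $\gamma=\sin(\pi\rho)/(\pi\rho)<1$, which stays bounded away from $1$ uniformly near $\rho=1/2$. Parseval gives $\Gamma=(1-\rho)/\rho\le 2$. Since $2k-m\asymp\delta n$, this part is $e^{-\Omega(n)}$, as in the balanced case.

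Third, the high-support part, which is the delicate step. Fix $A$ with $|A|>m$ and apply \cref{lem:block-u3} with $J\subset A$ of size $d=\lfloor(3k-n)/2\rfloor$. This requires a $U^3$ bound for $g_i$ that is uniform in $\rho$. Write
\[
g_i=\frac{\one_{S_i}-\rho}{\sqrt{\rho(1-\rho)}}.
\]
This function is $\pm$-valued with values $\lambda$ and $-1/\lambda$, both $1+O(\delta)$ in absolute value, and its mean is $0$. So \cref{lem:u3-gap} as stated does not apply directly, since it asks for $|f|=1+o(1)$. Instead I compare $g_i$ with $2\one_{S_i}-1$: the difference $g_i-(2\one_{S_i}-1)$ is $O(\delta)$ in sup norm. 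The function $2\one_{S_i}-1$ has mean $2\rho-1\in[-2\delta,0]$, so \cref{lem:u3-gap} with $\tau=1/4$ gives $\|2\one_{S_i}-1\|_{U^3}\le q_0$. The triangle inequality for the $U^3$ norm, together with $\|\cdot\|_{U^3}\le\|\cdot\|_\infty$, then yields
\[
\|g_i\|_{U^3}\le q_0+O(\delta).
\]

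Putting the pieces together, each high support contributes at most
\[
(1+O(\delta))^{n}\,(1+O(\delta))^{|A|}\,(q_0+O(\delta))^{d}.
\]
Summing over at most $2^{H_2(\frac52\delta-\delta^2)n+o(n)}$ supports, the total exponent is
\[
H_2\!\left(\tfrac52\delta-\delta^2\right)+O(\delta)+\left(\tfrac14-\tfrac32\delta\right)\log_2(q_0+O(\delta)).
\]
At $\delta=0$ this equals $\tfrac14\log_2 q_0<0$, so it is negative for sufficiently small absolute $\delta$. This mirrors the choice made in \cref{lem:arbitrary-llr-below}.

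The main obstacle is making sure the $(1+O(\delta))^n$ losses from $\lambda^{|A|}$ and $\|g_i\|_\infty^{|A\setminus J|}$ are beaten by the $U^3$ gain. This works because those losses vanish as $\delta\to0$ while $q_0$ is fixed. It is exactly why the statement only gives some small absolute $\delta$ and not an explicit constant.
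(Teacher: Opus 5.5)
Your proposal is correct and follows essentially the same route as the paper: the same split at $m=\lfloor(2-\delta)k\rfloor$, \cref{lem:wftl} on low supports with $w_i=\lambda|\wh g_i|$, and \cref{lem:block-u3} on high supports with a $U^3$ bound obtained by perturbing $2\one_{S_i}-1$ and invoking \cref{lem:u3-gap} with $\tau=1/4$. The only difference in the high-support step is bookkeeping: the paper applies \cref{lem:block-u3} to $\lambda g_i=(2\one_{S_i}-1-(2\rho-1))/(2\rho)$ directly instead of tracking $\lambda^{|A|}$ separately. You also place the \cref{prop:subsample} reduction to $\rho\le 1/2+O(1/p)$ inside the lemma, whereas the paper's proof tacitly assumes $|\rho-1/2|\le\delta$ and does that reduction afterwards, so your version covers the full range $\rho\ge1/2-\delta$ as written.
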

\begin{proof}
For $R>1/2$, it follows \cref{lem:positive-opi-above}, so we fix
$R\in[1/2-\delta,1/2]$. Let $m:=\lfloor(2-\delta)k\rfloor$ and
$\lambda:=\sqrt{(1-\rho)/\rho}$. As in the proof of
\cref{lem:balanced-opi-below}, we split by $m$ the support size and bound
\[
\FO(v)
\le
\sum_{\substack{0\ne y\in C^\perp\\|y|\le m}}
\lambda^{|y|}
\prod_{i\in\supp(y)}|\wh g_i(y_i)|
+
\sum_{|A|>m}
\lambda^{|A|}
\abs{\E_{x\in v+C}\prod_{i\in A}g_i(x_i)}.
\]
We bound the two terms separately.
\begin{itemize}
\item For small supports, use the same local bounds as in the
proof of \cref{lem:positive-opi-above}:
\[
\sum_{\xi\ne0}\left(\lambda|\wh g_i(\xi)|\right)^2
=\frac{1-\rho}{\rho},
\qquad
\lambda|\wh g_i(\xi)|
\le\frac{\sin(\pi\rho)}{\pi\rho}+O_\rho(p^{-2}).
\]
On the fixed interval $|\rho-1/2|\le\delta$, the first quantity
is bounded and the second has a uniform limiting bound below one.
Since $2k-m\sim\delta k\asymp n$, \cref{lem:wftl} gives
\[
\sum_{\substack{0\ne y\in C^\perp\\|y|\le m}}
\lambda^{|y|}
\prod_{i\in\supp(y)}|\wh g_i(y_i)|
\le
n^{O(1)}
\left(\frac{\sin(\pi\rho)}{\pi\rho}\right)^k
=e^{-\Omega(n)}.
\]

\item For large supports, fix $A$ with $|A|>m$.
The point is that the weighted functions themselves remain
close to balanced sign functions:
\[
\lambda g_i
=
\frac{\one_{S_i}-\rho}{\rho}
=
\frac{2\one_{S_i}-1-(2\rho-1)}{2\rho}
\implies
\lambda\|g_i\|_\infty
=
\max\left\{1,\frac{1-\rho}{\rho}\right\}
\le
\frac{1+2\delta}{1-2\delta}.
\]
Fix $q_0<1$ from \cref{lem:u3-gap} with $\tau=1/4$.
Since $|2\rho-1|\le2\delta$, the same lemma and the triangle
inequality give
\[
\lambda\|g_i\|_{U^3}
\le
\frac{\|2\one_{S_i}-1\|_{U^3}+|2\rho-1|}{2\rho}
\le
\frac{q_0+2\delta}{1-2\delta}<1
\]
for sufficiently small $\delta$.
Let
\[
d:=\left\lfloor\frac{3k-n}{2}\right\rfloor
\ge
\left(\frac14-\frac32\delta\right)n-O(1).
\]
Apply \cref{lem:block-u3}, using $\lambda g_i$ on $A$, the constant function
one outside $A$, and any $d$ coordinates from $A$ for the
$U^3$ bounds. We obtain
\[
\lambda^{|A|}
\abs{\E_{x\in v+C}\prod_{i\in A}g_i(x_i)}
\le
\left(\frac{1+2\delta}{1-2\delta}\right)^n
\left(\frac{q_0+2\delta}{1-2\delta}\right)^d.
\]
Counting supports by \cref{eq:high-support-count}, their total
contribution is at most
\[
2^{\left[
H_2\left(\frac52\delta-\delta^2\right)
+\log_2\left(\frac{1+2\delta}{1-2\delta}\right)
+\left(\frac14-\frac32\delta\right)
\log_2\left(\frac{q_0+2\delta}{1-2\delta}\right)
+o(1)
\right]n}.
\]
The limiting exponent at $\delta=0$ is
$\frac14\log_2q_0<0$. Choosing $\delta>0$ sufficiently small
therefore makes this contribution $e^{-\Omega(n)}$.
\end{itemize}
Combining the two estimates proves the lemma.
\end{proof}

Finally, suppose only that $\rho_i\ge1/2-\delta$. Put
$t:=\lceil(1/2-\delta)p\rceil$, and choose each $T_i$ independently
and uniformly among the $t$-subsets of $S_i$. For sufficiently
large $p$, their common density $t/p$ belongs to
$[1/2-\delta,1/2+\delta]$. Apply \cref{prop:subsample} to $v+C$:
\[
\FO(v;C,\mbf S)
\le\E_T\FO(v;C,T_1\times\cdots\times T_n)
\le e^{-\Omega_R(n)}.
\]
This proves the discrepancy bound for lists $\mbf{S}$ and completes
\cref{thm:main-opi}(2) at $v=0$ by
\cref{lem:translated-opi-proxy}.

\section{Proof sketch of \cref{fact:hmsy-model}}
\label{app:hmsyProofSketch}
In this appendix, we sketch the proof of \cref{fact:hmsy-model}, which is implicit in \cite{HMSY26}.  For the reader's convenience, we first re-state the fact.

\hmsyfact*
\begin{proof}[Proof sketch.]
The proof of \cref{fact:hmsy-model} is implicit in the proof of Proposition 10.3 in \cite{HMSY26}.  For completeness, we sketch here how to put together the pieces in \cite{HMSY26} to obtain this result.  Below, when we refer to  ``steps'' in the proof, we are referring to the steps in the proof of Proposition 10.3 in \cite{HMSY26}.  We emphasize that this proof sketch is meant to be read alongside \cite{HMSY26}, and is not a stand-alone proof.

For notations, our code $C$ has length $n$ and dimension $k$, the Gowers norm is $U^s$, the evaluation points are $\alpha\in\F_p^n$, and the structured family is $\mc{G}$; in \cite{HMSY26}, the code has length $m$, dimension $r$, the Gowers norm is $U^\ell$, the evaluation points are $a\in\F_p^m$, and the structured class is $\mc{G}_{p, \ell}$.

In Step 1(a) and 1(b) of the proof of Proposition 10.3 in \cite{HMSY26}, they define a quantity called $q_{\on{cert}}$, and observe that $q_{\on{cert}}=O(1)$. By equation (79) of \cite{HMSY26}, the total number of revealed coordinates is at most $(s-1)^2q_{\on{cert}}=O(1)$. This is the source of the $O(1)$ terms in \cref{fact:hmsy-model}(1).

Now, the event $\mathsf{Good}$ is defined in Step 1(d) and equations (82)-(84) in \cite{HMSY26}, which depends only on evaluation places and the fixed parameters. Moreover, $\mb{P}(\mathsf{Good}^c)\le p^{-1}\exp(n^{O(1)})$, which is exponentially small under our choice of $p$. We condition on $\mathsf{Good}$ for the rest of the proof.

In Step 5(c), \cite{HMSY26} reveals the $O(1)$ coordinates and calls $u$ the information encoding both the positions revealed and the values of a random codeword at these positions.
In Step 6, \cite{HMSY26} defines $\mathsf{E}_i^{\on{fin}}$ in equation (97), which is $V_i$ in \cref{fact:hmsy-model};  and also defines $A_i^\star: V_i\to [0, 1]$, which is $\widetilde{G}_i$ in \cref{fact:hmsy-model}.
Then, the Fourier coefficients $\mu_i$ in equation (98) has support exactly $S_i$ in our notation, so their relation space $V_0$ on $I$ from page 86 is precisely our dual auxiliary code $\widetilde{C}^\perp$. On page 86, \cite{HMSY26} checks exactly that it has dimension $O(1)$ and distance $k-O(1)$, proving \cref{fact:hmsy-model}(1).

Note that in Step 5(b), before conversion to the total variation bounds needed for LLR which sums over bit strings $y$,  \cite{HMSY26} derives on page 84 pointwise upper bounds on
\begin{equation}
\label{eq:hmsy-triangle}
|P_f(y)-Q_f(y)|\le |P_f(y)-\widetilde{P_f}(y)|+|Q_f(y)-\widetilde{Q_f}(y)|+|\widetilde{P_f}(y)-\widetilde{Q_f}(y)|
\end{equation}
and shows the first two terms are exponentially small. To deduce \cref{eq:hmsy-fourier-reduction}, we observe that the first two terms are chosen small enough to be absorbed by $\eps$ and that
$\E_{x\in C}\prod_{i=1}^nG_i(x_i)=P_f(\one)$ and $\prod_{i=1}^n\nu_i=Q_f(\one)$, so it suffices to bound the last term in the display above.

We turn to bounding the last term of \cref{eq:hmsy-triangle} for $y=\one$ and fixing any revealed coordinates $u$, which \cite{HMSY26} calls $P^{tor}_u(\one)-Q^{tor}_u(\one)$ on page 88 in Step 7. The revealed-coordinate factors are common and lie in $[0,1]$, so their average is bounded by the largest conditional discrepancy. The errors in (96) and (101) are absorbed into $\eps$ as well. \cite{HMSY26} bounds for $y=\one$ and $\gamma$ renamed to $y$ below (and in \cref{fact:hmsy-model}) that
\begin{equation}
\label{eq:tor-u}
\begin{aligned}
|P^{tor}_u(\one)-Q^{tor}_u(\one)| & \le (1+O(\rho_{s-1, p})) \sum_{0\ne y\in \prod_{i\in I}V_i^\star}\prod_{i\in I}|\widehat{A_i^\star}(y_i)|\cdot |M_y - N_y|
\\ & \le (1+O(\rho_{s-1, p})) \sum_{0\ne y\in \widetilde{C}^\perp}\prod_{i\in I}|\widehat{A_i^\star}(y_i)|+e^{-\Omega(n)},
\end{aligned}
\end{equation}
where $\rho_{s-1, p}$ is exponentially small by equations (100) and (102) of \cite{HMSY26}, and the second inequality follows the bounds in the main display on page 88 and equation (103) of \cite{HMSY26}. They also immediately sum over all bit strings, but we retain only $\one$.

At this point, upon identifying $A_i^\star = \widetilde{G}_i$, we have essentially proven \cref{fact:hmsy-model} except for $\mb{E}\widetilde{G}_i \le \nu_i$. To do so, we recall that $\nu_i:=\mb{E}_{z\in\F_p}G_i(z)$ and define $a_i:=\mb{E}_{v\in V_i}A_i^\star (v)$ and $m_i= \mb{E}_{z\in\F_p}A_i^\star (v_i(z))$ where polynomials $v_i(z)=\sum_{h=1}^{s-1}\beta_{i, h}z^h$ are defined in (97) and Step 6. 

Returning to Step 5(b), we note that $d_{\on{TV}}(Q_f, \widetilde{Q}_f)$ is exponentially small, which implies as we reveal the $O(1)$ coordinates in $u$ one by one, the mean changes by $e^{-\Omega(n)}$ upon taking marginals. The affine precomposition in Step 5(c) permutes $\F_p$, so it preserves the mean. 
Hence, $|\nu_i-m_i|\le e^{-\Omega(n)}$ including the approximation error in equation (96).

Moreover, averaging over uniform $v\in V_i$ versus $v_i(z)$ for uniform $z\in\F_p$ incurs exponentially small additive error at most $\rho_{s-1, p}\sum_{\xi\ne0}|\widehat{A_i^\star}(\xi)|$ from equations (100) and (102) of \cite{HMSY26}, using the support bound (98) and field-size estimate (104). Hence, $|\nu_i-a_i|\le \delta$ for exponentially small $\delta$. Then, we can rescale down $\widetilde{G}_i = c_iA_i^\star$ where $c_i=\min\{1, \nu_i/a_i\}$ to ensure $\mb{E}\widetilde{G}_i\le \nu_i$, which preserves the range $[0, 1]$ and changes the constant $1+O(\rho_{s-1, p})$ in \cref{eq:tor-u} by a factor of
\[ 
\prod_{i\in I}c_i^{-1}\le \left(1+\frac{\delta}{\nu}\right)^{|I|}\le e^{n\delta /\nu}=1+o(1),
\]
as $\delta$ is exponentially small and $\nu$ is constant. We relax the constant simply to $2$ in \cref{eq:hmsy-fourier-reduction}.
 \end{proof}

\end{document}